\documentclass[a4paper,11pt]{amsart}
\usepackage[dvips]{graphicx}
\usepackage{amssymb}
\usepackage{amsmath}
\usepackage[utf8]{inputenc}
\usepackage[T1]{fontenc}
\usepackage{graphicx}
\usepackage{amsmath}
\usepackage{amsthm}
\usepackage{amsfonts}
\usepackage{amssymb}
\usepackage{enumerate}
\usepackage{vaucanson-g}
\usepackage{setspace}
%\frenchbsetup{StandardLists=true}
\usepackage{enumitem}
\usepackage{dsfont}
\usepackage{mathrsfs}
 \usepackage{soul}
 \usepackage{wrapfig}
\usepackage{remreset}
  \usepackage{enumitem}
  \usepackage{mdframed}
  \usepackage{longtable}
  \usepackage{marginnote}
  \usepackage{lineno}
  \setlistdepth{6}
 \renewlist{itemize}{itemize}{6}
\makeatletter
\@namedef{subjclassname@2020}{%
  \textup{2020} Mathematics Subject Classification}
\makeatother
\usepackage{xcolor}

\usepackage{tikz}
\usetikzlibrary{shapes.geometric}
\usetikzlibrary{automata,positioning}
\usepackage{pgfplots}
\pgfplotsset{compat=1.15}
\usepackage{mathrsfs}
\usetikzlibrary{arrows}

\newcommand\blfootnote[1]{%
  \begingroup
  \renewcommand\thefootnote{}\footnote{#1}%
  \addtocounter{footnote}{-1}%
  \endgroup
}

\usepackage[labelformat=simple]{subcaption}

%\labelformat{subfigure}{\thesubfigure}

\definecolor{liens}{rgb}{1,0,0}
\usepackage[colorlinks=true, linkcolor=blue, 
hyperfootnotes=true,citecolor=blue,urlcolor=black]{hyperref}

\newtheorem{theo}{Theorem}[]
\newtheorem{prop}[theo]{Proposition}
\newtheorem{lem}[theo]{Lemma}
\newtheorem{coro}[theo]{Corollary}

\newtheorem{defi}[theo]{Definition}
\newtheorem{hypo}{Hypothesis}[]

\newtheorem{rem}[theo]{Remark}
\newtheorem{question}[theo]{Question}

\theoremstyle{definition} 
\newtheorem{ex}[theo]{Example}

\newtheorem{algorithm}[theo]{Algorithm}

\newcommand{\cacher}[1]{}

\newcommand{\logm}{l}

\newcommand{\tree}[1]{\mathcal T(#1)}

\newcommand{\thetalb}[1]{\theta_{#1}}

\newcommand{\ellmahl}{\ell}
\newcommand{\malop}[1]{\phi_{#1}}
\newcommand{\val}{\operatorname{val}}

\newcommand{\card}{\sharp}

\newcommand{\alphaa}{\alpha}
\newcommand{\betaa}{\beta}

\newcommand{\x}{p}

\newcommand{\commentaireJulienPetit}[1]{
%\begin{tiny}
% #1
%\end{tiny}
}

\newcommand{\univ}[1]{\mathscr{U}}

\newcommand{\pzintro}{(z)}
\newcommand{\pz}{}
\newcommand{\res}[1]{\operatorname{\bullet}_{\vert #1}}

\newcommand{\Ecal}{\mathcal{E}}
\newcommand{\Hahn}{\mathscr{H}}

\newcommand{\supp}{\operatorname{supp}}

\renewcommand{\theta}{\breve{\epsilon}}
\renewcommand{\kappa}{K}
\newcommand{\vareps}{\breve{\epsilon}}

\newcommand{\hgt}{N}

\newcommand{\iotaprime}{\breve{\iota}}

\newcommand{\Rr}{{\mathcal R}}
\newcommand{\Rrinfty}{{\mathcal R}}
\newcommand{\Qr}{{\mathcal Q}}

\renewcommand{\P}{{\mathcal P}}
\newcommand{\Cond}[1]{\mathcal C_{#1}}
\newcommand{\Sol}{\operatorname{Sol}}

\newcommand{\minP}{\psi}

\newcommand{\Q}{{\mathbb Q}}
\newcommand{\Qbar}{\overline{\mathbb Q}}
\newcommand{\C}{\mathbb{C}}
\newcommand{\Z}{\mathbb{Z}}

\newcommand{\R}{\mathbb R}

\newcommand{\D}{\mathcal{D}}
\newcommand{\E}{\mathcal{E}}
\renewcommand{\S}{\mathcal{S}}
\newcommand{\V}{\mathcal{V}}
\newcommand{\Vmaj}[1]{\mathcal{V}_{#1,\leq 8}}
\newcommand{\W}{\mathcal{V}}

\newcommand{\kappaprime}{\kappa_{0}}

\newcommand{\bK}{\mathbf{K}}
\newcommand{\wplus}{w^{+}}
\newcommand{\wmoins}{w^{-}}
\newcommand{\minepsilon}{\tau}
\newcommand{\minepsilonlb}{\breve{\tau}}

\newcommand{\strictementcroissant}{increasing}
\newcommand{\strictementdecroissant}{decreasing}
\newcommand{\croissant}{nondecreasing}

\title{Hahn series and Mahler equations: algorithmic aspects}
\author{C. Faverjon}
\address{Universite Claude Bernard Lyon 1, CNRS, Ecole Centrale de Lyon, INSA Lyon, Université Jean Monnet, ICJ UMR5208, 69622 Villeurbanne, France.}
%{Universit\'e de Lyon, Universit\'e Claude Bernard Lyon 1, CNRS UMR 5208, Institut Camille Jordan, F-69622 Villeurbanne, France}
\email{faverjon@math.univ-lyon1.fr}
\author{J. Roques }
\address{Universite Claude Bernard Lyon 1, CNRS, Ecole Centrale de Lyon, INSA Lyon, Université Jean Monnet, ICJ UMR5208, 69622 Villeurbanne, France.}
%{Universit\'e de Lyon, Universit\'e Claude Bernard Lyon 1, CNRS UMR 5208, Institut Camille Jordan, F-69622 Villeurbanne, France}
\email{Julien.Roques@univ-lyon1.fr}
\date{\today}
\subjclass[2020]{Primary 39A06.}
%\setmarginsrb{2}{2}{3}{2}{4}{2}{2}{2}
\begin{document}

\blfootnote{This is the accepted version of the following article : {\it Faverjon, C. and Roques, J. (2024), Hahn series and Mahler equations: Algorithmic aspects. J. London Math. Soc., 110: e12945},  which has been published in final form at \url{https://doi.org/10.1112/jlms.12945}.}

\begin{abstract}
Many articles have recently been devoted to Mahler equations, partly because of their links with other branches of mathematics such as automata theory. 
Hahn series (a generalization of the Puiseux series allowing arbitrary exponents of the indeterminate as long as the set that supports them is well-ordered) play a central role in the theory of Mahler equations. In this paper, we address the following fundamental question: is there an algorithm to calculate the Hahn series solutions of a given linear Mahler equation?  What makes this question interesting is the fact that the Hahn series appearing in this context can have complicated supports with infinitely many accumulation points. 
Our (positive) answer to the above question involves among other things the construction of a computable well-ordered receptacle for the supports of the potential Hahn series solutions.  
\end{abstract}

\maketitle
\setcounter{tocdepth}{1}
\tableofcontents

\section{Introduction}
Let $\bK$ be a field (of any characteristic and not necessarily algebraically closed). A linear Mahler equation with coefficients in $\bK(z)$ is a functional equation of the form
\begin{equation}\label{eq mahl intro}
 a_{n}(z) y(z^{\ellmahl^{n}}) + a_{n-1}(z) y(z^{\ellmahl^{n-1}}) + \cdots + a_{0}(z) y(z) = 0
\end{equation}
for some $\ellmahl \in \Z_{\geq 2}$, $n \in \Z_{\geq 0}$ and $a_{0}(z),\ldots,a_{n}(z) \in \bK(z)$ with $a_{0}(z)a_{n}(z) \neq 0$.  

These equations are named after K. Mahler who wrote  influential papers on the arithmetic nature of the values taken by solutions of such equations at algebraic points; see \cite{MahlerArith1929,MahlerArith1930,MahlerUber1930}. Since then, the theory has undergone many developments, in various directions and is nowadays a very active field of research with many facets. The interactions between the theory of Mahler equations and other fields of mathematics have been fruitful in recent years. This is well illustrated by the work of Sh\"afke and Singer in \cite{SchafkeSinger} which gives a new proof of a conjecture of Loxton and van der Poorten -- previously established by Adamczewski and Bell in \cite{BorisAboutMahler} -- and, therefore, a new proof of Cobham's theorem in automata theory by using tools coming from the theory of functional equations.  Here are some references  
\cite{Kubota,LoxtVanderPort,MasserVanishTheoPowSeries,RandeThese,DumasThese,BeckerKReg,NishiokaLNM1631,DF96,ZannierOFE,CorvajaZannierSNA,AS03,pellarinAIMM,NG,NGT,PhilipponGaloisNA,AlgIndMahlFunc,BorisAboutMahler,BorisFaverjonMethodeMahler17,BorisFaverjonMethodeMahler,DHRMahler,CompSolMahlEq,BeckerConj,FernandesMMCarNN,BorisMahlerSelecta,SchafkeSinger,RoquesLSMS,PouletDensity,ADHHLDE,RoquesFrobForMahler,FaverjonPoulet,BorisBellSmertnigGap}.  

The Hahn series play a fundamental role in the theory of Mahler equations.
Let us first look at the following simple but instructive example: 
\begin{equation}\label{eq:ex intro}
z^{\ellmahl}f(z^{\ellmahl^{2}}) - (z^{\ellmahl}+z)f(z^{\ellmahl})+zf(z)=0.
\end{equation}
This equation has the obvious constant solution $f_{1}\pzintro=1$ and any other solution in the field of formal Laurent series $\bK((z))$ or even in the field of Puiseux series $\mathscr{P}=\bigcup_{d \in \Z_{\geq 1}} \bK((z^{\frac{1}{d}}))$ is of the form $\lambda f_{1}\pzintro$ for some $\lambda \in \bK$. However, a new solution can be found in the field $\Hahn$ of Hahn series\footnote{See section \ref{sec:hahn series} for the concept of Hahn series.} with coefficients in $\bK$ and value group $\Q$, namely  
\begin{equation}\label{f2 intro}
f_{2}\pzintro=\sum_{k \geq 1} z^{-\frac{1}{\ellmahl^{k}}}. 
\end{equation}
Hence, working in the field of Hahn series, we have found two $\bK$-linearly independent solutions of the above linear Mahler equation of order $n=2$ ({\it i.e.}, as many $\bK$-linearly independent solutions as the order of the equation), which is satisfactory.

Actually\footnote{In fact, the main results of \cite{RoquesLSMS} and their proofs extend {\it mutatis mutandis} to an arbitrary algebraically closed field $\bK$.}, when $\bf K=\Qbar$,  it follows from \cite{RoquesLSMS} that the difference field $(\Hahn,\phi_{\ellmahl})$, where $\phi_{\ellmahl}$ is the field automorphism of $\Hahn$ sending $f (z)$ on $f(z^{\ellmahl})$, has a difference ring extension $(\mathcal{A},\phi_{\ellmahl})$ such that
\begin{itemize}
 \item for any $c \in \Qbar^{\times}$, there exists $e_{c} \in \mathcal{A}$ satisfying $\phi_{\ellmahl}(e_{c})=ce_{c}$;
\item there exists $\logm \in \mathcal{A}$ satisfying $\phi_{\ellmahl}(\logm)=\logm+1$; 
\item any linear Mahler equation of the form \eqref{eq mahl intro} has $n$ $\Qbar$-linearly independent solutions $y_{1},\ldots,y_{n} \in \mathcal{A}$ of the form 
\begin{equation}\label{form yi intro}
 y_{i} = \sum_{(c,j) \in \Qbar^{\times} \times \Z_{\geq 0}}  f_{i,c,j} e_{c} \logm^{j} 
\end{equation}
where the sum is finite and the $f_{i,c,j}$ belong to $\Hahn$.
\end{itemize}
This leads to the following fundamental question to which this article is devoted. 

\begin{question}\label{main question: un algo pour sol hahn?}
 Is there an algorithm to calculate the Hahn series solutions of an equation of the form \eqref{eq mahl intro}? 
\end{question}

Before formulating this question more formally, let us say a few words about the calculation of the solutions of linear Mahler equations such as  \eqref{eq mahl intro} in the more usual ring of formal power series $\bK[[z]]$. By ``calculating'' the solutions of an equation of the form \eqref{eq mahl intro} in $\bK[[z]]$, we usually mean calculating the formal power series solutions truncated to a specified order, {\it i.e.}, $\hgt \in \Z_{\geq 0}$ being given, we want to determine the $\sum_{k \in \{0,\ldots,\hgt\}} f_{k} z^{k} \in \bK[z]$ for which there exists a solution 
$$\widetilde{f}\pzintro=\sum_{k\in \Z_{\geq 0}} \widetilde{f}_{k} z^{k} \in \bK[[z]]$$ 
of \eqref{eq mahl intro} such that
\begin{equation}\label{eq pour def quest 1 cas power series}
 \sum_{k \in \{0,\ldots,\hgt\}} \widetilde{f}_{k} z^{k} 
=\sum_{k \in \{0,\ldots,\hgt\}} f_{k} z^{k}.
\end{equation}

A natural formalization of Question \ref{main question: un algo pour sol hahn?} is obtained by replacing the sets of indices $\Z_{\geq 0}$ and $\{0,\ldots,\hgt\}$ by $\Q$ and by an arbitrary finite subset $\E$ of $\Q$ respectively. 
More explicitly, this leads to the following formalization of Question \ref{main question: un algo pour sol hahn?}: a finite subset $\mathcal E$ of $\Q$ being given, we want to determine the $\sum_{\gamma \in \E} f_{\gamma} z^{\gamma} \in \Hahn$  for which there exists a solution 
$$\widetilde{f}\pzintro=\sum_{\gamma \in \Q} \widetilde{f}_{\gamma} z^{\gamma} \in \Hahn$$ 
of \eqref{eq mahl intro} such that
\begin{equation}\label{eq pour def quest 1}
 \sum_{\gamma \in \mathcal E} \widetilde{f}_{\gamma} z^{\gamma}
=\sum_{\gamma \in \mathcal E} f_{\gamma} z^{\gamma}.
\end{equation}

\begin{rem}
(1) An arbitrary Hahn series truncated at an given order has infinitely many nonzero coefficients in general. For instance, the truncation at order $0$ of the Hahn series $f_{2}\pzintro$ given by \eqref{f2 intro} is $f_{2}\pzintro$ itself and has infinitely many nonzero coefficients. This is why Question \ref{main question: un algo pour sol hahn?} is not stated in terms of truncated Hahn series. 

(2) The truncation $\sum_{k \in \{0,\ldots,\hgt\}} \widetilde{f}_{k} z^{k}$ of  $\widetilde{f}\pzintro=\sum_{k \in \Z_{\geq 0}} \widetilde{f}_{k} z^{k} \in \bK[[z]]$ can be interpreted as what remains of $\widetilde{f}\pzintro$ when only the indices $k\in \Z_{\geq 0}$ such that $H(k) \leq \hgt$ are retained, where $H$ denotes the naive height function defined, for any rational number $x=a/b$ where $a \in \Z$, $b\in \Z \setminus \{0\}$ are coprime, by $H(x)=\max\{\vert a\vert, \vert b\vert \}$. This leads to an alternative formulation of Question \ref{main question: un algo pour sol hahn?} similar to that given above but with  $\E$ replaced by $\E_{\hgt}=\{\gamma \in \Q \ \vert \ H(\gamma) \leq \hgt\}$. Since $\E_{\hgt}$ is a finite subset of $\Q$ and since any finite subset $\E$ of $\Q$ is a subset of $\E_{\hgt}$ for some $\hgt \in \Z_{\geq 0}$, the formulation of Question \ref{main question: un algo pour sol hahn?} given above is equivalent to this one.  
\end{rem}

Note that, since we are not imposing any conditions on $\mathcal{E}$, the solution $\widetilde{f}\pzintro$ may not be uniquely determined by \eqref{eq pour def quest 1}. 
 Fortunately, there is a simple condition guaranteeing that $\widetilde{f}\pzintro$ is uniquely determined by \eqref{eq pour def quest 1}: for this to be true, it suffices that $-\S \subset \mathcal E$ where $\S$ is the (finite and explicit) set of slopes of \eqref{eq mahl intro} defined in section \ref{sec: ext newt pol}; this follows directly from Corollary \ref{sol eq si eq on - pentes}.  
So, $- \S$ can serve as a set of indices for ``initial coefficients'' of all Hahn series solutions of \eqref{eq mahl intro} and, in the special case $\E=-\S$, Question \ref{main question: un algo pour sol hahn?} aims to describe the possible ``initial coefficients''. 
However, we draw the reader's attention to the fact that, even if one knows explicitly the ``initial part'' $\sum_{\gamma \in -\S} \widetilde{f}_{\gamma} z^{\gamma}$ of a solution $\widetilde{f}\pzintro=\sum_{\gamma \in \Q} \widetilde{f}_{\gamma} z^{\gamma} \in \Hahn$ of \eqref{eq mahl intro}, it is not obvious at all to compute the value of $\widetilde{f}_{\gamma}$ for a given $\gamma \in \Q \setminus -\S$ from it, whence the importance of allowing an arbitrary finite set $\E$ in Question~\ref{main question: un algo pour sol hahn?}.

An algorithm to find the solutions in the field of Puiseux series $\mathscr{P}=\bigcup_{d \in \Z_{\geq 1}} \bK((z^{\frac{1}{d}}))$ of a given Mahler equation has been given in \cite{CompSolMahlEq}\footnote{We mention for the interested reader that, when $\bK \subset \C$, any Puiseux series solution is actually convergent; see \cite[Lem.\,4]{BCR2013} for example.}. It consists in bounding  the ramification of these solutions in order to reduce the problem to the search of the solutions in a specific field of ramified Laurent series $\bK((z^{\frac{1}{d}}))$ for an explicit $d \in \Z_{\geq 1}$. What makes the search of the Hahn series solutions interesting is precisely the fact that one cannot reduce the problem to the search of (ramified) Laurent series solutions: 
one has to deal with Hahn series that might have rather involved supports. The support $\supp f_{2}\pzintro =\{-\frac{1}{\ellmahl^{k}} \ \vert \ k \in \Z_{\geq 1} \}$ of the Hahn series $f_{2}\pzintro$ given by \eqref{f2 intro} is one of the simplest support one can expect for a (non Puiseux) Hahn series solution of a linear Mahler equation. Much more complicated supports may arise. For instance, the Hahn series $f_{2}\pzintro^{2}$ satisfies a linear Mahler equation of order $3$ and, if the characteristic of $\bK$ is not equal to $2$, its support $\supp f_{2}\pzintro^{2} =\{-\frac{1}{\ellmahl^{k}}- \frac{1}{\ellmahl^{k'}}\ \vert \ k,k' \in \Z_{\geq 1} \}$ has infinitely many accumulation points, namely any element of $\{0\} \cup \{-\frac{1}{\ellmahl^{k}} \ \vert \ k \in \Z_{\geq 1} \}$.  These complicated supports induce many difficulties.

\subsection{Outline of our answer to Question \ref{main question: un algo pour sol hahn?}}\label{sec:outline algo}

Our approach to answer Question \ref{main question: un algo pour sol hahn?} relies on the following two ingredients. 
\begin{enumerate}
 \item \label{ingred 1} We introduce a subset $\V$ of $\Q$ satisfying the following properties:
\begin{itemize}
	\item[-] $\V$ contains the support of any Hahn series solution of \eqref{eq mahl intro};
	 \item[-] $\V$ is well-ordered; 
	\item[-] $\V$ is computable in the sense that there exists an algorithm to determine whether a given rational number belongs to $\V$ or not;
	\item[-] $\V$ satisfies a technical but important condition that we do not state here.
\end{itemize}
\item  \label{ingred 2} A finite set $\mathcal E \subset \V$ being given, we show that we can compute algorithmically a finite subset $\Rr$ of $\V$ containing $\E$  such that, for any $f\pzintro = \sum_{\gamma \in \Rr} f_\gamma z^\gamma \in \Hahn$, the following properties are equivalent: 
\begin{itemize}
 \item[-] there exists a solution $\widetilde{f}\pzintro=\sum_{\gamma \in \Q} \widetilde{f}_{\gamma} z^{\gamma} \in \Hahn$ of \eqref{eq mahl intro} such that 
 $$
\sum_{\gamma \in \Rr } \widetilde{f}_{\gamma} z^{\gamma}= \sum_{\gamma \in \Rr } f_{\gamma} z^{\gamma};
 $$
 \item[-] the support of the Hahn series
\begin{equation}\label{une eq de l intro}
a_{n}(z) f(z^{\ellmahl^{n}}) + a_{n-1}(z) f(z^{\ellmahl^{n-1}}) + \cdots + a_{0}(z) f(z)  
\end{equation}
is disjoint from $\minP(\Rr)$ where $\minP:\mathbb Q \to \mathbb Q$ is an explicit map defined in section \ref{sec:the maps Psi minP pi}.
\end{itemize}
\end{enumerate}

This reduces Question \ref{main question: un algo pour sol hahn?}  to a question of linear algebra. 
Indeed, up to multiplying \eqref{eq mahl intro} by a suitable nonzero polynomial, one can assume that the $a_{i}(z)$ are polynomials. 
Then, one can compute an explicit family of linear maps $F_{\delta} : \bK^{\Rr} \rightarrow \bK$ such that, for any $f\pzintro=\sum_{\gamma \in \Rr} f_{\gamma} z^{\gamma} \in \Hahn$, 
$$
a_{n}(z) f(z^{\ellmahl^{n}}) + a_{n-1}(z) f(z^{\ellmahl^{n-1}}) + \cdots + a_{0}(z) f(z)=\sum_{\delta \in \mathbb Q} F_{\delta}((f_{\gamma})_{\gamma \in \Rr}) z^{\delta}.
$$ 
The fact that the support of \eqref{une eq de l intro} is disjoint from $\minP(\Rr)$ is equivalent to the fact that, for all $\delta \in \minP(\Rr)$, $F_{\delta}((f_{\gamma})_{\gamma \in \Rr}) =0$. This is an (explicit) system of linear equations in the $(f_{\gamma})_{\gamma \in \Rr}$ that can be solved algorithmically. This solves Question \ref{main question: un algo pour sol hahn?}. 

\subsection{Organization of the paper} In section \ref{sec:hahn series}, we recall basic definitions and properties of the Hahn series. In section \ref{sec: ext newt pol}, we first  recall the notions of Newton polygons and of slopes. We then state and prove several results used elsewhere in the paper. In section \ref{sec:recept supp sol L bis}, we give an algorithm to compute a set $\V$ having the properties listed in section \ref{sec:outline algo} above. In sections \ref{sec:approx sol} and \ref{sec constr R}, we give an algorithm to compute a set $\Rr$ having the properties listed in section~\ref{sec:outline algo} above. 
In section \ref{sec: ancsw quest}, we describe an algorithm that answers Question \ref{main question: un algo pour sol hahn?} in the affirmative. In section \ref{illustration main algo on RS}, we apply our main algorithm to a classical equation. 
\vskip 10 pt 
\noindent \textbf{Acknowledgements.} Our warmest thanks go to the referees for their careful reading and their many suggestions, which have considerably improved the readability of this paper. 
The work of the second author was supported by the ANR De rerum natura project, grant ANR-19-CE40-0018 of the French Agence Nationale de la Recherche.

\section{The ring of Hahn series}  \label{sec:hahn series}

We denote by 
$
\Hahn
$ 
the field of Hahn series with coefficients in the field $\bK$ and with value group $\Q$ (see \cite{HahnSeriesHahn}). An element of $\Hahn$ is an $(f_{\gamma})_{\gamma \in \Q} \in \bK^{\Q}$ whose support 
$$
\supp (f_{\gamma})_{\gamma \in \Q}=\{\gamma\in \Q \ \vert \ f_{\gamma}\neq 0\}
$$ 
is well-ordered, {\it i.e.}, such that any nonempty subset of this support has a least element. An element $(f_{\gamma})_{\gamma \in \Q}$ of $
\Hahn
$
is usually (and will be) denoted by 
 $$
 f\pz =\sum _{{\gamma \in \Q }}f_{\gamma}z^{\gamma}. 
 $$
The sum and product of two elements 
$f\pz =\sum _{{\gamma\in \Q }}f_{\gamma}z^{\gamma}$ and 
$g\pz =\sum _{{\gamma\in \Q }}g_{\gamma}z^{\gamma}$ of $\Hahn$ 
are respectively defined by
$$
f\pz +g\pz =\sum _{{\gamma\in \Q }}(f_{\gamma}+g_{\gamma})z^{\gamma}
$$
and
$$
f\pz g\pz =\sum _{{\gamma\in \Q }}\left(\sum _{{\gamma'+\gamma''=\gamma}}f_{{\gamma'}}g_{{\gamma''}}\right)z^{\gamma}.
$$
(Note that there are only finitely many $(\gamma',\gamma'') \in \Q \times \Q$ such that $\gamma'+\gamma''=\gamma$ and $f_{{\gamma'}}g_{{\gamma''}}\neq 0$.) For a proof that $\Hahn$ endowed with this ring structure is a field, we refer to \cite[Th.\,5.7]{Neumann_Hahn}. 

Since the support of any Hahn series is well-ordered,  one can define the $z$-adic valuation 
\begin{eqnarray*}
 \val : \Hahn & \rightarrow & \Q \cup \{+\infty\} \\
 f\pz & \mapsto & \val f\pz =\min \supp f\pz 
\end{eqnarray*}
with the convention $\min \emptyset = +\infty$. It satisfies the usual properties of a valuation, namely~:
\begin{itemize}
\item $\forall f\pz\in \Hahn$, $(\val f\pz =+\infty \Longleftrightarrow f\pz=0);$
\item $\forall f\pz,g\pz\in \Hahn$, 
\begin{equation}\label{val prod}
 \val (f\pz g\pz)=\val f\pz+\val g\pz 
\end{equation}
and
\begin{equation}\label{val ult metric}
 \val(f\pz+g\pz)\geqslant \min \{\val f\pz,\val g\pz\}.
\end{equation}
\end{itemize}

For any subset $\Qr$ of $\Q$, we let $\Hahn_{\vert \Qr}$ be the $\bK$-vector space of Hahn series with support in $\Qr$, {\it i.e.},
$$
\Hahn_{\vert \Qr} = \{f\pz  \in \Hahn \ \vert \ \supp f\pz  \subset \Qr \}. 
$$ 
We have a natural $\bK$-linear map 
$$ 
\begin{array}{cccc}
\res{\Qr} : & \Hahn &\rightarrow& \Hahn_{\vert \Qr} \\
&     f\pz = \sum_{\gamma \in \mathbb Q} f_{\gamma} z^{\gamma} &\mapsto& f_{\vert \Qr}\pz := \sum_{\gamma \in \Qr} f_{\gamma} z^{\gamma} .
\end{array}
$$
For any $f\pz ,g\pz \in \Hahn$ and any $\Qr \subset \Q$, we will say that ``$f\pz =g\pz $ on $\Qr$'' if $f_{\vert \Qr}\pz=g_{\vert \Qr}\pz$.

\section{Newton polygons}\label{sec: ext newt pol}

Let $\malop{\ellmahl}$ be the field automorphism of $\Hahn$ sending $f(z)$ on $f(z^{\ellmahl})$. 
We denote by 
$$
\mathcal{D}_{\bK[z]}=\bK[z] \langle \malop{\ellmahl} \rangle 
$$ 
the Ore algebra of noncommutative polynomials with coefficients in $\bK[z]$ such that, for all $f\pz\in \bK[z]$, 
$
\malop{\ellmahl} f = \malop{\ellmahl}(f) \malop{\ellmahl}.
$ 
An element of $\mathcal{D}_{\bK[z]}$ will be called a Mahler operator. 
%The Mahler operator $L$ given by \eqref{mahl op assoc} will be called the Mahler operator associated with \eqref{eq mahl slopes}. 

In what follows, we consider an inhomogeneous Mahler equation 
\begin{equation}\label{eq mahl slopes}
 a_{n}(z) y(z^{\ellmahl^{n}}) + a_{n-1}(z)  y(z^{\ellmahl^{n-1}}) + \cdots + a_{0}(z)  y\pz  = a_{-\infty}(z) 
\end{equation}
with $a_{0}(z),\ldots,a_{n}(z) \in \bK[z]$ and $a_{-\infty}(z) \in \Hahn$ such that $a_{0}(z) a_{n}(z) \neq 0$. This equation can be rewritten as 
$$
L (y\pz ) =a_{-\infty}\pz 
$$
where 
\begin{equation}\label{mahl op assoc bis}
L=a_{n}\pz \malop{\ellmahl}^{n} + a_{n-1}\pz  \malop{\ellmahl}^{n-1} + \cdots  + a_{0}\pz \in \D_{\bK[z]}.
\end{equation}
 
 \subsection{Newton polygon}\label{subsec:slopes}
Following \cite{CompSolMahlEq}, we define the Newton polygon $\mathcal N(L,a_{-\infty}\pz )$  of \eqref{eq mahl slopes}  
as the lower convex hull of the set
$$
\P(L,a_{-\infty}\pz )=\{(\ellmahl^{i},j) \ \vert \ i \in \{-\infty,0,\ldots,n\}, \ j \in \supp a_{i }\pz\} \subset \R^{2}
$$
with the convention $\ellmahl^{-\infty}=0$. 
 In other terms, $\mathcal N(L,a_{-\infty}\pz )$ is the convex hull of the set
\begin{equation}\label{convex hull of this set is newton}
 \{(\ellmahl^{i},j) \ \vert \ i \in \{-\infty,0,\ldots,n\}, \ j  \geq \val a_{i}\pz \} \subset \R^{2}. 
\end{equation}

\subsection{Slopes}\label{sec : def slopes} 
The polygon $\mathcal N(L,a_\infty)$ is delimited by two vertical half lines and by finitely many nonvertical vectors 
having pairwise distinct slopes, called the slopes of \eqref{eq mahl slopes}. The set of slopes of \eqref{eq mahl slopes} will be denoted by $\mathcal{S}(L,a_{-\infty}\pz )$.
The following result gives an useful characterization of these slopes.

\begin{lem}\label{rem charact slopes 1}
The following properties relative to $\mu \in \mathbb Q $ are equivalent:
\begin{itemize}
 \item[(i)] $\mu $ belongs to $\mathcal{S}(L,a_{-\infty}\pz )$;
 \item[(ii)] there exist distinct $i_1,i_2\in \{-\infty,0,\ldots,n\}$ such that 
\begin{equation}\label{eq:caract slopes}
  \val a_{i_{1}}\pz -\ellmahl^{i_{1}} \mu =\val a_{i_{2}}\pz -\ellmahl^{i_{2}} \mu =\min_{i \in \{-\infty,0,\ldots,n\}} \val a_{i}\pz -\ellmahl^{i} \mu . 
\end{equation}
\end{itemize}
Moreover, if $\mu $ belongs to $\mathcal{S}(L,a_{-\infty}\pz )$, then the equality \eqref{eq:caract slopes} is satisfied if and only if $(\ellmahl^{i_{1}},\val a_{i_1})$ and $(\ellmahl^{i_{2}},\val a_{i_2})$ belong to the edge of slope $\mu $ of $\mathcal{N}(L,a_{-\infty}\pz )$.
\end{lem}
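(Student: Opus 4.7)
The plan is to interpret both conditions geometrically through a single supporting line of the Newton polygon. For $\mu \in \Q$, set
\[
M(\mu) := \min_{i \in \{-\infty, 0, \ldots, n\}} \bigl(\val a_{i}\pz - \ellmahl^{i}\mu\bigr)
\]
and let $\Delta_\mu \subset \R^{2}$ be the affine line of equation $y = \mu x + M(\mu)$. Since $\val a_{i}\pz = \min \supp a_{i}\pz$, every point $(\ellmahl^{i}, j) \in \P(L, a_{-\infty}\pz)$ satisfies $j - \ellmahl^{i}\mu \geq M(\mu)$, so $\Delta_\mu$ lies weakly below $\P$ and hence weakly below $\mathcal N(L, a_{-\infty}\pz)$. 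Moreover, an index $i$ realizes the minimum defining $M(\mu)$ if and only if the vertex $(\ellmahl^{i}, \val a_{i}\pz)$ lies on $\Delta_\mu$; this dictionary between ``achieving the minimum'' and ``lying on $\Delta_\mu$'' is the heart of the argument.

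For (i) $\Rightarrow$ (ii), assume $\mu \in \mathcal S(L, a_{-\infty}\pz)$, and let $E$ be the corresponding nonvertical edge of $\mathcal N(L, a_{-\infty}\pz)$ of slope $\mu$. The affine line carrying $E$ is a support line from below of $\mathcal N$; being of slope $\mu$ and sitting weakly below $\P$, it must coincide with $\Delta_\mu$. The two endpoints of $E$ are extreme points of $\mathcal N$, hence belong to $\P$, so they take the form $(\ellmahl^{i_{1}}, \val a_{i_{1}}\pz)$ and $(\ellmahl^{i_{2}}, \val a_{i_{2}}\pz)$ for some $i_{1}, i_{2} \in \{-\infty, 0, \ldots, n\}$. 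Since $E$ is nonvertical, $\ellmahl^{i_{1}} \neq \ellmahl^{i_{2}}$, and therefore $i_{1} \neq i_{2}$ (using $\ellmahl \geq 2$ together with the convention $\ellmahl^{-\infty} = 0$). Both indices achieve the minimum, which is exactly \eqref{eq:caract slopes}.

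For the converse (ii) $\Rightarrow$ (i), suppose the minimum is reached at two distinct indices $i_{1}, i_{2}$. Then $(\ellmahl^{i_{1}}, \val a_{i_{1}}\pz)$ and $(\ellmahl^{i_{2}}, \val a_{i_{2}}\pz)$ are two distinct points of $\P \cap \Delta_\mu$, and the segment joining them lies on $\Delta_\mu$ and inside $\mathcal N(L, a_{-\infty}\pz)$. Because $\Delta_\mu$ is a support line from below, this segment is contained in the boundary of $\mathcal N$, hence in a nonvertical edge of slope $\mu$. Thus $\mu \in \mathcal S(L, a_{-\infty}\pz)$.

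For the final ``moreover'' clause, when $\mu$ is a slope, the intersection $\mathcal N(L, a_{-\infty}\pz) \cap \Delta_\mu$ is precisely the edge of slope $\mu$, because a support line of a convex region meets it along a face. Consequently a vertex $(\ellmahl^{i_{k}}, \val a_{i_{k}}\pz)$ lies on the edge of slope $\mu$ if and only if it lies on $\Delta_\mu$, if and only if $i_{k}$ achieves the minimum in \eqref{eq:caract slopes}. I do not anticipate a real obstacle here: the only delicate point is the consistent handling of the index $-\infty$ via $\ellmahl^{-\infty} = 0$, which guarantees that distinct indices yield distinct first coordinates and lets the convex geometry be applied uniformly to all vertices.
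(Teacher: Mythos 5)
Your proof is correct and takes essentially the same approach as the paper's: both recognize that the common value in \eqref{eq:caract slopes} is the $y$-intercept of the unique supporting line of slope $\mu$ lying weakly below $\P(L,a_{-\infty}\pz)$, so that achieving the minimum is equivalent to lying on that line, and two distinct points on it give a nonvertical edge. The paper phrases this via "there exists $b$ such that the line $y-\mu x=b$ contains an edge" rather than naming $\Delta_\mu$ explicitly, but the geometric content is identical.
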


\begin{proof}
The fact that $\mu $ satisfies (i) is equivalent to the fact that $\mu $ is the slope of a nonvertical edge of $\mathcal{N}(L,a_{-\infty}\pz )$ which is in turn equivalent to the fact that there exists $b \in \R$ such that the affine line $y-\mu x=b$ contains an edge of $\mathcal{N}(L,a_{-\infty}\pz )$. 

But, since $\mathcal{N}(L,a_{-\infty}\pz )$ is the convex hull of the set 
\eqref{convex hull of this set is newton}, the affine line $y-\mu x=b$ contains an edge of $\mathcal{N}(L,a_{-\infty}\pz )$ if and only  the following properties are satisfied: 
\begin{itemize}
 \item there exist distinct $i_1,i_2\in \{-\infty,0,\ldots,n\}$ such that $(\ellmahl^{i_{1}},\val a_{i_{1}}\pz )$ and $(\ellmahl^{i_{2}},\val a_{i_{2}}\pz )$ belong to the line $y-\mu x=b$;
 \item for any $i \in  \{-\infty,0,\ldots,n\}$, if $\val a_{i}\pz < +\infty$, then $(\ellmahl^{i},\val a_{i}\pz )$ belongs to the half space $y-\mu x\geq b$. 
\end{itemize}
The latter two properties are of course equivalent to the fact that there exist distinct $i_1,i_2\in \{-\infty,0,\ldots,n\}$ such that \begin{equation*}
b=  \val a_{i_{1}}\pz -\ellmahl^{i_{1}} \mu =\val a_{i_{2}}\pz -\ellmahl^{i_{2}} \mu =\min_{i \in \{-\infty,0,\ldots,n\}} \val a_{i}\pz -\ellmahl^{i} \mu . 
\end{equation*}
 This shows that (i) and (ii) are equivalent. 
 
 Last, if $\mu $ belongs to $\mathcal{S}(L,a_{-\infty}\pz )$, then the previous discussion shows that the equation of the line containing the edge of slope $\mu $ of $\mathcal{N}(L,a_{-\infty}\pz )$ is $y-\mu x=b$ with $b=\min_{i \in \{-\infty,0,\ldots,n\}} \val a_{i}\pz -\ellmahl^{i} \mu $.   The last assertion of the lemma follows directly from this. 
\end{proof}

\subsection{Newton polygon and slopes in the homogeneous case}

In the homogeneous case, that is when $a_{-\infty}\pz =0$, we will omit $a_{-\infty}\pz $ in the previous notations and terminologies. For instance, $\mathcal N(L,0)$ will simply be denoted by $\mathcal N(L)$ and will be called the Newton polygon of $L$.

\begin{rem}\label{rem:PL_finite}
  Note, for later use, the following simple but important fact: since the coefficients $a_{0}\pz,\ldots,a_{n}\pz$ of $L$ are in $\bK[z]$, the set $\P(L)$ is finite.
\end{rem}

%\ajoutColin{Let $v_0,\ldots,v_\kappa$ denote the vertices of $\mathcal N(L)$.} 
We denote by 
$$
\mu_{1} < \cdots < \mu_{\kappa} 
$$  
the slopes of $L$, so that 
$$
\mathcal{S}(L)=\{\mu_{1},\ldots,\mu_{\kappa}\}.
$$

We let
$$
\x_{0},\ldots,\x_{\kappa} \in \Z_{\geq 0} \times \Z
$$
be the vertices, ordered by \strictementcroissant{} abscissa, of the polygon $\mathcal N(L)$. 
For any $k \in \{0,\ldots,\kappa\}$, we let $\alpha_{k}$
be the unique element of $\{0,\ldots,n\}$ and $\beta_{k}$ be the unique element of $\Z$ such that 
$$
\x_k=(\ellmahl^{\alpha_{k}},\beta_k)=(\ellmahl^{\alpha_{k}},\val a_{\alpha_{k}}\pz ). 
$$
Note that $\alpha_{0}=0$  and that $\alpha_{\kappa}=n$. With these notations, the edge of $\mathcal{N}(L)$ with slope $\mu_k$ has $\x_{k-1}=(\ellmahl^{\alpha_{k-1}},\beta_{k-1}\pz )$ as its left endpoint and $\x_k=(\ellmahl^{\alpha_{k}},\beta_{k}\pz )$ as its right endpoint.

\begin{ex}\label{ex:L pour example}
The main algorithm presented in this paper will be illustrated in section \ref{illustration main algo on RS} on the Mahler operator of order $2$ given by 
\begin{equation}\label{L pour example}
L=z\malop{2}^{2}+(z-1)\malop{2}-2.
\end{equation} 
The set $\P(L)$, the Newton polygon $\mathcal N(L)$ and the vertices $\x _{k}$  associated to this specific $L$ 
are given in section \ref{sec:example Newton et slopes} and represented in Figure \ref{fig:rudin shapiro polygon}.
\end{ex}

We note the following result for further use. 
	
\begin{lem}\label{lem:ineq slopes N(L)}
For all $(\ellmahl^{i},j) \in \P(L)$ and all $k \in \{1,\ldots,\kappa\}$, we have: 
	\begin{equation}\label{une ineg utile}
-\ellmahl^{i} \mu_{k} + j \geq -\ellmahl^{\alpha_{k}}  \mu_{k} +\beta_{k}=-\ellmahl^{\alpha_{k-1}}  \mu_{k} +\beta_{k-1}. 
\end{equation}
In geometric terms, this means that the minimum of the ordinates of the projections of points of $\P(L)$ along a line of slope $\mu_{k}$ on the $y$-axis is reached at $\x_{k-1}$ and at $\x_k$.
\end{lem}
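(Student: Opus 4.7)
The plan is to combine the definition of $\P(L)$ with the characterization of slopes given in Lemma \ref{rem charact slopes 1}. The equality part essentially says that the two endpoints of the edge of slope $\mu_k$ are collinear on the supporting line $y - \mu_k x = b$, while the inequality part says that every other point of $\P(L)$ lies weakly above that line.

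First I would prove the equality $-\ellmahl^{\alpha_k}\mu_k + \beta_k = -\ellmahl^{\alpha_{k-1}}\mu_k + \beta_{k-1}$. This is immediate from Lemma \ref{rem charact slopes 1} applied in the homogeneous setting: by construction, $\x_{k-1} = (\ellmahl^{\alpha_{k-1}}, \beta_{k-1})$ and $\x_k = (\ellmahl^{\alpha_k}, \beta_k)$ are the two endpoints of the edge of $\mathcal{N}(L)$ of slope $\mu_k$, and $\beta_{k-1} = \val a_{\alpha_{k-1}}\pz$ and $\beta_k = \val a_{\alpha_k}\pz$. The last assertion of Lemma \ref{rem charact slopes 1} then yields
\[
\beta_{k-1} - \ellmahl^{\alpha_{k-1}}\mu_k = \beta_k - \ellmahl^{\alpha_k}\mu_k = \min_{i \in \{0,\ldots,n\}} \val a_i\pz - \ellmahl^i \mu_k,
\]
which is the desired equality (note that one can restrict the minimum to $i \in \{0,\ldots,n\}$ rather than $\{-\infty,0,\ldots,n\}$ because we are in the homogeneous case, where $a_{-\infty}\pz = 0$ and so $\val a_{-\infty}\pz = +\infty$).

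Next I would handle the inequality. For an arbitrary $(\ellmahl^i, j) \in \P(L)$, the definition of $\P(L)$ gives $j \in \supp a_i\pz$, hence $j \geq \val a_i\pz$. Combined with the minimum identity just established, this gives
\[
-\ellmahl^i \mu_k + j \;\geq\; -\ellmahl^i \mu_k + \val a_i\pz \;\geq\; -\ellmahl^{\alpha_k}\mu_k + \beta_k,
\]
which concludes the proof. I expect no real obstacle here: the only point requiring slight care is to invoke Lemma \ref{rem charact slopes 1} correctly in the homogeneous setting and to remember that the inequality $j \geq \val a_i\pz$ is built into the definition of $\P(L)$ via the support. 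The geometric interpretation stated at the end is just a rephrasing: projecting a point $(\ellmahl^i, j)$ of $\P(L)$ along the direction of slope $\mu_k$ onto the $y$-axis yields $j - \ellmahl^i \mu_k$, which is minimized precisely at the vertices $\x_{k-1}$ and $\x_k$ of the corresponding edge.
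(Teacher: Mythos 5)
Your proof is correct and follows essentially the same route as the paper: apply Lemma \ref{rem charact slopes 1} to the edge of slope $\mu_k$ to get the equality and the identification with the minimum, then use $j \geq \val a_i\pz$ for arbitrary $(\ellmahl^i, j) \in \P(L)$ to get the inequality. Your additional remark that the minimum can be restricted to $\{0,\ldots,n\}$ in the homogeneous case is a correct and worthwhile clarification that the paper leaves implicit.
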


\begin{proof}
Lemma \ref{rem charact slopes 1} applied with $\mu =\mu_k$, $(\ellmahl^{i_{1}},\val a_{i_{1}})=p_{k}$ and $(\ellmahl^{i_{2}},\val a_{i_{2}})=p_{k-1}$ ensures that $-\ellmahl^{\alpha_{k}}  \mu_{k} +\beta_{k}=-\ellmahl^{\alpha_{k-1}}  \mu_{k} +\beta_{k-1}=\min_{i \in \{0,\ldots,n\}} -\ellmahl^{i} \mu_{k} + \val a_{i}  $. The inequality \eqref{une ineg utile} follows from this and from the fact that, for all $(\ellmahl^{i},j) \in \P(L)$, we have $j \geq \val a_{i}$ and, hence, $-\ellmahl^{i} \mu_{k} + j \geq -\ellmahl^{i} \mu_{k} + \val a_{i}$. 
\end{proof}

\subsection{The maps $\Psi$, $\minP$ and $\pi$}\label{sec:the maps Psi minP pi}
 In the rest of the paper, we will intensively use the following three maps:
\begin{equation}\label{def psi}
\begin{array}{lrclc}
&\Psi : \Q &\rightarrow& \{\,\text{Finite subsets of $\Q$}\,\}  \\
 &    v &\mapsto& \{v \ellmahl^{i}+j \ \vert \  (\ellmahl^{i},j) \in \P(L)\}
\\ &&&&
\\
  &\minP : \Q &\rightarrow& \Q \\
   &  v &\mapsto& \min \Psi(v)  =\min \{v\ellmahl^{i}+j \ \vert \  (\ellmahl^{i},j) \in \P(L)\} \\
    & && \hskip 46pt  =\min \{v\ellmahl^{i}+\val a_{i}\pz \ \vert \  i \in \{0,\ldots,n\}\}
\\
{\text{and}}&&&
\\
  &\pi : \Q &\rightarrow& \Q \\
   &  q &\mapsto& \max \left \{\frac{q-j}{\ellmahl^{i}} \ \vert \ (\ellmahl^{i},j) \in \P(L)\right\}=- \min \S(L,z^q)\,.
\end{array}
\end{equation}
These maps are well-defined because $\P(L)$ is finite according to Remark \ref{rem:PL_finite}. 
In geometric terms: 
\begin{itemize}
 \item $\Psi(v)$ is the set of ordinates of the projection of the elements of $\P(L)$ along a line of slope $-v$ onto the $y$-axis;
 \item $\minP(v)$ is the least of these ordinates;
 \item $\pi(q)$ is the opposite of the minimum of the slopes of the lines passing through $(0,q)$ and an element of $\P(L)$.
\end{itemize}

Let us now give a more computational interpretation of these maps. Setting, for any $i \in \{0,\ldots,n\}$, 
$$a_{i}=\sum_{j \in \supp a_{i}} a_{i,j} z^{j},
$$ 
we have 
\begin{equation}\label{eq:Lzv}
L(z^{v}) = \sum_{i=0}^{n} a_{i,j} z^{v \ellmahl^{i} +j}=\sum_{(\ellmahl^{i},j) \in \P(L)} a_{i,j} z^{v \ellmahl^{i} +j}. 
\end{equation}
This formula shows that $\Psi(v)$ is a natural receptacle for the support of $L(z^{v})$. Indeed, we have 
\begin{equation}\label{eq:suppincPsi}
\supp L(z^v) \subset  \Psi(v),  
\end{equation}
this inclusion being an equality for all but finitely many $v \in \Q$, {\it e.g.}, for all $v \in \Q$ such that the exponents $v \ellmahl^{i} +j$ involved in \eqref{eq:Lzv} are two by two distinct, because there is no cancellation between terms on the right-hand side of  \eqref{eq:Lzv} in this case. It follows immediately from these remarks that 
$$\minP(v) \leq \val L(z^v)
$$ 
and that this inequality is an equality for all but finitely many $v \in \Q$ (actually, Lemma \ref{lem:val Lf} below ensures that this is the case for all $v \in \Q \setminus -\S(L)$). 
Last, it is easily seen that, for all but finitely many $q \in \Q$, the equation $\val L(z^{w})=q$ has a unique solution $w \in \Q$ and that it is given by $w=\pi(q)$.

 It will be convenient to set 
$$
\minP(+\infty)=\pi(+\infty)=+\infty. 
$$

We shall now give several properties of the maps $\minP$ and $\pi$ which will shall use later.

\begin{lem}\label{lem proprietes pi minP}
	The maps $\minP : \mathbb Q \rightarrow \mathbb Q$ and $\pi : \mathbb Q \rightarrow \mathbb Q$ are \strictementcroissant
\footnote{In the whole paper, a function $f:\mathbb Q \rightarrow \mathbb Q$ is said to be \strictementcroissant{} if, for all $x,y \in \mathbb Q$, ($y>x \Rightarrow f(y)>f(x)$). It is \croissant{} if, for all $x,y \in \mathbb Q$, ($y \geq x \Rightarrow f(y) \geq f(x)$). We will use similar terminologies for sequences of real numbers or of sets.} bijections and inverse of each other. 	\end{lem}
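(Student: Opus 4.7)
The plan is to establish the two identities $\pi\circ\minP=\operatorname{id}_{\mathbb Q}$ and $\minP\circ\pi=\operatorname{id}_{\mathbb Q}$ directly from the definitions; this simultaneously shows that both maps are bijections and mutually inverse. Once that is done, it suffices to check strict monotonicity for one of them (say $\minP$), since the inverse of a strictly increasing bijection on $\mathbb Q$ is itself strictly increasing. Note that everything is well-defined: by Remark~\ref{rem:PL_finite}, $\P(L)$ is a nonempty finite subset of $\mathbb Z_{\geq 1}\times\mathbb Z$ (the abscissas $\ellmahl^{i}$ are strictly positive because the indices $i$ in the homogeneous case range over $\{0,\ldots,n\}$), so all minima and maxima are attained and produce rational numbers.

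The core computation is just a rewriting. Given $v\in\mathbb Q$, pick $(\ellmahl^{i_0},j_0)\in\P(L)$ that achieves the minimum defining $\minP(v)$, and set $q:=\minP(v)=v\ellmahl^{i_0}+j_0$. Then, for every $(\ellmahl^i,j)\in\P(L)$, one has $v\ellmahl^i+j\geq q$, which, dividing by the positive integer $\ellmahl^i$, rewrites as $(q-j)/\ellmahl^i\leq v$, with equality at $(\ellmahl^{i_0},j_0)$. Hence $\pi(q)=v$, i.e.\ $\pi(\minP(v))=v$. The symmetric argument, starting from $q\in\mathbb Q$ and a pair $(\ellmahl^{i_0},j_0)$ realizing the maximum in $\pi(q)$, yields $\minP(\pi(q))=q$.

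It remains to prove that $\minP$ is strictly increasing. Let $v_1<v_2$ in $\mathbb Q$ and pick $(\ellmahl^{i_2},j_2)\in\P(L)$ realizing $\minP(v_2)$; then
\[
\minP(v_1)\;\leq\; v_1\ellmahl^{i_2}+j_2 \;<\; v_2\ellmahl^{i_2}+j_2 \;=\; \minP(v_2),
\]
the strict inequality using $\ellmahl^{i_2}\geq 1>0$. This yields strict monotonicity of $\minP$, and then of its inverse $\pi$, concluding the proof. I do not foresee any genuine obstacle: the only point of care is to use the minimizer for $v_2$ (not for $v_1$) in the monotonicity argument, so that the resulting inequality points in the correct direction.
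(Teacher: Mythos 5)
Your proof is correct and takes essentially the same route as the paper: you verify the inverse identities by picking the extremizers and using the defining inequalities, exactly as the paper does for $\minP(\pi(q))=q$. The only organizational difference is minor bookkeeping — the paper asserts strict monotonicity as ``immediate,'' uses injectivity to reduce to a single identity, and proves that one; you instead prove both identities directly and then establish monotonicity explicitly (taking care to choose the minimizer for the larger argument), which is arguably a bit more self-contained than the paper's laconic assertion but substantively the same argument.
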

	
\begin{proof}
	The fact that these maps are \strictementcroissant{}
 is immediate from their definitions. In particular, $\minP$ and $\pi$ are injective. In order to prove that $\minP$ and $\pi$ are inverse of each other, it is thus sufficient to prove that $\minP(\pi(q))=q$ for every $q \in \Q$. Let us prove this. On the one hand, by definition of $\pi$, there exists $(\ellmahl^{i},j) \in \P(L)$ such that 
	$$
	 \pi(q)=  
	 \frac{q-j}{\ellmahl^{i}}.
	$$
	Since $(\ellmahl^{i},j) \in \P(L)$, it follows from the definition of $\minP$ that 
	$$
	\minP(\pi(q)) 
	\leq \ellmahl^{i} \pi(q) + j
	= \ellmahl^{i} \frac{q-j}{\ellmahl^{i}} + j=q.
	$$
On the other hand, by definition of $\minP$, there exists $(\ell^{i'},j') \in \P(L)$ such that $$
\minP(\pi(q)) =\ellmahl^{i'} \pi(q) + j'.
$$ 
But, since $(\ellmahl^{i'},j') \in \P(L)$, it follows from the definition of $\pi$ that 
$$
	\pi(q) \geq \frac{q-j'}{\ellmahl^{i'}}.
$$

So, we have 
$$\minP(\pi(q))
	\geq \ellmahl^{i'}  \frac{q-j'}{\ellmahl^{i'}} + j' = q.
	$$ Finally, we obtain $\minP(\pi(q)) =q$.
\end{proof}

\begin{lem}\label{lem psi}
Consider $v\in\Q$. If $k \in \{1,\ldots,\kappa+1\}$ is such that $-\mu_{k} \leq v \leq-\mu_{k-1}$, with the conventions $\mu_{0}=-\infty$  and $\mu_{\kappa+1}=+\infty$, then 
\begin{equation}\label{eq:formule pour minP}
\minP( v)= \ellmahl^{\alpha_{k-1}} v + \beta_{k-1} \,.
\end{equation}
In geometric terms, the formula \eqref{eq:formule pour minP} means that $\minP(v)$ is the ordinate of the projection of  $\x_{k-1}=(\ellmahl^{\alpha_{k-1}} ,\beta_{k-1})$ along a line of slope $-v$ on the $y$-axis.
\end{lem}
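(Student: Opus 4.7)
The plan is to establish the pointwise lower bound
\begin{equation*}
v \ellmahl^i + j \;\geq\; v \ellmahl^{\alpha_{k-1}} + \beta_{k-1} \quad \text{for every } (\ellmahl^i,j) \in \P(L).
\end{equation*}
Since $\x_{k-1}=(\ellmahl^{\alpha_{k-1}},\beta_{k-1})$ itself lies in $\P(L)$ and realizes equality in this bound, taking the minimum over $\P(L)$ will then yield $\minP(v) = \ellmahl^{\alpha_{k-1}} v + \beta_{k-1}$ as desired. After rewriting, the target inequality becomes
\begin{equation*}
-v\bigl(\ellmahl^i - \ellmahl^{\alpha_{k-1}}\bigr) \;\leq\; j - \beta_{k-1}.
\end{equation*}

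I split on the sign of $\ellmahl^i - \ellmahl^{\alpha_{k-1}}$. When $i \geq \alpha_{k-1}$, so that this factor is nonnegative, I apply Lemma~\ref{lem:ineq slopes N(L)} to the slope $\mu_k$, using the second equality in \eqref{une ineg utile} to anchor the right-hand side at $\x_{k-1}$ rather than at $\x_k$; this yields $j - \beta_{k-1} \geq \mu_k(\ellmahl^i - \ellmahl^{\alpha_{k-1}})$. Combining with the hypothesis $-v \leq \mu_k$ and the nonnegativity of the factor closes the argument on this side. Symmetrically, when $i \leq \alpha_{k-1}$, I apply Lemma~\ref{lem:ineq slopes N(L)} to $\mu_{k-1}$ to obtain $j - \beta_{k-1} \geq \mu_{k-1}(\ellmahl^i - \ellmahl^{\alpha_{k-1}})$, then use $-v \geq \mu_{k-1}$ together with the nonpositivity of the factor.

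The two conventions $\mu_0 = -\infty$ and $\mu_{\kappa+1} = +\infty$ are absorbed harmlessly by this split. Since $\alpha_0 = 0$ is the minimal abscissa appearing in $\P(L)$, the case $k = 1$ forces $i \geq \alpha_0 = \alpha_{k-1}$ for every $(\ellmahl^i,j) \in \P(L)$, so only the first subcase occurs and only the genuine slope $\mu_1$ is invoked; dually, since $\alpha_\kappa = n$ is the maximal abscissa, the case $k = \kappa + 1$ forces $i \leq \alpha_\kappa = \alpha_{k-1}$ and only the second subcase occurs, invoking only $\mu_\kappa$. I do not anticipate any serious obstacle: the argument is essentially bookkeeping. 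The main care will go into sign-tracking (because the $\mu_k$ need not be of definite sign) and into selecting, on each side of the vertex $\x_{k-1}$, the correct one of the two anchored inequalities furnished by Lemma~\ref{lem:ineq slopes N(L)}.
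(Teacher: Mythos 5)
Your argument is correct and is essentially the paper's own: both reduce the claim to the nonnegativity of the affine function $\delta(v) = \ellmahl^{\alpha}v + \beta - (\ellmahl^{\alpha_{k-1}}v + \beta_{k-1})$ on the interval $[-\mu_k,-\mu_{k-1}]$, deduced by applying Lemma~\ref{lem:ineq slopes N(L)} at the slopes $\mu_k$ and $\mu_{k-1}$ anchored at $\x_{k-1}$. The only difference is organizational: you split on the sign of the slope of $\delta$ (that is, on $i \gtrless \alpha_{k-1}$) and use monotonicity from a single endpoint, which uniformly absorbs the boundary cases $k=1$ and $k=\kappa+1$ that the paper handles as separate subcases of the same monotonicity argument.
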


\begin{proof}
By definition, $\minP(v)  =\min \{\ellmahl^{\alphaa} v + \betaa  \ \vert \  (\ellmahl^{\alphaa},\betaa) \in \P(L)\}$ so, in order to prove the lemma, it is sufficient  to prove that, for all $(\ellmahl^{\alphaa},\betaa) \in \P(L)$, for all $v \in [-\mu_{k} , -\mu_{k-1}]$, $\ellmahl^{\alphaa} v + \betaa \geq  \ellmahl^{\alpha_{k-1}}  v+\beta_{k-1}$. In other terms, we have to prove that, for $(\ellmahl^{\alphaa},\betaa) \in \P(L)$, the map
	\begin{eqnarray*}
		\delta : \R & \rightarrow & \R \\
		v &\mapsto & \ellmahl^{\alphaa} v + \betaa - (\ellmahl^{\alpha_{k-1}}  v+\beta_{k-1} )
	\end{eqnarray*}
takes nonnegative values on $[-\mu_{k} , -\mu_{k-1}]$. Let us prove this.

\medskip
 Let us first assume that $k \in \{2,\ldots,\kappa\}$.  
	It follows from Lemma \ref{lem:ineq slopes N(L)}  (applied with $k-1$ instead of $k$ for the second inequality)  that 
	\begin{equation*}
	\label{eq:ineg_pentes}
	-\ellmahl^{\alphaa} \mu_{k} + \betaa \geq -\ellmahl^{\alpha_{k-1}}  \mu_{k} +\beta_{k-1}
	 \ \text{ and } -\ellmahl^{\alphaa} \mu_{k-1} + \betaa \geq -\ellmahl^{\alpha_{k-1}}  \mu_{k-1} +\beta_{k-1}.
	\end{equation*}
	Thus $\delta(-\mu_{k}) \geq 0$ and $\delta(-\mu_{k-1})\geq 0$. Since $\delta$ is affine, it follows that, for all $v \in [-\mu_{k} , -\mu_{k-1}]$, $\delta(v) \geq 0$, as wanted.

\medskip
	Let us now consider the case $k=1$. It follows from Lemma \ref{lem:ineq slopes N(L)} that 
	\begin{equation*}
	 -\ellmahl^{\alphaa} \mu_{1} + \betaa \geq -\ellmahl^{\alpha_{0}}  \mu_{1} +\beta_{0}.
	\end{equation*} 
	In other words, $\delta(-\mu_{1}) \geq 0$.  Since $\alphaa_0=0$, the function $\delta$ is either increasing or constant. Thus, for any $v \geq -\mu_1$, $\delta(v)\geq \delta(-\mu_1)\geq 0$, as wanted.

	\medskip
	Let us eventually consider the case $k=\kappa+1$. It follows from Lemma \ref{lem:ineq slopes N(L)} that 
	\begin{equation*}
	 -\ellmahl^{\alphaa} \mu_{\kappa} + \betaa \geq -\ellmahl^{\alpha_{\kappa}}  \mu_{\kappa} +\beta_{\kappa}.
	\end{equation*}
	In other words, $\delta(-\mu_{\kappa}) \geq 0$. Since $\alphaa_{\kappa}=n \geq \alpha$, the function $\delta$ is either decreasing or constant. Thus, for any $v \leq -\mu_{\kappa}$, $\delta(v)\geq \delta(-\mu_{\kappa})\geq 0$, as wanted.
\end{proof}

\begin{ex}
An illustration of Lemma \ref{lem psi} for the operator $L$ given by \eqref{L pour example} is given by Figure \ref{fig:rudin shapiro polygon} in Section \ref{illustration main algo on RS}. Indeed, with the hypotheses and notations of Figure \ref{fig:rudin shapiro polygon}, we have $\kappa=2$, $\mu_{1}=0$, $\mu_{2}=1/2$ and Lemma \ref{lem psi} ensures that:
\begin{itemize}
 \item if $0\leq v$, then $\minP(v)$ is the ordinate of the projection of  $\x _{0}$ along a line of slope $-v$ on the $y$-axis;
 \item if $-1/2 \leq v \leq 0$, then $\minP(v)$ is the ordinate of the projection of  $\x _{1}$ along a line of slope $-v$ on the $y$-axis;
  \item if $ v \leq -1/2$, then $\minP(v)$ is the ordinate of the projection of  $\x _{2}$ along a line of slope $-v$ on the $y$-axis. 
\end{itemize}
This is indeed what we see on Subfigure \ref{fig:subfigA} (resp. \ref{fig:subfigB}, \ref{fig:subfigC}) when $v=1/4$ (resp. $-1/4$, $-3/4$).
\end{ex}

\begin{lem}\label{lem pi}
 If $-\mu_{k} \leq \pi(q) \leq-\mu_{k-1}$ for some $k \in \{1,\ldots,\kappa+1\}$ with the conventions $\mu_{0}=-\infty$ and $\mu_{\kappa+1}=+\infty$,  then 
\begin{equation}\label{eq: form pour pi(q)}
 \pi(q)= \frac{ q - \beta_{k-1}}{\ellmahl^{\alpha_{k-1}}} \,.
\end{equation}
In geometric terms, the formula \eqref{eq: form pour pi(q)} means that $\pi(q)$ is the opposite of the slope of the line passing through $(0,q)$ and $\x_{k-1}=(\ellmahl^{\alpha_{k-1}},\beta_{k-1})$.
\end{lem}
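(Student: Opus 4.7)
The plan is to deduce this lemma directly from the two immediately preceding lemmas, namely that $\pi$ and $\minP$ are mutually inverse \strictementcroissant{} bijections (Lemma \ref{lem proprietes pi minP}) and the explicit formula for $\minP$ on each interval $[-\mu_k,-\mu_{k-1}]$ (Lemma \ref{lem psi}). No further geometry is needed; the statement is essentially Lemma \ref{lem psi} transported through the inverse bijection.

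Concretely, I would set $v = \pi(q)$ and use Lemma \ref{lem proprietes pi minP} to rewrite $q = \minP(\pi(q)) = \minP(v)$. The hypothesis $-\mu_k \leq \pi(q) \leq -\mu_{k-1}$ becomes $-\mu_k \leq v \leq -\mu_{k-1}$, so Lemma \ref{lem psi} applies and yields
\[
q \;=\; \minP(v) \;=\; \ellmahl^{\alpha_{k-1}}\, v + \beta_{k-1}.
\]
Solving this affine equation for $v$ gives $v = (q-\beta_{k-1})/\ellmahl^{\alpha_{k-1}}$, which is exactly the desired formula for $\pi(q)$.

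There is no real obstacle here: the only thing to check is that the conventions $\mu_0 = -\infty$ and $\mu_{\kappa+1} = +\infty$ are used consistently with those adopted in Lemma \ref{lem psi}, which they are. The geometric interpretation (that $\pi(q)$ is the opposite of the slope of the line through $(0,q)$ and $\x_{k-1}$) follows at once from the formula, since the slope of that line is $(\beta_{k-1}-q)/\ellmahl^{\alpha_{k-1}} = -\pi(q)$.
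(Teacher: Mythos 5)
Your proof is correct and follows the same route as the paper: apply Lemma \ref{lem psi} with $v=\pi(q)$ to get $\minP(\pi(q))=\ellmahl^{\alpha_{k-1}}\pi(q)+\beta_{k-1}$, then use $q=\minP(\pi(q))$ from Lemma \ref{lem proprietes pi minP} and solve for $\pi(q)$. Nothing to add.
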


\begin{proof}
Applying Lemma \ref{lem psi} with $v= \pi(q)$ we obtain 
$$
\minP(\pi(q))=\ellmahl^{\alpha_{k-1}} \pi(q) + \beta_{k-1}.
$$
The result follows from this formula since, according to Lemma \ref{lem proprietes pi minP}, we have $q=\minP(\pi(q))$. 
\end{proof}

\subsection{Supports, valuations and the maps $\Psi$, $\minP$ and $\pi$}\label{sec:slopes and sol}
Roughly speaking, the results presented in this section aim to relate the valuations and the supports of $L(f\pz)$ and of $f\pz$ {\it via} the maps $\Psi$, $\minP$ and $\pi$ introduced in section \ref{sec:the maps Psi minP pi}. These results will be used extensively in the remainder of this paper.

\begin{lem}\label{lem:val Lf}
For any $f\pz  \in \Hahn$, we have 
$$
\val L(f\pz ) \geq \minP(\val f\pz ).
$$
If $\val f\pz  \not \in -\mathcal{S}(L)$, then  
 $$
 \val L(f\pz ) = \minP(\val f\pz ).$$
\end{lem}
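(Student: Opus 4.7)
The plan is to exploit the definition $L = \sum_{i=0}^{n} a_i(z) \phi_\ell^i$ together with the two valuation properties \eqref{val prod} and \eqref{val ult metric}, and then invoke Lemma \ref{rem charact slopes 1} to handle the case of equality.

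First, for any $i \in \{0,\ldots,n\}$, since $\phi_\ell$ sends $f(z)$ to $f(z^{\ell^i})$, a direct computation on supports shows that $\val f(z^{\ell^i}) = \ell^i \val f(z)$. Combined with \eqref{val prod}, this yields
\begin{equation*}
\val\bigl(a_i(z) f(z^{\ell^i})\bigr) = \val a_i(z) + \ell^i \val f(z),
\end{equation*}
with the usual convention $\val 0 = +\infty$ (so the formula remains valid when $a_i = 0$ or $f = 0$). Writing $v = \val f(z)$ and applying the ultrametric inequality \eqref{val ult metric} to the sum $L(f) = \sum_{i=0}^n a_i(z) f(z^{\ell^i})$, we obtain
\begin{equation*}
\val L(f(z)) \geq \min_{i \in \{0,\ldots,n\}} \bigl(\ell^i v + \val a_i(z)\bigr) = \minP(v),
\end{equation*}
where the last equality is exactly the definition of $\minP(v)$ recalled in \eqref{def psi}. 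This proves the first assertion.

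For the equality when $v \notin -\S(L)$, I would use the contrapositive of Lemma \ref{rem charact slopes 1}: if $\mu := -v$ does not belong to $\S(L) = \S(L,0)$, then there cannot exist two distinct indices $i_1, i_2 \in \{0,\ldots,n\}$ at which $\val a_i - \ell^i \mu$ attains its minimum (the index $-\infty$ being irrelevant here since $a_{-\infty}=0$). Consequently, the minimum defining $\minP(v)$ is attained at a \emph{unique} index $i_0$, and all other terms $\ell^i v + \val a_i$ are strictly larger. By the standard strengthening of the ultrametric inequality (when one term strictly dominates, the valuation of the sum equals the valuation of that term), we conclude
\begin{equation*}
\val L(f(z)) = \ell^{i_0} v + \val a_{i_0}(z) = \minP(v).
\end{equation*}

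There is no real obstacle here; the only mild care needed is to accommodate the degenerate cases $f = 0$ (where $v = +\infty$ and the equality $\minP(+\infty) = +\infty$ has been adopted by convention) and the possibility that some $a_i$ vanish, both of which are absorbed by the $+\infty$ conventions.
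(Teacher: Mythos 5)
Your proof is correct and follows essentially the same route as the paper: the first inequality comes from the ultrametric inequality together with the formula $\val(a_i\,\phi_\ell^i(f)) = \val a_i + \ell^i\val f$, and the equality statement reduces to Lemma \ref{rem charact slopes 1}. The only cosmetic difference is direction: the paper argues the contrapositive (strict inequality forces two indices to realize the minimum, whence $\val f \in -\S(L)$), whereas you argue directly (uniqueness of the minimizing index gives equality via the strong ultrametric property); these are the same argument.
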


\begin{proof}
The result is obvious if $f\pz=0$. In the rest of the proof, we assume that $f\pz \neq 0$. 
As a preliminary remark, note the following obvious but important formula:
\begin{equation}\label{formula psi val}
 \min_{i \in \{0,\ldots,n\}} \val(a_{i}  \malop{\ellmahl}^{i}(f)) 
 =\min_{i \in \{0,\ldots,n\}} \val a_{i}\pz +\ellmahl^{i} \val f\pz  
 = \minP(\val f\pz ).
\end{equation}

We have 
\begin{equation}\label{eq tt a gauche bis bis}
L(f)=a_{n}  \malop{\ellmahl}^{n}(f) + a_{n-1}  \malop{\ellmahl}^{n-1}(f) + \cdots + a_{0}  f.
\end{equation}
Using \eqref{val ult metric}, we get  
$$
\val L(f) \geq \min_{i \in \{0,\ldots,n\}} \val(a_{i} \malop{\ellmahl}^{i}(f)). 
$$
Combining the latter inequality with \eqref{formula psi val}, we get the first assertion of the Lemma.

Let us prove the contrapositive of the second assertion. Assume that  $\val L(f\pz ) \neq \minP(\val f\pz )$. The first part of the Lemma ensures that $\val L(f\pz ) > \minP(\val f\pz )$.  
Using \eqref{formula psi val}, the latter inequality can be rewritten as 
$$  
\val L(f )>\min_{i \in \{0,\ldots,n\}} \val(a_{i}  \malop{\ellmahl}^{i}(f)). 
$$
Combining the latter inequality with \eqref{eq tt a gauche bis bis}, we see that there exist distinct indices $i_{1},i_{2} \in \{0,\ldots,n\}$ such that 
$$ 
\val(a_{i_{1}}  \malop{\ellmahl}^{i_{1}}(f)) =\val(a_{i_{2}} \malop{\ellmahl}^{i_{2}}(f)) 
 =\min_{i \in \{0,\ldots,n\}} \val(a_{i}  \malop{\ellmahl}^{i}(f)), 
$$
{\it i.e.}, such that
$$
 \val a_{i_{1}}\pz +\ellmahl^{i_{1}} \val f\pz =\val a_{i_{2}}\pz +\ellmahl^{i_{2}} \val f\pz 
 =\min_{i \in \{0,\ldots,n\}} \val a_{i}\pz +\ellmahl^{i} \val f\pz . 
$$
 Using Lemma \ref{rem charact slopes 1}, we see that $\val f\pz \in -\mathcal{S}(L)$. 
\end{proof}

\begin{lem}\label{lem:val Lf eg g}
 For any solution $f\pz  \in \Hahn$ of \eqref{eq mahl slopes}, we have  
	$$
 \val f\pz  \in \{\pi(\val a_{-\infty}\pz)\} \cup -\S(L).$$
\end{lem}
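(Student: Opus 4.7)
The plan is to do a simple case analysis on whether $\val f(z)$ lies in $-\S(L)$ or not, and to use Lemma \ref{lem:val Lf} together with Lemma \ref{lem proprietes pi minP} in the latter case.

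First, observe that since $f(z)$ solves \eqref{eq mahl slopes}, we have $L(f(z)) = a_{-\infty}(z)$, so in particular $\val L(f(z)) = \val a_{-\infty}(z)$. The trivial case $f(z)=0$ forces $a_{-\infty}(z)=0$, whence $\val f(z) = +\infty = \pi(+\infty) = \pi(\val a_{-\infty}(z))$ by the convention set in section \ref{sec:the maps Psi minP pi}, so the conclusion holds. From now on, assume $f(z) \neq 0$.

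Now suppose $\val f(z) \not\in -\S(L)$; this is the only non-trivial case, since the other inclusion is automatic. By Lemma \ref{lem:val Lf}, we then have the equality
\[
\val L(f(z)) = \minP(\val f(z)).
\]
Combining this with $\val L(f(z)) = \val a_{-\infty}(z)$ yields $\minP(\val f(z)) = \val a_{-\infty}(z)$. By Lemma \ref{lem proprietes pi minP}, the maps $\minP$ and $\pi$ are mutually inverse bijections of $\Q$, so applying $\pi$ to both sides gives $\val f(z) = \pi(\val a_{-\infty}(z))$, as required.

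This essentially completes the proof; there is no real obstacle beyond bookkeeping the $+\infty$ conventions (when $a_{-\infty}(z)=0$) and invoking the two earlier lemmas in the right order. The content of the statement is really a direct structural consequence of the already-established fact that, outside the exceptional slope set $-\S(L)$, the operator $L$ acts on valuations exactly by the bijection $\minP$ — so the valuation of any solution is either controlled by $\pi(\val a_{-\infty}(z))$ or exceptional, in which case it belongs to $-\S(L)$.
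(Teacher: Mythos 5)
Your proof is correct, but it takes a genuinely different route from the paper's. The paper works directly with the inhomogeneous Newton polygon: it writes $a_n\malop{\ellmahl}^n(f) + \cdots + a_0 f - a_{-\infty} = 0$, observes that two of the $n+2$ terms (indexed by $\{-\infty,0,\ldots,n\}$) must simultaneously achieve the minimal valuation, and deduces from Lemma \ref{rem charact slopes 1} (the slope characterization for $\mathcal N(L,a_{-\infty})$) that $\val f \in -\S(L,a_{-\infty})$, after which it invokes the inclusion $-\S(L,a_{-\infty}) \subset \{\pi(\val a_{-\infty})\}\cup -\S(L)$. You instead do a case split on whether $\val f \in -\S(L)$, and in the non-trivial case leverage the already-proved homogeneous statement of Lemma \ref{lem:val Lf} (that $\val L(f) = \minP(\val f)$ when $\val f\notin -\S(L)$) together with the inverse-bijection property of Lemma \ref{lem proprietes pi minP} to conclude $\val f = \pi(\val a_{-\infty})$. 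Your version is more modular — it reuses the lemma for $L$ rather than re-running the "two terms must cancel" argument for $\mathcal N(L,a_{-\infty})$ — while the paper's proof exposes the geometry of the inhomogeneous Newton polygon, which is also used elsewhere. The one place worth being explicit is the subcase $f\neq 0$, $a_{-\infty}=0$, $\val f\notin -\S(L)$: your identity $\minP(\val f)=\val a_{-\infty}$ then reads as a rational number equal to $+\infty$, which is impossible, so that subcase is vacuous; the argument is still sound, but stating this would avoid the reader wondering whether $\pi$ is being applied across $\Q$ and $\{+\infty\}$ simultaneously.
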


\begin{proof}
If $f\pz=0$, then $a_{-\infty}\pz=L(f\pz)=L(0)=0$ and, hence,  $\val f\pz  =\val a_{-\infty}\pz = +\infty$. This proves the lemma in this case. 

From now on, we assume that $f\pz \neq 0$. 
We have
\begin{equation}\label{eq tt a gauche}
 a_{n} \malop{\ellmahl}^{n}(f)  + a_{n-1}  \malop{\ellmahl}^{n-1}(f) + \cdots + a_{0}  f  - a_{-\infty} =0. 
\end{equation}
The equality \eqref{eq tt a gauche} ensures that there exist distinct indices $i_{1},i_{2} \in \{-\infty,0,\ldots,n\}$ such that 
$$ 
\val(a_{i_{1}} \malop{\ellmahl}^{i_{1}}(f) ) =\val(a_{i_{2}} \malop{\ellmahl}^{i_{2}}(f)) 
 =\min_{i \in \{-\infty,0,\ldots,n\}} \val(a_{i}  \malop{\ellmahl}^{i}(f) ), 
$$
{\it i.e.}, such that
$$
 \val a_{i_{1}}\pz +\ellmahl^{i_{1}} \val f\pz  =\val a_{i_{2}}\pz +\ellmahl^{i_{2}} \val f\pz = \min_{i \in \{-\infty,0,\ldots,n\}} \val a_{i}\pz +\ellmahl^{i} \val f\pz . 
$$
In what precedes, when $i=-\infty$, by $\val(a_{i} \malop{\ellmahl}^{i}(f) )$ and by $\val a_{i}\pz +\ellmahl^{i} \val f\pz$, we mean $\val a_{-\infty}\pz$. 
 Using Lemma \ref{rem charact slopes 1}, we see that $\val f\pz  \in -\mathcal{S}(L,a_{-\infty}\pz )$. We conclude by using the following inclusion that follows directly from the definitions:
 $$
 -\mathcal{S}(L,a_{-\infty}\pz ) \subset \{\pi(\val a_{-\infty}\pz)\} \cup -\S(L).
 $$ 
\end{proof}

\begin{coro}\label{sol eq si eq on - pentes}
		Two solutions $f\pz,g\pz \in \Hahn$ of \eqref{eq mahl slopes} are equal if and only if they are equal on $-\S(L)$. 		
\end{coro}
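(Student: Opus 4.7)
The forward direction is trivial, so the heart of the matter is the converse. My plan is to consider the difference $h\pz = f\pz - g\pz$ and derive a contradiction from assuming $h\pz \neq 0$, exploiting Lemma \ref{lem:val Lf eg g} applied to the associated homogeneous equation.

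First, I would observe that because $L$ is a $\bK$-linear operator on $\Hahn$ and both $f\pz$ and $g\pz$ satisfy \eqref{eq mahl slopes}, their difference $h\pz$ satisfies the homogeneous equation $L(h\pz) = 0$. In other words, $h\pz$ is a solution of \eqref{eq mahl slopes} with right-hand side $a_{-\infty}\pz = 0$. The hypothesis that $f\pz$ and $g\pz$ agree on $-\S(L)$ translates into $h_{\gamma} = 0$ for every $\gamma \in -\S(L)$, i.e., $\supp h\pz \cap (-\S(L)) = \emptyset$.

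Next, I would apply Lemma \ref{lem:val Lf eg g} to $h\pz$ viewed as a solution of the homogeneous equation. Since the right-hand side is $0$ and $\val 0 = +\infty$, with $\pi(+\infty) = +\infty$, the lemma yields
$$
\val h\pz \in \{+\infty\} \cup (-\S(L)).
$$
Now suppose, for contradiction, that $h\pz \neq 0$. Then $\val h\pz = \min \supp h\pz$ is a rational number, so $\val h\pz \neq +\infty$, forcing $\val h\pz \in -\S(L)$. But $\val h\pz \in \supp h\pz$ and $\supp h\pz$ is disjoint from $-\S(L)$, a contradiction. Hence $h\pz = 0$ and $f\pz = g\pz$.

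I do not expect any genuine obstacle: the main conceptual step, namely that a nonzero solution of the homogeneous equation must have its valuation in $-\S(L)$, has already been established in Lemma \ref{lem:val Lf eg g}, and the corollary is essentially a direct linearity-plus-uniqueness consequence of that lemma. The only thing to be slightly careful about is handling the inhomogeneous term correctly when passing to the difference, and correctly interpreting the conventions $\val 0 = +\infty$ and $\pi(+\infty) = +\infty$ so that Lemma \ref{lem:val Lf eg g} applies cleanly.
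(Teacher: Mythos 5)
Your proof is correct and follows essentially the same route as the paper: set $h = f - g$, note that $L(h)=0$ and that the hypothesis forces $\val h \notin -\S(L)$, then invoke Lemma \ref{lem:val Lf eg g} to conclude $\val h = +\infty$, i.e., $h = 0$. The only cosmetic difference is that the paper phrases the final step directly rather than as a proof by contradiction.
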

		
		 \begin{proof}
		 If two solutions $f\pz$ and $g\pz$ of \eqref{eq mahl slopes} are equal on $-\S(L)$, then $h\pz=g\pz-f\pz$ is equal to $0$ on $-\S(L)$ and, hence, $\val h\pz \not \in -\S(L)$. But, since $h\pz$ satisfies  $L(h\pz)=0$,  Lemma \ref{lem:val Lf eg g} ensures that $\val h\pz \in \{+\infty\} \cup -\S(L)$. Therefore, $\val h\pz = +\infty$ and, hence, $h\pz=0$, {\it i.e.}, $f\pz=g\pz$ as claimed. 
		\end{proof}

\begin{lem}\label{lem:supp Lf sub P supp f}
For any $f\pz  \in \Hahn$, we have $\supp L(f\pz) \subset \bigcup_{\gamma \in  \supp f\pz} \Psi(\gamma)$.
\end{lem}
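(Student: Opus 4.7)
The plan is to imitate, for an arbitrary Hahn series $f(z)$, the direct computation already carried out in formula \eqref{eq:Lzv} for the monomial case $f(z)=z^v$; the inclusion \eqref{eq:suppincPsi} is exactly the statement for $f(z)=z^v$, and we simply need to check that summing over the support of $f$ is compatible with the support bound.

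Concretely, I would write $f(z)=\sum_{\gamma\in\supp f}f_\gamma z^\gamma$ and, for each $i\in\{0,\ldots,n\}$, $a_i(z)=\sum_{j\in\supp a_i}a_{i,j}z^j$. Applying $\phi_\ell^i$ gives $\phi_\ell^i(f)(z)=f(z^{\ell^i})=\sum_{\gamma\in\supp f}f_\gamma z^{\gamma\ell^i}$, so
\[
L(f)(z)=\sum_{i=0}^n a_i(z)\,\phi_\ell^i(f)(z)=\sum_{i=0}^n\sum_{j\in\supp a_i}\sum_{\gamma\in\supp f}a_{i,j}f_\gamma\, z^{\gamma\ell^i+j}=\sum_{\gamma\in\supp f}\sum_{(\ell^i,j)\in\P(L)}a_{i,j}f_\gamma\,z^{\gamma\ell^i+j}.
\]
Every exponent that appears in this expansion with a potentially nonzero coefficient is therefore of the form $\gamma\ell^i+j$ for some $\gamma\in\supp f$ and some $(\ell^i,j)\in\P(L)$, i.e.\ belongs to $\Psi(\gamma)\subset\bigcup_{\gamma\in\supp f}\Psi(\gamma)$. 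Hence after collecting terms one obtains $\supp L(f)\subset\bigcup_{\gamma\in\supp f}\Psi(\gamma)$, which is the desired inclusion.

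The only point that needs a word of justification is that the rearrangement of the triple sum above is legitimate in $\Hahn$: for each fixed $\delta\in\Q$, the set of triples $(i,j,\gamma)$ with $\gamma\ell^i+j=\delta$, $(\ell^i,j)\in\P(L)$ and $\gamma\in\supp f$ is finite, because $\P(L)$ is finite (Remark \ref{rem:PL_finite}) and, for each fixed $(\ell^i,j)$, there is at most one $\gamma$ with $\gamma\ell^i+j=\delta$. Thus the coefficient of $z^\delta$ in $L(f)$ is the well-defined finite sum $\sum a_{i,j}f_\gamma$ over these triples, and the support bound above is meaningful. No real obstacle arises; the statement is essentially a bookkeeping lemma that upgrades \eqref{eq:suppincPsi} from monomials to general Hahn series by linearity.
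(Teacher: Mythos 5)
Your proof is correct and follows essentially the same route as the paper: expand $L(f)$ as a (triple) sum over $i$, $j\in\supp a_i$, and $\gamma\in\supp f$, and observe that every exponent that can occur is of the form $\gamma\ell^i+j$ with $(\ell^i,j)\in\P(L)$. The paper phrases this via the subadditivity of $\supp$ under sums, while you make the coefficient-level bookkeeping (and the finiteness of the fibers over each fixed exponent) explicit; this is a stylistic difference only.
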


\begin{proof}	
We set, for any $i \in \{0,\ldots,n\}$, $a_{i}\pz=\sum_{j \in \supp a_{i}\pz} a_{i,j}z^{j}$. We have 
$$
L(f\pz)= \sum_{i=0}^{n} a_{i}\malop{\ellmahl}^{i}(f)=\sum_{i=0}^{n}\sum_{j \in \supp a_{i}\pz} a_{i,j}z^{j}\malop{\ellmahl}^{i}(f),
$$
so 
\begin{eqnarray*}
 \supp L(f\pz) &\subset& \bigcup_{i=0}^{n}\bigcup_{j \in \supp a_{i}\pz} \supp z^{j}\malop{\ellmahl}^{i}(f) \\ 
&=&\bigcup_{i=0}^{n}\bigcup_{j \in \supp a_{i}\pz} \bigcup_{\gamma \in \supp f\pz} \{\gamma\ellmahl^{i}+j\}\\
&=&\bigcup_{\gamma \in \supp f\pz} \bigcup_{i=0}^{n}\bigcup_{j \in \supp a_{i}\pz}  \{\gamma\ellmahl^{i}+j\}\\ 
&=&\bigcup_{\gamma \in \supp f\pz} \{\gamma\ellmahl^{i}+j \ \vert \ (\ellmahl^{i},j) \in \P(L)\} = \bigcup_{\gamma \in \supp f\pz} \Psi(\gamma). 
\end{eqnarray*}
\end{proof}

\subsection{A map of fundamental importance}

The map 
\begin{equation}\label{eq:pi circ Psi}
 \begin{array}{rclc}
\Q &\rightarrow& \{\,\text{Finite subsets of $\Q$}\,\}  \\
     v &\mapsto& \pi(\Psi(v))
\end{array}
\end{equation}
will play central role in this paper. For a graphic illustration of this map for the operator $L$ given by \eqref{L pour example}, we refer to Figure \ref{fig:rudin shapiro polygon}.  
 
Let us briefly explain how this map naturally arises when we seek to understand the support of the Hahn series solutions of Mahler equations. 
Consider $f\pz=\sum_{\gamma \in \Q} f_{\gamma}z^{\gamma} \in \Hahn$ such that $L(f)=0$. To simplify the presentation, we assume that $\supp f$ has at least two elements and we ask: what are the possible values for the two least elements $\gamma_{0}<\gamma_{1}$ of $\supp f$?  
We have $\gamma_{0} = \val f $, $\gamma_{1}=\val f_{\vert  \Q_{>\gamma_{0}}}$ and 
$$
f\pz =f_{\vert \Q_{>\gamma_{1}}}\pz +f_{\gamma_{1}}z^{\gamma_{1}} + f_{\gamma_{0}}z^{\gamma_{0}}. 
$$ 

What are the possible values for $\gamma_{0}$? Lemma \ref{lem:val Lf eg g} provides an immediate answer to this question: $\gamma_{0} \in -\S(L)$. 

What are the possible values for $\gamma_{1}$? In order to answer this question, note that  
the equations $L(f\pz)=0$ and $f=f_{\vert \Q_{>\gamma_{0}}}\pz+f_{\gamma_{0}}z^{\gamma_{0}}$ imply $L(f_{\vert \Q_{>\gamma_{0}}}\pz)=-f_{\gamma_{0}}L(z^{\gamma_{0}})$. Lemma \ref{lem:val Lf eg g} ensures that  $\gamma_{1}=\val f_{\vert \Q_{>\gamma_{0}}}\pz \in \{\pi(\val L(z^{\gamma_{0}}))\} \cup -\S(L)$. But, it follows from Lemma \ref{lem:supp Lf sub P supp f} that $\supp L(z^{\gamma_{0}}) \subset \Psi(\gamma_{0})$. So, $\gamma_{1} \in \pi( \Psi(\gamma_{0})) \cup -\S(L)$. Since $\gamma_{0} \in -\S(L)$ and since $-\S(L)\subset \bigcup_{v\in -\S(L)}\pi( \Psi(v))$ as a consequence of the first assertion of Lemma \ref{min Psi(s)=s} below, we get 
$$\gamma_{1} \in \bigcup_{v\in -\S(L)}\pi( \Psi(v)).$$
We see the map \eqref{eq:pi circ Psi} naturally appear here. 
Iterating this in order to reach more and more elements of the support of $f$, we can guess that the map \eqref{eq:pi circ Psi} and its iterates should play a central role in the study of the support of the Hahn series solution of \eqref{eq mahl slopes}. We will see in section \ref{sec:recept supp sol L bis} and, more precisely, in Theorem \ref{theo: un support bis} that this is indeed the case.

Note the following result for further use.

\begin{lem}\label{min Psi(s)=s}
 For all $v\in \Q$, we have 
 $$v \in \pi(\Psi(v))$$
 and 
\begin{equation}\label{eq:min Psi(s)=s}
  \min \pi(\Psi(v))=\pi(\min \Psi(v))=\pi(\minP(v))=v.
\end{equation}
\end{lem}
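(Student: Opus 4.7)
The statement follows almost directly from Lemma~\ref{lem proprietes pi minP} together with the definition of $\minP$ as $\min \Psi$. My plan is as follows.

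First, I would unpack the definitions. By definition $\minP(v) = \min \Psi(v)$, so the middle equality $\pi(\min \Psi(v)) = \pi(\minP(v))$ is just a renaming. Next, Lemma~\ref{lem proprietes pi minP} tells us that $\pi$ and $\minP$ are mutually inverse bijections of $\Q$, so $\pi(\minP(v)) = v$. This immediately handles the chain of equalities in \eqref{eq:min Psi(s)=s}, modulo showing that the leftmost term $\min \pi(\Psi(v))$ also equals $v$.

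For the first claim $v \in \pi(\Psi(v))$, I would observe that $\Psi(v)$ is a finite subset of $\Q$, so $\minP(v) = \min \Psi(v)$ is actually attained, i.e.\ $\minP(v) \in \Psi(v)$. Applying $\pi$ yields $v = \pi(\minP(v)) \in \pi(\Psi(v))$, as desired.

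Finally, to identify $\min \pi(\Psi(v))$, I would use the fact that $\pi$ is \strictementcroissant{} (again by Lemma~\ref{lem proprietes pi minP}), and therefore preserves minima of finite subsets of $\Q$. Since $\Psi(v)$ is finite, we obtain
\[
\min \pi(\Psi(v)) = \pi(\min \Psi(v)) = \pi(\minP(v)) = v,
\]
which completes the proof. There is no real obstacle here: the whole statement is a formal consequence of Lemma~\ref{lem proprietes pi minP} (which provides that $\pi$ is a strictly increasing bijection with inverse $\minP$) and of the fact that $\Psi(v)$ is finite so that its minimum is attained.
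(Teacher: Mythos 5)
Your proof is correct and follows essentially the same route as the paper: both reduce the lemma to the chain of equalities, using that $\pi$ is increasing for the first equality, the definition of $\minP$ for the second, and the inverse relationship between $\pi$ and $\minP$ for the third. The only cosmetic difference is that you prove $v \in \pi(\Psi(v))$ directly from finiteness of $\Psi(v)$ before establishing the equalities, while the paper observes that the membership claim is an immediate consequence of the equalities once one knows $\min \pi(\Psi(v))$ is attained.
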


\begin{proof}
In order to prove the lemma, it is sufficient to prove \eqref{eq:min Psi(s)=s}. The first equality in \eqref{eq:min Psi(s)=s} follows from the fact that $\pi$ is increasing by Lemma \ref{lem proprietes pi minP}. The second equality in \eqref{eq:min Psi(s)=s} follows from the definition of $\minP$. The third equality in \eqref{eq:min Psi(s)=s} follows form the fact that $\pi$ and $\minP$ are inverse of each other by Lemma~\ref{lem proprietes pi minP}.
\end{proof}

\section{A receptacle $\V$ for the support of the solutions}\label{sec:recept supp sol L bis}

Throughout this section, we consider a Mahler operator 
\begin{equation}\label{operateur L}
L= a_{n}\pz \malop{\ellmahl}^{n} + a_{n-1}\pz\malop{\ellmahl}^{n-1} + \cdots + a_{0}\pz
\end{equation}
with coefficients $a_{0}\pz,\ldots,a_{n}\pz \in \bK[z]$ such that $a_{0}\pz a_{n}\pz \neq 0$. 
We let  
$$
\Sol(L,\Hahn) = \{f\pz  \in \Hahn \ \vert \ L(f\pz )=0 \}
$$ 
be the $\bK$-vector space of solutions of $L$ in $\Hahn$. 
In what follows, we will use the following terminology:

\begin{defi}\label{def comp bis}
	We say that a subset $\Qr$ of $\Q$ is \textit{computable} if there exists an algorithm which takes a rational number $q$ as input and returns whether it belongs to $\Qr$ or not. 
\end{defi}
The aim of this section is to describe a computable well-ordered subset $\V$ of $\Q$ containing the support of any Hahn series solution of $L$, {\it i.e.}, such that 
\begin{equation}\label{sol inc h V}
 \Sol(L,\Hahn) \subset \Hahn_{\vert \V}
\end{equation}  
(and satisfying a technical but important stability condition with respect to the map \eqref{eq:pi circ Psi}).

\begin{theo}\label{theo: un support bis}
Let $(\W_{i})_{i\geq 0}$ be the sequence of finite subsets of $\Q$ defined as follows:
\begin{itemize}
 \item $\W_{0}=-\S(L)$;
 \item $\forall i \geq 0$, $\W_{i+1}=\bigcup_{v \in \W_{i}}\pi (\Psi(v))$.
\end{itemize}
The sequence $(\V_{i})_{i\geq 0}$ is \croissant{}
 and the set 
$
\W=\bigcup_{i\geq 0} \W_{i}
$  
has the following properties: 
\begin{enumerate}
 \item \label{supp sol dans S bis}
 $
\Sol(L,\Hahn) \subset \Hahn_{\vert \W}
$; 
 \item \label{S well ordered bis} $\W$ is well-ordered; 
 \item \label{S computable bis} $\W$ is computable;
 \item \label{S contient opp pentes bis} $-\S(L) \subset \W$;
 \item \label{stab pi Psi bis}  
 $
\bigcup_{v \in \W} \pi (\Psi(v)) = \W
$.
 \end{enumerate}
\end{theo}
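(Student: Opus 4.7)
My plan is to dispose of monotonicity, (4), and (5) first as easy consequences of Lemma \ref{min Psi(s)=s}, then establish (1) by transfinite induction, and finally attack (2) -- which I expect to be the main obstacle -- and conclude (3) from it. Monotonicity: by induction on $i$, if $v \in \W_i$ then $v \in \pi(\Psi(v)) \subset \W_{i+1}$ by Lemma \ref{min Psi(s)=s}. Property (4): immediate from $\W_0 = -\S(L) \subset \W$. Property (5): the inclusion $\W \subset \bigcup_{v \in \W} \pi(\Psi(v))$ uses $v \in \pi(\Psi(v))$, while conversely $w \in \pi(\Psi(v))$ with $v \in \W_i$ gives $w \in \W_{i+1} \subset \W$.

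For property (1), fix $f \in \Sol(L, \Hahn)$ and enumerate its well-ordered support as $\supp f = \{\gamma_\alpha\}_{\alpha < \lambda}$ in strict increasing order, where $\lambda$ is a countable ordinal. I would prove $\gamma_\alpha \in \W$ by transfinite induction on $\alpha$. At stage $\alpha$, set $g_\alpha := f - \sum_{\beta < \alpha} f_{\gamma_\beta} z^{\gamma_\beta} \in \Hahn$, so that $\val g_\alpha = \gamma_\alpha$ and $L(g_\alpha) = -\sum_{\beta < \alpha} f_{\gamma_\beta} L(z^{\gamma_\beta})$. Lemma \ref{lem:supp Lf sub P supp f} gives $\supp L(g_\alpha) \subset \bigcup_{\beta < \alpha} \Psi(\gamma_\beta)$, and Lemma \ref{lem:val Lf eg g} applied to the inhomogeneous equation $L(y) = L(g_\alpha)$ yields $\gamma_\alpha \in \{\pi(\val L(g_\alpha))\} \cup -\S(L)$. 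Either $\gamma_\alpha \in -\S(L) \subset \W$, or $\gamma_\alpha = \pi(\val L(g_\alpha)) \in \pi(\Psi(\gamma_\beta))$ for some $\beta < \alpha$, in which case the induction hypothesis $\gamma_\beta \in \W$ together with (5) gives $\gamma_\alpha \in \W$.

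Well-orderedness (2) is the main obstacle. My approach is to assume an infinite strictly decreasing sequence $(w_k)$ in $\W$ and derive a contradiction. Each $\W_i$ is finite (induction on $i$, using finiteness of $\P(L)$), so the depths $d(w_k) := \min\{i : w_k \in \W_i\}$ are unbounded; after extraction I may assume $d(w_k) \to \infty$ with $(w_k)$ still strictly decreasing. For each $w_k$ with $d(w_k) \geq 1$, choose a strict parent $v_k = (\minP(w_k) - b_k)/\ellmahl^{a_k} \in \W_{d(w_k)-1}$ with $v_k < w_k$, associated to some $(\ellmahl^{a_k}, b_k) \in \P(L)$ different from the Newton-polygon vertex giving the trivial parent $v_k = w_k$. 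Pigeonhole on $\P(L)$ and on the finitely many pieces of the piecewise-affine function $\minP$ (Lemma \ref{lem psi}) lets me extract further so that both label and piece are constant; then $v_k = c\, w_k + d$ for fixed $c > 0$ and $d$, so $(v_k)$ is itself strictly decreasing in $\W$ with depths still tending to infinity. Iterating, the composed affine maps $w \mapsto v^{(n)}$ cycle through only finitely many ``state pairs''; the core technical point, which I expect to be the most delicate, is to combine this periodicity with the lower bound $\W \subset [\min(-\S(L)), +\infty)$ (itself a consequence of $v = \min \pi(\Psi(v))$) to force the iterated values to eventually leave $\W$, producing the contradiction.

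For computability (3), given (2), the key observation is that $v = \min \pi(\Psi(v))$ (Lemma \ref{min Psi(s)=s}), so elements of $\W_{i+1}$ lying below $q$ can only arise from elements of $\W_i$ lying below $q$; hence $\W \cap (-\infty, q]$ is determined by iterating $\pi \circ \Psi$ restricted to that interval. If this intersection were always finite, a naive ``iterate until stable'' algorithm would suffice, but as the support of $f_2\pzintro^2$ from the introduction illustrates it can be infinite with infinitely many accumulation points. The algorithm must therefore exploit the piecewise-affine structure of $\pi$ and $\minP$ from Lemmas \ref{lem psi} and \ref{lem pi} to produce, from a bounded number of iterations, a finite description of $\W \cap (-\infty, q]$ sufficient to decide membership of $q$. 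Designing this description and bounding explicitly the number of iterations required is the nontrivial computational content of the assertion.
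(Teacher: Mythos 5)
Your handling of the easy assertions is correct and matches the paper: monotonicity, \eqref{S contient opp pentes bis} and \eqref{stab pi Psi bis} follow from Lemma~\ref{min Psi(s)=s} exactly as you say. Your proof of \eqref{supp sol dans S bis} via transfinite induction is also sound and essentially equivalent to the paper's argument, which instead directly considers $\gamma_{\min}=\min(\supp f\setminus\V)$ (using that $\supp f\setminus\V$ is well-ordered) rather than sweeping through the whole support; the two formulations are interchangeable.

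The genuine gap is in \eqref{S well ordered bis}, and you are candid that the ``core technical point'' is not carried out. The problem is that your strategy --- pass from a decreasing sequence to a decreasing sequence of parents, extract a fixed affine relation $v_k=c\,w_k+d$, then ``iterate'' and invoke periodicity to force the orbit below $\min\W$ --- does not converge to a contradiction. Each iteration requires a fresh extraction, the affine maps at different levels need not be the same, and even with a fixed affine map the lower bound $\W\subset[-\mu_\kappa,+\infty)$ gives no obvious contradiction (a contracting affine map would keep all iterates above its fixed point). The paper sidesteps this entirely by a minimality argument on limit points: it considers the \emph{set} $E$ of limits of decreasing sequences in $\W$, takes $w=\inf E\in E$, and shows that passing each $w_m$ to the appropriate ``first strict ancestor above $w$'' (the quantity $v_{M,r'_M-1}$) produces a new decreasing sequence whose limit $w'$ is strictly below $w$, contradicting minimality of $w$. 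The two tricks you are missing are (a) working with the least limit point rather than an arbitrary decreasing sequence, and (b) going back only \emph{one} step from each $w_m$ (after locating, within the chain to $-\S(L)$, the first index exceeding $w$), so that Lemma~\ref{lem pi} and a single pigeonhole on $\P(L)$ give a uniform affine relation in one shot.

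For \eqref{S computable bis} you correctly identify that a naive truncation cannot work because $\W\cap(-\infty,q]$ may be infinite, and that the piecewise-affine structure of $\pi$ must be exploited. But this is only a description of the obstacle, not a proof: the paper must prove Proposition~\ref{prop:majoration-iterationV}, which gives an explicit $\iota(v)$ (in terms of $n$, $\mu_\kappa$, $h(v)$ and the quantity $\tau$ of~\eqref{eq:epsilon}) with $v\in\V\Leftrightarrow v\in\V_{\iota(v)}$, and then develop an entire separate algorithm (Section~\ref{sec:algo plb epsilon}) to compute a positive lower bound on $\tau$. None of this content appears in your sketch, so \eqref{S computable bis} is not established.
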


Property \eqref{S contient opp pentes bis} is obvious and Property \eqref{stab pi Psi bis} follows immediately from Lemma \ref{min Psi(s)=s} and the construction of $\V$. They will be freely used in the proofs of the other assertions of Theorem \ref{theo: un support bis} given in the next subsections.  
The fact that $(\V_{i})_{i\geq 0}$ is \croissant{} is proved in section \ref{sec:proof Vi increasing}. Section \ref{sec:basic prop V} gives a couple of basic properties of $\V$ that will be used for the proofs of properties \eqref{supp sol dans S bis}, \eqref{S well ordered bis} and \eqref{S computable bis} of Theorem \ref{theo: un support bis} but also latter in the paper.  Properties \eqref{supp sol dans S bis}, \eqref{S well ordered bis} and \eqref{S computable bis} of Theorem \ref{theo: un support bis} are proved in sections \ref{sec:proof supp sol dans S bis}, \ref{sec:proof S well ordered bis} and \ref{sec:proof S computable bis} respectively.  

\begin{rem}
The existence  of a subset $\V$ of $\Q$ satisfying properties \eqref{supp sol dans S bis} and \eqref{S well ordered bis} is obvious:  the support of any element of $\Sol(L,\Hahn)$ is included in the union of the supports of the elements of an arbitrary basis of $\Sol(L,\Hahn)$ (which is a finite dimensional sub-$\bK$-vector space of $\Hahn$ with dimension at most $n$). But, this is not sufficient for our purpose, all the properties listed in Theorem~\ref{theo: un support bis} will be used. 
In particular: 
\begin{itemize}
 \item property \eqref{S computable bis} is of fundamental importance for the algorithmic considerations of this paper; it is one of the crucial ingredients that makes our answer to Question \ref{main question: un algo pour sol hahn?}, given by Theorem \ref{theo algo}, possible;
 \item  property \eqref{S contient opp pentes bis} is a precaution to ensure that property \eqref{supp sol dans S bis} is satisfied;
\item property \eqref{stab pi Psi bis} is essential in section \ref{sec:approx sol} to build the set $\Rr$ mentioned in the introduction.
\end{itemize}
\end{rem}

\begin{ex}\label{ex:Vi pour L pour example}
The sets $\V_{0},\V_{1},\V_{2}$ are computed in section \ref{the Vi for RS example} for the operator $L$ given by \eqref{L pour example}.
\end{ex}

In what follows, we will use the following notations: 
\begin{itemize}
 \item we continue with the notations $\mu_{1},\ldots,\mu_{\kappa}$, $\alphaa_k$, $\betaa_k$, {\it etc},  from section~\ref{sec: ext newt pol};
\item we let $d \in \Z_{\geq 1}$ be a common multiple of the denominators of the slopes $\mu_{1}, \ldots, \mu_{\kappa}$; 
 \item for any subset $\Qr$ of $\mathbb Q$ and any $\gamma \in \mathbb Q$, we set $\Qr_{> \gamma}=\Qr \cap \Q_{> \gamma}$, $\Qr_{\geq \gamma}=\Qr \cap \Q_{\geq \gamma}$, $\Qr_{< \gamma}=\Qr \cap \Q_{< \gamma}$ and $\Qr_{\leq \gamma}=\Qr \cap \Q_{\leq \gamma}$.
\end{itemize}

\subsection{Proof of the fact that $(\V_{i})_{i\geq 0}$ is \croissant{} in Theorem \ref{theo: un support bis}}\label{sec:proof Vi increasing}
Consider $i \in \Z_{\geq 0}$ and $v \in \V_{i}$. Lemma \ref{min Psi(s)=s} ensures that $v \in \pi(\Psi(v))$. But, $\pi(\Psi(v)) \subset \pi(\bigcup_{w \in \W_{i}}\Psi(w))=\V_{i+1}$. So, $v \in \V_{i+1}$. This shows that $\V_{i} \subset \V_{i+1}$.

\subsection{Basic properties of $\V$}\label{sec:basic prop V}

\begin{lem}\label{min V=-muk}
The set $\W$ has a minimal element given by $\min \W=\min \W_{0}=\min -\S(L)=-\mu_{\kappa}$. 
\end{lem}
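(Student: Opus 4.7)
The plan is to reduce the claim to two easy observations: that $-\mu_\kappa$ is already the minimum of $\W_0$, and that the iterative construction $\W_{i+1}=\bigcup_{v\in\W_i}\pi(\Psi(v))$ cannot produce anything smaller.

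First, I would handle $\W_0 = -\S(L) = \{-\mu_1,\ldots,-\mu_\kappa\}$. Since the slopes are ordered $\mu_1<\cdots<\mu_\kappa$, we have $-\mu_\kappa<\cdots<-\mu_1$, so $\min \W_0 = -\mu_\kappa$. In particular $-\mu_\kappa\in\W\subset\Q$, so it suffices to show that every element of $\W$ is at least $-\mu_\kappa$.

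The main step is an induction on $i\geq 0$ showing that $\min \W_i = -\mu_\kappa$; equivalently, $\W_i\subset\Q_{\geq -\mu_\kappa}$. The base case $i=0$ has just been established. For the inductive step, assume $\W_i\subset\Q_{\geq -\mu_\kappa}$. By Lemma \ref{min Psi(s)=s}, for every $v\in\Q$ one has $\min \pi(\Psi(v))=v$, hence $\pi(\Psi(v))\subset\Q_{\geq v}$. Taking $v\in\W_i$, this gives $\pi(\Psi(v))\subset\Q_{\geq v}\subset\Q_{\geq -\mu_\kappa}$, and therefore
\[
\W_{i+1}=\bigcup_{v\in\W_i}\pi(\Psi(v))\subset\Q_{\geq -\mu_\kappa}.
\]
Moreover, $-\mu_\kappa\in\W_i\subset\W_{i+1}$ (using the fact, already noted after the statement of the theorem, that $(\W_i)_i$ is \croissant{}), so the minimum $-\mu_\kappa$ is actually attained. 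This closes the induction.

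Finally, passing to the union $\W=\bigcup_{i\geq 0}\W_i$, the inclusion $\W\subset\Q_{\geq -\mu_\kappa}$ together with $-\mu_\kappa\in\W_0\subset\W$ yields $\min\W=-\mu_\kappa$, which is the desired equality. There is no real obstacle here: the whole argument rests on the identity $\min\pi(\Psi(v))=v$ supplied by Lemma \ref{min Psi(s)=s}, which encodes exactly the fact that the iteration $v\mapsto\pi(\Psi(v))$ only adds exponents $\geq v$, so the overall minimum is frozen at the smallest starting value $-\mu_\kappa$ of $\W_0$.
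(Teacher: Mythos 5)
Your proof is correct and follows essentially the same route as the paper's: both rest on the identity $\min\pi(\Psi(v))=v$ from Lemma \ref{min Psi(s)=s}, yours phrased as an inductive inclusion $\W_i\subset\Q_{\geq -\mu_\kappa}$ (plus membership of $-\mu_\kappa$), the paper's phrased directly as $\min\W_{i+1}=\min\W_i$. These are the same argument in mildly different clothing.
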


\begin{proof}
We claim that, for all $i \in \Z_{\geq 0}$, 
\begin{equation}\label{eq:min Vi+1 Vi}
 \min \W_{i+1}=\min \W_{i}.
\end{equation}
Indeed, for all $i \in \Z_{\geq 0}$, we have $\W_{i+1}=\pi (\bigcup_{v \in \W_{i}}\Psi(v))=\bigcup_{v \in \W_{i}}\pi (\Psi(v))$. But, Lemma \ref{min Psi(s)=s} ensures that,  for all $v \in \Q$, $\min \pi(\Psi(v))=v$. So,  
$$\min \W_{i+1}=\min \bigcup_{v \in \W_{i}}\pi (\Psi(v)) =\min_{v \in \W_{i}} \min \pi (\Psi(v))=\min_{v \in \W_{i}} v=\min \W_{i},$$
whence our claim. Now, the equality $\min \W=\min \W_{0}$ follows clearly from the fact that $\W=\bigcup_{i\geq 0} \W_{i}$ and from \eqref{eq:min Vi+1 Vi}. The equality $\min \W_{0}=\min -\S(L)$ follows from the fact that $\W_{0}=-\S(L)$ by definition. The equality $\min -\S(L)=-\mu_{\kappa}$ follows from the fact that $\mu_{\kappa}$ is the greatest slope of $L$. 
\end{proof}

\begin{lem}\label{lem:exist v0 vM}
Consider $v \in \V$. Let $M$ be the least element of $\Z_{\geq 0}$ such that $v \in \V_{M}$.  There exist $v_{0}<\cdots<v_{M-1}<v_{M}=v$ in $\W$ such that 
	\begin{itemize}
		\item $v_{0} \in -\S(L)$;
		\item $v_{i+1} \in \pi ( \Psi(v_{i}))$ for any $i \in \{0,\ldots,M-1\}$.
	\end{itemize}
\end{lem}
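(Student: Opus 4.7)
My plan is to prove the lemma by strong induction on $M = M(v)$, where $M(v)$ denotes the least integer such that $v \in \V_M$. The base case $M=0$ is immediate: we simply take $v_0 = v$, which lies in $\V_0 = -\S(L)$, and the conditions on the sequence are vacuous. The real content is in the inductive step.

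For the inductive step, suppose $M \geq 1$. Since $v \in \V_M = \bigcup_{w \in \V_{M-1}} \pi(\Psi(w))$, there exists $w \in \V_{M-1}$ with $v \in \pi(\Psi(w))$. The key preliminary observation is that necessarily $M(w) = M-1$. Indeed, if $M(w) < M-1$, then $\pi(\Psi(w)) \subseteq \V_{M(w)+1} \subseteq \V_{M-1}$ (using that $(\V_i)$ is nondecreasing, already established in section \ref{sec:proof Vi increasing}), which would force $v \in \V_{M-1}$, contradicting $M(v) = M$. Applying the induction hypothesis to $w$ then yields $w_0 < w_1 < \cdots < w_{M-1} = w$ in $\V$ with $w_0 \in -\S(L)$ and $w_{i+1} \in \pi(\Psi(w_i))$ for every $i \in \{0,\ldots,M-2\}$, and we set $v_i = w_i$ for $i \leq M-1$ and $v_M = v$.

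The last thing to check, which I expect to be the main (though still short) obstacle, is the strict inequality $v_{M-1} < v_M$, i.e.\ $w < v$. Here Lemma \ref{min Psi(s)=s} is crucial: it gives $\min \pi(\Psi(w)) = w$, so $v \geq w$ since $v \in \pi(\Psi(w))$. Equality $v = w$ would imply $v \in \V_{M-1}$, again contradicting the minimality of $M$. Thus $v > w$, which together with the strict monotonicity inherited from the induction hypothesis gives $v_0 < \cdots < v_M = v$, completing the induction.
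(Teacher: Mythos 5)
Your proof is correct and follows essentially the same approach as the paper: both unwind the definition of $\V_M$ to build a chain $v_0,\ldots,v_M=v$ with $v_{i+1}\in\pi(\Psi(v_i))$, use Lemma \ref{min Psi(s)=s} to obtain $v_i\leq v_{i+1}$, and invoke the minimality of $M$ to upgrade to strict inequality. The only stylistic difference is that you organize the construction as a strong induction on $M$ (checking only the final link $v_{M-1}<v_M$), whereas the paper constructs the full chain in one pass and then rules out any equality $v_i=v_{i+1}$ by a chain-shortening argument; the mathematical content is the same.
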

\begin{proof}
The case $M=0$ being obvious, we will assume in the remainder of this proof that $M \geq 1$. 
It follows immediately from the definition of the $\V_{i}$ that there exist $v_{0},\ldots,v_{M-1},v_{M}=v$ in $\V$ such that 
\begin{itemize}
		\item $v_{0} \in -\S(L)$;
		\item $v_{i+1} \in \pi(\Psi(v_{i}))$ for any $i \in \{0,\ldots,M-1\}$.
	\end{itemize}
 Lemma \ref{min Psi(s)=s} guarantees that, for all $i \in \{0,\ldots,M-1\}$, $\min \pi( \Psi(v_{i}))=v_{i}$, so $v_{i+1} \geq v_{i}$. Therefore, we have $v_{0}\leq \cdots \leq v_{M}$. If one these inequalities were an equality, then $v_{M}$ would belong to $\V_{M-1}$ and this would contradict the minimality of $M$. Thus, $v_{0}<\cdots<v_{M}$. This concludes the proof. 
\end{proof}

\subsection{Proof of \eqref{supp sol dans S bis} of Theorem \ref{theo: un support bis}}\label{sec:proof supp sol dans S bis}

Consider $f\pz \in \Sol(L,\Hahn)$. We want to prove that 
$\supp f\pz  \subset \W$. Assume on the contrary that $\supp f\pz  \not\subset \W$. Then, the set $\supp (f\pz) \setminus \W$ is  nonempty and well-ordered. In particular, we can consider 
$$
\gamma_{\min} = \min \left(\supp(f\pz ) \setminus \W\right).
$$ 
By minimality of $\gamma_{\min}$, we have 
\begin{equation}\label{supp < gamma inc V}
\supp f_{\vert \Q_{<\gamma_{\min}}}\pz  \subset \W. 
\end{equation}
Consider the decomposition 
$$
f\pz =f_{\vert \Q_{<\gamma_{\min}}}\pz +f_{\vert \Q_{\geq \gamma_{\min}}}\pz. 
$$ 
The equality $L(f)=0$ implies the equality
$
L(f_{\vert \Q_{\geq \gamma_{\min}}}\pz)=-L(f_{\vert \Q_{<\gamma_{\min}}}\pz)
$ 
and we infer from Lemma \ref{lem:val Lf eg g} that 
\begin{equation}\label{gammamin in}
\gamma_{\min} = \val f_{\vert \Q_{\geq \gamma_{\min}}}\pz \in \{\pi(\val L(f_{\vert \Q_{<\gamma_{\min}}}\pz))\}
 \cup -\S(L). 
\end{equation}
Since $\gamma_{\min} \in \supp(f\pz ) \setminus \V$ and $-\S(L) \subset \V$, we have 
	$\gamma_{\min} \notin -\S(L)$ so 
	 \eqref{gammamin in} gives
\begin{equation}\label{gammamin in bis}
\gamma_{\min} = \pi(\val L(f_{\vert \Q_{<\gamma_{\min}}}\pz)).
\end{equation} 
In particular, this implies that $L(f_{\vert \Q_{<\gamma_{\min}}}\pz)\neq 0$. 
Lemma \ref{lem:supp Lf sub P supp f} ensures that 
\begin{equation}\label{val M f<gammamin}
\val L(f_{\vert \Q_{<\gamma_{\min}}}\pz) \in \supp L(f_{\vert \Q_{<\gamma_{\min}}}\pz) \subset \bigcup_{\gamma \in \supp f_{\vert \Q_{<\gamma_{\min}}}\pz}\Psi(\gamma).
\end{equation}
Combining \eqref{gammamin in bis} and \eqref{val M f<gammamin}, we get  
\begin{equation}\label{eq gammamin in pi etc}
\gamma_{\min}  \in \pi\left(\bigcup_{\gamma \in \supp f_{\vert \Q_{<\gamma_{\min}}}\pz}\Psi(\gamma)\right). 
\end{equation}
The right-hand side of \eqref{eq gammamin in pi etc} is included in $\pi(\bigcup_{\gamma \in \V} \Psi(\gamma))$ by \eqref{supp < gamma inc V} and the latter set is equal to $\V$ by property \eqref{stab pi Psi bis} of Theorem \ref{theo: un support bis}. So, $\gamma_{\min} \in \V$. This is a contradiction.

\subsection{Proof of \eqref{S well ordered bis} of Theorem \ref{theo: un support bis}}\label{sec:proof S well ordered bis}
We argue by contradiction: we assume that $\W$ is not well-ordered. Thus, the set $D$  made up of the infinite \strictementdecroissant{} sequences with values in $\W$ is nonempty. Lemma \ref{min V=-muk} ensures that  $\W \subset \Q_{\geq -\mu_{\kappa}}$. Therefore, any element of $D$ has a limit in $\R_{\geq -\mu_{\kappa}}$. We let $E \subset \R_{\geq -\mu_{\kappa}}$ be the (nonempty) set made of these limits.   We set $w=\inf E \in \R_{\geq -\mu_{\kappa}}$. It is easily seen that $w \in E$. We let $(w_{m})_{m\geq 0} $ be an arbitrary element of $D$ tending to $w$. 

Note that, for all $m\geq 0$, $w_{m}>w_{m+1} \geq w$, so $w_{m} > w$.  Consider $k\in \{1,\ldots,\kappa\}$ such that $-\mu_{k} \leq w < -\mu_{k-1}$ with the convention $\mu_0 = -\infty$. Since $(w_{m})_{m\geq 0}$ is \strictementdecroissant{} and tends to $w$, up to replacing $(w_{m})_{m \geq 0}$ by $(w_{m+M})_{m\geq 0}$ with $M \in \Z_{\geq 0}$ large enough, we can assume that, for all $m \geq 0$, $-\mu_{k} < w_{m} < -\mu_{k-1}$, where the first inequality is legitimate because $w_{m} > w \geq -\mu_{k}$.

Lemma \ref{lem:exist v0 vM} ensures that, for any $M \geq 0$, there exist $r_{M}\in \Z_{\geq 0}$ and $v_{M,0}<\cdots<v_{M,r_{M}}=w_{M}$ in $\W$ such that 
	\begin{itemize}
		\item $v_{M,0} \in -\S(L)$;
		\item $v_{M,i+1} \in \pi ( \Psi(v_{M,i}))$ for any $i \in \{0,\ldots,r_{M}-1\}$.
%		\item $v_{M,r_{M}} = w_{M}$.
	\end{itemize}

For any $M \geq 0$, we have $v_{M,r_{M}}=w_{M}>w$, so one can consider the least element $r'_{M}$ of $\{0,\ldots,r_{M}\}$ such that $v_{M,r'_{M}}>w$. 
It is important to notice that $r'_{M} \geq 1$ for all $M \geq 0$, 
because $-\mu_{k} \leq w < v_{M,r'_{M}} \leq  w_{M} < -\mu_{k-1}$, so $v_{M,r'_{M}} \not \in -\S(L)$ and, in particular, $v_{M,r'_{M}} \neq v_{M,0}$.
 The inequality $w_{M} \geq v_{M,r'_{M}}>w$ and the fact that $(w_{m})_{m\geq 0}$ tends to $w$ show  that $(v_{M,r'_{M}})_{M\geq 0}$ tends to $w$ and that, up to extracting a subsequence, we can assume that $(v_{M,r'_{M}})_{M\geq 0}$ is \strictementdecroissant. 	
	
	We set $q_{M} = \minP(v_{M,r'_{M}})$, so that $q_{M} \in \Psi(v_{M,r'_{M}})$ is such that $v_{M,r'_{M}} = \pi(q_{M})$ by Lemma \ref{lem proprietes pi minP}. 	 	
	 As $-\mu_{k} \leq w< v_{M,r'_{M}}= \pi(q_{M}) \leq w_{M} < -\mu_{k-1}$, it follows from Lemma \ref{lem pi} that, for all $M\geq 0$, 	
\begin{equation}\label{eq vnr'n qn}
v_{M,r'_{M}} = \pi(q_{M}) = \frac{q_{M} - \beta_{k-1}}{\ellmahl^{\alpha_{k-1}}} 
\end{equation}	
where $(\ellmahl^{\alpha_{k-1}},\beta_{k-1}) \in \P(L)$ is  the left endpoint of the edge of $\mathcal N(L)$ with slope $\mu_{k}$.  
	Moreover, since, for all $M \geq 0$,  $v_{M,r'_{M}} \in \pi(\Psi(v_{M,r'_{M}-1}))$ and since $\minP$ and $\pi$ are inverse of each other by Lemma  \ref{lem proprietes pi minP}, we have $q_{M}=\minP(v_{M,r'_{M}}) \in \minP(\pi( \Psi(v_{M,r'_{M}-1})))=\Psi(v_{M,r'_{M}-1})$ and, hence, there exists $(\ellmahl^{i_{M}},j_{M}) \in \P(L)$ such that  
	\begin{equation}\label{eq qn vnr'n-1}
 q_{M} = \ellmahl^{i_{M}}v_{M,r'_{M}-1} + j_{M}.
\end{equation}
	Since $\P(L)$ is finite, up to extracting a subsequence, we can assume that $i_{M} =\alphaa$ and $j_{M} = \betaa$ do not depend on $M\geq 0$. Combining \eqref{eq vnr'n qn} and \eqref{eq qn vnr'n-1}, we obtain, for all $M \geq 0$,  
	\begin{equation*}\label{eq:v_n-w_n}
	v_{M,r'_{M}} = \frac{\ellmahl^{\alphaa}v_{M,r'_{M}-1} + \betaa- \beta_{k-1}}{\ellmahl^{\alpha_{k-1}}}
	\end{equation*}
and, hence,
\begin{equation*}\label{eq:w_n-v_n}
	v_{M,r'_{M}-1} = \frac{\ellmahl^{\alpha_{k-1}}v_{M,r'_{M}} + \beta_{k-1}- \betaa}{\ellmahl^{\alphaa}}.
	\end{equation*}
The latter equality can be rewritten as $v_{M,r'_{M}-1} =\delta(v_{M,r'_{M}})$ where $\delta:\R\rightarrow \R$ is the \strictementcroissant{} affine function defined by $\delta(x)=\frac{\ellmahl^{\alpha_{k-1}}x+ \beta_{k-1}- \betaa}{\ellmahl^{\alphaa}}$.   Since $(v_{M,r'_{M}})_{M \geq 0}$ is \strictementdecroissant, this implies that $(v_{M,r'_{M}-1})_{M \geq 0}$ is \strictementdecroissant{} as well and, hence, belongs to $D$. Let $w'$ be the limit of $(v_{M,r'_{M}-1})_{M \geq 0}$. Since $(v_{M,r'_{M}-1})_{M \geq 0}$ is \strictementdecroissant{} and satisfies, for all $M\geq 0$, $v_{M,r'_{M}-1} \leq w$, we have $w' < w$. This contradicts the minimality of $w$ and concludes the proof of \eqref{S well ordered bis} of Theorem~\ref{theo: un support bis}.

\subsection{Proof of \eqref{S computable bis} of Theorem \ref{theo: un support bis}} \label{sec:proof S computable bis}

We set 
$$
\Z_{d,\ellmahl}=\bigcup_{i \in \Z_{\geq 0}} \frac{1}{d\ellmahl^{i}} \Z \subset \Q.
$$
Note that we do not require the integers $d$ and $\ellmahl$ to be coprime. We introduce the following two maps that will play an important role in this section: 
\begin{eqnarray*}
h \ \ : \ \  \Z_{d,\ellmahl} &\rightarrow& \Z_{\geq 0} \\
 v& \mapsto & h(v)= \min \big\{ i \in \Z_{\geq 0} \ \vert \ v \in \frac{1}{d\ellmahl^{i}}\Z \big\} %= \min\left\{ w - v \ \vert \ w \in \V, w >v\right\}\nonumber 
\end{eqnarray*}
and 
\begin{eqnarray}\label{eq:epsilon_i}
\epsilon \ \ : \ \  \Q &\rightarrow& \Q_{>0} \cup \{+\infty\}\\
 v& \mapsto & \epsilon(v)=\min \V_{>v} - v \nonumber 
 \end{eqnarray}
with the convention $\epsilon(v)=+\infty$ if $\V_{>v}=\emptyset$. 
The fact that $\epsilon$ is well-defined and takes its values in $\Q_{>0} \cup \{+\infty\}$, {\it i.e.}, the fact that $\V_{>v}$ has a minimal element if it is not empty, follows from the fact that $\V$ is well-ordered.

\subsubsection{Computability of $\V$: description of an algorithm }
Let us temporarily admit the following result which will be proved in section \ref{sec:proof prop:majoration-iterationV}.
 
\begin{prop}\label{prop:majoration-iterationV}
For any $v \in \Q$, the following properties are equivalent: 
\begin{enumerate}
 \item \label{v in V} $ v \in \V$;
 \item \label{v in Vu} $v \in \Z_{d,\ellmahl}$, $v \geq -\mu_{\kappa}$ and $v \in \V_{\iota(v)}$ where 
\begin{equation}\label{def iota v}
 \iota(v)=\left\lfloor (n+1)\frac{v+\mu_{\kappa}}{\minepsilon} + h(v) \right\rfloor
\end{equation}
with 
\begin{equation}
\label{eq:epsilon}
\minepsilon:= \min\big\{\epsilon(-\mu_1),\ldots,\epsilon(-\mu_{\kappa}),(d\ellmahl^n)^{-1}\big\} \in \Q_{>0}. 
\end{equation}
\end{enumerate}
\end{prop}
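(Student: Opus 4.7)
The implication $(\ref{v in Vu}) \Rightarrow (\ref{v in V})$ is immediate from $\V = \bigcup_{i \geq 0} \V_{i}$. I focus on $(\ref{v in V}) \Rightarrow (\ref{v in Vu})$. The inequality $v \geq -\mu_{\kappa}$ is Lemma~\ref{min V=-muk}. The membership $v \in \Z_{d,\ellmahl}$ follows by induction on $i$ that $\V_{i} \subset \Z_{d,\ellmahl}$: the base case $\V_{0} = -\S(L) \subset \tfrac{1}{d}\Z$ holds by the choice of $d$; for the inductive step, $\Psi$ preserves denominators (since $\P(L) \subset \Z_{\geq 0} \times \Z$), while $\pi$ multiplies denominators by at most $\ellmahl^{n}$ by Lemma~\ref{lem pi}.

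The substantive content is to prove $v \in \V_{\iota(v)}$. Let $M$ be the smallest integer with $v \in \V_{M}$ and apply Lemma~\ref{lem:exist v0 vM} to produce a strictly increasing chain $v_{0} < v_{1} < \cdots < v_{M} = v$ in $\V$ with $v_{0} \in -\S(L)$ and $v_{i+1} \in \pi(\Psi(v_{i}))$; it suffices to show $M \leq \iota(v)$. I would partition $I = \{0, \ldots, M-1\}$ into a ``large-gap'' part $I_{L} = \{i : v_{i+1} - v_{i} \geq \minepsilon\}$ and a ``small-gap'' part $I_{S} = I \setminus I_{L}$. A telescoping sum gives $|I_{L}| \, \minepsilon \leq v_{M} - v_{0} \leq v + \mu_{\kappa}$, hence $|I_{L}| \leq (v + \mu_{\kappa})/\minepsilon$.

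To control $|I_{S}|$, I would rely on three ingredients: (a) if $v_{i} = -\mu_{j}$ for some $j$, or more generally if the interval $[v_{i}, v_{i+1}]$ crosses some $-\mu_{j} \in \V$, then $v_{i+1} - v_{i} \geq \epsilon(-\mu_{j}) \geq \minepsilon$, so $i \in I_{L}$; hence for $i \in I_{S}$ both $v_{i}$ and $v_{i+1}$ lie strictly in the same open interval $(-\mu_{k}, -\mu_{k-1})$; (b) from the formula of Lemma~\ref{lem pi}, $v_{i+1} = (\ellmahl^{\alphaa} v_{i} + \betaa - \beta_{k-1})/\ellmahl^{\alpha_{k-1}}$ for some $(\ellmahl^{\alphaa}, \betaa) \in \P(L)$, so the denominator exponent $h(v_{i})$ satisfies $h(v_{i+1}) \leq \max(h(v_{i}) - \alphaa, 0) + \alpha_{k-1}$, and in particular $|h(v_{i+1}) - h(v_{i})| \leq n$; (c) for $i \in I_{S}$, the trivial bound $v_{i+1} - v_{i} \geq (d\ellmahl^{\max(h(v_{i}), h(v_{i+1}))})^{-1}$ combined with $\minepsilon \leq (d\ellmahl^{n})^{-1}$ forces $\max(h(v_{i}), h(v_{i+1})) > n$, so small gaps occur precisely where the denominator is already large.

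\textbf{Main obstacle.} Combining (a)--(c) into the clean bound $|I_{S}| \leq n \, |I_{L}| + h(v)$ is the delicate step. I expect this to follow either from a potential-function argument combining $v_{i}$ and $h(v_{i})$, or from analyzing maximal runs of consecutive small-gap indices between two successive large-gap ones: (b) bounds the possible denominator decay during a large jump by $n$, while (c) charges each small jump against the current denominator, so a run of length $r$ requires a net denominator growth of at least $r - O(1)$ which must be produced by the large jumps framing the run. Since the denominator starts at $h(v_{0}) = 0$ (because $v_{0} \in -\S(L) \subset \tfrac{1}{d}\Z$) and ends at $h(v)$, summing over all runs yields the desired inequality. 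Once this is in hand, $M = |I_{L}| + |I_{S}| \leq (n+1)(v + \mu_{\kappa})/\minepsilon + h(v)$, and taking the floor gives $M \leq \iota(v)$, as required.
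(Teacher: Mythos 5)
Your overall strategy is close to the paper's, and the routine parts are handled correctly: the implication $(\ref{v in Vu})\Rightarrow(\ref{v in V})$ is indeed trivial, $v \geq -\mu_{\kappa}$ is Lemma~\ref{min V=-muk}, $\V \subset \Z_{d,\ellmahl}$ is proved by the induction you sketch (this is Lemma~\ref{lem:V_Zdl}), and the reduction to bounding the length $M$ of a strictly increasing chain produced by Lemma~\ref{lem:exist v0 vM} is exactly what the paper does. Your partition into large-gap indices $I_L$ and small-gap indices $I_S$ differs superficially from the paper's, which partitions by whether $h$ fails to increase ($E_1$) or strictly increases ($E_2$); the two partitions are compatible because of the very claim you are missing. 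Once $E_1\subset I_L$ is known, the paper's telescoping on $h$ gives $|E_2|\le h(v)+n|E_1|\le h(v)+n|I_L|$, and $I_S\subset E_2$ yields your target bound, so the shapes of the two proofs are genuinely the same.

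The gap you flag as the \emph{Main obstacle} is real, and your observations (a)--(c) do not yet close it. What you must show is precisely that a small gap within a fixed open interval $(-\mu_k,-\mu_{k-1})$ forces $h(v_{i+1})>h(v_i)$; equivalently (the contrapositive, which is Lemma~\ref{lem:steps} of the paper), that $h(v_{i+1})\le h(v_i)$ forces the gap to be at least $\min\{\vareps(-\mu_k),(d\ellmahl^n)^{-1}\}$. Your (c) says only that $\max\{h(v_i),h(v_{i+1})\}>n$, which does not rule out a small gap with $h(v_{i+1})<h(v_i)$, and your (b) gives merely an \emph{upper} bound on $h(v_{i+1})$, so it cannot certify a strict increase. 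The needed refinement comes from the case analysis on $\alphaa$ versus $\alpha_{k-1}$ in the formula $v_{i+1}=\bigl(\ellmahl^{\alphaa}v_i+\betaa-\beta_{k-1}\bigr)/\ellmahl^{\alpha_{k-1}}$: the case $\alphaa=\alpha_{k-1}$ gives a gap $\ge \ellmahl^{-n}\ge (d\ellmahl^n)^{-1}$; the case $\alphaa>\alpha_{k-1}$ uses the convexity of the Newton polygon (Lemma~\ref{lem:ineq slopes N(L)}) to produce a gap $\ge v_i+\mu_k\ge\vareps(-\mu_k)$; and the case $\alphaa<\alpha_{k-1}$ splits further on whether $h(v_i)\le\alphaa$ (then both $v_i,v_{i+1}$ lie in $\tfrac{1}{d\ellmahl^{\alpha_{k-1}}}\Z$, giving a gap $\ge(d\ellmahl^n)^{-1}$) or $h(v_i)>\alphaa$, in which case one needs the \emph{equality} $h(v_{i+1})=h(v_i)+\alpha_{k-1}-\alphaa>h(v_i)$ — not merely the inequality from your (b). This equality relies on the numerator of $v_i$ in the reduced form $v_i=M/(d\ellmahl^{h(v_i)})$ being coprime to $\ellmahl$, a point (b) does not capture. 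You also need to treat separately the transitions that touch or cross a slope $-\mu_j$ (your (a)): your phrasing is slightly loose when the right endpoint equals $-\mu_j$, since the gap estimate then comes from the common denominator $d\ellmahl^n$ rather than from $\epsilon(-\mu_j)$. Supplying Lemma~\ref{lem:steps} (or an equivalent) would turn your outline into a complete proof.
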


The interest of this result lies in the fact that, while $\V$ is infinite, the set $\V_{\iota(v)}$ is finite and can be easily computed (directly from its definition) once $\iota(v)$ has been computed. 
Unfortunately, $\iota(v)$ is not easy to compute. More precisely, while in formula \eqref{def iota v} the values of $n$, $\mu_{\kappa}$ and $h(v)$ can be easily computed, the value of $\minepsilon$ cannot. This problem can be solved using section \ref{sec:algo plb epsilon} below, where an algorithm for computing a positive lower bound $\minepsilonlb$ on $\minepsilon$ is provided. Whence an upper bound 
$$
\iotaprime(v)=\left\lfloor (n+1)\frac{v+\mu_{\kappa}}{\minepsilonlb} + h(v) \right\rfloor
$$ on $\iota(v)$ for any $v \in \Z_{d,\ellmahl}$ such that $v \geq -\mu_{\kappa}$. Now, since $(\V_{i})_{i \geq 0}$ is \croissant, we have $\V_{\iota(v)} \subset \V_{\iotaprime(v)}$ and Proposition \ref{prop:majoration-iterationV} implies that the following properties are equivalent:
\begin{enumerate}
 \item $ v \in \V$;
 \item $v \in \Z_{d,\ellmahl}$, $v \geq -\mu_{\kappa}$ and $v \in \V_{\iotaprime(v)}$. 
\end{enumerate}

This leads to the following algorithm, which uses Algorithm \ref{algo:minepsilon} described at the end of section~\ref{sec:algo plb epsilon} for computing a positive lower bound  
$\minepsilonlb$ on $\minepsilon$.
\begin{algorithm} \label{algo v dans V}
$_{}$ \vskip 1 pt
\noindent \rule{\linewidth}{1mm}
{\bf Input}: $L$ a Mahler operator with coefficients in $\bK[z]$, $v \in \Q$. \\
{\bf Output}: whether or not $v$ is in $\V$.\vskip 1pt
\noindent \rule{\linewidth}{1mm}
\begin{itemize}
\item[] \textbf{if} $v \not \in \Z_{d,\ellmahl}$ or $v<-\mu_{\kappa}$ \textbf{then}
\begin{itemize}
 \item[] return ``$v$ is not in $\V$''
\end{itemize}
\item[] \textbf{otherwise} 
\begin{itemize}
\item[] compute $n$, $\mu_{\kappa}$ and $h(v)$
\item[] compute a positive lower bound $\minepsilonlb$ on $\minepsilon$ using Algorithm \ref{algo:minepsilon}
\item[] compute $\iotaprime=\left\lfloor (n+1)\frac{v+\mu_{\kappa}}{\minepsilonlb} + h(v) \right\rfloor$
\item[] compute $\V_{\iotaprime}$ 
\item[] \textbf{if} $v \in \V_{\iotaprime}$ \textbf{then} 
\begin{itemize}
 \item[] return ``$v$ is in $\V$''
\end{itemize} 
\item[] \textbf{otherwise} 
\begin{itemize}
\item[] return ``$v$ is not in $\V$'' 
\end{itemize}
\item[] \textbf{end if}
\end{itemize}
\item[] \textbf{end if}
\end{itemize}
\noindent \rule{\linewidth}{0.5mm}
\end{algorithm}

\subsubsection{Proof of Proposition \ref{prop:majoration-iterationV}}\label{sec:proof prop:majoration-iterationV} Proposition \ref{prop:majoration-iterationV} is proved at the very end of this section, after a series of lemmas.

\begin{lem}\label{lem:V_Zdl}
	We have 
	\begin{equation}\label{incl V Zdl}
\V \subset \Z_{d,\ellmahl}.
\end{equation}
Moreover, for any $v \in \Z_{d,\ell}$ and $w \in \pi(\Psi(v))$, we have $w \in \Z_{d,\ellmahl}$  and
\begin{equation}\label{eq h leq h n}
h(v) \leq h(w)+n. 
\end{equation}
\end{lem}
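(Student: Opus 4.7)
The plan is to prove part 2 first, and then deduce part 1 by an easy induction on the level $i$ of the sequence $(\V_i)_{i\geq 0}$, starting from the observation that $\V_0 = -\S(L) \subset \tfrac{1}{d}\Z \subset \Z_{d,\ellmahl}$ by the very choice of $d$.

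For the inductive step of part 1, the key input is part 2: if $\V_i \subset \Z_{d,\ellmahl}$, then any $v \in \V_i$ gives $\pi(\Psi(v)) \subset \Z_{d,\ellmahl}$ by part 2, and hence $\V_{i+1} = \bigcup_{v \in \V_i} \pi(\Psi(v)) \subset \Z_{d,\ellmahl}$. Passing to the union over $i$ then yields \eqref{incl V Zdl}.

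So the heart of the argument is part 2. Fix $v \in \Z_{d,\ellmahl}$ and $w \in \pi(\Psi(v))$; by definition of $\Psi$ there is some $(\ellmahl^i, j) \in \P(L)$ (with $i \in \{0,\ldots,n\}$ and $j \in \Z$) such that $w = \pi(v\ellmahl^i + j)$. By Lemma \ref{lem pi} applied to $q = v\ellmahl^i + j$, there exists $k \in \{1,\ldots,\kappa+1\}$ such that
\begin{equation*}
w \;=\; \frac{v\ellmahl^i + j - \beta_{k-1}}{\ellmahl^{\alpha_{k-1}}},
\end{equation*}
with $\alpha_{k-1} \in \{0,\ldots,n\}$ and $\beta_{k-1} \in \Z$. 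Since $v \in \frac{1}{d\ellmahl^{h(v)}}\Z$, one gets
\begin{equation*}
v\ellmahl^i + j - \beta_{k-1} \;\in\; \frac{1}{d\ellmahl^{\max(h(v)-i,\,0)}}\,\Z,
\end{equation*}
and therefore $w \in \frac{1}{d\ellmahl^{\max(h(v)-i,\,0) + \alpha_{k-1}}}\Z \subset \Z_{d,\ellmahl}$. In particular, $h(w) \leq \max(h(v) - i, 0) + \alpha_{k-1}$.

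It remains to derive \eqref{eq h leq h n}. If $h(v) \leq i$, then $h(v) \leq i \leq n \leq h(w) + n$ since $h(w) \geq 0$. If $h(v) > i$, then $h(w) \leq h(v) - i + \alpha_{k-1}$, so $h(v) \leq h(w) + i - \alpha_{k-1} \leq h(w) + n$ since $i \leq n$ and $\alpha_{k-1} \geq 0$. In both cases \eqref{eq h leq h n} holds, which completes the proof. The only mildly technical point is keeping track of the denominators through the affine map of Lemma \ref{lem pi}; everything else is bookkeeping with the integers $\alpha_{k-1}$, $i$, and $n$.
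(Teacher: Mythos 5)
The structure of your proof and the argument for the first part---showing $\V \subset \Z_{d,\ellmahl}$ by induction after establishing $\pi(\Psi(v)) \subset \Z_{d,\ellmahl}$ for $v \in \Z_{d,\ellmahl}$---are correct and essentially the same as the paper's, with the cosmetic difference that you invoke Lemma \ref{lem pi} to pin down the pair $(\ellmahl^{i'},j')$ as $(\ellmahl^{\alpha_{k-1}},\beta_{k-1})$, while the paper needs only the defining formula for $\pi$.

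However, the final step deriving \eqref{eq h leq h n} contains a genuine logical error. From the forward relation $w = \frac{v\ellmahl^i + j - \beta_{k-1}}{\ellmahl^{\alpha_{k-1}}}$ you correctly obtain the \emph{upper} bound $h(w) \leq \max(h(v)-i,0) + \alpha_{k-1}$. In the case $h(v) > i$ you then write ``$h(w) \leq h(v) - i + \alpha_{k-1}$, so $h(v) \leq h(w) + i - \alpha_{k-1}$,'' but rearranging $h(w) \leq h(v) - i + \alpha_{k-1}$ actually gives $h(v) \geq h(w) + i - \alpha_{k-1}$---the inequality points the wrong way. An upper bound on $h(w)$ in terms of $h(v)$ cannot be turned into an upper bound on $h(v)$ in terms of $h(w)$. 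What is needed, and what the paper does, is to invert the relation: from $v = \frac{\ellmahl^{\alpha_{k-1}} w + \beta_{k-1} - j}{\ellmahl^i}$ and $w \in \frac{1}{d\ellmahl^{h(w)}}\Z$ one gets directly $v \in \frac{1}{d\ellmahl^{h(w)+i}}\Z$, hence $h(v) \leq h(w) + i \leq h(w) + n$. Your proof becomes correct if you replace the last paragraph with this inverse computation.
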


\begin{proof} 
Before proving  \eqref{incl V Zdl}, note that 
\begin{itemize}
 \item $\V_{0}\subset \Z_{d,\ellmahl}$; 
 \item for any $v \in \Z_{d,\ellmahl}$, $\pi(\Psi(v)) \subset \Z_{d,\ellmahl}$.
\end{itemize}
Indeed, the first property holds because $d$ is a common denominator of the elements of $\V_{0}=-\S(L)$. The second property holds because, by definition of $\pi$ and $\Psi$, for any $v \in \Z_{d,\ellmahl}$ and any $w\in \pi(\Psi(v))$, there exist $(\ellmahl^i,j),(\ellmahl^{i'},j') \in \P(L)$ such that 
	$$
          w=\frac{\ellmahl^i v + j - j'}{\ellmahl^{i'}} \in \frac{1}{\ellmahl^{i'}} \Z_{d,\ellmahl} = \Z_{d,\ellmahl}.
	$$
Let us now prove \eqref{incl V Zdl}. Since $\V=\cup_{i\geq 0} \V_{i}$, it is equivalent to prove that, for all $i \in \Z_{\geq 0}$, $\V_{i} \subset \Z_{d,\ellmahl}$. The latter property can be proved by induction on $i \in \Z_{\geq 0}$. Indeed, the two properties noticed at the beginning of the proof show that $\V_{0} \subset  \Z_{d,\ellmahl}$ and that if, for a given $i \in\Z_{\geq 0}$, we have $\V_{i} \subset \Z_{d,\ellmahl}$, then we have $\V_{i+1}=\bigcup_{v \in \W_{i}}\pi (\Psi(v)) \subset \Z_{d,\ellmahl}$.  

Let us now prove \eqref{eq h leq h n}. By definition of $h(w)$, we have $w \in \frac{1}{d\ellmahl^{h(w)}}\Z$, so 
	$$
v = \frac{\ellmahl^{i'} w  + j'- j}{\ellmahl^i} \in \frac{1}{\ellmahl^{i}}\frac{1}{d\ellmahl^{h(w)}}\Z=\frac{1}{d\ellmahl^{h(w)+i}}\Z.
$$
Thus, we have $h(v) \leq h(w)+i \leq h(w)+n$ and this proves \eqref{eq h leq h n}.

\end{proof}

\begin{rem} Consider two multiplicatively independent integers $\ellmahl_1,\ellmahl_2\geq 2$. In \cite{BorisAboutMahler},  Adamczewski and Bell give a proof of a conjecture of Loxton and van der Poorten asserting that any Puiseux series solution of both a $\ellmahl_1$-Mahler equation and of a $\ellmahl_2$-Mahler equation belongs to $\bigcup_{d \geq 1}\C(z^{1/d})$. As mentioned in the Introduction, an alternative proof was given later by Sh\"afke and Singer in \cite{SchafkeSinger}. The first part of Lemma \ref{lem:V_Zdl} can be used to extend this result to Hahn series. Indeed, let $f\pz \in \Hahn$ be solution of both a $\ellmahl_1$-Mahler equation and of a $\ellmahl_2$-Mahler equation with coefficients in $\C(z)$. Then, it follows from Lemma \ref{lem:V_Zdl} that $\supp f\pz$ is included in $\Z_{d_1,\ellmahl_1}$ and in $\Z_{d_2,\ellmahl_2}$ for some integers $d_1,d_2 \geq 1$. But, $\Z_{d_1,\ellmahl_1} \cap \Z_{d_2,\ellmahl_2} \subset \frac{1}{d} \Z$ for some integer $d \geq 1$ because $\ellmahl_1$ and $\ellmahl_2$ are multiplicatively independent. Thus, $f\pz$ is a Puiseux series solution of both a $\ellmahl_1$-Mahler equation and a $\ellmahl_2$-Mahler equation and, hence, $f\pz \in \bigcup_{d \geq 1}\C(z^{1/d})$.
\end{rem}

\begin{lem}\label{lem:steps}
Let $v \in \Z_{d,\ell}$ and $w \in \pi(\Psi(v))$ satisfy 
\begin{equation}\label{hyp encadrement muj}
 -\mu_{k} + \thetalb{k} \leq v < w <-\mu_{k-1}
\end{equation}
for some $k \in \{1,\ldots,\kappa\}$ and some lower bound $\thetalb{k}>0$ on $\epsilon(-\mu_{k})$. Then, at least one of the following properties holds:
	\begin{itemize}
		\item $w \geq v+\thetalb{k}$;
		\item $w \geq v + \frac{1}{d\ellmahl^n}$;
		\item $h(w)>h(v)$. 
	\end{itemize}
\end{lem}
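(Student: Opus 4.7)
The plan is to use the formula given by Lemma \ref{lem pi}: since $-\mu_k \leq w \leq -\mu_{k-1}$, we can write $w = (q - \beta_{k-1})/\ellmahl^{\alpha_{k-1}}$, where $q \in \Psi(v)$ has the form $q = \ellmahl^a v + b$ for some $(\ellmahl^a,b) \in \P(L)$ with $a \in \{0,\ldots,n\}$. Writing $\alpha = \alpha_{k-1}$ and $\beta = \beta_{k-1}$, I thus get
\[
 w = \frac{\ellmahl^a v + b - \beta}{\ellmahl^\alpha}\,.
\]
The proof then proceeds by case analysis on the position of $a$ relative to $\alpha$.

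If $a = \alpha$, then $w - v = (b - \beta)/\ellmahl^\alpha$; since $w > v$ forces $b > \beta$ with $b,\beta \in \Z$, one gets $w - v \geq 1/\ellmahl^\alpha \geq 1/(d\ellmahl^n)$, which is the second option. If $a > \alpha$, Lemma \ref{lem:ineq slopes N(L)} applied to $(\ellmahl^a,b)$ at slope $\mu_k$ gives $b - \beta \geq (\ellmahl^a - \ellmahl^\alpha)\mu_k$, hence
\[
 w - v \geq \frac{(\ellmahl^a - \ellmahl^\alpha)(v + \mu_k)}{\ellmahl^\alpha}\,.
\]
Since $\ellmahl \geq 2$ and $a > \alpha$ give $(\ellmahl^a - \ellmahl^\alpha)/\ellmahl^\alpha = \ellmahl^{a-\alpha} - 1 \geq 1$, and the hypothesis gives $v + \mu_k \geq \thetalb{k}$, we conclude $w - v \geq \thetalb{k}$, which is the first option.

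The subtle case is $a < \alpha$, which I handle by analyzing the function $h$. The key observations are that $h$ is invariant under addition of integers, and that writing $v = M/(d\ellmahl^{h(v)})$ with $\ellmahl \nmid M$ (which is automatic when $h(v)>0$) gives $h(\ellmahl^a v) = \max(0,h(v)-a)$. If $a < h(v)$ (which forces $h(v)>0$), a direct computation shows that $\ellmahl^a v + b - \beta$ equals $N/(d\ellmahl^{h(v)-a})$ with $\ellmahl \nmid N$, so dividing by $\ellmahl^\alpha$ yields $h(w) = h(v) - a + \alpha > h(v)$, the third option. If instead $a \geq h(v)$, then $h(v) \leq a \leq n$, and I distinguish two sub-sub-cases: either $h(w) > h(v)$, giving the third option directly, or $h(w) \leq h(v)$, in which case both $v$ and $w$ lie in $\frac{1}{d\ellmahl^{h(v)}}\Z$, so the positive difference $w - v$ is at least $1/(d\ellmahl^{h(v)}) \geq 1/(d\ellmahl^n)$, the second option.

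The main obstacle is precisely this case $a < \alpha$: the hypothesis $v \geq -\mu_k + \thetalb{k}$ plays no direct role there, and the argument instead requires careful $\ellmahl$-adic bookkeeping on the affine map $v \mapsto (\ellmahl^a v + b - \beta)/\ellmahl^\alpha$, with a clean split based on comparing $a$ with $h(v)$. The tripartite conclusion of the lemma reflects exactly the three qualitatively different mechanisms that can enforce a gap between $v$ and $w$: a slope-driven gap proportional to $v + \mu_k$, an integrality-driven discrete gap of size $1/(d\ellmahl^n)$, and an $\ellmahl$-adic denominator growth whenever cancellation in the numerator fails.
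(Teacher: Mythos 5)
Your proof is correct and follows essentially the same route as the paper: write $w$ via Lemma \ref{lem pi} as the affine image $w = (\ellmahl^a v + b - \beta_{k-1})/\ellmahl^{\alpha_{k-1}}$ of $v$ for some $(\ellmahl^a,b)\in\P(L)$, and split on $a=\alpha_{k-1}$, $a>\alpha_{k-1}$, $a<\alpha_{k-1}$, the first two being identical to the paper's treatment. In the third case the paper also compares $a$ with $h(v)$, but wraps up the subcase $a\geq h(v)$ by computing $h(w)\leq\alpha_{k-1}$ and noting $v,w\in\frac{1}{d\ellmahl^{\alpha_{k-1}}}\Z$; your further dichotomy on whether $h(w)>h(v)$ holds is a slightly cleaner way of landing in the same place, but it is the same underlying mechanism.
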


\begin{proof}
By definition of $\Psi$,  there exists  $(\ellmahl^{\alphaa},\betaa) \in \P(L)$ such that  $w=\pi(\ellmahl^{\alphaa} v + \betaa)$.  Since $-\mu_{k} \leq w=\pi(\ellmahl^{\alphaa} v + \betaa) <-\mu_{k-1}$, Lemma \ref{lem pi} ensures that  
\begin{equation}\label{eq w en fonc de v}
w =\pi(\ellmahl^{\alphaa} v + \betaa)= \frac{\ellmahl^{\alphaa} v + \betaa - \beta_{k-1}}{\ellmahl^{\alpha_{k-1}}} \,.
\end{equation}
We now distinguish the cases $\alphaa=\alpha_{k-1}$, $\alphaa>\alpha_{k-1}$ and $\alphaa<\alpha_{k-1}$.

\vskip 5pt	
\noindent {\bf Case $\alphaa=\alpha_{k-1}$.} In this case, \eqref{eq w en fonc de v} can be rewritten as $w = v + \frac{\betaa - \beta_{k-1}}{\ellmahl^{\alpha_{k-1}}}$. Since $w>v$, we have $\betaa-\beta_{k-1}>0$. Since $\betaa$ and $\beta_{k-1}$ belong to $\Z$,  we have $\betaa-\beta_{k-1}\geq 1$. It follows that 
	 $$w \geq v+\frac{1}{\ellmahl^{\alpha_{k-1}}} \geq v+\frac{1}{\ellmahl^n}\geq v+\frac{1}{d\ellmahl^n}
	 $$ 
	 and the lemma holds in this case. 
	 
 \vskip 5pt	
\noindent {\bf Case $\alphaa >\alpha_{k-1}$.} In this case, since $(\ellmahl^{\alpha_{k-1}},\beta_{k-1})$ is the left endpoint of the edge of $\mathcal N(L)$ of slope $\mu_{k}$, the slope of the vector joining $(\ellmahl^{\alpha_{k-1}},\beta_{k-1})$ to $(\ellmahl^{\alphaa},\betaa)$ is greater than or equal to $\mu_{k}$, that is, 
	$$
	\frac{\betaa-\beta_{k-1}}{\ellmahl^{\alphaa} - \ellmahl^{\alpha_{k-1}}} \geq \mu_{k}. 
	$$ 
	%But, $-\mu_{k} + \thetalb{k} \leq v$. 
	So, we have
\begin{multline*}
 w - v =  \frac{(\ellmahl^{\alphaa}-\ellmahl^{\alpha_{k-1}}) v + \betaa - \beta_{k-1}}{\ellmahl^{\alpha_{k-1}}} 
	\geq 
	\frac{(\ellmahl^{\alphaa}-\ellmahl^{\alpha_{k-1}})v+(\ellmahl^{\alphaa}-\ellmahl^{\alpha_{k-1}})\mu_{k}}{\ellmahl^{\alpha_{k-1}}} \\
	=
	\frac{(\ellmahl^{\alphaa}-\ellmahl^{\alpha_{k-1}})(v+\mu_{k})}{\ellmahl^{\alpha_{k-1}}} 
	\geq \frac{(\ellmahl^{\alphaa}-\ellmahl^{\alpha_{k-1}})\thetalb{k}}{\ellmahl^{\alpha_{k-1}}}\geq \thetalb{k}
\end{multline*}	
and the lemma holds in this case. 
	
\vskip 5pt	
\noindent {\bf Case ${\alphaa}<\alpha_{k-1}$.} Write $v=\frac{M}{d\ellmahl^{h(v)}}$ with $M \in \Z$. Equation \eqref{eq w en fonc de v} can be rewritten as 
\begin{equation}\label{rewrit eq w}
w = \frac{M}{d\ellmahl^{h(v)+\alpha_{k-1}-\alphaa}} + \frac{\betaa-\beta_{k-1}}{\ellmahl^{\alpha_{k-1}}}. 
\end{equation}
We now distinguish two cases: 
\begin{itemize}
 \item if $h(v)-\alphaa>0$, then equation \eqref{rewrit eq w} shows that $h(w)= h(v)+\alpha_{k-1}-\alphaa>h(v)$ and the lemma holds; 
 \item if $h(v)\leq \alphaa$, then equation \eqref{rewrit eq w} shows that $h(w)\leq \alpha_{k-1}$; so $v$ and $w$ both belong to $\frac{1}{d\ellmahl^{\alpha_{k-1}}}\Z$; it follows that $w-v$ is a positive element of $\frac{1}{d\ellmahl^{\alpha_{k-1}}}\Z$, so $w-v \geq  \frac{1}{d\ellmahl^{\alpha_{k-1}}} \geq \frac{1}{d\ellmahl^n}$ and the lemma holds.
\end{itemize}
\end{proof}

\begin{lem}	\label{lem:longueur-chaine2} 
	Consider $k \in \{1,\ldots,\kappa\}$ and let $\thetalb{k}>0$   be a lower bound on $\epsilon(-\mu_{k})$.
	Consider $M \in \Z_{\geq 0}$ and $v_{0},\ldots,v_{M} \in \Z_{d,\ell}$ such that 	
\begin{itemize}
 \item $\forall i \in \{0,\ldots,M-1\}$, $v_{i+1} \in \pi(\Psi(v_i))$; 
 \item $-\mu_{k}+ \thetalb{k} \leq  v_{0} < \cdots < v_{M-1} < v_{M}  < -\mu_{k-1}$.
\end{itemize}
	 	Then, we have
	$$
	M < (n+1) \frac{v_{M} + \mu_{k}}{\min\left\{\thetalb{k},(d\ellmahl^n)^{-1}\right\}} + h(v_{M}).
	$$ 
\end{lem}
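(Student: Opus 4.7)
The plan is to run the trichotomy of Lemma \ref{lem:steps} along the chain $v_0<v_1<\cdots<v_M$ and combine two telescoping arguments, one for the values $v_i$ themselves and one for the heights $h(v_i)$.

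Set $\theta := \min\{\thetalb{k},(d\ellmahl^n)^{-1}\}$. For each $i \in \{0,\ldots,M-1\}$ the pair $(v_i,v_{i+1})$ satisfies the hypotheses of Lemma \ref{lem:steps}: we have $v_i \in \Z_{d,\ellmahl}$, $v_{i+1} \in \pi(\Psi(v_i))$, and the ambient inequality $-\mu_{k}+\thetalb{k} \leq v_i<v_{i+1}<-\mu_{k-1}$ follows from the monotonicity of the chain together with the bounds on $v_0$ and $v_M$. The trichotomy therefore holds, and I will partition $\{0,\ldots,M-1\}=A\sqcup B$ by declaring $i \in A$ if either of the first two alternatives of Lemma \ref{lem:steps} holds (so $v_{i+1}-v_i \geq \theta$), and $i \in B$ otherwise (so, necessarily, $h(v_{i+1})>h(v_i)$).

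The first telescoping bounds $|A|$: summing the strictly positive increments gives
\[
v_M-v_0 \;=\; \sum_{i=0}^{M-1}(v_{i+1}-v_i) \;\geq\; \sum_{i\in A}(v_{i+1}-v_i)\;\geq\; |A|\,\theta,
\]
and since $v_0\geq -\mu_{k}+\thetalb{k} > -\mu_{k}$ one has the strict inequality $v_M-v_0 < v_M+\mu_{k}$, whence $|A|<(v_M+\mu_{k})/\theta$. The second telescoping bounds $|B|$. Here I rely on the fact, already recorded in Lemma \ref{lem:V_Zdl}, that $w \in \pi(\Psi(v))$ forces $h(v)\leq h(w)+n$; equivalently $h(v_{i+1})-h(v_i)\geq -n$ for every $i \in A$, while $h(v_{i+1})-h(v_i)\geq 1$ for every $i \in B$ by definition of $B$. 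Telescoping $h$ yields
\[
h(v_M)-h(v_0) \;=\; \sum_{i=0}^{M-1}\bigl(h(v_{i+1})-h(v_i)\bigr) \;\geq\; |B|-n\,|A|,
\]
so $|B|\leq h(v_M)-h(v_0)+n|A|\leq h(v_M)+n|A|$ since $h(v_0)\geq 0$.

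Combining, $M=|A|+|B|\leq (n+1)|A|+h(v_M)$, and plugging in the strict bound for $|A|$ gives the desired inequality
\[
M \;<\; (n+1)\,\frac{v_M+\mu_{k}}{\theta} + h(v_M).
\]
There is no real obstacle: the only point requiring minimal care is ensuring the strict inequality in the conclusion, which comes from the fact that the lower bound on $v_0$ is strictly larger than $-\mu_{k}$ (because $\thetalb{k}>0$). Once the trichotomy of Lemma \ref{lem:steps} and the height estimate of Lemma \ref{lem:V_Zdl} are in hand, the argument is just the pigeonhole-style split between ``large horizontal moves'' and ``moves that strictly increase the denominator''.
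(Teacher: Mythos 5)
Your proof is correct and essentially the same as the paper's. Both rely on the trichotomy of Lemma~\ref{lem:steps} and Lemma~\ref{lem:V_Zdl} to split the indices into ``large value increment'' steps versus ``strict height increase'' steps, then telescope $v_i$ to bound one part and telescope $h(v_i)$ to bound the other; the only cosmetic difference is that you define the partition directly by which alternative of Lemma~\ref{lem:steps} fires, while the paper defines it by whether $h(v_i)\geq h(v_{i+1})$ and then invokes the trichotomy to derive the value jump -- the two choices of partition are not identical sets, but either one feeds the same pair of telescoping sums and gives the same bound.
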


\begin{proof}  Set $$m_{k}=\min\big\{\thetalb{k},(d\ellmahl^n)^{-1}\big\}.$$
Consider 
\begin{equation}\label{def E1}
 E_1= \{i \in \{0,\ldots,M-1\} \ \vert \  h(v_{i})\geq  h(v_{i+1}) \}
\end{equation}
	 and 
\begin{equation}\label{def E2}
 E_2= \{i \in \{0,\ldots,M-1\} \ \vert \  h(v_{i}) <h(v_{i+1}) \}.
\end{equation}
Set $M_{1}=\sharp E_{1}$ and $M_{2}=\sharp E_{2}$. Since $\{E_{1},E_{2}\}$ is  a partition of $\{0,\ldots,M-1\}$, we have  
$$M=M_1+M_2.$$ 
	We have :
	\begin{itemize}
 \item $\forall i \in \{0,\ldots,M-1\}$, $h(v_{i}) \leq h(v_{i+1})+n$ by  Lemma \ref{lem:V_Zdl}; 
 \item if $i \in E_{2}$, then $h(v_i) < h(v_{i+1})$ and, hence, $h(v_i) \leq h(v_{i+1})-1$ because $h(v_i)$ and $h(v_{i+1})$ are integers.
\end{itemize}
It follows that 
$$
h(v_{0})\leq h(v_{M})+ n M_1 - M_2.
$$ 
Since $h(v_{0}) \geq 0$, we get  
\begin{equation}\label{ineg m2 en fction m1}
 M_2 \leq h(v_{M}) +nM_1.
\end{equation}
We shall now give an upper bound  on $M_{1}$.  	
On the one hand, since $-\mu_{k}+ \thetalb{k} \leq  v_{0} < \cdots < v_{M-1} < v_{M}$, 
we have 
\begin{equation}\label{sum 1}
 \sum_{i\in E_{1}} v_{i+1}-v_{i} \leq \sum_{i=0}^{M-1} v_{i+1}-v_{i} =v_{M}-v_{0} \leq v_{M}+\mu_{k} -  \thetalb{k} < v_{M}+\mu_{k}.
\end{equation}
On the other hand, for any $i \in E_{1}$, we have $h(v_i) \geq h(v_{i+1})$. Since  $-\mu_{k}+\thetalb{k} < v_i <v_{i+1}< -\mu_{k-1}$, Lemma \ref{lem:steps} ensures that $v_i \leq v_{i+1} - m_{k}$. It follows that 
\begin{equation}\label{sum 2}
\sum_{i\in E_{1}} v_{i+1}-v_{i}\geq \sum_{i\in E_{1}} m_{k} = M_{1} m_{k}. 
\end{equation}
Combining \eqref{sum 1} and \eqref{sum 2}, we get
	\begin{equation}\label{ineq pour M1}
 M_1 < \frac{v_{M} + \mu_{k}}{m_{k}}.
\end{equation}
Finally, combining \eqref{ineg m2 en fction m1} and \eqref{ineq pour M1}, we obtain  
	$$
	M = M_1 + M_2 \leq h(v_{M}) +(n+1)M_1 <  h(v_{M})+(n+1)\frac{v_{M} + \mu_{k}}{m_{k}}. 
	$$
\end{proof}

\begin{lem}	\label{lem:longueur-chaine1} 
Consider $M \in \Z_{\geq 0}$ and $v_{0},\ldots,v_{M} \in \V$ such that 
\begin{itemize}
 \item $v_0 < \cdots < v_{M-1} < v_{M}$;
 \item $\forall i \in \{0,\ldots,M-1\}$, $v_{i+1} \in \pi( \Psi(v_i))$. 
\end{itemize}
Then, 
	$$
	M \leq (n+1) \frac{v_{M} + \mu_{\kappa}}{\minepsilon} + h(v_{M})
	$$ 
where 
$$
\minepsilon= \min\left\{\epsilon(-\mu_1),\ldots,\epsilon(-\mu_{\kappa}),(d\ellmahl^n)^{-1}\right\}>0. 
$$ 
\end{lem}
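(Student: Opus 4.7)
The plan is to follow the proof of Lemma \ref{lem:longueur-chaine2}. Split $\{0,\ldots,M-1\}$ into $E_1=\{i\mid h(v_i)\geq h(v_{i+1})\}$ and $E_2=\{i\mid h(v_i)<h(v_{i+1})\}$, with cardinalities $M_1=\card E_1$ and $M_2=\card E_2$. By Lemma \ref{lem:V_Zdl}, $h(v_i)\leq h(v_{i+1})+n$ for every $i$, while $h(v_i)\leq h(v_{i+1})-1$ for $i\in E_2$; telescoping from $h(v_0)\geq 0$ yields $M_2\leq h(v_M)+nM_1$, exactly as in the previous proof. The whole reduction hinges on the uniform step bound: for every $i\in E_1$, $v_{i+1}-v_i\geq \minepsilon$. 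Granted this, Lemma \ref{min V=-muk} gives $v_0\geq -\mu_{\kappa}$, so summing the step bound over $E_1$ produces $M_1\minepsilon\leq v_M-v_0\leq v_M+\mu_{\kappa}$, and combining with the bound on $M_2$ yields $M=M_1+M_2\leq (n+1)M_1+h(v_M)\leq (n+1)(v_M+\mu_{\kappa})/\minepsilon+h(v_M)$, as wanted.

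To establish the step bound, fix $i\in E_1$ and let $k\in\{1,\ldots,\kappa\}$ be the unique index with $v_{i+1}\in[-\mu_k,-\mu_{k-1})$; such a $k$ exists because $v_{i+1}\in\V\subset[-\mu_{\kappa},+\infty)$ by Lemma \ref{min V=-muk}. Lemma \ref{lem pi} gives $v_{i+1}=(\ellmahl^{\alphaa}v_i+\betaa-\beta_{k-1})/\ellmahl^{\alpha_{k-1}}$ for some $(\ellmahl^{\alphaa},\betaa)\in\P(L)$. The easy case is $v_i\geq -\mu_k$, where $v_i$ and $v_{i+1}$ lie in the same slope interval: if $v_i=-\mu_k$, then $v_{i+1}>v_i$ belongs to $\V\cap\Q_{>-\mu_k}$, forcing $v_{i+1}-v_i\geq \epsilon(-\mu_k)\geq\minepsilon$; if $v_i>-\mu_k$, then $v_i\geq -\mu_k+\epsilon(-\mu_k)$, so Lemma \ref{lem:steps} applies with $\thetalb{k}=\epsilon(-\mu_k)$, and the hypothesis $h(v_i)\geq h(v_{i+1})$ excludes its third conclusion, yielding $v_{i+1}-v_i\geq \min\{\epsilon(-\mu_k),(d\ellmahl^n)^{-1}\}\geq \minepsilon$.

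The main obstacle is the case $v_i<-\mu_k$, in which $v_i$ and $v_{i+1}$ lie in different slope intervals and Lemma \ref{lem:steps} cannot be invoked directly. The idea is to redo the case analysis of its proof with respect to $k$. For $\alphaa=\alpha_{k-1}$, one has $v_{i+1}-v_i=(\betaa-\beta_{k-1})/\ellmahl^{\alpha_{k-1}}$, and the integrality of $\betaa,\beta_{k-1}$ combined with $v_{i+1}>v_i$ forces $\betaa-\beta_{k-1}\geq 1$, hence $v_{i+1}-v_i\geq \ellmahl^{-n}\geq \minepsilon$. For $\alphaa<\alpha_{k-1}$, writing $v_i=M/(d\ellmahl^{h(v_i)})$ shows that $h(v_i)>\alphaa$ would force $h(v_{i+1})=h(v_i)+\alpha_{k-1}-\alphaa>h(v_i)$, contradicting $i\in E_1$; hence $h(v_i)\leq\alphaa$, both $v_i$ and $v_{i+1}$ lie in $(d\ellmahl^{\alpha_{k-1}})^{-1}\Z$, and so $v_{i+1}-v_i\geq (d\ellmahl^n)^{-1}\geq\minepsilon$.

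The most delicate subcase is $\alphaa>\alpha_{k-1}$. Here the map $v_i\mapsto v_{i+1}-v_i=((\ellmahl^{\alphaa}-\ellmahl^{\alpha_{k-1}})v_i+\betaa-\beta_{k-1})/\ellmahl^{\alpha_{k-1}}$ is affine with positive slope, so its minimum over the admissible set $\{v_i\mid v_{i+1}\geq -\mu_k\}$ is attained at the critical value $v_i^{\star}=(-\mu_k\ellmahl^{\alpha_{k-1}}+\beta_{k-1}-\betaa)/\ellmahl^{\alphaa}$ (the value making $v_{i+1}=-\mu_k$), with minimum $-\mu_k-v_i^{\star}=(\betaa-\beta_{k-1}-\mu_k(\ellmahl^{\alphaa}-\ellmahl^{\alpha_{k-1}}))/\ellmahl^{\alphaa}$. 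Lemma \ref{lem:ineq slopes N(L)} ensures that this quantity is nonnegative, with equality precisely when $(\ellmahl^{\alphaa},\betaa)$ lies on the $\mu_k$-edge of the Newton polygon; in that degenerate situation one computes $v_i^{\star}=-\mu_k$, and the constraint $v_i\geq v_i^{\star}$ contradicts $v_i<-\mu_k$. Hence $-\mu_k-v_i^{\star}$ is strictly positive, and since its numerator lies in $d^{-1}\Z$, the quantity is at least $(d\ellmahl^n)^{-1}$, so $v_{i+1}-v_i\geq -\mu_k-v_i^{\star}\geq (d\ellmahl^n)^{-1}\geq \minepsilon$, completing the proof.
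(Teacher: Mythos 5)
Your proof is correct. The skeleton (splitting $\{0,\ldots,M-1\}$ into $E_1,E_2$, telescoping $h$ to get $M_2\leq h(v_M)+nM_1$, reducing to the uniform step bound $v_{i+1}-v_i\geq\minepsilon$ on $E_1$) is the same as the paper's, and so is the treatment when $v_i$ and $v_{i+1}$ lie in the same slope interval $[-\mu_k,-\mu_{k-1})$: either $v_i=-\mu_k$, where the definition of $\epsilon(-\mu_k)$ applied to $v_{i+1}$ gives the step, or $v_i>-\mu_k$, where Lemma~\ref{lem:steps} applies with $\thetalb{k}=\epsilon(-\mu_k)$ and $h(v_i)\geq h(v_{i+1})$ kills the third alternative.

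Where you genuinely diverge is the straddling case $v_i<-\mu_k\leq v_{i+1}$. The paper dispatches it in two lines: if $v_{i+1}=-\mu_k$ it is an integrality argument (both terms lie in $\frac{1}{d\ellmahl^n}\Z$), and if $v_{i+1}>-\mu_k$ then $v_{i+1}\geq -\mu_k+\epsilon(-\mu_k)\geq v_i+\minepsilon$ directly from the definition of $\epsilon(-\mu_k)$ — no Newton-polygon arithmetic at all. You instead extend the case analysis of Lemma~\ref{lem:steps} (on $\alphaa$ versus $\alpha_{k-1}$) past the hypothesis $v_i\geq -\mu_k+\thetalb{k}$, with the delicate subcase $\alphaa>\alpha_{k-1}$ handled by minimizing the affine map $v_i\mapsto v_{i+1}-v_i$ over the admissible range $\{v_i\geq v_i^\star\}$ and bounding the minimum $-\mu_k-v_i^\star$ from below by $(d\ellmahl^n)^{-1}$ via integrality of its numerator, with Lemma~\ref{lem:ineq slopes N(L)} ruling out the degenerate $v_i^\star=-\mu_k$. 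This yields the slightly stronger conclusion $v_{i+1}-v_i\geq (d\ellmahl^n)^{-1}$ in all straddling subcases, without invoking $\epsilon(-\mu_k)$ there, at the cost of substantially more work than the paper's two-sentence appeal to the definition of $\epsilon$. Both routes are valid; the paper's is shorter, yours is more uniform in that it pushes the Newton-polygon estimate through the slope boundary rather than switching to the combinatorial definition of $\epsilon$.
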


\begin{proof}
As in the proof of Lemma \ref{lem:longueur-chaine2}, 
we consider the sets $E_{1}$ and $E_{2}$ defined by formulas \eqref{def E1} and \eqref{def E2} respectively, 
we set $M_{1}=\sharp E_{1}$ and $M_{2}=\sharp E_{2}$ and we have  
$$M=M_1+M_2$$ and 
$$
M_2 \leq h(v_{M}) +nM_1. 
$$

As in the proof of Lemma \ref{lem:longueur-chaine2}, we shall now give an upper bound  on $M_{1}$. We first claim that, for all $i \in E_{1}$, we have 
\begin{equation}\label{ineq vi vi+1 mineps}
v_i \leq v_{i+1} - \minepsilon. 
\end{equation}
In order to prove this claim, let us first recall that $\V \subset \Q_{\geq -\mu_{\kappa}}$ by Lemma \ref{min V=-muk}, so one of the following cases is satisfied. 

\vskip 5pt
\noindent {\textbf{Case 1: there exists $k \in \{1,\ldots,\kappa\}$ such that $-\mu_{k} \leq v_{i} <v_{i+1} < -\mu_{k-1}$.}} We distinguish the following two subcases.
\vskip 5pt
\noindent {\textbf{Subcase 1.1: $-\mu_{k}+\minepsilon \leq v_{i}$.}}
 Since $i \in E_{1}$, we have $h(v_i)\geq  h(v_{i+1})$. Moreover, we have $-\mu_{k}+\minepsilon \leq v_{i}<v_{i+1} < -\mu_{k-1}$ by hypothesis. So, Lemma \ref{lem:steps} ensures that $v_i \leq v_{i+1} - \minepsilon$ as claimed. 
\vskip 5pt
\noindent {\textbf{Subcase 1.2:  $v_{i}<-\mu_{k}+\minepsilon $.}} In this case, since $v_i \in \V$, the definition of $\epsilon(-\mu_{k})$ ensures that $v_i=-\mu_{k}$. Since $v_{i+1} \in \V$, we have $v_i=-\mu_{k} \leq v_{i+1} - \minepsilon$ as claimed. 
\vskip 5pt
\noindent {\textbf{Case 2: there exists $k \in \{1,\ldots,\kappa-1\}$ such that $v_i < -\mu_{k} \leq v_{i+1}$.}} We distinguish the following two subcases. 
\vskip 5pt
\noindent {\textbf{Subcase 2.1: $v_{i+1}=-\mu_{k}$.}} In this case, we have $v_{i+1} \in \frac{1}{d} \Z$, so $h(v_{i+1})=0$. 
Lemma \ref{lem:V_Zdl} implies $h(v_{i}) \leq n$, {\it i.e.}, $v_{i} \in \frac{1}{d \ellmahl^{n}} \Z$. Therefore, $v_{i}$ and $v_{i+1}$ are elements of $\frac{1}{d \ellmahl^{n}} \Z$ such that $v_{i}<v_{i+1}$, so we have $v_i \leq v_{i+1} - \frac{1}{d \ellmahl^{n}} \leq v_{i+1} - \minepsilon$ as claimed. 
\vskip 5pt
\noindent {\textbf{Subcase 2.2: $v_{i+1}> -\mu_{k}$.}} Since $v_{i+1} \in \V$, the definition of $\epsilon(-\mu_{k})$ ensures that  $v_{i+1}\geq -\mu_{k} + \minepsilon$. But, by hypothesis, $-\mu_{k}>v_{i}$. So, $v_{i+1}>v_{i}+ \minepsilon$ and our claim is proved in this case as well. 
\vskip 5pt
Now that the inequality \eqref{ineq vi vi+1 mineps} is justified, we can argue as we did in Lemma \ref{lem:longueur-chaine2} for proving \eqref{ineq pour M1} in order to prove that 
$$
M_1 \leq \frac{v_{M} + \mu_{\kappa}}{\minepsilon}.
$$ 
	
Finally, we obtain
	$$
	M= M_1 + M_2 \leq h(v_{M}) +(n+1)M_1 \leq  h(v_{M}) + (n+1)\frac{v_{M} + \mu_{\kappa}}{\minepsilon}. 
	$$
\end{proof}

\begin{proof}[Proof of Proposition \ref{prop:majoration-iterationV}] Let us prove that \eqref{v in V}  implies \eqref{v in Vu} in Proposition \ref{prop:majoration-iterationV}. 
		Consider $v \in \V$. Lemma \ref{lem:V_Zdl} ensures that $v \in \Z_{d,\ellmahl}$. Lemma \ref{min V=-muk} ensures that $v \geq -\mu_{\kappa}$.  It remains to prove that $v \in \V_{\iota(v)}$. Let $M$ be the least positive integer such that $v \in \V_{M}$. Lemma \ref{lem:exist v0 vM} ensures that there exist $v_{0} < \cdots < v_{M}$ in $\V$ such that 	
\begin{itemize}
\item $v_{0} \in \V_{0}=-\S(L)$;
 \item $v_{i+1} \in \pi (\Psi(v_i))$ for any $i \in \{0,\ldots,M-1\}$;
 \item $v_{M}=v$.
\end{itemize}
	It follows from Lemma \ref{lem:longueur-chaine1} that $M \leq \iota(v)$. Since $(\V_{i})_{i \geq 0}$ is \croissant{} by Theorem \ref{theo: un support bis}, we have $v \in \V_{M} \subset \V_{\iota(v)}$. This proves that \eqref{v in V} implies \eqref{v in Vu} in Proposition \ref{prop:majoration-iterationV}. The converse implication is obvious. 
\end{proof}

\section{An algorithm for computing a positive lower bound on $\epsilon(v)$}\label{sec:algo plb epsilon}

We retain the notations from the previous section. In addition, we set $\mu_{0}=-\infty$.

\subsection{Structure of the algorithm}\label{sec:algo plb epsilon structure}

The aim of this section is to present an algorithm for computing a positive lower bound on $\epsilon(v)$ for any given $v \in \Z_{d,\ellmahl}$. This algorithm is recursive and its structure is as follows.
\begin{itemize}
\item The base case corresponds to the case when $v$ belongs to $]-\infty, -\mu_{\kappa}] \cap \Z_{d,\ellmahl}$. This case presents no difficulty because $\epsilon(v)$ can be computed explicitly as we will see in Proposition \ref{prop:form eps base case}. %in section \ref{sec: plb eps base case}.
\item The recursive step is organized as follows. We consider an element $v$ of $]-\mu_{\kappa}, +\infty[ \cap \Z_{d,\ellmahl}$ and we distinguish two cases: %concerned with the computation of a positive lower bound $\vareps(v)$ of $\epsilon(v)$ for $v \in ]-\mu_{\kappa},+\infty[ \cap \Z_{d,\ellmahl}$: in sections \ref{} and \ref{}, we explain how to compute such a lower bound in terms of finitely many $\vareps(w)$ for 
\begin{itemize}
 \item[-]  if $v \not \in \{-\mu_{\kappa-1},\ldots,-\mu_{1}\}$, then there exists $k \in \{1,\ldots,\kappa\}$ such that $v  \in  ] -\mu_{k},-\mu_{k-1}[\cap \Z_{d,\ellmahl} $ and we will see in Proposition \ref{prop: step 2}
 % in section \ref{sec: plb eps rec step 1} 
 how to compute a positive lower bound on $\epsilon(v)$ from positive lower bounds on $\epsilon(w)$ for finitely many explicit $w \in ] -\infty,-\mu_{k}]\cap \Z_{d,\ellmahl} $; 
 \item[-] if $v=-\mu_{k-1}$ for some $k \in \{2,\ldots,\kappa\}$, then we will see in Proposition \ref{prop: step 3} %in section \ref{sec: plb eps rec step 2} 
 how to compute a positive lower bound on $\epsilon(v)=\epsilon(-\mu_{k-1})$ from positive lower bounds on $\epsilon(w)$ for finitely many explicit $w \in ] -\infty,-\mu_{k-1}[\cap \Z_{d,\ellmahl}$. 
\end{itemize}
\end{itemize}
The algorithm is presented in pseudo-code form in Algorithm \ref{algo: lower bound epsilon} in section~\ref{sec: lbeps pc}. The theoretical results mentioned above, namely Proposition \ref{prop:form eps base case}, Proposition \ref{prop: step 2} and Proposition \ref{prop: step 3}, on which the algorithm is based, are the subject of the next three sections.

\subsection{Theoretical result for the base case} \label{sec: plb eps base case}

The following result allows us to compute a lower bound on $\epsilon(v)$ for any given $v \in ]-\infty, -\mu_{\kappa}] \cap \Z_{d,\ellmahl}$. 

\begin{prop}\label{prop:form eps base case}
We have: 
\begin{itemize}
\item for any $v \in ]-\infty, -\mu_{\kappa}[  \cap \Z_{d,\ellmahl}$, 
$
\epsilon(v) = -\mu_{\kappa} - v>0
$;
\item $\epsilon(-\mu_{\kappa})= \min (\V_{1} \setminus \{-\mu_{\kappa}\}) + \mu_{\kappa}$.
\end{itemize}
Therefore, 
a lower bound on $\epsilon(v)$ is given by 
\begin{itemize}
 \item  
$
-\mu_{\kappa} - v
$ if $v \in ]-\infty, -\mu_{\kappa}[  \cap \Z_{d,\ellmahl}$;
 \item $\min (\V_{1} \setminus \{-\mu_{\kappa}\}) + \mu_{\kappa}$ if $\V_{1} \setminus \{-\mu_{\kappa}\} \neq \emptyset$ and $1$ otherwise\footnote{Here, $1$ is an arbitrary choice, any positive value would work.}, if $v=-\mu_{\kappa}$. 
\end{itemize}
\end{prop}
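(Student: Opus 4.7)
The plan is to handle the two bullets of the proposition separately, both building on Lemma \ref{min V=-muk} (which gives $\min \V = -\mu_\kappa$) and on Lemma \ref{lem:exist v0 vM} (chain existence). The first bullet is immediate from the definition of $\epsilon$: since $\V \subset \Q_{\geq -\mu_\kappa}$, for any $v < -\mu_\kappa$ we have $\V_{>v} = \V$, so $\min \V_{>v} = -\mu_\kappa$ and hence $\epsilon(v) = -\mu_\kappa - v > 0$.

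For the second bullet, the identity $\V_{> -\mu_\kappa} = \V \setminus \{-\mu_\kappa\}$ (which holds because $-\mu_\kappa = \min \V$) gives directly $\epsilon(-\mu_\kappa) = \min(\V \setminus \{-\mu_\kappa\}) + \mu_\kappa$ from the definition of $\epsilon$. The substance of the proof is therefore to establish the equality
$$\min(\V \setminus \{-\mu_\kappa\}) = \min(\V_1 \setminus \{-\mu_\kappa\}).$$
The inequality ``$\leq$'' is immediate from $\V_1 \subset \V$; the converse inequality, which I expect to be the main step, amounts to showing that the minimum of $\V \setminus \{-\mu_\kappa\}$ already lies in $\V_1$.

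To prove this, I would set $w = \min(\V \setminus \{-\mu_\kappa\})$ and let $M$ be the least nonnegative integer with $w \in \V_M$; it suffices to show $M \leq 1$. Suppose $M \geq 1$ and apply Lemma \ref{lem:exist v0 vM} to produce a strictly increasing chain $v_0 < \cdots < v_M = w$ in $\V$ with $v_0 \in -\S(L) = \V_0$ and $v_{i+1} \in \pi(\Psi(v_i))$. Since $v_0 \in \V$ and $v_0 < w$, minimality of $w$ in $\V \setminus \{-\mu_\kappa\}$ forces $v_0 = -\mu_\kappa$, so $v_1 \in \pi(\Psi(-\mu_\kappa)) \subset \V_1$ and $v_1 > -\mu_\kappa$. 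But $v_1 \leq v_M = w$ together with $v_1 \in \V \setminus \{-\mu_\kappa\}$ and minimality of $w$ yield $v_1 = w$, hence $w \in \V_1$, as required.

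The concluding ``Therefore'' assertion is then a direct transcription of the two formulas just established, with the convention that when $\V_1 \setminus \{-\mu_\kappa\}$ is empty (so is $\V \setminus \{-\mu_\kappa\}$, whence $\epsilon(-\mu_\kappa) = +\infty$), any positive constant serves as a legitimate lower bound. Apart from keeping careful track of the definition of $\epsilon$ and of the nondecreasing character of $(\V_i)$, I do not anticipate a serious obstacle: the heart of the argument is a short double-minimality squeeze once the chain lemma is invoked.
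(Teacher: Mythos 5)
Your proof is correct and follows essentially the same route as the paper: the two elementary reductions from the definition of $\epsilon$ and Lemma \ref{min V=-muk}, then a chain argument via Lemma \ref{lem:exist v0 vM} together with a minimality squeeze, which is precisely the content of the paper's Lemma \ref{lem:minV minV1}. The only cosmetic differences are that the paper concludes $v_1 > -\mu_\kappa$ directly from $v_1 > v_0 \geq -\mu_\kappa$ rather than first pinning down $v_0 = -\mu_\kappa$, and that it disposes of the empty case $\V \setminus \{-\mu_\kappa\} = \emptyset$ at the outset rather than as an afterthought.
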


\begin{ex}\label{ex:base cas RS}
In Section \ref{illustration main algo on RS}, we will illustrate our algorithm for computing a positive lower bound on $\epsilon(v)$, namely Algorithm \ref{algo: lower bound epsilon} presented in section~\ref{sec: lbeps pc} below, on  the operator defined by \eqref{L pour example}. We will have to compute positive lower bounds on $\epsilon(-\frac{3}{4})$ and $\epsilon(-\frac{1}{2})$. Let us explain how this can be done using Proposition \ref{prop:form eps base case}. We will see in section \ref{sec:example Newton et slopes} that $\kappa=2$, $\mu_{1}=0$ and $\mu_{2}=\frac{1}{2}$. Since $-\frac{3}{4}<-\mu_{2}$, Proposition \ref{prop:form eps base case} ensures that 
$$\epsilon(-\frac{3}{4})=-\mu_{2}-(-\frac{3}{4})=-\frac{1}{2}+\frac{3}{4}=\frac{1}{4}.
$$ 
Moreover, we will see in section \ref{the Vi for RS example} that $\V_1=\left\{-\frac{1}{2},-\frac{1}{4},0,1\right\}$, so $\V_{1} \setminus \{-\mu_{2}\}=\left\{-\frac{1}{4},0,1\right\} \neq \emptyset$ and Proposition \ref{prop:form eps base case} ensures that 
$$\epsilon(-\frac{1}{2})=\epsilon(-\mu_{2})=\min (\V_{1} \setminus \{-\mu_{2}\}) + \mu_{2}=-\frac{1}{4}+\frac{1}{2}=\frac{1}{4}.$$
\end{ex}

\begin{proof}[Proof of Proposition \ref{prop:form eps base case}]
Lemma \ref{min V=-muk} ensures that $\min \V=-\mu_{\kappa}$. Therefore, 
 \begin{itemize}
\item  for all $v \in ]-\infty, -\mu_{\kappa}[  \cap \Z_{d,\ellmahl}$, we have $\epsilon(v)= \min \V_{>v} - v = -\mu_{\kappa} - v>0$; 
\item $\epsilon(-\mu_{\kappa})=\min \V_{>-\mu_{\kappa}} + \mu_{\kappa}=\min (\V \setminus \{-\mu_{\kappa}\}) + \mu_{\kappa}$ and the desired equality $\epsilon(-\mu_{\kappa})= \min (\V_{1} \setminus \{-\mu_{\kappa}\}) + \mu_{\kappa}$ follows from Lemma \ref{lem:minV minV1} below.
\end{itemize}
\end{proof}

\begin{lem}\label{lem:minV minV1}
 We have $\min \V \setminus \{-\mu_{\kappa}\}=\min \V_{1} \setminus \{-\mu_{\kappa}\}$. 
\end{lem}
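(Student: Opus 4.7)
The plan is to reduce to the situation where one has a strictly increasing chain from $\V_0$ to the candidate minimum $m$, and to show that the first step of this chain already lands on $m$.

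First I would dispose of trivial cases. If $\V \setminus \{-\mu_{\kappa}\}$ is empty, then so is $\V_1 \setminus \{-\mu_{\kappa}\} \subset \V \setminus \{-\mu_{\kappa}\}$, and the equality is vacuous (both minima are $+\infty$). Otherwise, set $m = \min \V \setminus \{-\mu_{\kappa}\}$, noting that $\V_1 \subset \V$ gives immediately $\min \V_1 \setminus \{-\mu_{\kappa}\} \geq m$, so it suffices to exhibit $m$ as an element of $\V_1$.

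Next, let $M$ be the least nonnegative integer with $m \in \V_M$. Apply Lemma \ref{lem:exist v0 vM} to obtain a strictly increasing chain $v_0 < v_1 < \cdots < v_M = m$ in $\V$ with $v_0 \in -\S(L) = \V_0$ and $v_{i+1} \in \pi(\Psi(v_i))$ for all $i$. If $M = 0$, then $m = v_0 \in \V_0 \subset \V_1$ (using that $(\V_i)$ is nondecreasing) and we are done. Suppose therefore $M \geq 1$. Then $v_0 < m$ and $v_0 \in \V$, so by Lemma \ref{min V=-muk} we have $v_0 \geq -\mu_{\kappa}$; but if $v_0 > -\mu_{\kappa}$, then $v_0 \in \V \setminus \{-\mu_{\kappa}\}$, forcing $v_0 \geq m$, a contradiction. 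Hence $v_0 = -\mu_{\kappa}$.

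Finally, $v_1 \in \pi(\Psi(v_0)) \subset \V_1$ by definition of $\V_1$, and $v_1 > v_0 = -\mu_{\kappa}$, so $v_1 \in \V_1 \setminus \{-\mu_{\kappa}\} \subset \V \setminus \{-\mu_{\kappa}\}$; this yields $v_1 \geq m$. Combined with $v_1 \leq v_M = m$, we get $v_1 = m$, so $m \in \V_1$, completing the proof. The argument is essentially bookkeeping and I do not expect any real obstacle; the only point requiring a little care is the forced identification $v_0 = -\mu_{\kappa}$, which follows cleanly from the minimality of $m$ together with $\min \V = -\mu_{\kappa}$.
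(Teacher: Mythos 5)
Your proof is correct and follows essentially the same route as the paper's: reduce to showing $m\in\V_1$, invoke Lemma \ref{lem:exist v0 vM} to get a strictly increasing chain starting in $-\S(L)$, and then use minimality of $m$ to force the chain to have length at most one. The only cosmetic difference is that you explicitly pin down $v_0=-\mu_\kappa$ before observing $v_1>-\mu_\kappa$; the paper shortcuts this by noting directly that $v_1>v_0\geq\min(-\S(L))=-\mu_\kappa$, which spares the small by-contradiction step. Either way is fine.
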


\begin{proof}
We set $w= \min \V \setminus \{-\mu_{\kappa}\}$. 
If $\V \setminus \{-\mu_{\kappa}\}=\emptyset$, then $\V_{1} \setminus \{-\mu_{\kappa}\}=\emptyset$ as well and the equality $w= \min \V_{1} \setminus \{-\mu_{\kappa}\}$ holds in this case. From now on, we assume that $\V \setminus \{-\mu_{\kappa}\}\neq\emptyset$.  In order to prove the lemma, it is sufficient to prove that $w \in \V_1$. 
Let $M$ be the least element of $\Z_{\geq 0}$ such that $w \in \V_{M}$. If $M=0$, then $w \in \V_0 \subset \V_1$ and the lemma is proved in this case. Suppose now that $M\geq 1$. We want to prove that $M=1$. According to Lemma \ref{lem:exist v0 vM}, there exist $v_0 , v_1 , \ldots , v_M \in \V$ such that $v_{0} \in -\S(L)$, $v_0  < v_1 < \cdots < v_M$ and $v_{M}=w$. Since $\min -\S(L)=-\mu_{\kappa}$ and $v_{0} \in -\S(L)$, the fact that $v_{1}>v_{0}$ implies that $v_{1}>-\mu_{\kappa}$ and, hence, $v_{1} \in \V \setminus \{-\mu_{\kappa}\}$. It follows that $v_{1} \geq \min \V \setminus \{-\mu_{\kappa}\}=w=v_{M}$ and, hence, $M=1$.  
 \end{proof}

\subsection{Theoretical results for the recursive step: case $v \not \in \{-\mu_{\kappa-1},\ldots,-\mu_{1}\}$}\label{sec: plb eps rec step 1}

In this section, we consider $k \in \{1,\ldots,\kappa\}$ and we assume that we are able to compute a positive lower bound $\vareps(w)$ on $\epsilon(w)$ for any $w \in ]-\infty, -\mu_{k}]  \cap \Z_{d,\ellmahl}$.
Our aim is to explain how one can compute,  for any $v  \in ] -\mu_{k},-\mu_{k-1}[\cap \Z_{d,\ellmahl}$, a positive lower bound on $\epsilon(v)$ 
by using finitely many of the values $\vareps(w)$, with $w \in ]-\infty,-\mu_{k}]\cap \Z_{d,\ellmahl}$. 

Our approach, detailed in Proposition \ref{prop: step 2} below, relies on a labeled rooted tree $\tree{k,\vareps(-\mu_{k}),v}$ that we shall now introduce.   
Its definition involves the finite (possibly empty) sets defined, for any $w \in \Z_{d,\ellmahl}$, by
$$
 \Delta(w) =\left\{\frac{\psi(w)-\betaa}{\ellmahl^{\alphaa}} \ \vert \ (\ellmahl^{\alphaa},\betaa) \in \P(L) \right\} \setminus\{w\}, 
$$
In geometric terms, $\Delta(w)$ is the set of $w' \in \Q \setminus\{w\}$ for which there exists a point in $\P(L)$ whose projection along a line of slope $-w'$ onto the $y$-axis is the point with coordinates $(0,\minP(w))$. It will also be useful to keep in mind that, in more computational terms, $\Delta(w)$ is the set of $w' \in \Q \setminus \{w\}$ such that $\minP(w) \in  \Psi(w')$ and that, according to the discussion at the beginning of section \ref{sec:the maps Psi minP pi}, $\Psi(w')$ is a natural receptacle for $\supp L(z^{w'})$.
 
\begin{defi}\label{defi:tree}
Let $k \in \{1,\ldots,\kappa\}$ and  $v  \in ] -\mu_{k},-\mu_{k-1}[\cap \Z_{d,\ellmahl}$. The labeled rooted tree $\tree{k,\vareps(-\mu_{k}),v}$ alluded to above is uniquely defined by the following properties:
\begin{itemize}
 \item the labels of $\tree{k,\vareps(-\mu_{k}),v}$ belong to $\Q$; 
 \item the label of the root of $\tree{k,\vareps(-\mu_{k}),v}$ is $v$; 
\item if a vertex has label $w < -\mu_{k}+\vareps(-\mu_{k})$, then it has no child, {\it i.e.}, it is a leaf;
 \item if a vertex has label $w \geq -\mu_{k}+\vareps(-\mu_{k})$, then it has $\card \Delta(w)$ children labeled by the elements of $\Delta(w)$. 
\end{itemize}
\end{defi}

Note that: 
\begin{itemize}
 \item In the last case of Definition \ref{defi:tree}, since $w > -\mu_{\kappa}$, it follows from Lemma \ref{lem:if delta empty then kappa=1 w=-mu1} below that $\Delta(w) \neq \emptyset$ and, hence, the vertex is not a leaf. 
\item In the tree $\tree{k,\vareps(-\mu_{k}),v}$, children have smaller labels than their parent; this is a direct consequence (of the first assertion) of Lemma \ref{lem:Delta} below.  
\end{itemize}

The following two lemmas give useful properties of the set $\Delta(w)$.

\begin{lem}\label{lem:if delta empty then kappa=1 w=-mu1}
If the set $\Delta(w)$ is empty for some $w \in \Z_{d,\ellmahl}$, then $\kappa=1$ and $w=-\mu_{\kappa}=-\mu_{1}$. 
\end{lem}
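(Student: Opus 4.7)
The starting point is to unpack the assumption $\Delta(w)=\emptyset$ into a rigid geometric statement about $\P(L)$. Since $a_{0}\pz \neq 0$, the set $\P(L)$ is nonempty (it contains at least $(1,\val a_{0}\pz)$), so the set $\left\{\frac{\minP(w)-\betaa}{\ellmahl^{\alphaa}} \ \vert \ (\ellmahl^{\alphaa},\betaa)\in \P(L)\right\}$ is nonempty. The hypothesis $\Delta(w)=\emptyset$ therefore forces this set to equal $\{w\}$: for every $(\ellmahl^{\alphaa},\betaa)\in \P(L)$,
$$
\frac{\minP(w)-\betaa}{\ellmahl^{\alphaa}}=w, \qquad \text{equivalently} \qquad \betaa=\minP(w)-w\ellmahl^{\alphaa}.
$$
In other words, every point of $\P(L)$ lies on the affine line of slope $-w$ passing through $(0,\minP(w))$.

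Next, I will translate this into information about the Newton polygon. Recall that $\mathcal N(L)$ is the lower convex hull of the set \eqref{convex hull of this set is newton} and that its non-vertical edges are determined by the points $(\ellmahl^{i},\val a_{i}\pz)$ for those $i$ with $a_{i}\pz\neq 0$, all of which lie in $\P(L)$. Since we have just shown that all such points are collinear on the line of slope $-w$, the lower convex hull has a single non-vertical edge, namely the segment of this line joining its leftmost point $(1,\val a_{0}\pz)$ to its rightmost point $(\ellmahl^{n},\val a_{n}\pz)$. (These endpoints are distinct: the very appearance of $-\mu_{\kappa}$ in the statement presupposes $\kappa\geq 1$, hence $n\geq 1$, so $\ellmahl^{n}>1$.) Consequently $\kappa=1$.

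Finally, the slope of this unique edge is $-w$ by construction, while by definition of $\mu_{1}=\mu_{\kappa}$ it is also $\mu_{\kappa}$. Hence $w=-\mu_{\kappa}=-\mu_{1}$, which is exactly the conclusion.

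I do not anticipate any real obstacle: the only subtle point is checking that $\P(L)$ indeed contains two points with distinct first coordinates, which is immediate from $a_{0}\pz a_{n}\pz\neq 0$ together with $n\geq 1$. Once that is noted, the argument is a direct unpacking of the definitions of $\Delta(w)$, $\minP$, and $\mathcal N(L)$.
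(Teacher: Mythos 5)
Your proof is correct and follows essentially the same route as the paper: you unpack $\Delta(w)=\emptyset$ into the statement that every point of $\P(L)$ lies on a single line of slope $-w$, which forces the Newton polygon to have exactly one non-vertical edge of that slope. The paper phrases the intermediate step as ``$\Psi(w)$ is a singleton'' before making the same geometric observation, but that is only a cosmetic difference. Your added remark about needing two points of $\P(L)$ with distinct abscissas (so $n\geq 1$) is a sensible precaution that the paper leaves implicit.
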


\begin{proof}
The following properties are obviously equivalent:
\begin{itemize}
 \item $\Delta(w)$ is empty; 
 \item for all $(\ellmahl^{\alphaa},\betaa) \in \P(L)$, $\frac{\psi(w)-\betaa}{\ellmahl^{\alphaa}}= w$; 
 \item for all $(\ellmahl^{\alphaa},\betaa) \in \P(L)$, $\psi(w)= \ellmahl^{\alphaa} w+\betaa$; 
\item  $\Psi(w)=\{\psi(w)\}$;
\item  $\Psi(w)$ is  a singleton. 
\end{itemize}
Since $\Psi(w)$ is the set of ordinates of the projection of the elements of $\P(L)$ along a line of slope $-w$ onto the $y$-axis, the fact that $\Psi(w)$ is a singleton is equivalent to the fact that the elements of $\P(L)$ belong to a single line of slope $-w$. In that case, $\mathcal N(L)$ has a single slope and this slope is equal to $-w$, {\it i.e.}, $\kappa=1$ and $w=-\mu_{\kappa}=-\mu_{1}$. 
\end{proof}

\begin{lem}\label{lem:Delta}
Let $w,w'\in \Z_{d,\ellmahl}$. If $w' \in \Delta(w)$, then $w'<w$ and $w \in \pi(\Psi(w'))$. Reciprocally\footnote{Strictly speaking, the reciprocal assertion should start with the hypothesis ``$w'<w$ and $w \in \pi(\Psi(w'))$'' instead of the seemingly weaker hypothesis  ``$w' \neq w$ and $w \in \pi(\Psi(w'))$''. Actually, these hypotheses are equivalent because $\min \pi(\Psi(w'))=w'$ according to Lemma~\ref{min Psi(s)=s}.}, if $w \in \pi(\Psi(w'))$ and $w \neq w'$, then $w' \in \Delta(w)$ and $w'<w$.
\end{lem}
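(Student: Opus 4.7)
The proof should essentially be a direct unfolding of definitions, leveraging Lemma \ref{lem proprietes pi minP} (which says $\pi$ and $\minP$ are inverse increasing bijections) together with Lemma \ref{min Psi(s)=s} (which says $\min \pi(\Psi(w')) = w'$). The plan is to handle the two implications separately and deduce the strict inequality $w'<w$ in both cases from the same minimality principle.

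For the forward direction, assume $w' \in \Delta(w)$. By definition of $\Delta(w)$, there exists $(\ellmahl^{\alphaa}, \betaa) \in \P(L)$ such that $w' = (\minP(w) - \betaa)/\ellmahl^{\alphaa}$, and $w' \neq w$. Rearranging gives $\ellmahl^{\alphaa} w' + \betaa = \minP(w)$, which by definition of $\Psi$ means $\minP(w) \in \Psi(w')$. Applying $\pi$ and using that $\pi \circ \minP = \mathrm{id}$ (Lemma \ref{lem proprietes pi minP}), I get $w = \pi(\minP(w)) \in \pi(\Psi(w'))$. For the strict inequality, Lemma \ref{min Psi(s)=s} gives $\min \pi(\Psi(w')) = w'$, so $w \geq w'$, and combined with $w \neq w'$ this yields $w' < w$.

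For the reverse direction, assume $w \in \pi(\Psi(w'))$ and $w \neq w'$. Then there exists $q \in \Psi(w')$ with $\pi(q) = w$. Applying $\minP$ and again using that the two maps are mutually inverse, we obtain $q = \minP(w)$. The condition $q \in \Psi(w')$ means $q = \ellmahl^{\alphaa} w' + \betaa$ for some $(\ellmahl^{\alphaa}, \betaa) \in \P(L)$, hence $w' = (\minP(w)-\betaa)/\ellmahl^{\alphaa}$. Since $w' \neq w$, this places $w'$ in $\Delta(w)$. The strict inequality $w' < w$ follows from $\min \pi(\Psi(w')) = w'$ exactly as before.

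There is no real obstacle here — the statement is essentially a reformulation of the fact that ``$w \in \pi(\Psi(w'))$ with $w \neq w'$'' and ``$w' \in \Delta(w)$'' are two ways of saying that $(\minP(w), w')$ comes from a common point $(\ellmahl^{\alphaa},\betaa) \in \P(L)$ via the projection/preimage duality between $\Psi$ and $\pi$. The only subtlety worth flagging is the footnote: the reciprocal is stated under the a priori weaker hypothesis $w' \neq w$, but this is harmless because the conclusion $w' < w$ then forces the stronger inequality automatically via Lemma \ref{min Psi(s)=s}.
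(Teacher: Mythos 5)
Your proof is correct and follows essentially the same route as the paper: unfold the definition of $\Delta(w)$, translate between $\minP(w) \in \Psi(w')$ and $w \in \pi(\Psi(w'))$ via Lemma \ref{lem proprietes pi minP}, and obtain the strict inequality from Lemma \ref{min Psi(s)=s}. The only cosmetic difference is that the paper derives $w'<w$ in the reverse direction by citing the first part of the lemma rather than re-invoking $\min\pi(\Psi(w'))=w'$, but this is the same fact.
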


\begin{proof}
Consider $w' \in \Delta(w)$. There exists $(\ellmahl^{\alphaa},\betaa) \in \P(L)$ such that 
$$
w'=\frac{\psi(w)  - \betaa}{\ellmahl^{\alphaa}}.
$$
Thus, we have 
$
\psi(w)=\ellmahl^{\alphaa}w'+\betaa
$
and, hence, since $\pi$ and $\minP$ are inverse of each other by Lemma \ref{lem proprietes pi minP}, we have 
$
w=\pi(\ellmahl^{\alphaa}w'+\betaa) \in \pi( \Psi(w')).
$
Then, it follows from Lemma \ref{min Psi(s)=s} that $w' \leq w$. Since $w'\neq w$, we have $w'<w$.
	
	Suppose now that $w \in \pi(\Psi(w'))$ and $w \neq w'$. Then, $w=\pi(\ellmahl^{\alphaa}w'+\betaa)$ for some $(\ellmahl^{\alphaa},\betaa) \in \P(L)$. Since $\pi$ and $\minP$ are inverse of each other by Lemma \ref{lem proprietes pi minP}, we have $
	\minP(w)=\ellmahl^{\alphaa}w'+\betaa
	$
and, hence,
	$
w'=\frac{\psi(w)  - \betaa}{\ellmahl^{\alphaa}}
	$. Since $w'\neq w$, we get $w' \in \Delta(w)$. The fact that $w'<w$ follows from the first part of the lemma.
\end{proof}

As announced above, the following result shows how to compute, for any $v  \in  ] -\mu_{k},-\mu_{k-1}[\cap \Z_{d,\ellmahl}$, a positive lower bound on $\epsilon(v)$ by using finitely many $\vareps(w)$ with $w \in ] -\infty,-\mu_{k}]\cap \Z_{d,\ellmahl}$. Our approach relies on the labeled rooted tree $\epsilon(\tree{k,\vareps(-\mu_{k}),v})$ obtained from $\tree{k,\vareps(-\mu_{k}),v}$ by applying $\epsilon$ to its labels.  It will be used in the following way: we will explain how to compute a positive lower bound on the label of any leaf of $\epsilon(\tree{k,\vareps(-\mu_{k}),v})$ and how to compute a positive lower bound on the label of any vertex of $\epsilon(\tree{k,\vareps(-\mu_{k}),v})$ from positive lower bounds on its children; this will enable us to compute a positive lower bound on the label of any vertex of $\epsilon(\tree{k,\vareps(-\mu_{k}),v})$ and, in particular, on the label of its root, which is nothing but $\epsilon(v)$. 

\begin{prop} \label{prop: step 2}
Let $k \in \{1,\ldots,\kappa\}$ and  $v  \in  ] -\mu_{k},-\mu_{k-1}[\cap \Z_{d,\ellmahl}$.
\vskip 5pt
\noindent (i) The tree $\tree{k,\vareps(-\mu_{k}),v}$ is finite and its height is less than or equal to 
\begin{equation}\label{bound height}
(n+1) \frac{v + \mu_{k}}{\min\left\{\vareps(-\mu_{k}),(d\ellmahl^n)^{-1}\right\}} + h(v)+1. 
\end{equation}
\vskip 5pt
\noindent (ii) Consider a leaf of $\tree{k,\vareps(-\mu_{k}),v}$ with label $w$. Then a positive lower bound on $\epsilon(w)$ is given by $\vareps(w)$ if $w < -\mu_k$  and 
	by $-\mu_{k} + \vareps(-\mu_{k})-w$
	 if $-\mu_k \leq w < -\mu_{k} + \vareps(-\mu_{k})$. 
	 
\vskip 5pt
\noindent (iii) Consider a vertex of $\tree{k,\vareps(-\mu_{k}),v}$ with label $w$ which is not a leaf. If,  for each $w'\in \Delta(w)$, we have a positive lower bound $m_{w'}$ on $\epsilon(w')$, then a positive lower bound on $\epsilon(w)$ is given by the minimum of
\begin{equation}\label{lower bound 3}
 \{m_{w'}\ellmahl^{d_{w,w'}-\alpha_{k-1}} \ \vert \ w' \in \Delta(w) \}\cup \{
 -\mu_{k-1} - w, \min(\pi(\Psi(w))\setminus \{w\}) - w\}
\end{equation}
where $d_{w,w'}$ is defined by \eqref{eq:def_dkw}. 

Consequently:
\begin{itemize}
\item (i) above ensures that $\epsilon(\tree{k,\vareps(-\mu_{k}),v})$ is a finite tree and gives an explicit upper bound on its height;  
 \item (ii) above allows us to compute a positive lower bound on the label of any leaf of $\epsilon(\tree{k,\vareps(-\mu_{k}),v})$; 
 \item (iii) above allows us to compute a positive lower bound on the label of a vertex of $\epsilon(\tree{k,\vareps(-\mu_{k}),v})$ which is not a leaf from positive lower bounds on its children. 
\end{itemize}
This allows us to compute a positive lower bound on the label of any vertex of $\epsilon(\tree{k,\vareps(-\mu_{k}),v})$ and, in particular, a positive lower bound on the label of its root, namely $\epsilon(v)$. 
\end{prop}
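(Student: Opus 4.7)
My plan is to treat the three parts in turn, as they have quite different characters.

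\textbf{Part (i).} I would reverse any root-to-leaf path $v = w_0 > w_1 > \cdots > w_M$ in $\tree{k,\vareps(-\mu_k),v}$: the labels strictly decrease by Lemma \ref{lem:Delta}, and every internal label $w_i$ with $i < M$ satisfies $w_i \geq -\mu_k + \vareps(-\mu_k)$ by the very definition of the tree. Setting $u_j := w_{M-j}$ produces a strictly increasing chain with $u_{j+1} \in \pi(\Psi(u_j))$, again by Lemma \ref{lem:Delta}. Applying Lemma \ref{lem:longueur-chaine2} with $\thetalb{k} := \vareps(-\mu_k)$ to the sub-chain $u_1 < \cdots < u_M$, whose terms lie in $[-\mu_k + \vareps(-\mu_k), -\mu_{k-1}[$, yields the claimed height bound on $M$. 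Finiteness of the tree follows because each vertex has at most $\card \P(L)$ children and $\P(L)$ is finite by Remark \ref{rem:PL_finite}.

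\textbf{Part (ii).} If $w < -\mu_k$, the lower bound $\vareps(w)$ comes directly from the standing assumption on the auxiliary function $\vareps$. If $-\mu_k \leq w < -\mu_k + \vareps(-\mu_k)$, then by definition $\min \V_{>-\mu_k} = -\mu_k + \epsilon(-\mu_k) \geq -\mu_k + \vareps(-\mu_k)$, so no element of $\V$ lies in $]-\mu_k, -\mu_k + \vareps(-\mu_k)[$. Since $w \geq -\mu_k$, this forces $\min \V_{>w} \geq -\mu_k + \vareps(-\mu_k)$, hence $\epsilon(w) \geq -\mu_k + \vareps(-\mu_k) - w > 0$.

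\textbf{Part (iii).} This is the heart of the argument. Set $u := \min \V_{>w}$, so $\epsilon(w) = u - w$, and analyze $u$ by cases. If $u > -\mu_{k-1}$, then $\epsilon(w) > -\mu_{k-1} - w$. Otherwise $u \in ]w, -\mu_{k-1}] \subset ]-\mu_k, -\mu_{k-1}]$, and Lemma \ref{lem:exist v0 vM} gives a chain $v_0 < \cdots < v_M = u$ with $v_0 \in -\S(L)$ and $v_{i+1} \in \pi(\Psi(v_i))$. If $M = 0$, then $u \in -\S(L) \cap ]-\mu_k, -\mu_{k-1}] = \{-\mu_{k-1}\}$, so $\epsilon(w) = -\mu_{k-1} - w$. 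If $M \geq 1$, set $v := v_{M-1} \in \V$, so $v < u$ and $u \in \pi(\Psi(v))$. The case $v > w$ contradicts $u = \min \V_{>w}$; the case $v = w$ gives $u \in \pi(\Psi(w)) \setminus \{w\}$, hence $\epsilon(w) \geq \min(\pi(\Psi(w)) \setminus \{w\}) - w$. In the remaining case $v < w$, Lemma \ref{lem pi} (applicable as $u \in ]-\mu_k, -\mu_{k-1}]$) gives $(\ellmahl^\alpha, \beta) \in \P(L)$ with $u = (\ellmahl^\alpha v + \beta - \beta_{k-1})/\ellmahl^{\alpha_{k-1}}$. I then define $w' := (\ellmahl^{\alpha_{k-1}} w + \beta_{k-1} - \beta)/\ellmahl^\alpha$ and use Lemma \ref{lem psi} (which gives $\minP(w) = \ellmahl^{\alpha_{k-1}} w + \beta_{k-1}$) to verify $\pi(\ellmahl^\alpha w' + \beta) = w$, so that $w' \in \Delta(w)$ with $d_{w,w'} = \alpha$, provided $w' \neq w$. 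The alternative $w' = w$ would, by direct substitution, force $u - w = \ellmahl^{\alpha - \alpha_{k-1}}(v - w) < 0$, contradicting $u > w$ since $v < w$. The identity $u - w = \ellmahl^{\alpha - \alpha_{k-1}}(v - w')$ combined with $v > w'$ (forced by $u > w$) and $v \geq w' + m_{w'}$ (from $v \in \V_{>w'}$) then yields $\epsilon(w) \geq \ellmahl^{d_{w,w'} - \alpha_{k-1}} m_{w'}$. Taking the minimum over all cases gives the stated bound; positivity of every term follows from $w \leq v < -\mu_{k-1}$, $m_{w'} > 0$, and Lemma \ref{min Psi(s)=s}.

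The main obstacle is the subcase $v < w$ in part (iii): constructing the right $w' \in \Delta(w)$ out of the chain element $v$ and matching the exponent $d_{w,w'}$ with $\alpha$ so that the bound $\ellmahl^{d_{w,w'} - \alpha_{k-1}} m_{w'}$ emerges cleanly; the other subcases follow routinely from Lemmas \ref{lem:Delta}, \ref{lem pi}, \ref{lem psi}, \ref{min Psi(s)=s}, and \ref{lem:exist v0 vM}.
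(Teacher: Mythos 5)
Your proof is correct and follows essentially the paper's approach: parts (i) and (ii) match the paper nearly step for step (reversing the root--leaf path, invoking Lemma \ref{lem:longueur-chaine2} with $\thetalb{k}=\vareps(-\mu_k)$, and the elementary $\min\V_{>w}$ argument for leaves), and for (iii) you re-derive Lemma \ref{lem:espilon_Delta} inline --- same use of Lemma \ref{lem:exist v0 vM} to produce a $\V$-predecessor of $\min\V_{>w}$, same case split on its position relative to $w$, same algebraic identity $u-w=\ell^{\alpha-\alpha_{k-1}}(v-w')$ via Lemmas \ref{lem psi} and \ref{lem pi} --- rather than citing it and substituting $m_{w'}\leq\epsilon(w')$ as the paper does. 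One minor imprecision: you assert $d_{w,w'}=\alpha$, but \eqref{eq:def_dkw} defines $d_{w,w'}$ as a minimum over admissible exponents, so a priori only $d_{w,w'}\leq\alpha$; this is harmless since $\ell\geq 2$ gives $m_{w'}\ell^{d_{w,w'}-\alpha_{k-1}}\leq m_{w'}\ell^{\alpha-\alpha_{k-1}}\leq\epsilon(w)$, so the claimed bound \eqref{lower bound 3} still holds, but the equality should be weakened to an inequality.
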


\begin{ex}\label{ex:rec step case 1}
In section \ref{illustration main algo on RS}, we will have to compute a positive lower bound on $\epsilon(-\frac{1}{4})$ for the operator defined by \eqref{L pour example}. Let us explain how this can be done using Proposition \ref{prop: step 2}.   We will see in section \ref{sec:example Newton et slopes} that $\kappa=2$, $\mu_{1}=0$ and $\mu_{2}=\frac{1}{2}$. Moreover, we have seen in Example \ref{ex:base cas RS} that $\epsilon(-\mu_{2})=\epsilon(-\frac{1}{2})=\frac{1}{4}$, thus 
$\vareps(-\mu_{2})=\vareps(-\frac{1}{2}):=\frac{1}{4}$ is a positive lower bound on 
$\epsilon(-\mu_{2})=\epsilon(-\frac{1}{2})$. 

We have $-\frac{1}{4}  \in ] -\mu_{2},-\mu_{1}[=] -\frac{1}{2},0[$, so, following the method presented in Proposition \ref{prop: step 2}, in order to compute a positive lower bound on $\epsilon(-\frac{1}{4})$, we first compute the tree $\tree{2,\vareps(-\mu_{2}),-\frac{1}{4}}=\tree{2,\frac{1}{4},-\frac{1}{4}}$. The result is shown in Figure \ref{fig:subfigAtree}. 

			\begin{figure}
        \centering
\begin{tikzpicture}[every node/.style={draw}]
	\node {$-\frac{1}{4}$}
	child {node {$-\frac{1}{2}$}}
	child {node {$-\frac{3}{4}$}}
	child {node {$-\frac{3}{8}$}};
	\end{tikzpicture}%}
        \caption{}   
        \label{fig:subfigAtree}
\end{figure}

We then compute lower bounds on the labels $\epsilon(-\frac{1}{2})$, $\epsilon(-\frac{3}{4})$ and $\epsilon(-\frac{3}{8})$ of the leaves of $\epsilon(\tree{2,\frac{1}{4},-\frac{1}{4}})$. 
Since $-\frac{3}{8}\geq-\mu_{2}=-\frac{1}{2}$, it follows from (ii) of Proposition \ref{prop: step 2} that a positive lower bound on $\epsilon(-\frac{3}{8})$ is given by 
$$
m_{-\frac{3}{8}}=-\mu_2+\theta(-\mu_{2}) - (-\frac{3}{8})=-\frac{1}{2}+\frac{1}{4}+\frac{3}{8}=\frac{1}{8}.
$$ 
Similarly, since $-\frac{1}{2}\geq-\mu_{2}=-\frac{1}{2}$, it follows from (ii) of Proposition \ref{prop: step 2} that a positive lower bound on $\epsilon(-\frac{1}{2})$ is given by 
$$
m_{-\frac{1}{2}}=-\mu_2+\theta(-\mu_{2}) - (-\frac{1}{2})=-\frac{1}{2}+\frac{1}{4}+\frac{1}{2}=\frac{1}{4}.
$$ 
Last, $-\frac{3}{4} < -\mu_{2}=-\frac{1}{2}$, 
and it has been shown in Example \ref{ex:base cas RS} that  
$m_{-\frac{3}{4}}=\frac{1}{4}$ is a positive lower bound on 
$\epsilon(-\frac{3}{4})$ respectively. 

Now that we have lower bounds on the labels of the leaves of $\epsilon(\tree{2,\frac{1}{4},-\frac{1}{4}})$, (ii) of Proposition \ref{prop: step 2} ensures that a positive lower bound on the root $\epsilon(-\frac{1}{4})$ of $\epsilon(\tree{2,\frac{1}{4},-\frac{1}{4}})$ is given by the minimum of \eqref{lower bound 3} with $w=-\frac{1}{4}$ and $k=2$. Computing this minimum requires the calculation of $\alpha_{1}$, of $d_{-\frac{1}{4},w'}$ for $w' \in  \{-\frac{1}{2},-\frac{3}{4},-\frac{3}{8}\}$ and of $\min(\pi(\Psi(-\frac{1}{4}))\setminus \{-\frac{1}{4}\})$. This presents no difficulty. Indeed, we will see in section \ref{sec:example Newton et slopes} that $\alpha_{1}=1$. 
Moreover, using the explicit formulas for $\P(L)$ and $\pi$ given in Section \ref{sec:example Newton et slopes} and Section \ref{sec:calc pi Psi RS}, we get 
\begin{multline*}
 d_{-\frac{1}{4},-\frac{1}{2}}=  \min \biggl\{\alphaa \in \{0,1,2\} \ \vert \ \exists (\ell^{\alphaa},\betaa) \in \{(1,0),(2,0),(2,1),(4,1)\},\\
  -\frac{1}{2}=\frac{\psi(-\frac{1}{4})-\betaa}{2^{\alphaa}}
=\frac{-\frac{1}{2}-\betaa}{2^{\alphaa}}\biggl\}=0. 
\end{multline*}
Similar calculations show that 
\begin{equation*}\label{eq: expl formulas for d w w'}
d_{-\frac{1}{4},-\frac{3}{4}}=1, \ \  d_{-\frac{1}{4},-\frac{3}{8}}=2. 
\end{equation*}
Last, the explicit formulas for $\pi$ and $\Psi$ given in section \ref{sec:calc pi Psi RS} show that 
$$\min(\pi(\Psi(-\frac{1}{4}))\setminus \{-\frac{1}{4}\})=-\frac{1}{8}.$$
Finally, we obtain that a positive lower bound of $\epsilon(-\frac{1}{4})$ is given by the minimum of the following numbers: 
\begin{itemize}[label=$\bullet$]
					\item  $m_{w'}\ellmahl^{d_{-\frac{1}{4},w'}-\alpha_{1}}=\frac{1}{4}\times2^{0-1}=\frac{1}{8}$ for $w'= -\frac{1}{2}$;
					\item $m_{w'}\ellmahl^{d_{-\frac{1}{4},w'}-\alpha_{1}}=\frac{1}{4}\times2^{1-1}=\frac{1}{4}$ for $w'= -\frac{3}{4}$;
					\item $m_{w'}\ellmahl^{d_{-\frac{1}{4},w'}-\alpha_{1}}=\frac{1}{8}\times2^{2-1}=\frac{1}{4}$ for $w'= -\frac{3}{8}$;
					\item $0 - \left(- \frac{1}{4}\right)=\frac{1}{4}$; 
					\item $\min(\pi(\Psi(-\frac{1}{4}))\setminus \{-\frac{1}{4}\}) + \frac{1}{4}=-\frac{1}{8}+\frac{1}{4} = \frac{1}{8}$; 
				\end{itemize}
this minimum is equal to $\frac{1}{8}$. Thus, a positive lower bound of $\epsilon(-\frac{1}{4})$ is given by $\vareps(-\frac{1}{4})=\frac{1}{8}$. 
\end{ex}

The proof of Proposition \ref{prop: step 2} is given below, after the following lemma.

\begin{lem}\label{lem:espilon_Delta}
	Let $w \in \mathbb Q_{>-\mu_{\kappa}}$ and let  $k \in \{1,\ldots,\kappa\}$ be such that $-\mu_k \leq w < -\mu_{k-1}$, with the convention $\mu_0=-\infty$. Then, a positive lower bound on $\epsilon(w)$ is the minimum of the following set 
	\begin{equation}\label{lower bound 2}
	\{\epsilon (w')\ellmahl^{d_{w,w'}-\alpha_{k-1}} \ \vert \ w' \in \Delta(w) \}\cup \{
	-\mu_{k-1} - w, \min(\pi(\Psi(w))\setminus \{w\}) - w\}
	\end{equation}
	where 
	 \begin{equation}\label{eq:def_dkw}
	d_{w,w'}= \min \left\{\alphaa \in \{0,\ldots,n\} \ \vert \ \exists (\ell^{\alphaa},\betaa) \in \P(L), w'=\frac{\psi(w)-\betaa}{\ellmahl^{\alphaa}}\right\}.
	\end{equation}
\end{lem}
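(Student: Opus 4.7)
The plan is to set $v^{*} := \min \V_{>w}$ when $\V_{>w} \neq \emptyset$ (in the contrary case $\epsilon(w) = +\infty$ and the claim is immediate), so that $\epsilon(w) = v^{*} - w$, and then to exhibit, in each case, an element of the displayed set that is $\leq \epsilon(w)$. Each element of the set is already positive: $-\mu_{k-1} - w > 0$ comes from $w < -\mu_{k-1}$; $\min(\pi(\Psi(w)) \setminus \{w\}) - w > 0$ follows from Lemma~\ref{min Psi(s)=s} together with Lemma~\ref{lem:if delta empty then kappa=1 w=-mu1}, which forbids $\Psi(w)$ to be a singleton under the hypothesis $w > -\mu_\kappa$; and the terms $\ellmahl^{d_{w,w'} - \alpha_{k-1}} \epsilon(w')$ are positive by definition of $\epsilon$. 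Consequently the minimum of the set will automatically be a positive lower bound on $\epsilon(w)$ once the required inequality is established.

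I will first dispose of the easy case $v^{*} \geq -\mu_{k-1}$, in which $\epsilon(w) \geq -\mu_{k-1} - w$ is already an element of the set. Otherwise $-\mu_k \leq w < v^{*} < -\mu_{k-1}$. Here I will observe that $v^{*}$ cannot lie in $-\S(L)$ (else $v^{*} = -\mu_j$ with $-\mu_k \leq -\mu_j < -\mu_{k-1}$ would force $j = k$ and $v^{*} \leq w$, a contradiction), and invoke Lemma~\ref{lem:exist v0 vM} to produce a chain $v_0 < \cdots < v_M = v^{*}$ in $\V$ with $v_0 \in -\S(L)$, $v_{i+1} \in \pi(\Psi(v_i))$ and $M \geq 1$. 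Setting $u := v_{M-1}$, the element $u$ satisfies $u \in \V$ and $u < v^{*}$, so by minimality of $v^{*}$ in $\V_{>w}$ we also get $u \leq w$, and one can pick $(\ellmahl^{\alpha}, \beta) \in \P(L)$ with $v^{*} = \pi(\ellmahl^{\alpha} u + \beta)$.

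The crux will be the algebraic identity
$$v^{*} - w \;=\; \ellmahl^{\alpha - \alpha_{k-1}} \bigl(u - w^{**}\bigr), \qquad w^{**} := \frac{\psi(w) - \beta}{\ellmahl^{\alpha}},$$
derived by combining Lemma~\ref{lem pi} applied to $v^{*} \in [-\mu_k, -\mu_{k-1})$ with the formula $\psi(w) = \ellmahl^{\alpha_{k-1}} w + \beta_{k-1}$ from Lemma~\ref{lem psi}. The definition of $\pi$ forces $w^{**} \leq \pi(\psi(w)) = w$, and the case $w^{**} = w$ is excluded because it would yield $v^{*} - w \leq 0$ via $u \leq w$, contradicting $v^{*} > w$. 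Hence $w^{**} < w$ and $w^{**} \in \Delta(w)$. The positivity of $v^{*} - w$ then forces $u > w^{**}$, so the fact that $u \in \V_{>w^{**}}$ gives $u - w^{**} \geq \epsilon(w^{**})$. Combining this with $\alpha \geq d_{w, w^{**}}$ (by definition of $d_{w,w^{**}}$) and $\ellmahl \geq 2$ yields $\epsilon(w) \geq \ellmahl^{d_{w, w^{**}} - \alpha_{k-1}} \epsilon(w^{**})$, an element of the set.

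The hard part will be precisely this second case: the construction of $w^{**}$ from the geometry of $\P(L)$, the setting up of the identity linking $v^{*} - w$ to $u - w^{**}$, and the exclusion of the degenerate possibility $w^{**} = w$ needed to place $w^{**}$ inside $\Delta(w)$. The third element $\min(\pi(\Psi(w)) \setminus \{w\}) - w$ is not called upon in the argument above; its inclusion in the set is nonetheless harmless, since adjoining positive elements to a finite set can only decrease or preserve its minimum, and we have exhibited in each case an element that is already $\leq \epsilon(w)$.
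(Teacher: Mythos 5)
Your proof is correct and follows essentially the same route as the paper's: split on whether $\min\V_{>w}$ equals $-\mu_{k-1}$, then in the other case invoke Lemma~\ref{lem:exist v0 vM} to obtain the predecessor $u$, use Lemmas~\ref{lem psi} and~\ref{lem pi} to derive the identity $\epsilon(w)=\ellmahl^{\alpha-\alpha_{k-1}}(u-w^{**})$, and bound $u-w^{**}$ below by $\epsilon(w^{**})$. The one genuine difference is that you unify the paper's Subcases~2.1 ($u=w$) and 2.2 ($u<w$): the paper handles $u=w$ separately by invoking the term $\min(\pi(\Psi(w))\setminus\{w\})-w$, whereas you observe that the contradiction $v^{*}-w\leq 0$ already rules out $w^{**}=w$ regardless of whether $u=w$ or $u<w$, so the $\Delta(w)$-based bound covers both subcases and the third element of the set is in fact superfluous for the inequality. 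This is a modest but correct streamlining; the key lemmas, the chain construction, and the algebraic identity are the same.
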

\begin{proof}
	Let 
	$
	\wplus=\min \V_{>w}
	$  
	so that 
	$
	\epsilon(w)=\wplus-w.
	$ 
	Since $- \mu_{k-1} \in \V_{>w} \cup \{+\infty\}$, we have $ -\mu_{k-1} \geq \wplus$. Thus:
	$$
	-\mu_{k} \leq w<\wplus \leq -\mu_{k-1}.
	$$
	We shall now distinguish several cases. 
	\vskip 5pt
	\noindent {\textbf{Case 1: $\wplus =-\mu_{k-1}$.}} In this case, $\epsilon(w)=\wplus-w = -\mu_{k-1}-w$ and this quantity is bounded from below by 	the minimum of \eqref{lower bound 2}. \vskip 5pt
	\vskip 5pt
	\noindent {\textbf{Case 2: $\wplus < -\mu_{k-1}$.}} In this case, we have $-\mu_{k} < \wplus <-\mu_{k-1}$ and, hence, $\wplus \notin \V_0=-\S(L)$. It follows from Lemma \ref{lem:exist v0 vM} that there exists $\wmoins \in \V$ such that 
	\begin{equation}\label{prop wmoins}
	\wmoins< \wplus \text{ and } \wplus \in \pi( \Psi(\wmoins)). 
	\end{equation}
	Since $\wplus = \min \V_{>w}$, the facts that $\wmoins \in \V$ and that $\wmoins < \wplus$ ensures that $\wmoins \leq w$.  We now distinguish two subcases. 
	\vskip 5pt
	\noindent {\textbf{Subcase 2.1: $\wplus < -\mu_{k-1}$ and $\wmoins=w$.}} In this case, since $\wplus \in \pi( \Psi(\wmoins))$ by \eqref{prop wmoins}, we have 
	$$
	\epsilon(w)=\wplus-w \geq \min(\pi(\Psi(\wmoins))\setminus \{\wmoins\}) - w= \min(\pi(\Psi(w))\setminus \{w\}) - w 
	$$
	and this quantity is bounded from below by the minimum of  
	\eqref{lower bound 2}.
	\vskip 5pt	
	\noindent {\textbf{Subcase 2.2: $\wplus < -\mu_{k-1}$ and $\wmoins<w$.}} Since $-\mu_{k} \leq w < \wplus < -\mu_{k-1}$, Lemma \ref{lem psi} ensures that  
		\begin{equation}\label{eq:w+ and w}
		\wplus=\frac{\minP(\wplus)-\beta_{k-1}}{\ellmahl^{\alpha_{k-1}}} \text{ and } 	w=\frac{\minP(w)-\beta_{k-1}}{\ellmahl^{\alpha_{k-1}}}\,.
		\end{equation}
	Furthermore, since $\wplus \in \pi ( \Psi(\wmoins))$ by \eqref{prop wmoins} and since $\pi$ and $\minP$ are inverse of each other by Lemma \ref{lem proprietes pi minP}, we have $\minP(\wplus) \in \Psi(\wmoins)$ and, hence, there exists $(\ellmahl^{\alphaa},\betaa)\in \P(L)$ such that $\minP(\wplus)=\ellmahl^{\alphaa} \wmoins+ \betaa$. Therefore, we have   
		\begin{equation}\label{eq w' pq}
		\wmoins=\frac{\psi(\wplus) - \betaa}{\ellmahl^{\alphaa}}
		=\frac{\ellmahl^{\alpha_{k-1}} \wplus + \beta_{k-1} - \betaa}{\ellmahl^{\alphaa}}. 
		\end{equation}
			Let
			\begin{equation}\label{eq ws pq}
			w':=\frac{\psi(w) - \betaa}{\ellmahl^{\alphaa}}.
			\end{equation}
			It follows from Lemma \ref{lem proprietes pi minP} that $\minP$ is increasing so that we have
			\begin{equation}\label{eq:ineq_w'_w-}
			w'=\frac{\psi(w) - \betaa}{\ellmahl^{\alphaa}} < \frac{\psi(\wplus) - \betaa}{\ellmahl^{\alphaa}}=\wmoins<w
			\end{equation}
			In particular, this implies that $w' \in \Delta(w)$. 
	Using \eqref{eq:w+ and w}, we get 	
		$$
		\wplus-w = \frac{\minP(\wplus)-\beta_{k-1}}{\ellmahl^{\alpha_{k-1}}}-\frac{\minP(w)-\beta_{k-1}}{\ellmahl^{\alpha_{k-1}}}
		=\frac{\minP(\wplus)-\minP(w)}{\ellmahl^{\alpha_{k-1}}}.
		$$
Moreover, we infer from \eqref{eq w' pq} that 
$$
\psi(\wplus) = \ellmahl^{\alphaa}\wmoins+\betaa 
$$
and from \eqref{eq ws pq} that 
$$
\psi(w) = \ellmahl^{\alphaa}w'+\betaa. 
$$
So, we obtain 
	$$
		\wplus-w = \frac{(\ellmahl^{\alphaa}\wmoins+\betaa )-(\ellmahl^{\alphaa}w'+\betaa)}{\ellmahl^{\alpha_{k-1}}}
		=\frac{\ellmahl^{\alphaa}(\wmoins-w')}{\ellmahl^{\alpha_{k-1}}}.
		$$	
		But, since $w' < \wmoins$ by \eqref{eq:ineq_w'_w-} and since $\wmoins \in \V$, we have  $\wmoins \geq w' + \epsilon(w')$.
		Therefore, 
		$$
		\wplus-w =\frac{\ellmahl^{\alphaa}(\wmoins -w')}{\ellmahl^{\alpha_{k-1}}} \geq  \ellmahl^{\alphaa-\alpha_{k-1}} \epsilon(w')
		$$
		and the latter quantity is bounded from below by the minimum of \eqref{lower bound 2}.
\end{proof}

\begin{proof}[Proof of Proposition \ref{prop: step 2}]  
We start with the proof of (i). Since any vertex of $\tree{k,\vareps(-\mu_{k}),v}$ has finitely many children, in order to prove (i), it is sufficient to prove that the height of $\tree{k,\vareps(-\mu_{k}),v}$ is less than or equal to \eqref{bound height}. 
Consider an arbitrary vertex of $\tree{k,\vareps(-\mu_{k}),v}$ with label $w$ and depth $d$. Let  $w_{0}=w,w_{1},\ldots,w_{d}=v$ be the labels of the vertices encountered along the path from this vertex to the root of $\tree{k,\vareps(-\mu_{k}),v}$, so that, for any $i \in \{0,\ldots,d-1\}$, we have 
$$
w_{i} \in \Delta(w_{i+1}).
$$

We claim  that   
\begin{equation}\label{ineq wi}
-\mu_{k} + \vareps(-\mu_{k})  \leq w_{1} < \cdots < w_{d-1} <  w_d=v < -\mu_{k-1}
\end{equation}
and that, for all $i \in \{0,\ldots,d-1\}$, 
\begin{equation}\label{wi-1 dans pi Psi}
 w_{i+1} \in \pi( \Psi(w_i)).
\end{equation}
Indeed, for any $i \in \{0,\ldots,d-1\}$, we have $w_{i} \in \Delta(w_{i+1})$, so Lemma \ref{lem:Delta} ensures that  $w_{i+1}>w_{i}$ and that $w_{i+1} \in \pi( \Psi(w_i))$. So, we have justified \eqref{wi-1 dans pi Psi} and, in order to justify \eqref{ineq wi}, it only remains to prove that $w_{1}\geq -\mu_{k} + \vareps(-\mu_{k})$. The latter inequality follows from the fact that $w_{1}$ is the label of a vertex of $\tree{k,\vareps(-\mu_{k}),v}$ which is not a leaf because that vertex has a vertex with label $w_{0}$ as a child.

Now, applying Lemma \ref{lem:longueur-chaine2}, we get   
	$$
	d-1<(n+1) \frac{w_d + \mu_{k}}{\min\left\{\vareps(-\mu_{k}),(d\ellmahl^n)^{-1}\right\}} + h(w_d)
	$$	
	and, hence, $d$ is less than or equal to \eqref{bound height}. 
	This concludes the proof of (i).

Let us now prove (ii). Since $w$ is the label of a leaf of $\tree{k,\vareps(-\mu_{k}),v}$, we have  $w< -\mu_k + \vareps(-\mu_k)$.  If $w < -\mu_k$, there is nothing to prove. Suppose that $-\mu_k \leq w < -\mu_k + \vareps(-\mu_k)$. Then $-\mu_{k} + \vareps(-\mu_{k})-w>0$.  Moreover, for any $w' \in \V_{>w}$, we have $w' > w \geq -\mu_{k}$ and, hence, $w' \geq -\mu_{k} + \vareps(-\mu_{k})=w+(-\mu_{k} + \vareps(-\mu_{k})-w)$. We have shown that $-\mu_{k} + \vareps(-\mu_{k})-w$ is a positive lower bound on $\epsilon(w)$. 

Last, (iii) follows from Lemma \ref{lem:espilon_Delta}. Indeed, Lemma \ref{lem:espilon_Delta} ensures that a positive lower bound on $\epsilon(v)$ is given by the minimum of
\begin{equation}\label{avec epsilon}
 \{\epsilon(w')\ellmahl^{d_{w,w'}-\alpha_{k-1}} \ \vert \ w' \in \Delta(w) \}\cup \{
 -\mu_{k-1} - w, \min(\pi(\Psi(w))\setminus \{w\}) - w\}. 
\end{equation}
Since $m_{w'} \leq \epsilon(w')$ for any $w' \in \Delta(w)$, the latter minium is greater than or equal to the minimum of
\begin{equation}\label{avec vareps}
 \{m_{w'}\ellmahl^{d_{w,w'}-\alpha_{k-1}} \ \vert \ w' \in \Delta(w) \}\cup \{
 -\mu_{k-1} - w, \min(\pi(\Psi(w))\setminus \{w\}) - w\}
\end{equation}
(we emphasize that the only difference between \eqref{avec epsilon} and \eqref{avec vareps} is the first quantity, $\epsilon(w')$ versus $m_{w'}$), whence the desired result. 
\end{proof}

\subsection{Theoretical results for the recursive step: case $v \in \{-\mu_{\kappa-1},\ldots,-\mu_{1}\}$}\label{sec: plb eps rec step 2}

In this section, we consider $k \in \{1,\ldots,\kappa\}$ and we assume that we are able to compute a positive lower bound $\vareps(w)$ on $\epsilon(w)$ for any $w \in ]-\infty, -\mu_{k-1}[  \cap \Z_{d,\ellmahl}$.
The following result explains how one can compute a positive lower bound on $\epsilon(-\mu_{k-1})$ by using finitely many $\vareps(w)$ with $w \in ] -\infty,-\mu_{k-1}[\cap\Z_{d,\ellmahl}  $.  

\begin{prop}\label{prop: step 3}
 Let $k\in \{2,\ldots,\kappa\}$. A positive lower bound on $\epsilon(-\mu_{k-1})$ is given by the minimum of
\begin{eqnarray}\label{min cas 3}
\nonumber & \{\vareps(w')\ellmahl^{d_{-\mu_{k-1},w'} - \alpha_{k-2}} \ \vert \ w' \in \Delta(-\mu_{k-1}) \}&
 \\ &\text{ and } & 
 \\
\nonumber & \{
 \mu_{k-1} - \mu_{k-2}, \min(\pi(\Psi(-\mu_{k-1}))\setminus \{-\mu_{k-1}\}) +\mu_{k-1}\} &
\end{eqnarray}
where $d_{-\mu_{k-1},w'}$ is defined by \eqref{eq:def_dkw}.
\end{prop}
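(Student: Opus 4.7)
The plan is to obtain Proposition \ref{prop: step 3} as a direct application of Lemma \ref{lem:espilon_Delta}, taking $w = -\mu_{k-1}$ and using the index ``$k-1$'' in place of ``$k$'' in that lemma, followed by a monotonicity replacement of $\epsilon(w')$ by the lower bounds $\vareps(w')$.

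First I would verify that the hypotheses of Lemma \ref{lem:espilon_Delta} are met. Since $k \in \{2,\ldots,\kappa\}$, we have $k-1 \in \{1,\ldots,\kappa-1\}$, so $\mu_{k-1} < \mu_\kappa$ and therefore $-\mu_{k-1} \in \mathbb{Q}_{>-\mu_\kappa}$. Moreover, with the convention $\mu_0 = -\infty$, we have $-\mu_{k-1} \leq -\mu_{k-1} < -\mu_{k-2}$, so the lemma can indeed be applied with $w = -\mu_{k-1}$ and with the role of $k$ played by $k-1$. The conclusion yields that a positive lower bound on $\epsilon(-\mu_{k-1})$ is the minimum of
\[
\{\epsilon(w')\ellmahl^{d_{-\mu_{k-1},w'}-\alpha_{k-2}} \ \vert \ w' \in \Delta(-\mu_{k-1})\} \cup \{-\mu_{k-2}-(-\mu_{k-1}),\ \min(\pi(\Psi(-\mu_{k-1}))\setminus\{-\mu_{k-1}\})+\mu_{k-1}\},
\]
and the term $-\mu_{k-2}-(-\mu_{k-1})$ is exactly $\mu_{k-1}-\mu_{k-2}$.

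Next I would check that each $w' \in \Delta(-\mu_{k-1})$ lies in $]-\infty,-\mu_{k-1}[\cap \Z_{d,\ellmahl}$, so that the assumed lower bound $\vareps(w')$ on $\epsilon(w')$ is available. The strict inequality $w' < -\mu_{k-1}$ follows from Lemma \ref{lem:Delta}. The membership $w' \in \Z_{d,\ellmahl}$ follows from the fact that $-\mu_{k-1} \in \frac{1}{d}\Z \subset \Z_{d,\ellmahl}$ (since $d$ is a common multiple of the denominators of the slopes), that $\minP(-\mu_{k-1}) \in \Z_{d,\ellmahl}$, and that the defining formula $w' = (\minP(-\mu_{k-1})-\beta)/\ellmahl^\alpha$ preserves $\Z_{d,\ellmahl}$.

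Finally, since $0 < \vareps(w') \leq \epsilon(w')$ for every $w' \in \Delta(-\mu_{k-1})$, replacing $\epsilon(w')$ by $\vareps(w')$ in the set above only decreases each entry of the first collection, so the minimum of the resulting set \eqref{min cas 3} is still a positive lower bound on $\epsilon(-\mu_{k-1})$, as claimed. There is no real obstacle in this argument; the only point requiring some care is the index shift from $k$ to $k-1$ when invoking Lemma \ref{lem:espilon_Delta} and the corresponding bookkeeping of $\alpha_{k-2}$ and $\mu_{k-2}$ (with the convention $\mu_0=-\infty$ in the boundary case $k=2$, where the term $\mu_{k-1}-\mu_{k-2} = +\infty$ is simply ignored in the minimum).
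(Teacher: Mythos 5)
Your proposal is correct and takes essentially the same route as the paper: apply Lemma \ref{lem:espilon_Delta} with $w=-\mu_{k-1}$ and the lemma's index $k$ replaced by $k-1$, then replace each $\epsilon(w')$ by the lower bound $\vareps(w')$. The extra verifications you include (that the hypotheses of the lemma are satisfied, and that each $w'\in\Delta(-\mu_{k-1})$ lies in $]-\infty,-\mu_{k-1}[\cap\Z_{d,\ellmahl}$ so that $\vareps(w')$ is indeed available) are points the paper relegates to a surrounding remark or leaves implicit, and they are all accurate.
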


\begin{rem}
 It follows from (the first assertion of) Lemma \ref{lem:Delta} that any  $w'\in \Delta(-\mu_{k-1})$ satisfies $w'<-\mu_{k-1}$, hence, it is legitimate to consider $\vareps(w')$ in \eqref{min cas 3}. 
\end{rem}

\begin{ex}\label{ex:step 3 pour RS}
In section \ref{illustration main algo on RS}, we will have to compute a positive lower bound on $\epsilon(0)$ for the operator defined by \eqref{L pour example}. Let us explain how this can be done using Proposition \ref{prop: step 3}.   We will see in section \ref{sec:example Newton et slopes}  that $\kappa=2$, $\mu_{1}=0$ and $\mu_{2}=\frac{1}{2}$. 
Since $0=-\mu_{1}$, a lower bound on $\epsilon(0)$ is given by the minimum of the set \eqref{min cas 3} with $k=2$ and $-\mu_{k-1}=-\mu_{1}=0$. In order to compute this minimum, we have to compute $\Delta(0)$. We find $\Delta(0)=\{-\frac{1}{2},-\frac{1}{4}\}$. Then, we have to compute positive lower bounds $\vareps(-\frac{1}{2})$ and $\vareps(-\frac{1}{4})$ on $\epsilon(-\frac{1}{2})$ and $\epsilon(-\frac{1}{4})$ respectively. We have seen in Example \ref{ex:base cas RS} and Example \ref{ex:rec step case 1} that we can take $\vareps(-\frac{1}{2})=\frac{1}{4}$ and $\vareps(-\frac{1}{4})=\frac{1}{8}$. Using the calculations of section \ref{sec:calc pi Psi RS}, it is easily seen that the minimum of \eqref{min cas 3} with $k=2$ and $-\mu_{k-1}=-\mu_{1}=0$ is the minimum of the following numbers:
\begin{itemize}
\item $\vareps(w')\ellmahl^{d_{0,w'} - \alpha_{0}}=\frac{1}{4} 2^{1-0}=\frac{1}{2}$ for $w'=-\frac{1}{2}$; 
\item $\vareps(w')\ellmahl^{d_{0,w'} - \alpha_{0}}=\frac{1}{8} 2^{2-0}=\frac{1}{2}$ for $w'=-\frac{1}{4}$; 
 \item $\mu_{1} - \mu_{0}=+\infty$;
 \item $\min(\pi(\Psi(-\mu_{1}))\setminus \{-\mu_{1}\}) +\mu_{1}=\min(\pi(\Psi(0))\setminus \{0\})=1$.
\end{itemize}
This minimum is equal to $\frac{1}{2}$. 
Thus, a positive lower bound of $\epsilon(0)$ is given by $\vareps(0)=\frac{1}{2}$. 
\end{ex}

\begin{proof} 
This follows immediately from Lemma \ref{lem:espilon_Delta} when replacing $k$ with $k-1$ since $\vareps(w')\leq\epsilon(w')$ for any $w' \in \Delta(-\mu_{k-1})$.
\end{proof}

\subsection{Pseudo-code}\label{sec: lbeps pc}

Here is a pseudo-code transcription of the algorithm outlined in section \ref{sec:algo plb epsilon structure}. Its core is the function Lower\textunderscore Bound\textunderscore $\epsilon$\textunderscore param defined in Algorithm \ref{algo:lower bound eps avec param} below. 

\begin{algorithm} \label{algo: lower bound epsilon}
$_{}$ \vskip 1 pt
\noindent \rule{\linewidth}{1mm}
{\bf Input}: $L$ a Mahler operator with coefficients in $\bK[z]$, $v \in \Z_{d,\ellmahl}$. \\
{\bf Output}: positive lower bound on $\epsilon(v)$.\vskip 1pt
\noindent \rule{\linewidth}{1mm} 
\noindent \textbf{def} Lower\textunderscore Bound\textunderscore $\epsilon$ ($L,v$)
\begin{itemize}
% \item[] $L=$ liste vide
\item[] \textbf{for}  $k$ \textbf{from} $\kappa$ \textbf{to} $1$
 \begin{itemize}
\item[] set $\theta_{k}$=Lower\textunderscore Bound\textunderscore $\epsilon$\textunderscore param $(L,k,(\theta_{i})_{k+1 \leq i \leq \kappa}, -\mu_{k})$
\end{itemize}
 \item[] \textbf{end for}
  \item[] return Lower\textunderscore Bound\textunderscore $\epsilon$\textunderscore param $(L,0,(\theta_{i})_{1 \leq i \leq \kappa}, v)$
\end{itemize}
\textbf{end def}\\
 \noindent \rule{\linewidth}{0.5mm}
\end{algorithm}

\begin{algorithm} \label{algo:lower bound eps avec param}
$_{}$ \vskip 1 pt
\noindent \rule{\linewidth}{1mm}
{\bf Input}: $L$ a Mahler operator with coefficients in $\bK[z]$, $\kappaprime \in \{0,\ldots,\kappa\}$, $(\theta_{i})_{\kappaprime +1 \leq i \leq \kappa} \in \mathbb Q_{>0}^{\kappa-\kappaprime}$,
$v \in \Z_{d,\ellmahl}$. \\
{\bf Output}: positive lower bound on $\epsilon(v)$ provided that:
\begin{itemize}
\item  either $\kappaprime=\kappa$ and $v \leq -\mu_{\kappaprime}=-\mu_{\kappa}$; 
\item or $\kappaprime \in \{0,\ldots,\kappa-1\}$,  $v \leq -\mu_{\kappaprime}$ and $\theta_{\kappaprime+1},\ldots,\theta_{\kappa}$ are positive lower bounds on $\epsilon(-\mu_{\kappaprime+1}),\ldots,\epsilon(-\mu_{\kappa})$ respectively.
\end{itemize}
\vskip 1pt
\noindent \rule{\linewidth}{1mm}

\noindent \textbf{def} Lower\textunderscore Bound\textunderscore $\epsilon$\textunderscore param $(L,\kappaprime,(\theta_{i})_{\kappaprime+1 \leq i \leq \kappa}, v)$ 

	\begin{linenumbers}
\begin{itemize}
	\item[] \textbf{if}  $v < -\mu_{\kappa}$ \textbf{then} \linelabel{debutStep1}
 \begin{itemize}
 	\item[] return $-\mu_{\kappa} - v$
 	\end{itemize}
	\item[] \textbf{end if}
 	\item[]
	\textbf{if}  $v = -\mu_{\kappa}$ \textbf{then} 
 	\begin{itemize}
  		\item[] set $S=\V_{1} \setminus \{-\mu_{\kappa}\}$
 		\item[] \textbf{if} $S \neq \emptyset $ \textbf{then}
  		\begin{itemize}
 			\item[] return ${\blue  \min S}+ \mu_{\kappa}$
 		\end{itemize}
  		\item[] \textbf{otherwise}
   		\begin{itemize}
 			\item[] return $1$
 		\end{itemize}
 		\item[] \textbf{end if}
	\end{itemize}
 	\item[] \textbf{end if} \linelabel{finStep1}
	\item[] 	\textbf{if} $v \in ]-\mu_{k},-\mu_{k-1}[$ for some $k \in \{\kappaprime+1,\ldots,\kappa\}$ \textbf{then} \linelabel{debutStep2}
	\begin{itemize}
	        \item[] set $(w',m)=$Lower\textunderscore Bound\textunderscore $\epsilon$\textunderscore interval $(L,k-1,(\theta_{i})_{k \leq i \leq \kappa}, v)$ 
	\item[] return $m$
		 \end{itemize}
 	\item[]  \textbf{end if} \linelabel{finStep2} 
	\item[] \textbf{if} $v=-\mu_{k-1}$ for some $k \in \{\kappaprime+1,\ldots,\kappa\}$ \textbf{then} \linelabel{debutStep3}
 	\begin{itemize}
	\item[] \textbf{for} $w' \in \Delta(-\mu_{k-1})$
	\begin{itemize}
	\item[]
	 set $m_{w'}$=Lower\textunderscore Bound\textunderscore $\epsilon$\textunderscore param $(L,k-1,(\theta_{i})_{k \leq i \leq \kappa},w')$\linelabel{line_reccall2}
	 \end{itemize}
	 \item[] \textbf{end for}
 		\item[] return the minimum of \linelabel{finalstepalgo}
\begin{eqnarray*}
\nonumber & \{m_{w'}\ellmahl^{d_{-\mu_{k-1},w'} - \alpha_{k-2}} \ \vert \ w' \in \Delta(-\mu_{k-1}) \}&
 \\ &\text{ and } & 
 \\
\nonumber & \{
 \mu_{k-1} - \mu_{k-2}, \min(\pi(\Psi(-\mu_{k-1}))\setminus \{-\mu_{k-1}\}) +\mu_{k-1}\} &
\end{eqnarray*}
where $d_{-\mu_{k-1},w'}$ is defined by \eqref{eq:def_dkw}.
 	\end{itemize}
	\textbf{end if}  \linelabel{finStep3}
 \end{itemize}
\end{linenumbers}
\textbf{end def} 
\\
\noindent \rule{\linewidth}{0.5mm}
\end{algorithm}

In Algorithm \ref{algo:lower bound eps avec param}: 
\begin{itemize}
\item the lines \ref{debutStep1}--\ref{finStep1} correspond to the base case considered in section \ref{sec: plb eps base case}; 
\item the lines \ref{debutStep2}--\ref{finStep2} correspond to the recursive step considered in section \ref{sec: plb eps rec step 1} which can itself be encoded as a recursive algorithm, namely Algorithm~\ref{algo:lower bound eps avec param interval} below;
\item the lines \ref{debutStep3}--\ref{finStep3} correspond to the recursive step considered in section \ref{sec: plb eps rec step 2}. 
\end{itemize}

\begin{algorithm} \label{algo:lower bound eps avec param interval}
$_{}$ \vskip 1 pt
\noindent \rule{\linewidth}{1mm}
{\bf Input}: $L$ a Mahler operator with coefficients in $\bK[z]$, $\kappaprime \in \{0,\ldots,\kappa-1\}$, $(\theta_{i})_{\kappaprime +1 \leq i \leq \kappa} \in \mathbb Q_{>0}^{\kappa-\kappaprime}$,
$w \in \Z_{d,\ellmahl}$. \\
{\bf Output}: positive lower bound on $\epsilon(w)$ provided that  $w<-\mu_{\kappaprime}$ 
and $\theta_{\kappaprime+1},\ldots,\theta_{\kappa}$ are positive lower bounds on $\epsilon(-\mu_{\kappaprime+1}),\ldots,\epsilon(-\mu_{\kappa})$ respectively.
 
\vskip 1pt
\noindent \rule{\linewidth}{1mm}

\noindent \textbf{def} Lower\textunderscore Bound\textunderscore $\epsilon$\textunderscore interval $(L,\kappaprime,(\theta_{i})_{\kappaprime+1 \leq i \leq \kappa}, w)$ 
\linenumbers[1]
	\begin{linenumbers}
\begin{itemize}
\item[] \textbf{if} $w<-\mu_{\kappaprime+1}+\theta_{\kappaprime+1}$ \textbf{then} \linelabel{finStep2a}
\begin{itemize}
\item[] \textbf{if} $w<-\mu_{\kappaprime+1}$ \textbf{then} \linelabel{finStep2b}
\begin{itemize}
\item[] set $m=$Lower\textunderscore Bound\textunderscore $\epsilon$\textunderscore param $(L,\kappaprime+1,(\theta_{i})_{\kappaprime+2 \leq i \leq \kappa}, w)$
\end{itemize}
\item[] \textbf{otherwise} \linelabel{finStep2c}
\begin{itemize}
\item[] set $m=-\mu_{\kappaprime+1}+\theta_{\kappaprime+1} - w$
\end{itemize}
\item[] \textbf{end if} 
\item[] return $(w,m)$
\end{itemize}
\item[] \textbf{end if} 
\item[] \textbf{if} $w\geq -\mu_{\kappaprime+1}+\theta_{\kappaprime+1}$ \textbf{then} \linelabel{finStep2d}
\begin{itemize}
 \item[] \textbf{for} each $v \in \Delta(w)$
\begin{itemize}
 \item[] compute $\boldsymbol b_{v}$=Lower\textunderscore Bound\textunderscore $\epsilon$\textunderscore interval $(L,\kappaprime,(\theta_{i})_{\kappaprime+1 \leq i \leq \kappa}, v)$
\end{itemize}
 \item[] \textbf{end for}
  \item[] set $m$ to the minimum of 
   \begin{eqnarray}\label{eq:setpourminvertexnotleaf}
  && \{m'\ellmahl^{d_{w,w'}-\alpha_{k}} \ \vert \ (w',m')=\boldsymbol b_{v} \text{ for some $v \in \Delta(w)$}\}\\
 &\cup& \{
 -\mu_{k} - w, \min(\pi(\Psi(w))\setminus \{w\}) - w\}\nonumber
\end{eqnarray}
where $d_{w,w'}$ is defined by \eqref{eq:def_dkw}
\item[] return $(w,m)$
\end{itemize}
\item[] \textbf{end if} 
\end{itemize}
\end{linenumbers}
\textbf{end def} 
\\
\noindent \rule{\linewidth}{0.5mm}
\end{algorithm}

\subsection{An algorithm for computing a positive lower bound on $\minepsilon$}\label{sec:algo low bound varepsilon}

To compute a lower bound on $\minepsilon=\min\big\{\epsilon(-\mu_1),\ldots,\epsilon(-\mu_{\kappa}),(d\ellmahl^n)^{-1}\big\}$, we can simply run Algorithm \ref{algo: lower bound epsilon} $\kappa$ times to compute positive lower bounds on $\epsilon(-\mu_1),\ldots,\epsilon(-\mu_{\kappa})$. However, the calculations would involve numerous redundancies. From an algorithmic point of view, it is better to use the following algorithm which eliminates these redundancies and which is an obvious modification of Algorithm \ref{algo: lower bound epsilon}.

	\begin{algorithm} \label{algo:minepsilon}
		$_{}$ \vskip 1 pt
		\noindent \rule{\linewidth}{1mm}
		{\bf Input}: $L$ a Mahler operator with coefficients in $\bK[z]$\\
		{\bf Output}: positive lower bound on $\minepsilon=\min\big\{\epsilon(-\mu_1),\ldots,\epsilon(-\mu_{\kappa}),(d\ellmahl^n)^{-1}\big\}$.\vskip 1pt
		\noindent \rule{\linewidth}{1mm} 
		\noindent \textbf{def} Lower\textunderscore Bound\textunderscore $\minepsilon$ ($L$)
	\begin{itemize}
			\item[] \textbf{for}  $k$ \textbf{from} $\kappa$ \textbf{to} $1$
			\begin{itemize}
				\item[] set $\theta_{k}$=
			Lower\textunderscore Bound\textunderscore $\epsilon$\textunderscore param $(L,k,(\theta_{i})_{k+1 \leq i \leq \kappa}, -\mu_{k})$	
			\end{itemize}
			\item[] \textbf{end for}
			\item[] return the minimum of $\theta_1,\ldots,\theta_{\kappa}$ and $\frac{1}{d\ellmahl^n}$
		\end{itemize}
		\textbf{end def}\\
		\noindent \rule{\linewidth}{0.5mm}
\end{algorithm}

\section{The property $\star_{\V}$}\label{sec:approx sol}

In this section, we consider a Mahler operator 
\begin{equation}\label{operateur L bla}
L= a_{n}\pz \malop{\ellmahl}^{n} + a_{n-1}\pz\malop{\ellmahl}^{n-1} + \cdots + a_{0}\pz 
\end{equation}
with coefficients $a_{0}\pz,\ldots,a_{n}\pz \in \bK[z]$ such that $a_{0}\pz a_{n}\pz \neq 0$. 
We recall the following notation:
$$
\Sol(L,\Hahn) = \{f\pz  \in \Hahn \ \vert \ L(f\pz )=0 \}. 
$$ 

\begin{hypo}\label{hyp sol}
Throughout this section, we let $\V$ be a subset of $\Q$ satisfying the following properties: 
\begin{enumerate}
 \item 
 $
\Sol(L,\Hahn) \subset \Hahn_{\vert \W}
$; 
 \item  $\W$ is well-ordered; 
 \item  $-\S(L) \subset \W$;
 \item  \label{hypo pi psi dans V}
 $
\bigcup_{v \in \W} \pi( \Psi(v)) =\W
$.
 \end{enumerate}
\end{hypo}

Theorem \ref{theo: un support bis} ensures that such a set $\V$ exists. 

\begin{defi}\label{defi:prop star s}
We say that a subset $\Rr$ of $\Q$ satisfies property $\star_{\V}$ if :
\begin{enumerate}
	\item[a.] $-\S(L) \subset \Rr \subset \V$; 
	\item[b.] 
 $\bigcup_{v \in \V \setminus \Rr} \pi(\Psi(v)) = \V \setminus \Rr$. 
\end{enumerate}
\end{defi}
The interest of property $\star_{\V}$ lies in the following result. 

\begin{theo}\label{theo: si star alors isom}
If $\Rr$ satisfies $\star_{\V}$, then the $\bK$-linear map 
\begin{equation}\label{eq:res hahn to rr}
\begin{array}{cccc}
\res{\Rr} : & \Hahn &\rightarrow& \Hahn_{\vert \Rr} \\
&     f\pz = \sum_{\gamma \in \mathbb Q} f_{\gamma} z^{\gamma} &\mapsto& f_{\vert \Rr}\pz = \sum_{\gamma \in \Rr} f_{\gamma} z^{\gamma} 
\end{array}
\end{equation}
induces a $\bK$-linear isomorphism  
\begin{equation}\label{eq:res sol to cond}
\Sol(L,\Hahn) \xrightarrow[]{\sim} \Cond{\Rr}
\end{equation}
where
\begin{eqnarray*}
 \Cond{\Rr} &=& \{f\pz  \in \Hahn_{\vert \Rr} \ \vert \ \pi(\supp L(f\pz )) \cap \Rr = \emptyset\} \\ 
 &=& \{f\pz  \in \Hahn_{\vert \Rr} \ \vert \ \supp(L(f\pz )) \cap \minP(\Rr) = \emptyset\}. 
\end{eqnarray*}
\end{theo}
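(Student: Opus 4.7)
The plan is to prove separately that $\res{\Rr}$ induces a well-defined, injective, and surjective $\bK$-linear map $\Sol(L,\Hahn)\to\Cond{\Rr}$. Well-definedness and injectivity will be quick consequences of results already established; the main effort, and the main obstacle, will be surjectivity, where I will have to build a Hahn series solution prescribed on $\Rr$ by transfinite recursion along the well-order of $\V\setminus\Rr$ and then verify globally the required functional equation.

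Two identities will be used throughout. By Lemma~\ref{lem proprietes pi minP}, $\minP$ and $\pi$ are mutually inverse bijections of $\Q$, so applying $\minP$ to condition b of $\star_{\V}$ yields $\bigcup_{\gamma\in\V\setminus\Rr}\Psi(\gamma)=\minP(\V\setminus\Rr)$, and similarly Hypothesis~\ref{hyp sol}(\ref{hypo pi psi dans V}) gives $\bigcup_{\gamma\in\V}\Psi(\gamma)=\minP(\V)$. For well-definedness, fix $f\in\Sol(L,\Hahn)$; Hypothesis~\ref{hyp sol}(1) gives $\supp f\subset\V$, so $L(f_{\vert\Rr})=-L(f_{\vert\V\setminus\Rr})$, and Lemma~\ref{lem:supp Lf sub P supp f} combined with the first identity yields $\supp L(f_{\vert\Rr})\subset\minP(\V\setminus\Rr)$, which is disjoint from $\minP(\Rr)$ by injectivity of $\minP$; hence $f_{\vert\Rr}\in\Cond{\Rr}$. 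Injectivity is immediate: if $f_{\vert\Rr}=0$, then $f$ vanishes on $-\S(L)\subset\Rr$, and Corollary~\ref{sol eq si eq on - pentes} forces $f=0$.

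For surjectivity, I will start from $g\in\Cond{\Rr}$, set $b=-L(g)$, and build $h\in\Hahn_{\vert\V\setminus\Rr}$ with $L(h)=b$; then $f:=g+h$ will solve $L(f)=0$ with $f_{\vert\Rr}=g$. I will define $h_\gamma$ for $\gamma\in\V\setminus\Rr$ by transfinite recursion on the well-ordered set $\V\setminus\Rr$: setting $\delta=\minP(\gamma)$, I impose
\[
\sum_{(\ellmahl^{i},j)\in\P(L)} a_{i,j}\, h_{(\delta-j)/\ellmahl^{i}}=[z^{\delta}]b.
\]
Every index $(\delta-j)/\ellmahl^{i}$ is at most $\pi(\delta)=\gamma$, with equality precisely when $(\ellmahl^{i},j)$ achieves the minimum defining $\minP(\gamma)$. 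Since $\gamma\in\V\setminus\Rr\subset\V\setminus(-\S(L))$, Lemma~\ref{rem charact slopes 1} guarantees that this minimizer is unique, so a single term $a_{i_{0},j_{0}}h_\gamma$ with $a_{i_{0},j_{0}}\neq 0$ isolates $h_\gamma$; every other term involves $h_{\gamma'}$ with $\gamma'<\gamma$, which is either $0$ (when $\gamma'\notin\V\setminus\Rr$) or already determined by the recursion. This uniquely defines $h_\gamma$, and the resulting series has well-ordered support inside $\V\setminus\Rr$, so it is a genuine element of $\Hahn_{\vert\V\setminus\Rr}$.

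The final and subtlest step will be to check that $L(h)=b$ holds at every exponent, not only at those of the form $\minP(\gamma)$ with $\gamma\in\V\setminus\Rr$. Lemma~\ref{lem:supp Lf sub P supp f} together with the first identity gives $\supp L(h)\subset\minP(\V\setminus\Rr)$, while the same lemma applied to $g$ combined with the second identity gives $\supp b\subset\minP(\V)=\minP(\Rr)\sqcup\minP(\V\setminus\Rr)$. Hence for $\delta\notin\minP(\V)$ both sides vanish; for $\delta\in\minP(\Rr)$ the left side vanishes by the support inclusion and the right side vanishes because $g\in\Cond{\Rr}$; for $\delta\in\minP(\V\setminus\Rr)$, the equation is precisely the one the induction enforces. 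This confirms $L(h)=b$ and completes the proof.
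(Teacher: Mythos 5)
Your proof is correct, and the surjectivity step is carried out by a visibly different mechanism than the paper's. The paper first proves (Proposition~\ref{prop:for all g there is f}) that partial approximants $f_y$ with $\val(L(f_y)-g)>\minP(y)$ exist for every $y\in\Gamma$, obtained by a minimal-counterexample contradiction built on Lemma~\ref{for all g there exists a gamma} and Lemma~\ref{lem:prop minp}, then glues these approximants via Lemmas~\ref{lem:if val strict then eg} and~\ref{lem:exists unique}. You instead run a single transfinite recursion that pins down each coefficient $h_\gamma$ by an explicit linear equation of the form $a_{i_0,j_0}h_\gamma = [z^{\minP(\gamma)}]b - (\text{terms of lower index})$, with the leading coefficient isolated by the uniqueness of the minimizer of $v\mapsto \gamma\ellmahl^i+j$ over $\P(L)$; that uniqueness follows from $\gamma\notin-\S(L)$ via Lemma~\ref{rem charact slopes 1} (with the small intermediate observation that a non-unique minimizer in $\P(L)$ would force ties among the $\ellmahl^i\gamma+\val a_i$). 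The final verification, splitting $\Q$ into $\minP(\Rr)$, $\minP(\V\setminus\Rr)$, and the complement of $\minP(\V)$, replaces the paper's valuation argument showing $\val(L(\overline f)-g)$ cannot lie in $\Lambda$. Your route is more constructive (it exhibits the inverse of $\res{\Rr}$ as an explicit coefficient recursion) and dispenses with the gluing lemmas; the paper's route stays entirely at the valuation level, which avoids coefficient bookkeeping and the explicit appeal to Lemma~\ref{rem charact slopes 1}, at the cost of an indirect contradiction argument. Both rest on the same two facts: $\minP$ and $\pi$ are inverse increasing bijections, and condition~b of $\star_\V$ makes $\Gamma=\V\setminus\Rr$ and $\Lambda=\minP(\Gamma)$ a closed pair under the support-propagation of $L$.
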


The proof of this result is given in section \ref{preuve theo Lf eg g} below. It relies on certain preliminary results gathered in the next section. 

\begin{rem}
In Theorem \ref{th R star}, we will prove that there exists a finite subset $\Rr$ of $\Q$ satisfying property $\star_{\V}$ and we will give an algorithm to compute it. However, in this section, $\Rr$ is not required to be finite. 
\end{rem}

\subsection{Preliminary results}\label{sec: prelim res}

Through this section, we consider a subset $\Rr$ of $\Q$ satisfying $\star_{\V}$. 
We introduce the following sets:
$$
\Gamma=\V  \setminus \Rr \text{ and } \,
\Lambda = \psi(\Gamma).
$$
The principal aim of this section is to prove the following result. 

\begin{prop}\label{prop:for all g there is f}
 For all $g\pz  \in \Hahn_{\vert \Lambda}$, there exists $f\pz  \in \Hahn_{\vert \Gamma}$ such that $L(f\pz )=g\pz $.  
\end{prop}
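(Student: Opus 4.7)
The strategy is to build $f = \sum_{\gamma \in \Gamma} f_\gamma z^\gamma$ by transfinite recursion on $\Gamma := \V \setminus \Rr$, which is well-ordered since $\V$ is (Hypothesis~\ref{hyp sol}). A preliminary observation is that $\bigcup_{\gamma \in \Gamma} \Psi(\gamma) = \Lambda$: the inclusion $\supset$ follows from $\minP(\gamma) \in \Psi(\gamma)$; for the converse, if $\delta \in \Psi(\gamma)$ with $\gamma \in \Gamma$, then condition (b) of $\star_{\V}$ yields $\pi(\delta) \in \bigcup_{v \in \Gamma} \pi(\Psi(v)) = \Gamma$, and Lemma~\ref{lem proprietes pi minP} gives $\delta = \minP(\pi(\delta)) \in \minP(\Gamma) = \Lambda$. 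Consequently, by Lemma~\ref{lem:supp Lf sub P supp f}, every $f \in \Hahn_{\vert \Gamma}$ satisfies $\supp L(f) \subset \Lambda$, and it suffices to match coefficients of $L(f)$ and $g$ on $\Lambda$.

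Writing $a_i = \sum_{j \in \supp a_i} a_{i,j} z^j$, the coefficient of $z^{\delta}$ in $L(f)$ for $f \in \Hahn_{\vert \Gamma}$ equals $\sum_{(\ellmahl^i,j) \in \P(L),\ (\delta-j)/\ellmahl^i \in \Gamma} a_{i,j}\, f_{(\delta-j)/\ellmahl^i}$, a finite sum. Fix $\gamma_0 \in \Gamma$ and let $\delta_0 := \minP(\gamma_0) \in \Lambda$; by definition of $\pi$ every index $\gamma = (\delta_0 - j)/\ellmahl^i$ appearing above satisfies $\gamma \leq \pi(\delta_0) = \gamma_0$, and the value $\gamma = \gamma_0$ is attained by \emph{exactly one} pair $(\ellmahl^{i_0}, j_0) \in \P(L)$. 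Indeed, condition~(a) of $\star_{\V}$ gives $\gamma_0 \notin -\S(L)$, hence Lemma~\ref{rem charact slopes 1} forces the minimum $\minP(\gamma_0) = \min_i (\ellmahl^i \gamma_0 + \val a_i)$ to be attained at a single index $i_0$ (and necessarily at $j_0 = \val a_{i_0}$). The associated diagonal coefficient $A_{\delta_0} := a_{i_0, j_0}$ is therefore nonzero.

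This allows a straightforward transfinite recursion: assuming $f_\gamma$ is determined for every $\gamma \in \Gamma_{<\gamma_0}$, set
\[
f_{\gamma_0} := \frac{1}{A_{\delta_0}}\Bigl( g_{\delta_0} - \sum_{\substack{(\ellmahl^i,j) \in \P(L) \\ (\delta_0 - j)/\ellmahl^i \in \Gamma_{<\gamma_0}}} a_{i,j}\, f_{(\delta_0-j)/\ellmahl^i} \Bigr).
\]
By well-orderedness of $\Gamma$ this uniquely determines each $f_\gamma$, and the resulting $f = \sum_{\gamma \in \Gamma} f_\gamma z^\gamma$ is a bona fide element of $\Hahn_{\vert \Gamma}$. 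The equation $[z^{\delta_0}] L(f) = g_{\delta_0}$ holds by construction for every $\delta_0 \in \Lambda$, and combined with $\supp L(f) \subset \Lambda$ from the first paragraph this yields $L(f) = g$.

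The main obstacle — and the precise point at which the hypotheses on $\Rr$ intervene — is the nonvanishing of the diagonal coefficient $A_{\delta_0}$. Were $\gamma_0 \in -\S(L)$ allowed, several pairs $(\ellmahl^i,j) \in \P(L)$ could achieve $\minP(\gamma_0)$ and the corresponding coefficients $a_{i,j}$ might conspire to cancel, breaking the recursion at $\gamma_0$. Condition~(a) of $\star_{\V}$, requiring $-\S(L) \subset \Rr$, is exactly what rules this out; the rest is a routine transfinite induction.
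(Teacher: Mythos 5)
Your proof is correct, and it takes a genuinely different route from the paper's. The paper proceeds in two stages: it first shows, by a contradiction argument over the well-ordered set $Y$ of ``bad'' indices, that for every $y\in\Gamma$ there exists an approximant $f_y\in\Hahn_{\vert\Gamma}$ with $\val(L(f_y)-g)>\minP(y)$ (this uses Lemma~\ref{for all g there exists a gamma} to push the valuation up by a one-term correction $az^{\gamma}$, and Lemmas~\ref{lem:if val strict then eg}, \ref{lem:exists unique} and \ref{lem:prop minp} to assemble the approximants below $y_{\min}$), and then in a second step glues the family $(f_y)_{y\in\Gamma}$ into a single $\overline f$ with $L(\overline f)=g$. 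You instead define the coefficients $f_\gamma$ directly by transfinite recursion on the well-ordered set $\Gamma$, observing that the coefficient of $z^{\minP(\gamma_0)}$ in $L(f)$ is a finite sum in which $f_{\gamma_0}$ appears with a single nonzero coefficient $A_{\delta_0}=a_{i_0,\val a_{i_0}}$ (uniqueness of the minimizing index follows from $\gamma_0\notin -\S(L)$ and Lemma~\ref{rem charact slopes 1}, exactly as in Lemma~\ref{lem:val Lf}), while all other indices appearing are strictly smaller. This replaces the paper's indirect approximation-and-gluing argument with a one-pass explicit recursion and makes the role of condition~(a) of $\star_\V$ (and of $\Gamma\cap -\S(L)=\emptyset$) more transparent: it is precisely what guarantees the ``triangular with nonzero diagonal'' structure of the infinite linear system. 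Both approaches ultimately rest on the same ingredients (Lemmas~\ref{lem proprietes pi minP}, \ref{rem charact slopes 1}, \ref{lem:supp Lf sub P supp f}, and the well-ordering of $\Gamma$), but yours is shorter and more constructive, at the cost of requiring the reader to accept the transfinite recursion principle explicitly rather than through the paper's hand-crafted gluing lemmas.
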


The proof of this result is given at the end of this subsection. The following lemmas are technical results used in the proof of  Proposition \ref{prop:for all g there is f}. On first reading, the reader can admit these lemmas and read the proof of  Proposition \ref{prop:for all g there is f} directly.

\begin{lem}\label{lem Gamma Lambda}
	We have:  
\begin{itemize}
\item $\Gamma = \pi(\Lambda)$;
\item  $\Lambda=\bigcup_{\gamma \in \Gamma} \Psi(\gamma)$; 
\item $\Gamma \cap -\mathcal{S}(L) = \emptyset$.
\end{itemize}	
\end{lem}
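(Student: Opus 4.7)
The plan is to verify the three assertions one by one; all of them should be essentially immediate consequences of the definitions of $\Gamma$ and $\Lambda$ together with property $\star_{\V}$.

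First I would dispatch the second bullet, $\Lambda = \bigcup_{\gamma \in \Gamma} \Psi(\gamma)$. This is essentially the definition of $\Lambda$: by definition we set $\Lambda = \Psi(\Gamma)$, and since $\Psi$ is set-valued with values in finite subsets of $\Q$, the meaning of $\Psi(\Gamma)$ is precisely the union $\bigcup_{\gamma \in \Gamma}\Psi(\gamma)$. So I would just point out that this rewriting is tautological.

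Next I would handle the third bullet, $\Gamma \cap -\S(L) = \emptyset$. Since $\Rr$ satisfies property $\star_{\V}$, condition (a) of Definition \ref{defi:prop star s} gives $-\S(L) \subset \Rr$. As $\Gamma = \V \setminus \Rr$, it is disjoint from $\Rr$ and hence from $-\S(L)$.

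Finally I would prove the first bullet, $\Gamma = \pi(\Lambda)$. Using the second bullet already established, I can rewrite
\[
\pi(\Lambda) \;=\; \pi\!\left(\bigcup_{\gamma \in \Gamma}\Psi(\gamma)\right) \;=\; \bigcup_{\gamma \in \Gamma}\pi(\Psi(\gamma)).
\]
By condition (b) of property $\star_{\V}$ applied to $\Rr$, one has $\bigcup_{v \in \V \setminus \Rr}\pi(\Psi(v)) = \V \setminus \Rr$, i.e.\ $\bigcup_{\gamma \in \Gamma}\pi(\Psi(\gamma)) = \Gamma$. Combining these two equalities yields $\pi(\Lambda) = \Gamma$, as required.

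There is no real obstacle here: the lemma is a direct unpacking of the definitions of $\Gamma$, $\Lambda$, and of property $\star_{\V}$. The only thing to be careful about is the notational conventions for applying $\pi$ and $\Psi$ to a set, which I would make explicit in one line to avoid any ambiguity.
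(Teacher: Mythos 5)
The second bullet is \emph{not} tautological, and the gap in your argument is a misreading of the definition of $\Lambda$. The paper sets
\[
\Lambda = \minP(\Gamma),
\]
where $\minP : \Q \to \Q$ is the \emph{scalar-valued} map $v \mapsto \min \Psi(v)$, not the set-valued map $\Psi$. Thus $\Lambda = \{\minP(\gamma) \,\vert\, \gamma \in \Gamma\}$, a priori only a subset of $\bigcup_{\gamma \in \Gamma}\Psi(\gamma)$. The identity $\Lambda = \bigcup_{\gamma \in \Gamma}\Psi(\gamma)$ is a genuine claim whose proof is the real content of the second bullet; and since your proof of the first bullet rests on it, that one has a hole too.

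Here is how to close it. The first bullet, $\Gamma = \pi(\Lambda)$, is immediate and does not need condition~b.: since $\Lambda = \minP(\Gamma)$ and $\pi$, $\minP$ are inverse bijections by Lemma~\ref{lem proprietes pi minP}, one has $\pi(\Lambda) = \pi(\minP(\Gamma)) = \Gamma$. For the second bullet, condition~b.\ of $\star_{\V}$ gives $\Gamma = \V\setminus\Rr = \bigcup_{v\in\Gamma}\pi(\Psi(v))$; apply $\minP$ to both sides and use $\minP\circ\pi = \mathrm{id}$ to obtain $\Lambda = \minP(\Gamma) = \bigcup_{v\in\Gamma}\minP\bigl(\pi(\Psi(v))\bigr) = \bigcup_{v\in\Gamma}\Psi(v)$. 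Your treatment of the third bullet is correct.
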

\begin{proof}
	The equality $\Gamma = \pi(\Lambda)$ follows immediately from the fact that $\minP$ and $\pi$ are inverse of each other by Lemma \ref{lem proprietes pi minP}.
	
			By definition of $\Gamma$ and condition b. of Definition \ref{defi:prop star s}, we have
			 	$$
			 	\Gamma = \V \setminus \Rr = \bigcup_{{v} \in \V \setminus \Rr} \pi(\Psi(v)) = \bigcup_{{v} \in \Gamma} \pi(\Psi({v})) \,.
			 	$$
			 	Applying $\minP$ to the latter equality and using the fact that $\minP$ and $\pi$ are inverse of each other by Lemma \ref{lem proprietes pi minP}, we obtain $\Lambda=\bigcup_{\gamma \in \Gamma} \Psi(\gamma)$.
				
				Last, we have $\Gamma \cap -\mathcal{S}(L) = \emptyset$ because $-\S(L)\subset \Rr$ and $\Gamma \cap \Rr=\emptyset$ by definition of $\Gamma$. 		 
\end{proof}

\begin{lem}
 The subsets $\Gamma$ and $\Lambda$ of $\Q$ are well-ordered. 
\end{lem}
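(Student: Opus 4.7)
The plan is to treat $\Gamma$ and $\Lambda$ separately, using the structural results already established earlier in the paper; both arguments are essentially one-liners, so I do not anticipate any genuine obstacle.

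For $\Gamma$: by definition $\Gamma=\V\setminus\Rr\subset\V$, and Hypothesis \ref{hyp sol}(2) tells us that $\V$ is well-ordered. Since any subset of a well-ordered totally ordered set is itself well-ordered (any nonempty subset of $\Gamma$ is a nonempty subset of $\V$, hence has a least element), $\Gamma$ is well-ordered. That is the entire argument for the first set.

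For $\Lambda$: I would exploit the identity $\Gamma=\pi(\Lambda)$, which is the first bullet of Lemma \ref{lem Gamma Lambda} (and which ultimately comes from property b.\ of Definition \ref{defi:prop star s} combined with the fact that $\pi$ and $\minP$ are inverse to one another). Lemma \ref{lem proprietes pi minP} guarantees that $\pi:\Q\to\Q$ is a \strictementcroissant{} bijection, so its restriction to $\Lambda$ is an injection, and therefore a bijection $\pi|_{\Lambda}:\Lambda\to\pi(\Lambda)=\Gamma$ that preserves strict order. Well-orderedness is invariant under order-isomorphism (a strictly decreasing sequence in $\Lambda$ would be transported to a strictly decreasing sequence in $\Gamma$ via $\pi$, contradicting the well-orderedness of $\Gamma$ just proved). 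Hence $\Lambda$ is well-ordered.

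No step should pose any real difficulty: the only slightly non-obvious ingredient is using property b.\ of $\star_{\V}$ (hidden inside Lemma \ref{lem Gamma Lambda}) to identify $\pi(\Lambda)$ with $\Gamma$ rather than with some a priori larger subset of $\Q$, but that identification is already recorded in the previous lemma and can be cited directly.
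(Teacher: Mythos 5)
Your proof is correct and takes essentially the same approach as the paper: $\Gamma$ is well-ordered as a subset of the well-ordered set $\V$, and $\Lambda$ is well-ordered because it corresponds to $\Gamma$ under the strictly increasing bijection $\minP$/$\pi$. The only cosmetic difference is that the paper applies $\minP$ forward to $\Gamma$ using the definition $\Lambda=\minP(\Gamma)$, while you apply $\pi$ to $\Lambda$ after citing Lemma \ref{lem Gamma Lambda} for $\Gamma=\pi(\Lambda)$; both amount to the same monotone-bijection argument.
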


\begin{proof}
	Since $\Gamma$ is a subset of $\V$ which is well-ordered, $\Gamma$ is well-ordered.
Lemma \ref{lem proprietes pi minP} ensures that $\minP : \mathbb Q \rightarrow \mathbb Q$ is \strictementcroissant. Thus, the fact that $\Lambda=\minP(\Gamma)$ is well-ordered follows from the fact that $\Gamma$ is well-ordered.	
 \end{proof}

\begin{lem}\label{for all g there exists a gamma}
For any $g\pz  \in \Hahn_{\vert \Lambda}\setminus \{0\}$, there exists $a \in \bK $ such that 
$$
\val (L(a z^{\gamma})-g\pz ) > \val g\pz ,
$$ 
where $\gamma=\pi(\val g\pz )$. 
In particular, $\gamma \in \pi(\Lambda)=\Gamma$.
 \end{lem}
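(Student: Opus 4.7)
The plan is to construct $a$ so that the leading term of $L(a z^{\gamma})$ cancels the leading term of $g(z)$, which suffices since $L(a z^{\gamma})$ cannot contribute at exponents strictly below $\val g(z)$.

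First, I would dispose of the easy ``in particular'' claim. Since $g(z) \neq 0$ and $\supp g(z) \subset \Lambda$, we have $\val g(z) \in \Lambda$, so $\gamma = \pi(\val g(z)) \in \pi(\Lambda)$, which equals $\Gamma$ by the first assertion of Lemma \ref{lem Gamma Lambda}. In particular, combining with the third assertion of that lemma, $\gamma \in \Gamma$ is disjoint from $-\mathcal{S}(L)$, so
$$
\gamma \notin -\mathcal{S}(L).
$$
This non-resonance condition is the key technical point that makes the argument work.

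Next, I would use Lemma \ref{lem:val Lf} together with the fact that $\minP$ and $\pi$ are mutually inverse (Lemma \ref{lem proprietes pi minP}) to compute
$$
\val L(z^{\gamma}) = \minP(\gamma) = \minP(\pi(\val g(z))) = \val g(z).
$$
So $L(z^{\gamma})$ has a nonzero coefficient $c \in \bK^{\times}$ at $z^{\val g(z)}$. Let $b \in \bK^{\times}$ denote the leading coefficient of $g(z)$, i.e. the coefficient of $z^{\val g(z)}$. Set $a := b/c \in \bK$.

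Finally, I would check that this choice works. By construction, the coefficient of $z^{\val g(z)}$ in $L(a z^{\gamma}) = a L(z^{\gamma})$ equals $ac = b$, which matches the coefficient of $z^{\val g(z)}$ in $g(z)$. Moreover, from $\val L(a z^{\gamma}) = \val L(z^{\gamma}) = \val g(z)$ one has that both $L(a z^{\gamma})$ and $g(z)$ are supported in $\Q_{\geq \val g(z)}$, and their leading terms cancel in the difference; using \eqref{val ult metric}, this yields $\val(L(a z^{\gamma}) - g(z)) > \val g(z)$, as required. There is no real obstacle in this proof: the whole content is packaged into the observation that $\gamma \notin -\mathcal{S}(L)$, which ensures that $L(z^{\gamma})$ has the predicted valuation and a usable leading coefficient.
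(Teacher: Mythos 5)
Your proof is correct and takes essentially the same route as the paper: establish $\gamma \notin -\S(L)$ via Lemma~\ref{lem Gamma Lambda}, then use Lemma~\ref{lem:val Lf} together with the fact that $\minP$ and $\pi$ are inverse to conclude $\val L(z^{\gamma}) = \val g\pz$, and finally scale by the ratio of leading coefficients. Your choice $a = b/c$ is exactly the paper's $a = c^{-1} g_{\val g\pz}$.
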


\begin{proof} Since $\supp g\pz \subset \Lambda$, we have $\val g\pz  \in \Lambda$ and, hence,  $\gamma=\pi(\val g\pz ) \in \pi(\Lambda)$. But,  $\pi(\Lambda) = \Gamma$ and $\Gamma \cap -\mathcal{S}(L) = \emptyset$  by Lemma \ref{lem Gamma Lambda}. So, $\gamma \not\in -\S(L)$ and Lemma \ref{lem:val Lf} ensures that $\val L(z^{\gamma}) = \minP(\gamma)$.  Since $\minP$ and $\pi$ are inverse of each other by Lemma \ref{lem proprietes pi minP}, we get  $\val L(z^{\gamma}) = \val g\pz$, {\it i.e.}, there exists $c \in \bK^{\times}$ such that 
$$
L(z^{\gamma}) = c z^{\val g\pz } + \text{Hahn series of higher-order valuation}.
$$
Therefore, $a=c^{-1} g_{\val g\pz}$ has the expected property. 
\end{proof}

\begin{lem}\label{lem:prop minp}
If $\gamma \in \Gamma$ and $\lambda \in \Lambda$ are such that, for all $x \in \Gamma \cap ]-\infty,\gamma[$, $\lambda > \minP(x)$, then $\lambda \geq \minP(\gamma)$.
\end{lem}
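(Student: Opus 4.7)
The plan is to exploit the bijection between $\pi$ and $\minP$ to reduce the hypothesis, stated in terms of $\minP(x)$ for $x \in \Gamma$, to information about $\lambda$ itself. Specifically, I would set $\delta := \pi(\lambda)$ and then dichotomize on the position of $\delta$ relative to $\gamma$.

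First, I would establish two structural facts about $\delta$. Since $\lambda \in \Lambda$ and $\Gamma = \pi(\Lambda)$ by Lemma~\ref{lem Gamma Lambda}, we immediately get $\delta \in \Gamma$. Since $\minP$ and $\pi$ are mutually inverse by Lemma~\ref{lem proprietes pi minP}, we also have $\lambda = \minP(\pi(\lambda)) = \minP(\delta)$. These two facts are the only preparatory steps needed.

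Now I split into two cases. If $\delta \geq \gamma$, the strict monotonicity of $\minP$ (again Lemma~\ref{lem proprietes pi minP}) gives $\lambda = \minP(\delta) \geq \minP(\gamma)$, which is exactly the desired conclusion. If instead $\delta < \gamma$, then $\delta \in \Gamma \cap {]-\infty,\gamma[}$, so the hypothesis applied at $x = \delta$ yields $\lambda > \minP(\delta)$; but $\lambda = \minP(\delta)$, a contradiction ruling out this case.

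I do not anticipate any genuine obstacle: the whole argument is a one-line application of the inverse pair $(\minP,\pi)$ together with the identification $\Gamma = \pi(\Lambda)$ from Lemma~\ref{lem Gamma Lambda}. The only point requiring a touch of care is recognizing that the right object to introduce is $\pi(\lambda)$ rather than trying to chase $\lambda$ directly through the family of supports $\Psi(\delta')$, $\delta' \in \Gamma$; once this is done, the dichotomy above closes the proof mechanically.
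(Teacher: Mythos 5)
Your proof is correct and follows essentially the same route as the paper's: introduce $\delta=\pi(\lambda)$ (the paper calls it $y$), observe $\delta\in\Gamma$ and $\lambda=\minP(\delta)$ via Lemma~\ref{lem proprietes pi minP} and Lemma~\ref{lem Gamma Lambda}, then use monotonicity of $\minP$ to conclude $\delta\geq\gamma$. Your two-case dichotomy is just a contrapositive rephrasing of the paper's direct deduction that $\delta\geq\gamma$.
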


\begin{proof}
Since $\pi$ and $\minP$ are inverse of each other by Lemma \ref{lem proprietes pi minP}, we have 
 $\lambda = \minP(y) $ with $y=\pi(\lambda) \in \pi(\Lambda)=\Gamma$.  By hypothesis, for all $x \in \Gamma \cap ]-\infty,\gamma[$, we have $\minP(y) =\lambda > \minP(x)$. Since $\minP$ is \strictementcroissant{} by Lemma \ref{lem proprietes pi minP}, we get, for all $x \in \Gamma \cap ]-\infty,\gamma[$, $y>x$. So, $y \in \Gamma \cap [\gamma,+\infty[$. In particular, we have $y \geq \gamma$ and, since $\minP$ is \strictementcroissant{} by Lemma \ref{lem proprietes pi minP}, $\lambda=\minP(y)  \geq \minP(\gamma)$ as claimed. 
\end{proof}

\begin{lem}\label{lem:if val strict then eg}
 Suppose that $f\pz,f'\pz \in \Hahn_{\vert \Gamma}$ are such that $\val (L(f\pz)-g\pz)>\minP(y)$ and $\val (L(f'\pz)-g\pz)>\minP(y')$ for some $g\pz \in \Hahn$ and some $y,y' \in \Gamma$. Then, $f\pz=f'\pz$ on $\Gamma \cap ]-\infty , \min \{y,y'\}]$.
\end{lem}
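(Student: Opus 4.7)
My plan is to set $h\pz = f\pz - f'\pz$ and reduce the statement to showing that $\val h\pz > m$, where $m = \min\{y, y'\}$. Then the conclusion ``$f\pz = f'\pz$ on $\Gamma \cap \,]-\infty, m]$'' follows immediately, since if $\val h\pz > m$ then $h\pz$ vanishes on all of $\,]-\infty, m]$, hence in particular on $\Gamma \cap \,]-\infty, m]$.

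To estimate $\val L(h\pz)$, I would use linearity of $L$ to write $L(h\pz) = (L(f\pz) - g\pz) - (L(f'\pz) - g\pz)$. The ultrametric inequality \eqref{val ult metric} gives
$$\val L(h\pz) \geq \min\{\val(L(f\pz) - g\pz), \val(L(f'\pz) - g\pz)\} > \min\{\minP(y), \minP(y')\}.$$
Since $\minP$ is \strictementcroissant{} by Lemma \ref{lem proprietes pi minP}, the right-hand side equals $\minP(m)$, so $\val L(h\pz) > \minP(m)$.

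Now I would argue by contradiction: suppose $\val h\pz \leq m$. Since $h\pz \in \Hahn_{\vert \Gamma}$, we have $\gamma_0 := \val h\pz \in \Gamma$ (we may assume $h\pz \neq 0$, otherwise the conclusion is trivial). By Lemma \ref{lem Gamma Lambda}, $\Gamma \cap -\mathcal{S}(L) = \emptyset$, so $\gamma_0 \notin -\mathcal{S}(L)$. The second part of Lemma \ref{lem:val Lf} then gives $\val L(h\pz) = \minP(\gamma_0)$. Combining with $\gamma_0 \leq m$ and the monotonicity of $\minP$, we obtain $\val L(h\pz) = \minP(\gamma_0) \leq \minP(m)$, contradicting $\val L(h\pz) > \minP(m)$.

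This contradiction forces $\val h\pz > m$, which completes the proof. The argument is essentially mechanical once one realizes that the key ingredient is the combination of two earlier results: the crucial property $\Gamma \cap -\mathcal{S}(L) = \emptyset$ from Lemma \ref{lem Gamma Lambda} (which guarantees that the sharp equality $\val L(h\pz) = \minP(\val h\pz)$ applies to $h\pz$), and the monotonicity of $\minP$ from Lemma \ref{lem proprietes pi minP}. There is no serious obstacle: the only thing to be careful about is the trivial case $h\pz = 0$ and the convention $\val 0 = +\infty$.
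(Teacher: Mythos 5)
Your proof is correct and takes essentially the same approach as the paper: set $h = f - f'$, bound $\val L(h)$ from below using the ultrametric inequality and the hypotheses, then derive a contradiction from $\val L(h) = \minP(\val h)$ (which holds because $\val h \in \Gamma$ and $\Gamma \cap -\mathcal{S}(L) = \emptyset$). The only cosmetic difference is that you work directly with $\gamma_0 = \val h$, whereas the paper first reduces to a minimal $y$ and then identifies it with $\val h$; your phrasing is marginally more streamlined but the argument is identical.
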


\begin{proof}
We argue by contradiction. 
Assume on the contrary that $f\pz \neq f'\pz$ on $\Gamma \cap ]-\infty , \min \{y,y'\}]$. Up to interchanging the roles of $f\pz$ and $f'\pz$, we can assume that $y \leq y'$ and, hence, that $f\pz \neq f'\pz$ on $\Gamma \cap ]-\infty , y]$. Using the fact that $\Gamma$ is well ordered, we can assume that $y$ is minimal with respect to the property ``$f\pz \neq f'\pz$ on $\Gamma \cap ]-\infty , y]$''. On the one hand, one can characterize $y$ in terms of $h\pz=f'\pz-f\pz \in \Hahn_{\vert \Gamma} \setminus \{0\}$ as the minimal element of $\Gamma$ such that $h \pz \neq 0$ on $\Gamma \cap ]-\infty , y]$. So, $y=\val h \pz$. Since $\val h \pz=y \in \Gamma$ and $\Gamma \cap -\S(L)=\emptyset$ by Lemma \ref{lem Gamma Lambda}, we have $\val h \pz \not\in -\S(L)$ and it follows from Lemma \ref{lem:val Lf} that 
\begin{equation}\label{valLdelta=minPy}
\val L(h\pz)=\minP(y). 
\end{equation}
On the other hand, we have
$$
L(h\pz) = (L(f'\pz)-g\pz)-(L(f\pz)-g\pz). 
$$
Applying the $z$-adic valuation $\val$ to the latter equality and using \eqref{val ult metric}, we get 
$$
\val L(h\pz) \geq  \min \{\val (L(f\pz)-g\pz),\val (L(f'\pz)-g\pz)\}.
$$
But, by hypothesis, we have $\val (L(f\pz)-g\pz) > \minP(y)$ and $\val (L(f'\pz)-g\pz) > \minP(y') \geq \minP(y)$, the latter inequality $\minP(y') \geq \minP(y)$ following from the facts that $y'\geq y$ and that $\minP$ is \strictementcroissant{} by Lemma \ref{lem proprietes pi minP}. Therefore,  
$\val L(h\pz) > \minP(y).
$ 
This contradicts \eqref{valLdelta=minPy}. 
\end{proof}

\begin{lem}\label{lem:exists unique}
Consider $y \in \Gamma \cup \{+\infty\}$ and a family $(f_{x})_{x \in \Gamma \cap ]-\infty,y[}$ of Hahn series such that, for all $x \in \Gamma \cap ]-\infty,y[$, $f_x \in \Hahn_{\vert \Gamma}$. Suppose that, for all $x,x' \in \Gamma \cap ]-\infty,y[$, we have $f_{x}\pz=f_{x'}\pz$ on $\Gamma \cap ]-\infty , \min \{x,x'\}]$. Then, there exists 
$f\pz\in\Hahn_{\vert \Gamma \cap ]-\infty , y[}$ such that, for all $x \in \Gamma \cap ]-\infty , y[$, $f\pz=f_{x}\pz$ on $\Gamma \cap ]-\infty , x]$. \end{lem}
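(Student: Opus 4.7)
The plan is to construct $f\pz$ coefficient by coefficient from the family $(f_{x})_{x \in \Gamma \cap ]-\infty,y[}$, using the compatibility hypothesis to ensure well-definedness.

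More precisely, I would proceed as follows. For every $\gamma \in \Gamma \cap ]-\infty,y[$, observe that $\gamma$ itself belongs to $\Gamma \cap [\gamma,y[$, so this set is nonempty. For each such $\gamma$, I define
\[
 f_{\gamma} := (f_{x})_{\gamma},
\]
the coefficient of $z^{\gamma}$ in $f_{x}$, where $x$ is any element of $\Gamma \cap [\gamma,y[$. The compatibility hypothesis applied to two such choices $x,x' \in \Gamma \cap [\gamma,y[$ gives $f_{x} = f_{x'}$ on $\Gamma \cap ]-\infty, \min\{x,x'\}]$; since $\gamma \leq \min\{x,x'\}$ and $\gamma \in \Gamma$, the value $(f_x)_\gamma$ does not depend on the choice of $x$, so $f_{\gamma}$ is well-defined. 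For $\gamma \notin \Gamma \cap ]-\infty,y[$, I set $f_\gamma = 0$.

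Next I would verify that the formal expression $f\pz = \sum_{\gamma \in \Q} f_{\gamma} z^{\gamma}$ is an element of $\Hahn_{\vert \Gamma \cap ]-\infty,y[}$. By construction, $\supp f\pz \subset \Gamma \cap ]-\infty,y[$. Since $\Gamma$ is well-ordered (as established earlier in this subsection, being a subset of the well-ordered set $\V$), any subset of $\Gamma$ is well-ordered, and in particular $\supp f\pz$ is well-ordered. Hence $f\pz \in \Hahn$, and its support lies in $\Gamma \cap ]-\infty,y[$ as required.

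Finally, the required agreement property is essentially automatic: for any fixed $x \in \Gamma \cap ]-\infty,y[$ and any $\gamma \in \Gamma \cap ]-\infty,x]$, the element $x$ itself is an admissible witness in the definition of $f_{\gamma}$, so $f_{\gamma} = (f_{x})_{\gamma}$. This is precisely what it means for $f\pz = f_{x}\pz$ on $\Gamma \cap ]-\infty,x]$. I do not foresee any significant obstacle here: the statement is essentially a gluing lemma for well-ordered supports, and the only thing that could go wrong — a clash between the values prescribed by two different $f_x$'s — is ruled out directly by the compatibility assumption.
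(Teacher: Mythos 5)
Your proposal is correct and takes essentially the same approach as the paper's proof: the paper defines $f\pz = \sum_{\gamma \in \Gamma \cap ]-\infty,y[} a_{\gamma,\gamma} z^{\gamma}$, which is precisely your construction with the witness $x=\gamma$, and then verifies the agreement property using the compatibility hypothesis exactly as you do.
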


\begin{proof}
	We set, for all $x \in \Gamma \cap ]-\infty,y[$, $f_x\pz=\sum_{\gamma \in \Gamma} a_{x,\gamma} z^{\gamma}$. Set 
		$$f\pz=\sum_{\gamma \in \Gamma \cap ]-\infty , y[} a_{\gamma,\gamma} z^{\gamma} \in \Hahn_{\vert \Gamma \cap ]-\infty , y[}\,.
		$$
		Let $x \in \Gamma \cap ]-\infty,y[$ and $\gamma \in \Gamma \cap ]-\infty , x]$. Using the hypothesis of the lemma with $x'=\gamma$, we have $f_{x}\pz=f_{\gamma}\pz$ on $\Gamma \cap ]-\infty , \gamma]$. In particular, looking at the coefficients of $z^\gamma$, we obtain $a_{x,\gamma}=a_{\gamma,\gamma}$. Thus, $f\pz=f_{x}\pz$ on $\Gamma \cap ]-\infty , x]$. 
\end{proof}

\begin{proof}[Proof of Proposition \ref{prop:for all g there is f}]
Consider $g\pz \in \Hahn_{\vert \Lambda}$. We have to prove that there exists $\overline{f}\pz \in \Hahn_{\vert \Gamma}$ such that $L(\overline{f}\pz)=g\pz$. We split the proof in two main steps. 
\vskip 5pt
\noindent {\bf Step 1. } Let us first prove that, for all $y \in \Gamma$, there exists $f_{y}\pz \in \Hahn_{\vert \Gamma}$ such that $\val (L(f_y\pz)-g\pz)>\minP(y)$. We argue by contradiction: we assume that this is not true, {\it i.e.}, that the set 
$$
Y=\{y \in \Gamma \ \vert \ \forall f\pz \in \Hahn_{\vert \Gamma}, \val (L(f\pz)-g\pz ) \leq \minP(y) \}
$$ 
is nonempty. Since $Y$ is a nonempty subset of the well-ordered set $\Gamma$, it has a minimal element $y_{\min}$. 

We claim that there exists $f\pz \in \Hahn_{\vert \Gamma \cap ]-\infty , y_{\min}[}$ such that
\begin{equation}\label{ineg val psi x}
\val (L(f\pz)-g\pz) > \minP(x) 
\end{equation}
for all $x \in \Gamma \cap ]-\infty , y_{\min}[$, and such that 
\begin{equation}\label{val dans lambda}
\val (L(f\pz)-g\pz) \in \Lambda. 
\end{equation}
Indeed, for all $x \in \Gamma \cap ]-\infty,y_{\min}[$, we have $x \in \Gamma \setminus Y$ and, hence, there exists $f_x \in \Hahn_{\vert \Gamma}$ such that 
\begin{equation}\label{ineq val Lf-g>minPx}
\val (L(f_x\pz )-g\pz )>\minP(x).  
\end{equation}
According to Lemma \ref{lem:if val strict then eg}, we have, for any $x,x' \in \Gamma \cap ]-\infty,y_{\min}[$, $f_{x}\pz=f_{x'}\pz$ on $\Gamma \cap ]-\infty , \min \{x,x'\}]$. Lemma \ref{lem:exists unique} ensures that there exists  $f\pz\in\Hahn_{\vert \Gamma \cap ]-\infty , y_{\min}[}$ such that, for all $x \in \Gamma \cap ]-\infty , y_{\min}[$, $f\pz=f_{x}\pz$ on $\Gamma \cap ]-\infty , x]$. Let us prove that $f\pz$ satisfies \eqref{ineg val psi x} and \eqref{val dans lambda}.  Let us first note that, for all $x \in \Gamma \cap ]-\infty , y_{\min}[$, 
\begin{equation}\label{ineq L(f-fx)}
\val L(f\pz-f_{x}\pz) \geq \minP(\val (f\pz-f_{x}\pz)) > \minP(x);  
\end{equation}
indeed, the first inequality follows from Lemma \ref{lem:val Lf}, the second inequality follows from the facts that $\val (f\pz-f_{x}\pz)>x$ and that $\minP$ is \strictementcroissant{} by Lemma \ref{lem proprietes pi minP}. 
Using \eqref{val ult metric} and, then, the inequalities \eqref{ineq val Lf-g>minPx} and \eqref{ineq L(f-fx)}, we get, for all $x \in \Gamma \cap ]-\infty , y_{\min}[$, 
\begin{multline}
\val (L(f\pz)-g\pz)=\val(L(f\pz-f_{x}\pz)+L(f_{x}\pz)-g\pz) \\
\geq \min \{\val L(f\pz-f_{x}\pz), \val (L(f_{x}\pz)-g\pz)  \} > \minP(x). 
\end{multline}
This justifies \eqref{ineg val psi x}. Moreover, we have $L(f\pz)-g\pz \neq 0$ because $Y$ is nonempty, so 
$$\val (L(f\pz)-g\pz) \in \supp (L(f\pz)-g\pz)\subset \supp L(f\pz) \cup \supp g\pz \subset \bigcup_{\gamma \in \Gamma}\Psi(\gamma) \cup \Lambda \subset \Lambda,$$ the latter two inclusions following from Lemma \ref{lem:supp Lf sub P supp f} and 
Lemma \ref{lem Gamma Lambda} respectively. This justifies \eqref{val dans lambda} and, hence, our claim toward the existence of $f$. 

We fix $f\pz \in \Hahn_{\vert \Gamma \cap ]-\infty , y_{\min}[}$ satisfying \eqref{ineg val psi x} and \eqref{val dans lambda}. We can apply Lemma \ref{lem:prop minp} to $\lambda=\val (L(f\pz)-g\pz)$ and to $\gamma=y_{\min}$ and we obtain  
$$
\val (L(f\pz)-g\pz) \geq \minP(y_{\min}).
$$ 
We have already seen that $L(f\pz)-g\pz \neq 0$ and that $\supp (L(f\pz)-g\pz)\subset \Lambda$. So, Lemma \ref{for all g there exists a gamma} ensures that there exists $a \in \bK$ and $\gamma \in \Gamma$ such that
$$\val (L(az^{\gamma})+L(f\pz)-g\pz)>\val (L(f\pz)-g\pz)\geq \minP(y_{\min}).$$
Therefore, $f'\pz=az^{\gamma}+f\pz \in \Hahn_{\vert \Gamma}$ satisfies 
$$\val (L(f'\pz)-g\pz) > \minP(y_{\min}).
$$ 
This contradicts the fact that $y_{\min}$ belongs to $Y$. So, $Y$ is empty and, hence, we have proved that, 
for all $y \in \Gamma$, there exists $f_{y}\pz \in \Hahn_{\Gamma}$ such that $\val (L(f_y\pz)-g\pz)>\minP(y)$. 
\vskip 5pt
\noindent {\bf Step 2. } Lemma \ref{lem:if val strict then eg} ensures that, for all $y,y' \in \Gamma$, we have $f_{y'}\pz=f_{y}\pz$ on $\Gamma \cap ]-\infty , \min\{y,y'\}]$. According to Lemma \ref{lem:exists unique} applied with $y=+\infty$, there exists  $\overline{f}\pz \in \Hahn_{\Gamma}$ such that, for all $y \in \Gamma$, $\overline{f}\pz=f_{y}\pz$ on  $\Gamma \cap ]-\infty , y]$. Arguing as we did above for proving \eqref{ineg val psi x}, we see that, for all $y \in \Gamma$, 
$
\val(L(\overline{f}\pz)-g\pz) > \minP(y).
$
This implies that $L(\overline{f}\pz)-g\pz=0$ because, otherwise, $\val(L(\overline{f}\pz)-g\pz)$ would belong to $\supp(L(\overline{f}\pz)-g\pz) \subset \Lambda$ but not to $\minP(\Gamma)$ and this would contradict the fact that $\Lambda=\minP(\Gamma)$ by definition.
\end{proof}

\subsection{Proof of Theorem \ref{theo: si star alors isom}}\label{preuve theo Lf eg g} We recall the  following notations introduced in section \ref{sec: prelim res}:  
$$\Gamma=\V \setminus \Rr, \ \ \Lambda = \minP(\Gamma)
$$
and 
$$
\Cond{\Rr} = \{f\pz  \in \Hahn_{\vert \Rr} \ \vert \ \supp(L(f\pz )) \cap \minP(\Rr) = \emptyset\}.
$$
Proving Theorem \ref{theo: si star alors isom} is equivalent to proving the following properties relative to the $\bK$-linear map $\res{\Rr}$ defined by \eqref{eq:res hahn to rr}:  $\res{\Rr}(\Sol(L,\Hahn))=\Cond{\Rr}$ and $\ker(\res{\Rr}) \cap \Sol(L,\Hahn)=\{0\}$. 
		Before proving these properties, note that 
\begin{equation}\label{CR in terms of Lambda}
 \Cond{\Rr} 
 =\{f\pz  \in \Hahn_{\vert \Rr} \ \vert \ \supp L(f\pz ) \subset  \Lambda\}
=  \{f\pz  \in \Hahn_{\vert \Rr} \ \vert \ L(f\pz ) \in \Hahn_{\vert \Lambda}\}.
\end{equation}
Indeed, since $\Lambda=\minP(\Gamma)=\minP(\V\setminus \Rr)=\minP(\V)\setminus \minP(\Rr)$, in order to prove the equality \eqref{CR in terms of Lambda}, it is sufficient to prove that, for any $f\pz \in \Hahn_{\vert \Rr}$, we have $\supp L(f\pz) \subset \minP(\V)$. As a matter of fact, the latter property is true since $\supp L(f) \subset \bigcup_{v \in \supp f} \Psi(v)$ by Lemma~\ref{lem:supp Lf sub P supp f}, $\bigcup_{v \in \supp f} \Psi(v)\subset \bigcup_{v \in \V} \Psi(v)$ because $\supp f \subset \Rr \subset \V$ and $\bigcup_{v \in \V} \Psi(v) = \psi(\V)$ by (\ref{hypo pi psi dans V}) of Hypothesis \ref{hyp sol}.
\vskip 5pt
\noindent {\it Proof of $\res{\Rr}(\Sol(L,\Hahn))\subset \Cond{\Rr}$.} Consider $f\pz \in \Sol(L,\Hahn)$. 
Consider the decomposition $f\pz=f_{\vert \Rr}\pz+f_{\vert \gamma}\pz$. Applying $L$ to this equality, we get $0=L(f_{\vert \Rr}\pz)+L(f_{\vert\gamma}\pz)$, so 
\begin{equation}
 L(f_{\vert \Rr}\pz)=-L(f_{\vert \gamma}\pz).
\end{equation}
It follows from Lemma \ref{lem:supp Lf sub P supp f} that $\supp L(f_{\vert \gamma}\pz) \subset \bigcup_{v \in \gamma}\Psi(v)=\Lambda$, the latter equality coming from Lemma \ref{lem Gamma Lambda}. Therefore, 
$$
L(f_{\vert \Rr}\pz)= -L(f_{\vert \gamma}\pz) \in \Hahn_{\vert \Lambda}.
$$ 
Using \eqref{CR in terms of Lambda}, this proves that the image of $f\pz$ by $\res{\Rr}$ belongs to $\Cond{\Rr}$. 
\vskip 5pt
\noindent {\it Proof of $\Cond{\Rr} \subset \res{\Rr}(\Sol(L,\Hahn))$. } Consider $f_0\pz \in \Cond{\Rr}$. It follows from \eqref{CR in terms of Lambda} that $L(f_0\pz) \in \Hahn_{\vert \Lambda}$. 
Proposition \ref{prop:for all g there is f} ensures that there exists $f_{1}\pz \in \Hahn_{\vert \Gamma}$ such that 
$
L(f_{1}\pz)=-L(f_0\pz)
$. 
Then $f\pz= f_{0}\pz+f_{1}\pz$ belongs to $\Sol(L,\Hahn)$ and its image by $\res{\Rr}$ is $f_{0}\pz$.
\vskip 5pt
\noindent {\it Proof of $\ker(\res{\Rr}) \cap \Sol(L,\Hahn)=\{0\}$. } Let $f\pz \in \ker(\res{\Rr}) \cap \Sol(L,\Hahn)$. Then, $f\pz$ belongs to $\Hahn_{\vert \Gamma}$ and satisfies $L(f\pz)=0$. Lemma \ref{lem:val Lf eg g} ensures that $\val f\pz \in - \S(L) \cup\{+\infty\}$. But, $\val f\pz \in \Gamma \cup \{+\infty\}$ and $ - \S(L) \cap \Gamma = \emptyset$ by Lemma \ref{lem Gamma Lambda}. So, $\val f\pz =+\infty$ and, hence, $f\pz=0$.  
\vskip 5pt
This concludes the proof of Theorem \ref{theo: si star alors isom}.

\section{Computing an $\Rr$ containing $\E$ and satisfying $\star_{\V}$.}\label{sec constr R}

We use the notations of section \ref{sec:approx sol}: we consider the operator $L$ given by \eqref{operateur L bla} and  we let $\V$ be a subset of $\Q$ satisfying Hypothesis \ref{hyp sol}. 
Moreover, we let $\E$ be a finite subset of $\V$. 

\begin{defi}
We say that a set $\Rr$ satisfies property $\star_{\E,\V}$ if $\E \subset \Rr$ and $\Rr$ satisfies property $\star_\V$.  
\end{defi}

We shall now give a recursive construction of a finite set satisfying $\star_{\E,\V}$.

\begin{theo}\label{th R star}
The sequence $(\Rr_{i})_{i\geq 0}$ of subsets of $\V$ recursively defined by 
	$$\Rr_0= \Ecal \cup -\S(L)$$
and, for all $i \geq 0$, 
\begin{equation}\label{eq:def des Ri}
 \Rr_{i+1} =\{v \in \V \ \vert \ \pi(\Psi(v)) \cap \Rr_i \neq \emptyset \}
\end{equation}
is an eventually constant \croissant{} sequence of finite sets. 
The above recursive definition formula can be rewritten as follows, for all $i \geq 0$: 
\begin{equation}
\Rr_{i+1}=\bigcup_{(\ellmahl^{\alpha},\beta) \in \P(L)}  \ellmahl^{-\alpha} (\minP (\Rr_i)-\beta)  \cap \V   \label{form alt Ri+1}.
\end{equation}
Moreover, $\Rrinfty=\bigcup_{i \geq 0} \Rr_{i}$ is a finite set which satisfies $\star_{\E,\V}$. 
\end{theo}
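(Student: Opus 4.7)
The plan is to prove the five assertions of the theorem in the order in which they are stated.

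I would begin by establishing the alternative formula \eqref{form alt Ri+1}. Since $\pi$ and $\minP$ are mutual inverses by Lemma \ref{lem proprietes pi minP}, the condition ``$\pi(\Psi(v)) \cap \Rr_i \neq \emptyset$'' is equivalent to the existence of $(\ellmahl^{\alpha},\beta) \in \P(L)$ and $w \in \Rr_i$ such that $v\ellmahl^{\alpha}+\beta = \minP(w)$, that is, $v = \ellmahl^{-\alpha}(\minP(w)-\beta)$. Rewriting \eqref{eq:def des Ri} using this equivalence immediately yields \eqref{form alt Ri+1}. A straightforward induction then shows that each $\Rr_i$ is finite: $\Rr_0 = \E \cup -\S(L)$ is finite by hypothesis, and $\P(L)$ is finite by Remark \ref{rem:PL_finite}, so \eqref{form alt Ri+1} exhibits $\Rr_{i+1}$ as a finite union of finite sets. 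Monotonicity $\Rr_i \subset \Rr_{i+1}$ is immediate from Lemma \ref{min Psi(s)=s}: for any $v \in \Rr_i \subset \V$ we have $v \in \pi(\Psi(v))$, hence $\pi(\Psi(v)) \cap \Rr_i \ni v$ and therefore $v \in \Rr_{i+1}$.

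The heart of the argument is the finiteness of $\Rrinfty$. Combining the monotonicity just established with Lemma \ref{lem:Delta}, the recursion \eqref{eq:def des Ri} can be rewritten as
\[
\Rr_{i+1} = \Rr_i \cup \bigcup_{w \in \Rr_i}\bigl(\Delta(w) \cap \V\bigr).
\]
I would then introduce a rooted labeled tree $T$ whose root $r$ has the elements of $\Rr_0$ as its children, and whose non-root nodes of label $w$ have the elements of $\Delta(w) \cap \V$ as their children. An easy induction on $i$ shows that $\Rrinfty$ is contained in the set of labels of non-root nodes of $T$. The tree $T$ is finitely branching, since $\Rr_0$ is finite and each $\Delta(w)$ is finite (as $\P(L)$ is). Moreover, along any branch of $T$ starting from a child of $r$, the labels are strictly decreasing in $\V$ by Lemma \ref{lem:Delta}; since $\V$ is well-ordered (Hypothesis \ref{hyp sol}), no branch can be infinite. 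König's lemma therefore forces $T$ to be finite, so $\Rrinfty$ is finite. Since the nondecreasing sequence $(\Rr_i)_{i \geq 0}$ takes values in the finite set $\Rrinfty$, it is eventually constant.

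It remains to verify that $\Rrinfty$ satisfies $\star_{\E,\V}$. The inclusions $\E \subset \Rr_0 \subset \Rrinfty$, $-\S(L) \subset \Rr_0 \subset \Rrinfty$, and $\Rrinfty \subset \V$ are immediate from the construction. For the stability condition $\bigcup_{v \in \V \setminus \Rrinfty}\pi(\Psi(v)) = \V \setminus \Rrinfty$, the inclusion $\supset$ follows from Lemma \ref{min Psi(s)=s}, which gives $v \in \pi(\Psi(v))$ for every $v$. For $\subset$, fix $v \in \V \setminus \Rrinfty$ and $w \in \pi(\Psi(v))$: then $w \in \V$ by (\ref{hypo pi psi dans V}) of Hypothesis \ref{hyp sol}, and if $w$ belonged to some $\Rr_i$, the intersection $\pi(\Psi(v)) \cap \Rr_i$ would contain $w$, forcing $v \in \Rr_{i+1} \subset \Rrinfty$, a contradiction. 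The principal difficulty of the whole argument lies in the finiteness of $\Rrinfty$: it rests on König's lemma together with the well-ordering of $\V$ provided by Theorem \ref{theo: un support bis}, without which the descent tree could admit infinite branches.
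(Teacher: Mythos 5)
Your proof is correct, and it covers all the claims of the theorem. The derivations of the alternative formula~\eqref{form alt Ri+1}, the finiteness of each $\Rr_i$, the monotonicity of the sequence, and the verification of $\star_{\E,\V}$ all match the paper's arguments. The one place where you diverge in presentation is the finiteness of $\Rrinfty$ (and hence the eventual constancy of $(\Rr_i)_i$): the paper argues by contradiction, assuming $\Rrinfty$ infinite and then inductively choosing, at each step, an element among the finitely many ``predecessors'' whose associated $\Rrinfty(\cdot)$ remains infinite, so as to manufacture an infinite strictly decreasing sequence in $\V$ contradicting well-ordering; you instead organize the same data into a finitely branching rooted tree whose labels strictly decrease along branches (by Lemma~\ref{lem:Delta}) and invoke K\"onig's lemma. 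These are the same proof in different clothes --- the paper's iterative pigeonhole selection is exactly the standard construction underlying K\"onig's lemma, specialized to this tree --- so you gain some conceptual tidiness and an explicit statement of the recursion as $\Rr_{i+1} = \Rr_i \cup \bigcup_{w\in\Rr_i}\bigl(\Delta(w)\cap\V\bigr)$ (which the paper uses implicitly but never writes down), at the modest cost of appealing to a named external lemma. One small point worth being aware of: Lemma~\ref{lem:Delta} is stated for $w,w'\in\Z_{d,\ellmahl}$, whereas in the generality of Hypothesis~\ref{hyp sol} one only knows $\V\subset\Q$; this causes no real trouble since the proof of Lemma~\ref{lem:Delta} never uses that hypothesis (and the paper applies the lemma in the same way), but a careful write-up would remark on it.
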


\begin{ex}\label{ex:Ri et R pour L pour example}
The sets $\Rr_{i}$ and $\Rr$ are computed in section \ref{main algo on RS example} for the operator $L$ given by \eqref{L pour example}.
\end{ex}

\begin{proof}$_{}$\\
\noindent {\it Proof of the equality \eqref{form alt Ri+1}.} This equality follows from the following chain of equalities:
\begin{eqnarray*}
\{v \in \V \ \vert \ \pi(\Psi(v)) \cap \Rr_i \neq \emptyset \} &=& \{v \in \V \ \vert \ \Psi(v) \cap \minP (\Rr_i) \neq \emptyset \} \\
&=& \{v \in \V \ \vert \ \exists (\ellmahl^{\alpha},\beta) \in \P(L), v \ellmahl^{\alpha}+\beta  \in \minP (\Rr_i)  \}\\
&=&\bigcup_{(\ellmahl^{\alpha},\beta) \in \P(L)} \{v \in \V \ \vert \  v \ellmahl^{\alpha}+\beta  \in \minP (\Rr_i)  \}\\
&=&\bigcup_{(\ellmahl^{\alpha},\beta) \in \P(L)}  \ellmahl^{-\alpha} (\minP (\Rr_i)-\beta) \cap \V.
\end{eqnarray*}

$_{}$\\
\noindent {\it Proof of the fact that the $\Rr_{i}$ are finite sets.}   We argue by induction on $i\geq 0$. The base case $i=0$ is clear. We now proceed with the inductive step. Let us assume that $\Rr_{i}$ is finite for some $i\geq 0$. Then, $\minP (\Rr_i)$ is finite, so, for all $(\ellmahl^{\alpha},\beta) \in \P(L)$, $\ellmahl^{-\alpha} (\minP (\Rr_i)-\beta) \cap \V$ is finite. Since $\P(L)$ is finite, the equation \eqref{form alt Ri+1} shows that $\Rr_{i+1}$ is finite as well. This concludes the proof.

\vskip 5pt
\noindent {\it Proof of the fact that the sequence $(\Rr_{i})_{i\geq 0}$ is \croissant.} Consider $i \in \Z_{\geq 0}$. Lemma \ref{min Psi(s)=s} ensures that, for all $v \in \Rr_{i}$, we have $v \in \pi(\Psi(v))$, so $v \in \pi(\Psi(v)) \cap \Rr_{i}$ and, hence, $v \in \Rr_{i+1}$. This shows that $\Rr_{i} \subset \Rr_{i+1}$.
\vskip 5pt
\noindent {\it Proof of the fact that $(\Rr_{i})_{i \geq 0}$ is eventually constant.} We argue by contradiction: we assume that $(\Rr_{i})_{i \geq 0}$ is not eventually constant. Then, $\Rrinfty = \bigcup_{i\geq 0} \Rr_i$ is infinite because $(\Rr_{i})_{i\geq 0}$ is \croissant. 
For any $v \in \V$, we consider the sequence $(\Rr_{i}(v))_{i \geq 0}$ of subsets of $\V$ defined by  
\begin{itemize}
	\item $\Rr_0(v)= \{v\}$;
	\item $\forall i \geq 0$, $\Rr_{i+1}(v) = \{w \in \V \ \vert \ \pi(\Psi(w)) \cap \Rr_i(v) \neq \emptyset \}$.
	\end{itemize}
Then, $(\Rr_{i}(v))_{i \geq 0}$ is a \croissant{} sequence of finite sets (for the same reasons that $(\Rr_{i})_{i \geq 0}$ is an \croissant{} sequence of finite sets). We set $\Rrinfty(v)=\bigcup_{i\geq 0} \Rr_i(v)$. Note that : 
\begin{itemize}
\item we have $\Rrinfty = \bigcup_{v \in \Rr_{0}}  \Rrinfty(v)$; since $\Rrinfty$ is infinite and $\Rr_{0}$ is finite, there exists $v_{0} \in \Rr_{0}$ such that $\Rrinfty(v_{0})$ is infinite; 
\item we have $\Rrinfty(v_{0})=\{v_{0}\} \cup \bigcup_{v \in \Rr_{1}(v_{0}) \setminus \{v_{0}\}}  \Rrinfty(v)$;  since $\Rrinfty(v_{0})$ is infinite and $\Rr_{1}(v_{0})$ is finite, there exists $v_{1} \in \Rr_{1}(v_{0})\setminus \{v_{0}\}$ such that $\Rrinfty(v_{1})$ is infinite;
\item we have $\Rrinfty(v_{1})=\{v_{1}\} \cup \bigcup_{v \in \Rr_{1}(v_{1}) \setminus \{v_{1}\}}  \Rrinfty(v)$; since $\Rrinfty(v_{1})$ is infinite and $\Rr_{1}(v_{1})$ is finite, there exists $v_{2} \in \Rr_{1}(v_{1})\setminus \{v_{1}\}$ such that $\Rrinfty(v_{2})$ is infinite. 
\end{itemize}
Iterating this construction, we see that there exists a sequence $(v_{i})_{i\geq 0}$ of elements of $\V$ such that, for all $i\geq 0$, $v_{i+1} \in \Rr_{1}(v_{i})\setminus \{v_{i}\}$. Therefore, we have $v_{i} \in \pi(\Psi(v_{i+1}))$ and $v_{i+1} \neq v_{i}$ so $v_{i+1} < v_{i}$ by Lemma \ref{lem:Delta} applied to $w=v_{i}$ and $w'=v_{i+1}$ (we draw the reader's attention to the fact that ${v}_{i} \in \pi(\Psi({v}_{i+1}))$ and not the opposite, as it was in section \ref{sec:recept supp sol L bis}). So, the sequence $(v_{i})_{i\geq 0}$ is  \strictementdecroissant. This contradicts the fact that $\V$ is well-ordered. Thus, the sequence $(\Rr_i)_{i \in \Z_{\geq 0}}$ is eventually constant. In particular, $\Rrinfty = \Rr_{i_0}$ for some $i_0 \geq 0$ and, hence,  $\Rrinfty$ is a finite set.
\vskip 5pt
\noindent {\it Proof of the fact that $\Rrinfty$ satisfies property $\star_\V$.} The fact that $\Rrinfty$ satisfies property a. of Definition \ref{defi:prop star s} is obvious. Moreover, if $v \in \V \setminus \Rrinfty$, then, for all $i \geq 0$, we have $\pi(\Psi(v)) \cap \Rr_i = \emptyset$ and, hence, $\pi(\Psi(v)) \cap \Rrinfty = \emptyset$. Thus,
\begin{equation}\label{eq utile 1}
\bigcup_{v \in \V\setminus \Rr} \pi(\Psi(v)) \cap \Rrinfty = \emptyset.  
\end{equation}
But,
\begin{equation}\label{eq utile 2}
\V\setminus \Rr \subset \bigcup_{v \in \V\setminus \Rr} \pi(\Psi(v)) \subset \V; 
\end{equation}
indeed, the first inclusion follows from the first assertion of Lemma \ref{min Psi(s)=s} and the second inclusion follows from property (\ref{hypo pi psi dans V}) of Hypothesis \ref{hyp sol}.
Combining \eqref{eq utile 1} and \eqref{eq utile 2}, we get 
$$
\bigcup_{v \in \V\setminus \Rr} \pi(\Psi(v)) = \V \setminus \Rr.
$$	
So, $\Rrinfty$ satisfies property b. of Definition \ref{defi:prop star s}. 
\end{proof}

 Let $d \in \Z_{\geq 1}$ be a common denominator of the slopes of $L$ and let $\V$ be the set given by Theorem \ref{theo: un support bis}.  The following result gives an upper bound on the least  $i \in \Z_{\geq 0}$ such that $\Rr_{i}=\Rrinfty$ when $\V$ is the set given by Theorem \ref{theo: un support bis}.

\begin{prop}\label{prop bound R}
We let $\V$ be the set given by Theorem \ref{theo: un support bis}, we let $\E \subset \V$ be a finite set and we let $(\Rr_{i})_{i\geq 0}$ and $\Rrinfty$ be the sets given by Theorem \ref{th R star}. We let $H \in \Z_{\geq 0}$ be such that\footnote{We recall that $\V \subset \Z_{d,\ellmahl}$ by Lemma \ref{lem:V_Zdl},  so that any element of $\V$ and, hence, of the finite set $\E \subset \V$ is of the form $\frac{a}{d\ellmahl^h}$ for some $a \in \Z$ and $h \in \Z_{\geq 0}$.} $d\ellmahl^H \E \subset \Z$ and $\hgt\in \Q_{\geq 0}$ be such that $\E\cup -\S(L) \subset \Q_{\leq \hgt}$. Let $\minepsilon$ be the number defined by \eqref{eq:epsilon}, namely 
$$
\minepsilon= \min\big\{\epsilon(-\mu_1),\ldots,\epsilon(-\mu_{\kappa}),(d\ellmahl^n)^{-1}\big\} \in \Q_{>0}.
$$ We set
	$$
	c=\left\lfloor (n+1)\frac{\hgt+\mu_{\kappa}}{\minepsilon} \right\rfloor + H.
	$$
	Then, for any $i \in \Z_{\geq c}$, we have $\Rr_{i}=\Rrinfty$.
\end{prop}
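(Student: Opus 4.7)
The plan is to give a ``chain'' characterization of the sets $\Rr_i$ and then bound the length of the shortest such chain using Lemma~\ref{lem:longueur-chaine1}.

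\textbf{Step 1: chain characterization.} I would first prove by induction on $i \geq 0$ that $v \in \Rr_i$ if and only if there exist an integer $k \in \{0,\ldots,i\}$ and elements $v=v_0,v_1,\ldots,v_k$ of $\V$ such that $v_k \in \Rr_0 = \E \cup -\S(L)$ and $v_{j+1} \in \pi(\Psi(v_j))$ for every $j \in \{0,\ldots,k-1\}$. The base case $i=0$ is the convention that $v = v_0$ is itself a length-zero chain, and the induction step is immediate from the recursive definition~\eqref{eq:def des Ri}.

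\textbf{Step 2: reducing to strictly increasing chains.} For any chain as above, Lemma~\ref{min Psi(s)=s} gives $v_j = \min \pi(\Psi(v_j)) \leq v_{j+1}$, so the chain is nondecreasing. If $v_j = v_{j+1}$ for some $j < k$, then $\pi(\Psi(v_{j+1})) = \pi(\Psi(v_j))$, hence deleting $v_{j+1}$ from the chain still yields a valid (shorter) chain ending in $\Rr_0$. Iterating this reduction shows that, for any $v \in \Rr_i \subset \Rrinfty$, one can find a chain $v=v_0 < v_1 < \cdots < v_k$ (strictly increasing, or of length $k=0$ in the case $v \in \Rr_0$) with $v_k \in \Rr_0$ and $v_{j+1} \in \pi(\Psi(v_j))$.

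\textbf{Step 3: applying Lemma~\ref{lem:longueur-chaine1} and concluding.} Pick any $v \in \Rrinfty$ and a strictly increasing chain as in Step~2. By Lemma~\ref{lem:longueur-chaine1}, applied with ``their $M$'' equal to $k$,
$$
k \leq (n+1)\frac{v_k + \mu_\kappa}{\minepsilon} + h(v_k).
$$
Since $v_k \in \Rr_0 = \E \cup -\S(L)$, the hypothesis $\E \cup -\S(L) \subset \Q_{\leq N}$ yields $v_k \leq N$. Moreover, $v_k$ lies in $\frac{1}{d\ellmahl^H}\Z$ because $d\ellmahl^H \E \subset \Z$ by hypothesis and because $-\S(L) \subset \frac{1}{d}\Z \subset \frac{1}{d\ellmahl^H}\Z$ (recall $d$ is a common denominator of the slopes); therefore $h(v_k) \leq H$. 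Combining these bounds and using that $k$ and $H$ are integers,
$$
k \leq \Bigl\lfloor (n+1)\frac{N + \mu_\kappa}{\minepsilon} + H \Bigr\rfloor = \Bigl\lfloor (n+1)\frac{N + \mu_\kappa}{\minepsilon}\Bigr\rfloor + H = c.
$$
By the characterization in Step~1 (applied with $i=c$), this gives $v \in \Rr_c$. Hence $\Rrinfty \subset \Rr_c$; the reverse inclusion is trivial, so $\Rr_c = \Rrinfty$. Since $(\Rr_i)_{i \geq 0}$ is nondecreasing and bounded above by $\Rrinfty$, it follows that $\Rr_i = \Rrinfty$ for every $i \geq c$.

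\textbf{Main obstacle.} The substantive ingredient is Lemma~\ref{lem:longueur-chaine1}, which has already been proved; the remaining work is the bookkeeping in Steps~1 and~2. The only mild subtlety is ensuring that the shortest chain is strictly increasing (so that Lemma~\ref{lem:longueur-chaine1} actually applies with a meaningful bound), which is why the ``delete repeated element'' argument of Step~2 is needed.
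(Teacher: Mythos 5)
Your proof is correct and the central ingredient, Lemma~\ref{lem:longueur-chaine1}, is the same as in the paper, but the route to a \emph{strictly} increasing chain is genuinely different. The paper works with the minimal $M$ satisfying $\Rr_M = \Rr_{M+1}$, picks $v_0 \in \Rr_M \setminus \Rr_{M-1}$, unrolls the recursion to produce $v_0, v_1, \ldots, v_M$ with $v_{i+1} \in \pi(\Psi(v_i)) \cap \Rr_{M-1-i}$, and then proves by a minimality-of-index contradiction that $v_i \notin \Rr_{M-1-i}$ for every $i$, which forces $v_i \neq v_{i+1}$ and hence strict increase; this gives a chain of length exactly $M$ and yields $M \leq c$. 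You instead replace that non-membership argument with the observation that any chain satisfying $v_{j+1}\in\pi(\Psi(v_j))$ is automatically nondecreasing (Lemma~\ref{min Psi(s)=s}), so any repetition is of the form $v_j = v_{j+1}$, and in that case $\Psi(v_j) = \Psi(v_{j+1})$ lets you delete $v_{j+1}$ without breaking the chain relation or the condition that the chain terminates in $\Rr_0$. This is more elementary, avoids the contradiction argument, and lets you conclude $\Rrinfty \subset \Rr_c$ directly for every $v\in\Rrinfty$ rather than through the intermediate quantity $M$. Both proofs then apply Lemma~\ref{lem:longueur-chaine1} to a strictly increasing chain ending at an element of $\Rr_0 = \E \cup -\S(L)$, and your endpoint estimates ($v_k \leq \hgt$, $h(v_k) \leq H$ using $d\ellmahl^H\E \subset \Z$ and $-\S(L)\subset \frac{1}{d}\Z$) and the floor manipulation giving $k\leq c$ match the paper's.
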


\begin{proof}
	Theorem \ref{th R star} guarantees that the sequence $(\Rr_{i})_{i\geq 0}$ is \croissant{} and eventually constant. Let $M$ be the least element of $\Z_{\geq 0}$ such that $\Rr_{M}=\Rr_{M+1}$. It follows easily from the definition of $(\Rr_{i})_{i\geq 0}$ that, for all $i \in \Z_{\geq M}$, $\Rr_{i}=\Rrinfty$. In order to conclude the proof, it is thus sufficient to prove that $c \geq M$. Let us prove this. We have $c \geq 0$, so the result holds if $M=0$. Otherwise, assume $M \geq 1$. 
	The set $\Rr_{M} \setminus \Rr_{M-1}$ being nonempty, one can consider $v_{0} \in \Rr_{M} \setminus \Rr_{M-1}$.  It follows from the definition of $(\Rr_{i})_{i\geq 0}$, that there exist $v_{1},\ldots,v_{M} \in \V$ such that, for all $i \in \{0,\ldots,M-1\}$, 
		$$
		v_{i+1} \in \pi(\Psi(v_{i})) \cap \Rr_{M-1-i}. 
		$$
		We claim that, for any $i \in \{0,\ldots,M-1\}$, we have $v_i \notin \Rr_{M-1-i}$. Indeed, assume on the contrary that there exists $i \in \{0,\ldots,M-1\}$ such that $v_i \in \Rr_{M-1-i}$. Without loss of generality, we can assume that $i$ is the least element of $\{0,\ldots,M-1\}$ satisfying the latter property. Our choice of $v_{0}$ guaranties that $i \neq 0$.	 We have $v_{i} \in \Rr_{M-1-i}$ and, by construction, $v_{i} \in \pi(\Psi(v_{i-1})) \cap \Rr_{M-i}$, so $v_{i} \in \pi(\Psi(v_{i-1})) \cap \Rr_{M-1-i}$ and it follows from the definition of $\Rr_{M-i}$ that $v_{i-1} \in \Rr_{M-i}$. This contradicts the minimality of $i$ and concludes the proof of our claim.				
It follows that, for any $i \in \{0,\ldots,M-1\}$, we have $v_{i+1}\neq v_i$ because $v_{i+1} \in \Rr_{M-1-i}$ by construction and $v_{i} \notin \Rr_{M-1-i}$ by the previous claim.
Then, Lemma \ref{lem:Delta} applied to $w=v_{i+1}$ and $w'=v_{i}$ ensures that $v_{i+1} > v_{i}$. 
In conclusion,  $v_{0},v_{1},\ldots,v_{M} \in \V$ satisfy the following properties: 
	\begin{itemize}
	\item $v_{M} \in \Rr_{0} = \E \cup -\S(L)$;
	\item $v_{i+1} \in \pi( \Psi(v_{i}))$ for all $i \in \{0,\ldots,M-1\}$; 
	\item $v_{0} < v_1 < \cdots <v_{M}$. 
\end{itemize}
It follows from Lemma \ref{lem:longueur-chaine1} that 
$$
M \leq (n+1)\frac{v_{M}+\mu_{\kappa}}\minepsilon + h(v_{M}) \,.
$$
Since $v_{M} \leq \hgt$ and $h(v_{M})\leq H$ and since $M$ is an integer, we have $M \leq c$. This concludes the proof. 
\end{proof}

We note the following results  for further use. 

\begin{lem}\label{lem:infT}
With the notations of Theorem \ref{th R star}, 
$\max \Rrinfty = \max \E \cup -\S(L)$.
\end{lem}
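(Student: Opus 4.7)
The plan is to prove the two inequalities $\max \Rrinfty \geq \max (\E \cup -\S(L))$ and $\max \Rrinfty \leq \max (\E \cup -\S(L))$ separately, with only the second one requiring any genuine argument.

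For the first inequality, I would simply observe that $\Rr_0 = \E \cup -\S(L) \subset \Rrinfty$ by construction, so $\max \Rrinfty \geq \max \Rr_0 = \max(\E \cup -\S(L))$.

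The main (but short) part of the argument is the reverse inequality. I would prove by induction on $i \geq 0$ that every element of $\Rr_i$ is bounded above by some element of $\Rr_0$, hence $\max \Rr_i \leq \max \Rr_0$. The base case $i=0$ is trivial. For the inductive step, let $v \in \Rr_{i+1}$. By definition of $\Rr_{i+1}$ in \eqref{eq:def des Ri}, there exists $w \in \pi(\Psi(v)) \cap \Rr_i$. The key observation is that by Lemma \ref{min Psi(s)=s}, $v = \min \pi(\Psi(v)) \leq w$. Using the induction hypothesis, $w \leq \max \Rr_0$, so $v \leq \max \Rr_0$. Since this holds for all $i$ and $\Rrinfty = \bigcup_{i \geq 0} \Rr_i$ is finite by Theorem \ref{th R star}, we conclude $\max \Rrinfty \leq \max \Rr_0 = \max(\E \cup -\S(L))$.

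There is no real obstacle here; the statement is essentially a direct consequence of the fact that every element newly introduced into $\Rr_{i+1}$ sits at the bottom of its own image under $\pi \circ \Psi$, so it cannot exceed the element of $\Rr_i$ that caused its inclusion. The whole proof should fit in a few lines once Lemma \ref{min Psi(s)=s} is invoked.
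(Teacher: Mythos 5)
Your proof is correct and follows essentially the same route as the paper's: the easy inequality from $\Rr_0 \subset \Rrinfty$, and then the reverse inequality by induction on $i$, using Lemma \ref{min Psi(s)=s} to get $v = \min \pi(\Psi(v)) \leq w$ for the witness $w \in \pi(\Psi(v)) \cap \Rr_i$.
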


\begin{proof}
The inequality $\max \Rrinfty \geq \max \E \cup -\S(L)$ follows from the fact that $\E \cup -\S(L)=\Rr_0 \subset \Rrinfty$. Proving the converse inequality $\max \Rrinfty \leq \max \E \cup -\S(L)$ is equivalent to proving that, for all $i \in \Z_{\geq 0}$, 
\begin{equation}\label{ineg pour rec}
 \max \Rrinfty_i  \leq  \max \E \cup -\S(L).
\end{equation}
 Let us prove this by induction on $i$.  The inequality \eqref{ineg pour rec} is obvious when $i=0$ since $\Rr_0=\E \cup -\S(L)$. Suppose that the inequality \eqref{ineg pour rec} is proved for some $i\in \Z_{\geq 0}$. Let $v \in \Rr_{i+1}$. By definition of $\Rr_{i+1}$, there exists $v' \in \Rr_i$ such that $v' \in \pi(\Psi(v))$. By induction hypothesis, $v' \leq \max \E \cup -\S(L)$. By Lemma \ref{min Psi(s)=s}, $v = \min \pi(\Psi(v))$, so $v \leq v'\leq \max \E \cup -\S(L)$. Therefore, $\max \Rr_{i+1} \leq \max \E \cup -\S(L)$. This concludes the induction.
\end{proof}

\begin{prop}\label{calc Ri avec VM}
We make the same assumptions and use the same notations as in Proposition~\ref{prop bound R}. Let $\minepsilonlb$ be a positive lower bound of $\minepsilon$. Then, the sequence $(\Rr_{i})_{i\geq 0}$ can also be recursively computed as follows:
	$$\Rr_0= (\Ecal \cup -\S(L)) \cap \V_{M}$$
and, for all $i \geq 0$,
\begin{equation}
\Rr_{i+1}=\bigcup_{(\ellmahl^{\alpha},\beta) \in \P(L)}  \ellmahl^{-\alpha} (\minP (\Rr_i)-\beta)  \cap \V_{M}   \label{form alt Ri+1 avec VM}
\end{equation}
where 
$$
M = (n+1) \left(\left\lfloor (n+1) \frac{\hgt+\mu_{\kappa}}{\minepsilonlb} \right\rfloor + H\right).
$$
\end{prop}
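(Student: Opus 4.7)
The plan is to reduce the claim to the inclusion $\Rr_{0}\cup\Rrinfty\subset\V_{M}$. Once this holds, denote by $(\Rr'_{i})_{i\geq 0}$ the sequence given by the modified recursion (with $\V_{M}$ in place of $\V$ in \eqref{form alt Ri+1}). Since $\V_{M}\subset\V$, a straightforward induction yields $\Rr'_{i}\subset\Rr_{i}$. Conversely, if $\Rr'_{i}=\Rr_{i}$, then setting $U=\bigcup_{(\ellmahl^{\alpha},\beta)\in\P(L)}\ellmahl^{-\alpha}(\minP(\Rr_{i})-\beta)$ one has $\Rr_{i+1}=U\cap\V$ and $\Rr'_{i+1}=U\cap\V_{M}$; the hypothesis $\Rrinfty\subset\V_{M}$ forces $\Rr_{i+1}\subset U\cap\V_{M}=\Rr'_{i+1}$. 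The base case $\Rr_{0}=\Rr'_{0}$ is obvious from $\Rr_{0}\subset\V_{M}$.

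By Proposition~\ref{prop:majoration-iterationV}, the inclusion $\Rr_{0}\cup\Rrinfty\subset\V_{M}$ amounts to verifying $\iota(v)\leq M$ for every $v\in\Rr_{0}\cup\Rrinfty$. For $v\in-\S(L)$, one has $v\in\V_{0}\subset\V_{M}$ directly. For $v\in\E$, the hypotheses $v\leq\hgt$ and $h(v)\leq H$ together with $h(v)\in\mathbb{Z}$ give
$$\iota(v)=\left\lfloor(n+1)\frac{v+\mu_{\kappa}}{\minepsilon}\right\rfloor+h(v)\leq\left\lfloor(n+1)\frac{\hgt+\mu_{\kappa}}{\minepsilonlb}\right\rfloor+H\leq M.$$
The nontrivial case is $v\in\Rrinfty$: Lemma~\ref{lem:infT} yields $v\leq\max(\E\cup-\S(L))\leq\hgt$, while the bound on $h(v)$ requires unpacking the recursion.

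To bound $h(v)$, Proposition~\ref{prop bound R} gives $\Rrinfty=\Rr_{c}$ with $c=\lfloor(n+1)(\hgt+\mu_{\kappa})/\minepsilon\rfloor+H$. For $v\in\Rrinfty$, let $I(v)\leq c$ be the smallest index such that $v\in\Rr_{I(v)}$; unrolling the defining property of $(\Rr_{i})_{i\geq 0}$, one constructs by induction on $I(v)$ a chain $v=w_{0},w_{1},\ldots,w_{I(v)}\in\Rr_{0}$ with $w_{j+1}\in\pi(\Psi(w_{j}))$ for each $j$. Lemma~\ref{lem:V_Zdl} applied along this chain gives $h(w_{j})\leq h(w_{j+1})+n$, hence $h(v)\leq H+I(v)n\leq H+cn$, whence
$$\iota(v)\leq\left\lfloor(n+1)\frac{\hgt+\mu_{\kappa}}{\minepsilon}\right\rfloor+H+cn=c+cn=(n+1)c\leq(n+1)\left(\left\lfloor(n+1)\frac{\hgt+\mu_{\kappa}}{\minepsilonlb}\right\rfloor+H\right)=M,$$
using $\minepsilonlb\leq\minepsilon$ at the last step. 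The main obstacle is aligning the walk-back chain with the floor manipulations so that the final bound lands exactly on the stated $M$: one must use $h(v)\in\mathbb Z$ to pull $h(v)$ out of the floor, pay the extra factor of $(n+1)$ for iterating the Lemma \ref{lem:V_Zdl} estimate along a chain of length up to $c$, and finally replace $\minepsilon$ by its lower bound $\minepsilonlb$; everything else is bookkeeping.
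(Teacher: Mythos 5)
Your proof is correct and follows essentially the same route as the paper's: reduce to showing every element of $\Rrinfty$ already lies in $\V_M$, and establish that via Proposition~\ref{prop:majoration-iterationV} after bounding $v\leq\hgt$ (Lemma~\ref{lem:infT}) and bounding $h(v)$ by iterating Lemma~\ref{lem:V_Zdl} along a chain of length at most $c$. The only cosmetic differences are that you formulate the reduction as a two-sided induction showing $\Rr'_i=\Rr_i$ rather than simply noting that $\Rr_i\subset\V_M$ makes the intersections in the recursion vacuous, and that you carry the bound with $c$ (defined via $\minepsilon$) and pass to $d$ at the last step, whereas the paper works directly with $d$ so the arithmetic cancels without a final comparison.
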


\begin{proof}
We have seen in Theorem \ref{th R star} that the sequence $(\Rr_{i})_{i\geq 0}$ can be recursively computed as follows:  
	$$\Rr_0= \Ecal \cup -\S(L)$$
and, for all $i \geq 0$, 
\begin{equation*}
\Rr_{i+1}=\bigcup_{(\ellmahl^{\alpha},\beta) \in \P(L)}  \ellmahl^{-\alpha} (\minP (\Rr_i)-\beta).
\end{equation*}
Intersecting with $\V_{M}$, we obtain: 
$$\Rr_0 \cap \V_{M}= (\Ecal \cup -\S(L)) \cap \V_{M}$$
and, for all $i \geq 0$, 
\begin{equation*}
\Rr_{i+1} \cap \V_{M}=\bigcup_{(\ellmahl^{\alpha},\beta) \in \P(L)}  \ellmahl^{-\alpha} (\minP (\Rr_i)-\beta) \cap \V_{M}.
\end{equation*}
Given this formula, in order to prove the Proposition, it is clearly sufficient to prove that, for all $i \geq 0$, we have $\Rr_{i} \subset \V_{M}$. Let us prove this.
 
We claim that it is sufficient to prove that $\Rr_{d} \subset \V_{M}$ where 
$$
	d=\left\lfloor (n+1)\frac{\hgt+\mu_{\kappa}}{\minepsilonlb} \right\rfloor + H.
	$$
Indeed, Proposition \ref{prop bound R} guarantees that, for all $i \in \Z_{\geq 0}$, $\Rr_{i} \subset \Rr_c=\Rr$ where 
 	$$
	c=\left\lfloor (n+1)\frac{\hgt+\mu_{\kappa}}{\minepsilon} \right\rfloor + H.
	$$
But, since $\minepsilonlb$ is a positive lower bound of $\minepsilon$, we have 
	$$
	d=\left\lfloor (n+1)\frac{\hgt+\mu_{\kappa}}{\minepsilonlb} \right\rfloor + H \geq c.
	$$
So, $\Rr_{c} = \Rr_d=\Rr$ and, for all $i \in \Z_{\geq 0}$, $\Rr_{i} \subset \Rr_c=\Rr_d$. This justifies our claim.

We now claim that, in order to conclude the proof, it is sufficient to prove that, for all $v \in \Rr_{d}$, we have  
\begin{enumerate}
 \item \label{v leq hgt} $v \leq \hgt$;
  \item \label{h leq H nd} $h(v) \leq H+nd$. 
\end{enumerate}
Indeed, Proposition \ref{prop:majoration-iterationV} ensures that, for all $v \in \Rr_{d}$, we have 
	 $$
	 v \in \V_{\left\lfloor (n+1)\frac{v+\mu_{\kappa}}{\minepsilon} + h(v) \right\rfloor}.
	 $$ 
But, if \eqref{v leq hgt} and \eqref{h leq H nd}  are true, then we have, for all $v \in \Rr_{d}$,
$$
 \left\lfloor (n+1)\frac{v+\mu_{\kappa}}{\minepsilon} + h(v) \right\rfloor 
 \leq \left\lfloor (n+1)\frac{N+\mu_{\kappa}}{\minepsilonlb} + H+nd \right\rfloor = (n+1)d=M.
$$ 
So $\Rr_{d} \subset \V_{M}$. This proves our claim. 
 
In order to complete the proof, it only remains to prove \eqref{v leq hgt} and \eqref{h leq H nd}. The inequality \eqref{v leq hgt} is a direct consequence of Lemma \ref{lem:infT} and of our choice of $N$. To justify inequality \eqref{h leq H nd}, we prove more generally that, for all $i \in \Z_{\geq 0}$, for all $v \in \Rr_{i}$,  $h(v) \leq H+ni$. We proceed by induction on $i$. The base case $i=0$ is true by our choice of $H$. We now assume that, for some $i \in \Z_{\geq 0}$, we have, for all $v \in \Rr_{i}$,  $h(v) \leq H+ni$. Consider $v \in  \Rr_{i+1}$. By definition of the sequence $(\Rr_{i})_{i \geq 0}$, the set $\pi(\Psi(v)) \cap \Rr_i$ is nonempty (see \eqref{eq:def des Ri}); let $w$ be in this intersection. By Lemma \ref{lem:V_Zdl}, we have $h(v) \leq h(w) + n$. But, by the inductive hypothesis, we have $h(w) \leq H+ni$. So $h(v) \leq  H+ni +n=H+n(i+1)$. This concludes the induction.  
	\end{proof}

\section{Answer to Question \ref{main question: un algo pour sol hahn?}}\label{sec: ancsw quest}

In this section, we consider a Mahler operator 
\begin{equation*}\label{operateur L bla}
L= a_{n}\pz \malop{\ellmahl}^{n} + a_{n-1}\pz\malop{\ellmahl}^{n-1} + \cdots + a_{0}\pz 
\end{equation*}
with coefficients $a_{0}\pz,\ldots,a_{n}\pz \in \bK[z]$ such that $a_{0}\pz a_{n}\pz \neq 0$.

\subsection{The algorithm}

\begin{algorithm}\label{algo: rep main question}
	$_{}$ \vskip 1 pt
	\noindent \rule{\linewidth}{1mm}
	{\bf Input}: $L$ a Mahler operator with coefficients in $\bK[z]$, $\E$  a finite subset of $ \Q$. \\
	{\bf Output}: the image under $\res{\E}$ of a basis of $\Sol(L,\Hahn)$; it is a generating family of the $\bK$-vector space made of the $f \in \Hahn_{\vert \mathcal E}$  for which there exists a solution 
	$\widetilde{f} \in \Hahn$ 
	of $L$ such that $
	\widetilde{f}_{\vert \mathcal E} 
	=f$ and even a basis if $-\S(L)\subset \mathcal{E}$. \vskip 1pt
	\noindent \rule{\linewidth}{1mm}
	\begin{itemize}
		\item[] set $\Rr_{-1}=\emptyset$ 
		\item[]  set $\breve{\tau}=$Lower\textunderscore Bound\textunderscore $\minepsilon$ ($L$) (see Algorithm \ref{algo:minepsilon})
			\item[] let $H$ be the least integer such that $\ellmahl^H \mathcal E\subset \frac{1}{d}\Z$
			\item[] set $N=\max \E\cup-\S(L)$ 
			%\item[] set $c=\left\lfloor (n+1)\frac{\hgt+\mu_{\kappa}}{\minepsilonlb} \right\rfloor + H$
			\item[] set $M=(n+1)\left(\left\lfloor (n+1)\frac{\hgt+\mu_{\kappa}}{\minepsilonlb} \right\rfloor + H\right)$
		\item[] compute $\Rr_0 = (\E\cup -\S(L)) \cap \V_{\blue M}$ 
		\item[] set $i=0$
		\item[]  \textbf{while} $\Rr_{i}\neq \Rr_{i-1}$ %\textbf{then}
		\begin{itemize}
			\item[] compute $\Rr_{i+1}=\bigcup_{(\ellmahl^{\alpha},\beta) \in \P(L)}  \ellmahl^{-\alpha} (\minP (\Rr_i)-\beta)  \cap \V_{\blue M} $
			\item[] increment $i$ by $1$
		\end{itemize}	 
		\item[] \textbf{end while}
		\item[] set $\Rrinfty = \Rr_i$;
		\item[] compute a basis $(f_1\pz,\ldots,f_t\pz)$ of the $\bK$-vector space $\mathcal C_{\Rrinfty}$
		\item[] return $(\res{\E}(f_1\pz),\ldots,\res{\E}(f_t\pz))$.
	\end{itemize}
	\noindent \rule{\linewidth}{0.5mm}
	\end{algorithm}

\begin{theo}\label{theo algo}
	Algorithm \ref{algo: rep main question} answers Question \ref{main question: un algo pour sol hahn?} by the positive.
\end{theo}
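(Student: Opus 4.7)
The plan is to assemble the structural results established earlier. The key ingredients are: Theorem \ref{theo: un support bis}, which produces the computable, well-ordered receptacle $\V$; Theorem \ref{th R star}, which guarantees that the recursive sequence $(\Rr_{i})_{i \geq 0}$ of Section \ref{sec constr R} is eventually constant at a finite set $\Rrinfty$ satisfying $\star_{\E,\V}$; Theorem \ref{theo: si star alors isom}, which converts such an $\Rrinfty$ into a finite-dimensional linear algebra problem via the isomorphism $\Sol(L,\Hahn) \xrightarrow{\sim} \Cond{\Rrinfty}$; and Propositions \ref{prop bound R} and \ref{calc Ri avec VM}, which bound the number of iterations and show that intersecting with the finite computable set $\V_{M}$ does not alter the sequence up to termination.

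First I would verify termination and the correctness of the computed $\Rrinfty$. The positive lower bound $\minepsilonlb$ on $\minepsilon$ is effectively produced by Algorithm \ref{algo:minepsilon}, whence $M$ is explicit. By Proposition \ref{calc Ri avec VM}, the sequence computed by the while loop coincides with the one from Theorem \ref{th R star}, and since $\minepsilonlb \leq \minepsilon$, Proposition \ref{prop bound R} ensures stabilization, so the loop exits after finitely many steps with $\Rrinfty$ equal to the set produced in Theorem \ref{th R star}. Each $\Rr_{i}$ is a finite subset of $\V_{M}$, whose computability follows inductively from its definition together with the effectivity of $\Psi$, $\minP$ and $\pi$ (all of which are finite explicit maps since $\P(L)$ is finite). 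Hence $\Rrinfty$ is computed in finitely many steps and, by Theorem \ref{th R star}, satisfies $\star_{\E,\V}$.

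Next I would invoke Theorem \ref{theo: si star alors isom}: since $\Rrinfty$ satisfies $\star_{\V}$, the restriction map $\res{\Rrinfty}$ induces a $\bK$-linear isomorphism $\Sol(L,\Hahn) \xrightarrow{\sim} \Cond{\Rrinfty}$. The space $\Hahn_{\vert \Rrinfty}$ is finite-dimensional with canonical basis $(z^{\gamma})_{\gamma \in \Rrinfty}$, and the defining condition $\supp L(f) \cap \minP(\Rrinfty) = \emptyset$ unfolds, via formula \eqref{eq:Lzv}, into a finite homogeneous linear system in the coefficients of $f$, so a basis $(f_{1},\ldots,f_{t})$ of $\Cond{\Rrinfty}$ is effectively computable by Gaussian elimination. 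Pulling back through the isomorphism yields a basis $(\widetilde{f}_{1},\ldots,\widetilde{f}_{t})$ of $\Sol(L,\Hahn)$ with $\res{\Rrinfty}(\widetilde{f}_{i}) = f_{i}$; since $\E \subset \Rrinfty$ one has $\res{\E}(\widetilde{f}_{i}) = \res{\E}(f_{i})$, so the returned tuple is the image under $\res{\E}$ of a basis of $\Sol(L,\Hahn)$. The family is automatically generating for $\res{\E}(\Sol(L,\Hahn))$, which is exactly the $\bK$-vector space described in the output specification; when $-\S(L) \subset \E$, Corollary \ref{sol eq si eq on - pentes} shows that $\res{\E}$ is injective on $\Sol(L,\Hahn)$, so the image of a basis is again a basis.

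There is no genuine obstacle at this stage: the theorem is a direct bookkeeping consolidation of the main results of Sections \ref{sec:recept supp sol L bis}--\ref{sec constr R}. The substantive difficulties lay in constructing $\V$ with its four key properties and in producing the effective lower bound $\minepsilonlb$ on $\minepsilon$; once those are granted, the correctness and termination of Algorithm \ref{algo: rep main question} follow by direct combination as outlined above.
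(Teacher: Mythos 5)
Your proof is essentially correct and follows the same route as the paper: you combine Theorem \ref{theo: un support bis}, Theorem \ref{th R star}, Proposition \ref{calc Ri avec VM} (and \ref{prop bound R}), and Theorem \ref{theo: si star alors isom} exactly as the paper does, and you invoke Corollary \ref{sol eq si eq on - pentes} for the basis case in the same way. The one point that deserves a small caveat is your assertion ``since $\E \subset \Rrinfty$''; this is true when $\E \subset \V$ (as the earlier results that guarantee the algorithm's $\Rr_0$ coincides with Theorem \ref{th R star}'s $\Rr_0$ implicitly require), but the paper argues slightly more carefully via the inclusion of supports, writing $\supp(g_i) \cap \E \subset \supp(g_i) \cap \Rrinfty$ for each basis element $g_i \in \Sol(L,\Hahn)$, which follows from $\supp(g_i) \subset \V$ together with $\V \cap \E \subset \Rr_0 \subset \Rrinfty$; this phrasing is robust even when $\E$ contains indices outside $\V$, in which case the corresponding coefficients vanish automatically. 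Apart from that minor phrasing, your decomposition of the argument, the order of the invocations, and the conclusions match the paper's proof.
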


\begin{proof}
	Let us first consider the computability issues. 
We can compute $\V_M$ using the recursive formula from Theorem \ref{theo: un support bis}. Then, we can compute  $\Rr_{0}= (\E\cup -\S(L)) \cap \V_M$ because $\E\cup -\S(L)$ and $\V_M$ are explicit finite sets. We can compute $\Rr_{i+1}$ from $\Rr_{i}$ 
		with the formula $\Rr_{i+1}=\bigcup_{(\ellmahl^{\alpha},\beta) \in \P(L)} \ellmahl^{-\alpha} (\minP (\Rr_i)-\beta)   \cap  \V_M$ 
		because $\P(L)$, $\V_M$ and, for any $(\ellmahl^{\alpha},\beta) \in \P(L)$, $\ellmahl^{-\alpha} (\minP (\Rr_i)-\beta)$ are explicit finite sets. 
Proposition \ref{calc Ri avec VM} and Theorem \ref{th R star} guaranty that $\Rr_{i}=\Rr_{i-1}$ for some $i \in \Z_{\geq 1}$, so the ``while'' loop will stop after finitely many steps. 
Once $\Rrinfty$ has been calculated, computing a basis  $(f_1\pz,\ldots,f_t\pz)$ of $\Cond{\Rrinfty}$ amounts to compute a basis of solutions of an explicit system of linear equations. 
	Indeed one can compute explicit linear maps $(F_{\delta} : \bK^{\Rrinfty} \rightarrow \bK)_{\delta \in \minP(\Rrinfty)}$ such that, for any $f\pz=\sum_{\gamma \in \Rrinfty} f_{\gamma} z^{\gamma} \in \Hahn_{\vert \Rrinfty}$, 
	$$
	L(f\pz)=\sum_{\delta \in \minP(\Rrinfty)} F_{\delta}((f_{\gamma})_{\gamma \in \Rrinfty}) z^{\delta} +\text{terms whose support is disjoint from } \minP(\Rrinfty).
	$$
	So, $f\pz$ belongs to $\Cond{\Rrinfty}$ if and only if, for all 
	\begin{equation}\label{eq:Fdelta}
	F_{\delta}((f_{\gamma})_{\gamma \in \Rrinfty}) =0, \ \ \ \text{ for all } \delta \in \minP(\Rrinfty). 
	\end{equation}
	Finding a basis of $\Cond{\Rrinfty}$ amounts to finding a basis of solutions of this system of linear equations. This can be done algorithmically.

	Let us now justify that this algorithm returns the correct output, namely the image under $\res{\E}$ of a basis of $\Sol(L,\Hahn)$.

	It follows from Theorem \ref{theo: un support bis} that $\Sol(L,\Hahn) \subset \Hahn_{\vert \V}$.  Since $\Rr$ satisfies $\star_{\Rr_0,\V}$ by Proposition \ref{calc Ri avec VM} and Theorem \ref{th R star}, it follows from Theorem \ref{theo: si star alors isom} that the map $\res{\Rrinfty}: \Hahn \rightarrow \Hahn_{\vert \Rrinfty}$ induces an isomorphism between $\Sol(L,\Hahn)$ and $\mathcal C_{\Rrinfty}$. So, $(\res{\Rrinfty}^{-1}(f_1\pz),\ldots,\res{\Rrinfty}^{-1}(f_t\pz))$ is a basis of $\Sol(L,\Hahn)$. But, for all $i \in \{1,\ldots,t\}$, after setting $g_{i}=\res{\Rrinfty}^{-1}(f_i\pz)$, which is an element of $\Sol(L,\Hahn)$, we have $\supp(g_{i}\pz) \cap \E \subset \supp(g_{i}\pz) \cap \Rrinfty$ and, hence, 
	$$
	\res{\E}(\res{\Rrinfty}^{-1}(f_i\pz))=\res{\E}(g_i\pz)=\res{\E}(\res{\Rrinfty}(g_i\pz))=\res{\E}(f_i\pz).
	$$ Thus, 
	$$
	(\res{\E}(f_1\pz),\ldots,\res{\E}(f_t\pz))=(\res{\E}(\res{\Rrinfty}^{-1}(f_1\pz)),\ldots,\res{\E}(\res{\Rrinfty}^{-1}(f_t\pz)))
	$$ 
	is the image under $\res{\E}$ of a basis of $\Sol(L,\Hahn)$. Thus, $(\res{\E}(f_1\pz),\ldots,\res{\E}(f_t\pz))$ is a generating family of the $\bK$-vector space $\res{\E}(\Sol(L,\Hahn))$, which is nothing but the $\bK$-vector space made of the $f \in \Hahn_{\vert \mathcal E}$  for which there exists a solution 
	$\widetilde{f} \in \Hahn$ 
	of $L$ such that $
	\widetilde{f}_{\vert \mathcal E} 
	=f$. 
	Last, if $-\S(L) \subset \mathcal E$, then the restriction of $\res{\E}$ to $\Sol(L,\Hahn)$ is injective as a consequence of Corollary \ref{sol eq si eq on - pentes} and, hence,  $(\res{\E}(f_1\pz),\ldots,\res{\E}(f_t\pz))$ is a basis of the $\bK$-vector space $\res{\E}(\Sol(L,\Hahn))$. \end{proof}

\subsection{On the complexity of Algorithm \ref{algo: rep main question}}

In this section, by ``complexity'' we mean the number of basic operations ($+$, $-$, $\times$, $\div$) in $\bK$ and comparisons in $\mathbb Q \cup \{-\infty,+\infty\}$ performed by an algorithm.

To estimate the complexity of Algorithm \ref{algo: rep main question}, we shall suppose that $n\geq 2$. Indeed, if $n=1$, then \eqref{eq mahl intro} has a nonzero solution $f\pz \in \Hahn$ if and only if the coefficient of $z^{\val(a_{0}\pz)}$ in $a_{0}\pz$ is the opposite of the coefficient of $z^{\val(a_{1}\pz)}$ in $a_{1}\pz$; in this case, we have 
$$
f\pz=\lambda z^{\frac{\val(a_1)-\val(a_0)}{\ellmahl-1}} \prod_{k=0}^\infty \frac{-a_1(z^{\ellmahl^k})z^{-\ellmahl^k\val(a_1)}}{a_0(z^{\ellmahl^k})z^{-\ellmahl^k\val(a_0)}}
$$
for some $\lambda \in \bK \setminus \{0\}$. Thus, there is no need using Algorithm \ref{algo: rep main question}. 

\begin{prop}\label{prop compl algo}
	Suppose that $n\geq 2$. Let $\minepsilon$ be defined by \eqref{eq:epsilon} and $\minepsilonlb>0$ be a lower bound on $\minepsilon$. Let $\hgt$ be an integer and let 
	$$
	\E=\E_{\hgt}=\left\{ \frac{a}{b} \, \vert \, a\in\Z,b \in \Z\setminus\{0\}, \max\{\vert a \vert,\vert b \vert\} \leq \hgt\right\} 
	$$
	Suppose that $\hgt$ is large enough so that $\S(L) \subset \E$. Then, Algorithm \ref{algo: rep main question} has complexity
	\begin{equation}\label{eq:complexity}
	\mathcal O\left((\delta n)^{3n^2\hgt/\minepsilonlb} \right)
	\end{equation}
	 when one does not take into account the complexity of computing $\minepsilonlb$.
\end{prop}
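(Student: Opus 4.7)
The plan is to walk through Algorithm \ref{algo: rep main question} stage by stage and bound the cost of each stage in terms of $n$, $\delta := \max_i \deg a_i$, $\hgt$, and $\minepsilonlb$, then combine these bounds.

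First I would estimate the auxiliary parameters that control the sizes of the objects manipulated by the algorithm. Since $\E = \E_\hgt$ has denominators at most $\hgt$, one obtains $H = O(\hgt)$; since $-\S(L) \subset \E$ by hypothesis, one also has $N = O(\hgt)$. Substituting into the formula for $M$ gives
\[
M \;=\; (n+1)\!\left(\left\lfloor (n+1)\frac{N+\mu_\kappa}{\minepsilonlb}\right\rfloor + H\right) \;=\; O\!\left(\frac{n^2\hgt}{\minepsilonlb}\right).
\]
One also records the elementary bound $|\P(L)| \leq (n+1)(\delta+1) = O(n\delta)$.

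Second, I would bound $|\V_M|$ and the cost of its construction. Since $\V_{i+1} = \bigcup_{v \in \V_i}\pi(\Psi(v))$ and $|\Psi(v)| \leq |\P(L)|$, we have $|\V_{i+1}| \leq |\P(L)|\cdot|\V_i|$, whence $|\V_M| \leq |\V_0|\cdot|\P(L)|^M = O(n(n\delta)^M)$. Building the sequence $(\V_i)_{0 \leq i \leq M}$ then costs $O(M\cdot|\V_M|\cdot|\P(L)|^2)$ basic operations. The while-loop producing $\Rrinfty$ iterates at most $M$ times, by Proposition \ref{prop bound R} and Proposition \ref{calc Ri avec VM}, and each iteration is bounded by the cost of producing one layer of $\V_M$ via the formula \eqref{form alt Ri+1 avec VM}; the combined cost up to this point is therefore of the form $O((n\delta)^{M+O(1)})$.

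The third and dominant stage is solving the linear system \eqref{eq:Fdelta}, which has at most $|\Rrinfty| \leq |\V_M|$ unknowns and a comparable number of equations, and for which the coefficients can be read off from $L$ and $\Rrinfty$. Gaussian elimination therefore costs $O(|\V_M|^3) = O((n\delta)^{3M})$ basic operations, and substituting $M = O(n^2\hgt/\minepsilonlb)$ yields the advertised bound \eqref{eq:complexity}. The main technical hurdle is the careful bookkeeping needed to absorb all polynomial factors (in $n$, $\delta$, $\hgt$, and $M$) into the dominant exponential so that the exponent stays of the stated form $3n^2\hgt/\minepsilonlb$: any slack introduced in the bound on $|\V_M|$, or in the cost of linear algebra, is amplified through this exponent and would force a larger numerical constant than the $3$ claimed.
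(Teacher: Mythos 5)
The plan of attack (bound each stage, combine) is the same as the paper's, but the proposal contains a genuine gap at the decisive step, which changes both which stage dominates and the exponent you end up with.

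The system \eqref{eq:Fdelta} has one unknown per element of $\Rrinfty$, not per element of $\V_M$. You bound its size by $\lvert\Rrinfty\rvert \le \lvert\V_M\rvert$, which is true but extremely lossy: from the recursion $\Rr_{i+1}=\bigcup_{(\ell^\alpha,\beta)\in\P(L)}\ell^{-\alpha}(\psi(\Rr_i)-\beta)\cap\V_M$ and Proposition~\ref{prop bound R}, one has $\lvert\Rr_i\rvert\le \lvert\E\rvert\,\lvert\P(L)\rvert^{i}$ and $\Rrinfty=\Rr_c$ with $c\le\lfloor(n+1)(\hgt+\mu_\kappa)/\minepsilonlb\rfloor+H$, so $\lvert\Rrinfty\rvert\le \lvert\E\rvert\,\lvert\P(L)\rvert^{c}$, whereas $\lvert\V_M\rvert\le \kappa\,\lvert\P(L)\rvert^{M}$ with $M=(n+1)c$. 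In other words $\lvert\Rrinfty\rvert$ is on the order of the $(n+1)$-th root of $\lvert\V_M\rvert$. Consequently the linear algebra really costs $O\bigl((\lvert\E\rvert\,\lvert\P(L)\rvert^{c})^{\Theta}\bigr)$ with $\Theta<3$, i.e.\ an exponent of order $3c=3M/(n+1)\le M$ for $n\ge 2$; it is \emph{dominated by} the construction of $\V_M$, not the other way around. The paper's dominant term is the construction of $\V_M$, which is $O(c\,\kappa\,\lvert\P(L)\rvert^{M+1}\log\lvert\P(L)\rvert)$, an exponent of order $M+1\approx 2(n+1)^2\hgt/\minepsilonlb$.

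Your exponent of $3M\approx 6(n+1)^2\hgt/\minepsilonlb$ is roughly three times too large and does \emph{not} yield the advertised $O\bigl((\delta n)^{3n^2\hgt/\minepsilonlb}\bigr)$: as you yourself observe in the last sentence, constants in the exponent are amplified, so $(n\delta)^{3M}$ is not $O\bigl((n\delta)^{M}\bigr)$. The fix is precisely to keep separate track of $\lvert\Rrinfty\rvert$ and $\lvert\V_M\rvert$ rather than coarsely bounding the former by the latter. Two smaller points: the paper bounds $H$ by $\lceil\log\hgt/\log\ell\rceil$ (working with $\E\cap\Z_{d,\ell}$), not $O(\hgt)$ --- this happens not to change the order of $M$ because $\minepsilonlb^{-1}\ge \ell^{n}$, but you should say so; and the paper uses a matrix-multiplication exponent $\Theta<3$ rather than Gaussian elimination's $3$, which matters for making the first term of \eqref{eq:complex_tot1} strictly dominated by the second.
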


\begin{rem}
	1. The complexity of the algorithm in Theorem \ref{theo algo} depends strongly on the lower bound $\minepsilonlb$ computed by Algorithm \ref{algo:minepsilon}.
	
	2. Of course, the complexity of this algorithm depends on the choice of the integer $\ellmahl$. In \eqref{eq:complexity}, this dependency  is hidden in the parameter $\minepsilonlb$ which is bounded from above by $\ellmahl^n$.
	
	3. In comparison, the algorithm given in \cite{CompSolMahlEq} to find Puiseux solutions has complexity $\tilde{\mathcal O}(n^2\hgt d\ellmahl^n)$, where $d$ is defined as in Section~\ref{sec:recept supp sol L bis}.
\end{rem}

\begin{proof}
		We note that, with the notation of the algorithm, we have $H\leq \lceil \log \hgt/\log \ellmahl \rceil$.	
	\subsubsection*{Computation of the Newton polygon}
	One can compute the set $\P(L)$ in $\mathcal O(\card \P(L))$ operations and the set 
		$$\{(j,\val a_j(z)) \,\vert \, j \in \{0,\ldots,n\}\}$$
		with the same complexity. Then, one can compute the set $\S(L)$ of slopes of $\mathcal N(L)$ and the endpoints of these slopes by performing $\mathcal O(n)$ comparisons and operations. Furthermore, one may return the set of slopes as an ordered list of rational numbers with the same complexity.
	
	\subsubsection*{Computation of $\V_M$} The set $\V_0=-\S(L)$ has $\kappa$ elements. This set can be computed in $\mathcal O(\kappa)$ operations, once $\S(L)$ is known. Then, for any $i$, the set $\mathcal V_i$ has at most $(\card \P(L))^i\kappa$ elements. Suppose that the set $\mathcal V_i$ has been computed for some $i$ and that it is given as an ordered list of rational numbers. Let us compute the ordered list of all elements of $\mathcal V_{i+1}$. Fix a point $\x=(\ellmahl^\alpha,\beta) \in \P(L)$. Then,  since $\mathcal V_i$ is given as an ordered list, one may compute the ordered list of elements of the set
	$$
	\mathcal V_{i}(\x)	=\left\{\pi(v\ellmahl^\alpha+\beta)\,\vert \, v \in \mathcal V_i\right\}
	$$
	in $\mathcal O((\card \P(L))^i\kappa)$.
	Thus, the computation of the $\card \P(L)$ lists $\mathcal V_{i}(\x)$, $\x \in \P(L)$, requires $\mathcal O((\card \P(L))^{i+1}\kappa)$ operations. Given $k$ ordered lists containing $m$ elements each, one may order the union of the $k$ lists in $\mathcal O(km\log(k))$ operations.
	Thus, once the ordered lists of elements of $\mathcal V_{i}(\x)$, $\x \in \P(L)$, are computed, the ordered list of elements of
	$$
	\mathcal V_{i+1}=\bigcup_{\x \in \P(L)} 	\mathcal V_{i}(\x)
	$$ 
	can be computed in $\mathcal O((\card \P(L))^{i+1}\kappa \log(\card P(L)))$. \textit{In fine}, the computation of the ordered list of elements of $\V_{M}$ can be performed with
	\begin{equation}
	\label{eq:complex_V}
	\mathcal O((\card \P(L))^{M}\kappa \log(\card P(L)))
	\end{equation} 
	operations. Furthermore, one has
	\begin{equation}\label{eq:cardV}
	\card \V_M\leq  (\card \P(L))^{M}\kappa.
	\end{equation}
	
	\subsubsection*{Computation of $\Rrinfty$} We now have an ordered list of all elements of $\V_M$. One may compute an ordered list of all elements of $\mathcal E$ in $\mathcal O(\card \mathcal E\log(\card \mathcal E))$. Two ordered lists of rational numbers being given, one can compute the ordered list of elements belonging to both lists making a number of comparisons at most equal to the maximum of the size of these lists. 
	Thus, one may compute the ordered list of elements of $\Rr_0 = \mathcal E \cap \V_M$ by making
	$\mathcal O(\card \mathcal E+\card \V_M)$ comparisons. Suppose that the ordered list of elements of $\Rr_i$ has been computed, for some integer $i\geq 0$. Note that $\card \Rr_i \leq \card \mathcal E (\card \P(L))^i$. Then, for each $\x=(\ellmahl^\alpha,\beta) \in \P(L)$ one can compute the ordered list of elements of
	$$
	\ellmahl^\alpha(\minP(\Rr_i)-\beta) \cap \V_M
	$$
	in $\mathcal O(\card \mathcal E (\card \P(L))^i+\card \V_M)$ operations. Once these $\card \P(L)$ lists are stored, one may compute the ordered list of elements of $\Rr_{i+1}$ by making $$\mathcal O(\card \E(\card \P(L))^{i+1}\log(\card \P(L)))$$ comparisons. In fine, the total complexity of the computation of $\Rrinfty=\Rr_c$ is in
	\begin{equation}\label{eq:complex_R}
	\mathcal O(\card \E(\card \P(L))^c\log(\card \P(L)) + c\card \P(L)\card \V_M)
	\end{equation}
	Furthermore, we have
	\begin{equation}\label{eq:card_R}
	\card \Rrinfty \leq \card \E(\card \P(L))^c\,.
	\end{equation}
	
	\subsubsection*{Computation of a basis of $\mathcal C_{\Rrinfty}$} Now that the ordered list of elements of $\Rrinfty$ has been computed, one may compute the ordered list of elements of $\minP(\Rr)$ with $\mathcal O(\card \Rrinfty)$ operations. The coefficients of the system of linear equations defining $\mathcal C_{\Rrinfty}$ can be computed with $\mathcal O(\card \P(L)\card \Rrinfty)$ operations. Solving a linear system with $m$ indeterminates necessitates $\mathcal O(m^{\rm{\Theta}})$ operations, for some ${\rm{\Theta}} < 3$ (for example, one may take ${\rm{\Theta}}=\log_2(7)\simeq 2,81$; see \cite{BCGLLSS}). Thus, once the coefficients of the system of linear equations is known, one may compute a basis of $\mathcal C_{\Rrinfty}$ in
	\begin{equation}\label{eq:complex_C}
	\mathcal O((\card \Rrinfty)^\Theta)
	\end{equation}
	operations. 
	
	\subsubsection*{Total complexity of the algorithm}
	Once a basis of $\mathcal C_{\Rrinfty}$ is known, 
	the complexity of the end of the algorithm is negligible with respect to the number of operations performed so far.
	Combining \eqref{eq:complex_V}, \eqref{eq:cardV}, \eqref{eq:complex_R}, \eqref{eq:card_R} and \eqref{eq:complex_C}, the complexity of this algorithm is 
	\begin{equation}\label{eq:complex_tot1}
	\mathcal O( \card \E^{\Theta}(\card \P(L))^{c\Theta}+c\card \P(L)^{M+1}\kappa)\,.
	\end{equation}
	Furthermore, we have the following bounds
	$$
	\left\{\begin{array}{rcl}
	\card \P(L)&\leq& (\delta+1)(n+1)\,,\\
	\card \E& \leq& 2\hgt^2\,,
	\\ \mu_{\kappa}&\leq& \hgt\,,
	\\c &\leq&2(n+1)\hgt\minepsilonlb^{-1} +\log \hgt\,,
	\\ M & \leq & 2(n+1)^2\hgt\minepsilonlb^{-1} + (n+1) \log \hgt\,,
	\\ \kappa &\leq& n\,,
	\\
	\Theta & < &n+1 \,.
	\end{array}\right.
	$$
	Thus, the second term in \eqref{eq:complex_tot1} dominates the first one. It follows that the complexity of the algorithm is
	$$
	\mathcal O\left( \left(2(n+1)\hgt\minepsilonlb^{-1} + \log \hgt\right) \big((\delta+1)(n+1)\big)^{2(n+1)^2\hgt\minepsilonlb^{-1} + (n+1) \log \hgt+1}n\right)
	$$
	Now, the result follows from the fact that the quantity above is in
	$$
	\mathcal O\left((\delta n)^{3n^2\hgt/\minepsilonlb} \right).
	$$
\end{proof}

\section{An example : the Rudin-Shapiro Mahler equation}\label{illustration main algo on RS}

Consider the following $2$-Mahler equation:
\begin{equation}
\label{eq:RudinS}
-2zy(z^4)+(z-1)y(z^2)+y(z)=0\,.
\end{equation}
One can prove that: 
\begin{enumerate}[label=\roman*)]
\item \label{enum:prop 1 RS} up to a multiplicative constant, the only solution of \eqref{eq:RudinS} in $\Hahn$ is actually a power series, namely the generating series of the Rudin-Shapiro sequence, see \cite{AS03} for instance; 
\item \label{enum:prop 2 RS} \eqref{eq:RudinS} has another nonzero solution of the form $fe_{-\frac{1}{2}}$ where $f \in \Hahn$ and where $e_{-\frac{1}{2}}$ satisfies $\malop{2} (e_{-\frac{1}{2}})=-\frac{1}{2}e_{-\frac{1}{2}}$. 
\end{enumerate}
We won't be proving property \ref{enum:prop 1 RS} here, as that would take us too far from our objective to illustrate Algorithm \ref{algo: rep main question}. Property \ref{enum:prop 2 RS} could be proved using \cite{RoquesFrobForMahler} but we will give another proof based on Algorithm \ref{algo: rep main question}, which has the advantage of also giving as many coefficients of $f$ as we like.

Let us first note that $fe_{-\frac{1}{2}}$ is a solution of  \eqref{eq:RudinS} if and only if $f$ is a solution of the $2$-Mahler equation
\begin{equation}\label{eq:RudinStwist}
zy(z^4)+(z-1)y(z^2)-2y(z)=0, 
\end{equation}
which is nothing but the $2$-Mahler equation associated to the  $2$-Mahler operator 
\begin{equation}
L=z\malop{2}^{2}+(z-1)\malop{2}-2
\end{equation}
defined by \eqref{L pour example}. 
In what follows,  we will run Algorithm \ref{algo: rep main question} for this $L$ and for the set $\E$ defined by
\begin{equation}\label{E pour example}
\E_8:=\left\{\frac{a}{b}\in \mathbb Q \ \vert \ \max\{ \vert a\vert,\vert b \vert\} \leq 8 \right\}\,. 
\end{equation}
We will see that the output of this algorithm is 
\begin{multline}\label{output algo ex RS}
	z^{-\frac{1}{2}}-2z^{-\frac{1}{4}}+4z^{-\frac{1}{8}}
	\\  -\frac{1}{3}+z^{\frac{1}{2}}-2z^{\frac{3}{4}}+4z^{\frac{7}{8}}-\frac{5}{6}z+z^{\frac{3}{2}}-2z^{\frac{7}{4}}  +\frac{11}{12}z^{2}-z^{\frac{5}{2}}
	\\  
	-\frac{5}{12}z^{3}
	+z^{\frac{7}{2}}
	-\frac{23}{24}z^{4}+\frac{13}{24}z^5-\frac{7}{24}z^6-\frac{5}{24}z^7-\frac{1}{48}z^8.
	\end{multline}
Since $-\S(L) \subset \E$, this shows that the $\bK$-vector space $\res{\E}(\Sol(L,\Hahn))$ has dimension $1$ and is generated by \eqref{output algo ex RS}. 
Moreover, since the restriction of $\res{\E}$ to $\Sol(L,\Hahn)$ is injective by Corollary \ref{sol eq si eq on - pentes}, this proves that  
 $$
 \Sol(L,\Hahn)=\C f
 $$ 
 for some $f \in \Hahn$ such that  $\res{\E}(f)=$\eqref{output algo ex RS}.
 Therefore, $fe_{-\frac{1}{2}}$ is a solution of \eqref{eq:RudinS} such that $\res{\E}(f)=$\eqref{output algo ex RS}. This justifies property \ref{enum:prop 2 RS} above and gives moreover  the value of the coefficients of $f$ corresponding to the indices in $\E$.

Let us take a close look at how Algorithm \ref{algo: rep main question} works when we take as input the operator $L$ given by \eqref{L pour example} and the set $\E$ given by \eqref{E pour example}. This is done in section \ref{main algo on RS example}, after some preliminaries. 

\subsection{Newton polygon and slopes of $L$}\label{sec:example Newton et slopes}
  The Newton polygon $\mathcal N(L)$ of $L$ is the lower convex hull of the set
$$
\mathcal P(L)=\{(1,0),(2,0),(2,1),(4,1)\}.
$$
We have 
$$\S(L)=\{\mu_{1},\mu_{2}\} \text{ with } \mu_{1}=0 \text{ and}  \mu_{2}=\frac{1}{2}.$$ 
A common denominator of the slopes of $L$ is thus $d=2$. 

The vertices, ordered by \strictementcroissant{} abscissa, of the polygon $\mathcal N(L)$ are
$$
\x_{0}=(1,0), \ \x_{1}=(2,0) \text{ and } \x_{2}=(4,1). 
$$
For any $k \in \{0,1,2\}$, we have
$$
\x_k=(\ellmahl^{\alpha_{k}},\beta_k)=(\ellmahl^{\alpha_{k}},\val a_{\alpha_{k}}\pz ) 
$$
with 
$$
\alpha_{0}=0, \beta_{0}=0,  
\alpha_{1}=1, \beta_{1}=0,  
\alpha_{2}=2, \beta_{2}=1.  
$$

The set $\P(L)$, the Newton polygon $\mathcal N(L)$ and the vertices $\x _{k}$ 
are represented in Figure \ref{fig:rudin shapiro polygon}.

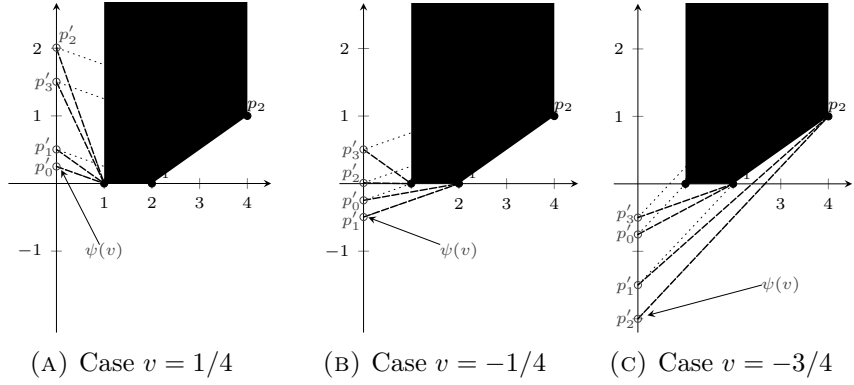
\begin{figure}
\begin{subfigure}[b]{0.31\textwidth}
        \centering
        \resizebox{\linewidth}{!}{
        	\definecolor{uuuuuu}{rgb}{0.26666666666666666,0.26666666666666666,0.26666666666666666}
			\definecolor{ffffff}{rgb}{1,1,1}
\begin{tikzpicture}[line cap=round,line join=round,>=triangle 45,x=1cm,y=1cm,font=footnotesize]
\begin{axis}[
x=0.7cm,y=1cm,
axis lines=middle,
xmin=-1,
xmax=4.5,
ymin=-2.2,
ymax=2.67084059715681,
xtick={0,1,...,4},
ytick={-1,0,1,...,2},font=\tiny]
\clip(-4.707418941998403,-7.112691098179308) rectangle (5.912988815477123,3.137237318919191);
\fill[line width=2pt,fill=black,fill opacity=0.05] (1,7) -- (1,0) -- (2,0) -- (4,1) -- (4,7) -- cycle;
%\draw [line width=0.1pt] (1,7)-- (1,0);
%\draw [line width=0.1pt] (1,0)-- (2,0);
%\draw [line width=0.1pt] (2,0)-- (4,1);
%\draw [line width=0.1pt] (4,1)-- (4,7);
\draw [line width=0.4pt,dash pattern=on 1pt off 1pt on 1pt off 4pt,color=ffffff] (4,7)-- (1,7);
\draw [line width=0.4pt,dotted] (2,1)-- (0,1.5043839298499524);
\draw [line width=0.4pt,dotted] (2,0)-- (0,0.5043839298499524);
\draw [line width=0.4pt,dotted] (1,0)-- (0,0.2521919649249762);
\draw [line width=0.4pt,dotted] (4,1)-- (0,2.008767859699905);
\draw [line width=0.6pt,dash pattern=on 3pt off 1pt] (0,2)-- (1,0);
\draw [line width=0.6pt,dash pattern=on 3pt off 1pt] (0,1.5043839298499524)-- (1,0);
\draw [line width=0.6pt,dash pattern=on 3pt off 1pt] (0,0.5043839298499524)-- (1,0);
\draw [line width=0.6pt,dash pattern=on 3pt off 1pt] (0,0.2521919649249762)-- (1,0);
\begin{scriptsize}
\draw [fill=black] (1,0) circle [radius=1.5pt];
\draw[color=black] (1.2,0.14) node {$\x_0$};
\draw [fill=black] (2,0) circle [radius=1.5pt];
\draw[color=black] (2.2,0.14) node {$\x_1$};
\draw [fill=black] (4,1) circle [radius=1.5pt];
\draw [fill=black] (2,1) circle [radius=1.5pt];
\draw[color=black] (2.2,1.14) node {$\x_3$};
\draw [fill=black] (4,1) circle [radius=1pt];
\draw[color=black] (4.2,1.14) node {$\x_2$};
\draw [color=uuuuuu] (0,1.5043839298499524) circle [radius=1.5pt];
\draw[color=uuuuuu] (-0.25,1.5043839298499524) node {$\x_3'$};
\draw [color=uuuuuu] (0,0.5043839298499524) circle [radius=1.5pt];
\draw[color=uuuuuu] (-0.25,0.55) node {$\x_1'$};
\draw [color=uuuuuu] (0,0.2521919649249762) circle [radius=1.5pt];
\draw[color=uuuuuu] (-0.25,0.2521919649249762) node {$\x_0'$};
\draw [stealth-]  (0.1,0.2) --  (0.9,-0.9);
\draw[color=uuuuuu] (1,-1) node {$\minP(v)$};
%\draw[color=uuuuuu] (-0.1,0.05) node {$=\minP(s)$};
\draw [color=uuuuuu] (0,2.008767859699905) circle [radius=1.5pt];
\draw[color=uuuuuu] (0.25,2.2) node {$\x_2'$};
\end{scriptsize}
\end{axis}
\end{tikzpicture}
}
        \caption{Case $v=1/4$}
        \label{fig:subfigA}
    \end{subfigure}
\begin{subfigure}[b]{0.3\textwidth}
    \centering
        \resizebox{\linewidth}{!}{
\definecolor{uuuuuu}{rgb}{0.26666666666666666,0.26666666666666666,0.26666666666666666}
\definecolor{ffffff}{rgb}{1,1,1}
\begin{tikzpicture}[line cap=round,line join=round,>=triangle 45,x=1cm,y=1cm,font=\footnotesize]
\begin{axis}[
x=0.7cm,y=1cm,
axis lines=middle,
xmin=-0.5,
xmax=4.5,
ymin=-2.2,
ymax=2.67084059715681,
xtick={0,1,...,4},
xticklabels={,,2,3,4},
ytick={-1,0,1,...,2},font=\tiny]
\clip(-4.201369057657044,-6.1685681796319765) rectangle (4.862870732195982,2.67084059715681);
\fill[line width=2pt,fill=black,fill opacity=0.05] (1,7) -- (1,0) -- (2,0) -- (4,1) -- (4,7) -- cycle;
%\draw [line width=0.1pt] (1,7)-- (1,0);
%\draw [line width=0.1pt] (1,0)-- (2,0);
%\draw [line width=0.1pt] (2,0)-- (4,1);
%\draw [line width=0.1pt] (4,1)-- (4,7);
\draw [line width=0.6pt,dash pattern=on 1pt off 1pt on 1pt off 4pt,color=ffffff] (4,7)-- (1,7);
\draw [line width=0.4pt,dotted] (2,1)-- (0,0.5053629450117656);
\draw [line width=0.4pt,dotted] (2,0)-- (0,-0.4946370549882344);
\draw [line width=0.4pt,dotted] (1,0)-- (0,-0.2473185274941172);
\draw [line width=0.4pt,dotted] (4,1)-- (0,0.010725890023531193);
\draw [line width=0.6pt,dash pattern=on 3pt off 1pt] (0,0.5053629450117656)-- (1,0);
\draw [line width=0.6pt,dash pattern=on 3pt off 1pt] (0,-0.4946370549882344)-- (2,0);
\draw [line width=0.6pt,dash pattern=on 3pt off 1pt] (0,0.010725890023531193)-- (1,0);
\draw [line width=0.6pt,dash pattern=on 3pt off 1pt] (0,-0.2473185274941172)-- (2,0);
\begin{scriptsize}
\draw [fill=black] (1,0) circle [radius=1.5pt];
\draw[color=black] (1.2,0.14) node {$\x_0$};
\draw [fill=black] (2,0) circle [radius=1.5pt];
\draw[color=black] (2.2,0.14) node {$\x_1$};
\draw [fill=black] (4,1) circle [radius=1.5pt];
\draw [fill=black] (2,1) circle [radius=1.5pt];
\draw[color=black] (2.2,1.14) node {$\x_3$};
\draw [fill=black] (4,1) circle [radius=1pt];
\draw[color=black] (4.2,1.14) node {$\x_2$};
\draw [color=uuuuuu] (0,0.5053629450117656) circle [radius=1.5pt];
\draw[color=uuuuuu] (-0.25,0.5053629450117656) node {$\x_3'$};
\draw [color=uuuuuu] (0,-0.4946370549882344) circle [radius=1.5pt];
\draw[color=uuuuuu] (-0.25,-0.54) node {$\x_1'$};
\draw [stealth-]  (0.1,-0.5) --  (1.6,-0.9);
\draw[color=uuuuuu] (2,-1) node {$\minP(v)$};
\draw [color=uuuuuu] (0,-0.2473185274941172) circle [radius=1.5pt];
\draw[color=uuuuuu] (-0.25,-0.2473185274941172) node {$\x_0'$};
\draw [color=uuuuuu] (0,0.010725890023531193) circle [radius=1.5pt];
\draw[color=uuuuuu] (-0.25,0.16) node {$\x_2'$};
\end{scriptsize}
\end{axis}
\end{tikzpicture}}
        \caption{Case $v=-1/4$}   
        \label{fig:subfigB}
    \end{subfigure}
\begin{subfigure}[b]{0.275\textwidth}
    \centering
        \resizebox{\linewidth}{!}{\definecolor{uuuuuu}{rgb}{0.26666666666666666,0.26666666666666666,0.26666666666666666}
\definecolor{ffffff}{rgb}{1,1,1}
\begin{tikzpicture}[line cap=round,line join=round,>=triangle 45,x=1cm,y=1cm,font=\footnotesize]
\begin{axis}[
x=0.7cm,y=1cm,
axis lines=middle,
xmin=-0.5,
xmax=4.5,
ymin=-2.2,
ymax=2.67084059715681,
xtick={0,1,...,4},
xticklabels={,,,3,4},
ytick={-2,-1,0,1,...,2},
yticklabels={,,,1,2},
font=\tiny]
\clip(-4.201369057657044,-6.168568179631977) rectangle (4.957536421907239,2.67084059715681);
\fill[line width=2pt,fill=black,fill opacity=0.05] (1,7) -- (1,0) -- (2,0) -- (4,1) -- (4,7) -- cycle;
%\draw [line width=0.1pt] (1,7)-- (1,0);
%\draw [line width=0.1pt] (1,0)-- (2,0);
%\draw [line width=0.1pt] (2,0)-- (4,1);
%\draw [line width=0.1pt] (4,1)-- (4,7);
\draw [line width=0.4pt,dash pattern=on 1pt off 1pt on 1pt off 4pt,color=ffffff] (4,7)-- (1,7);
\draw [line width=0.4pt,dotted] (2,1)-- (0,-0.5004600081703587);
\draw [line width=0.4pt,dotted] (2,0)-- (0,-1.5004600081703587);
\draw [line width=0.4pt,dotted] (1,0)-- (0,-0.7502300040851794);
\draw [line width=0.4pt,dotted] (4,1)-- (0,-2.0009200163407175);
\draw [line width=0.6pt,dash pattern=on 3pt off 1pt] (0,-0.5004600081703587)-- (2,0);
\draw [line width=0.6pt,dash pattern=on 3pt off 1pt] (0,-0.7502300040851794) -- (2,0);
\draw [line width=0.6pt,dash pattern=on 3pt off 1pt] (0,-1.5004600081703587)-- (4,1);
\draw [line width=0.6pt,dash pattern=on 3pt off 1pt] (0,-2.0009200163407175)-- (4,1);
\begin{scriptsize}
\draw [fill=black] (1,0) circle [radius=1.5pt];
\draw[color=black] (1.2,0.14) node {$\x_0$};
\draw [fill=black] (2,0) circle [radius=1.5pt];
\draw[color=black] (2.2,0.14) node {$\x_1$};
\draw [fill=black] (4,1) circle [radius=1.5pt];
\draw [fill=black] (2,1) circle [radius=1.5pt];
\draw[color=black] (2.2,1.14) node {$\x_3$};
\draw [fill=black] (4,1) circle [radius=1pt];
\draw[color=black] (4.2,1.14) node {$\x_2$};
\draw [color=uuuuuu] (0,-0.5004600081703587) circle [radius=1.5pt];
\draw[color=uuuuuu] (-0.25,-0.44) node {$\x_3'$};
\draw [color=uuuuuu] (0,-1.5004600081703587) circle [radius=1.5pt];
\draw[color=uuuuuu] (-0.25,-1.5004600081703587) node {$\x_1'$};
\draw [color=uuuuuu] (0,-0.7502300040851794) circle [radius=1.5pt];
\draw[color=uuuuuu] (-0.25,-0.7402300040851794) node {$\x_0'$};
\draw [color=uuuuuu] (0,-2.0009200163407175) circle [radius=1.5pt];
\draw[color=uuuuuu] (-0.25,-2.0009200163407175) node {$\x_2'$};
\draw [stealth-]  (0.2,-1.95) --  (2.6,-1.5);
\draw[color=uuuuuu] (3,-1.5) node {$\minP(v)$};
\end{scriptsize}
\end{axis}
\end{tikzpicture}}
 \caption{Case $v=-3/4$}   
        \label{fig:subfigC}
    \end{subfigure}
\caption{\small{This figure is relative to the operator $L$ given by~\eqref{L pour example}. 
	We have $\P(L)=\{\x_{0},\x_{1},\x_{2},\x_{3}\}$. The Newton polygon $\mathcal N(L)$ is the shaded area. Its vertices are $\x_{0},\x_{1}$ and $\x_{2}$ and we have $\S(L)=\{0,1/2\}$. In each subfigure, we consider a specific $v \in \mathbb Q$. The point $\x_{k}'$ is the projection of $\x_{k}$ along a line of slope $-v$ onto the $y$-axis; so, the dotted segments have slope $-v$ and $\Psi(v)=\{\x_0',\x_1',\x_2',\x_3'\}$.
	The dashed segment with left extremity $\x_{k}'$ is the segment with lowest slope among those linking $\x_{k}'$ to an element of $\P(L)$. The slope of this segment is thus the opposite of  $\pi(q_{k})$ where $q_{k}$ is the ordinate of $\x_{k}'$ and, hence,  $\pi(\Psi(v))$ is the set of the opposite of the  slopes of the four dashed segments.
  }
} 
\label{fig:rudin shapiro polygon}
\end{figure}

\subsection{The maps $\pi$ and $\Psi$}\label{sec:calc pi Psi RS}
Straightforward calculations show that
$$
\Psi(v)=\left\{v,2v,2v+1,4v+1\right\} \text{ and } \pi(q)=\left\{\begin{array}{rl}
q/2 & \text{if }-1 \leq q<0,
\\ q & \text{if }q\geq 0. \end{array}
\right.
$$
We do not need to specify $\pi(q)$ when $q < -1$.

\subsection{The sets $\V_i$}\label{the Vi for RS example}
One can compte as many $\V_{i}$ as necessary using their recursive definition. For instance, 
\begin{eqnarray*}
 \mathcal V_0&=&-\S(L)=\left\{-\frac{1}{2},0\right\},\\ 
\V_1&=&\pi\left(\Psi\left(-\frac{1}{2}\right)\cup\Psi(0)\right)
= \pi\left(\left\{-1,-\frac{1}{2},0,1\right\}\right)
=\left\{-\frac{1}{2},-\frac{1}{4},0,1\right\}, \\
\V_2&=& \pi\left(\Psi\left(-\frac{1}{2}\right) \cup\Psi\left(-\frac{1}{4}\right) \cup \Psi\left(0\right) \cup \Psi(1)\right)
\\ &=&\left\{-\frac{1}{2},-\frac{1}{4},-\frac{1}{8},0,\frac{1}{2},1,2,3,5\right\} \\
\cdots & \cdots & \cdots
\end{eqnarray*}

\begin{rem}
 In this very peculiar example, we could prove that
\begin{equation*}
\label{eq:V_rudin_shapiro}
\V=\left\{ k-\frac{1}{2^n} \ \vert \  k,n\in \Z_{\geq 0},\, (k,n)\neq (0,0)\right\}.
\end{equation*}
As this property will not be used, we only briefly indicate how to prove the most useful inclusion, namely the inclusion of $\V$ in the right-hand side of the latter equality, and leave the details and proof of the other inclusion to the reader.   Let us denote this right-hand side by $\mathcal W$. One can easily check that $-\S(L)\subset \mathcal W$ and that, for any $w \in \mathcal W$, $\pi(\Psi(w))\subset \mathcal W$. Given the definition of $\V$, this clearly implies that $\V \subset \mathcal W$.  
\end{rem}

When running Algorithm \ref{algo: rep main question}, we will first need to call Algorithm \ref{algo:minepsilon} in order to calculate a positive lower bound $\minepsilonlb$ on $\minepsilon$, which is defined by \eqref{eq:epsilon}.

\subsection{Computation of a positive lower bound on $\minepsilon$}\label{comput vareps example RS}

We will now explain how  Lower\textunderscore Bound\textunderscore $\minepsilon$($L$) described in Algorithm \ref{algo:minepsilon} runs to compute a lower bound $\minepsilonlb$ on $\minepsilon$. It takes the following steps: 
\begin{enumerate}
 \item it computes a lower bound 
 $$\theta_{2}=\text{LB\textunderscore $\epsilon$\textunderscore p$(L,2,(),-\mu_{2})$=LB\textunderscore $\epsilon$\textunderscore p$(L,2,(),-\frac{1}{2})$}
 $$ on $\epsilon(-\mu_{2})=\epsilon(-\frac{1}{2})$; 
 \item it computes a lower bound 
 $$\theta_{1}=\text{LB\textunderscore $\epsilon$\textunderscore p$(L,1,(\theta_{2}),-\mu_{1})$=LB\textunderscore $\epsilon$\textunderscore p$(L,1,(\theta_{2}),0)$}
 $$ on $\epsilon(-\mu_{1})=\epsilon(0)$;
 \item it returns 
 $$
 \minepsilonlb=\min\{\theta_1,\theta_2,\frac{1}{2\cdot 2^2}\};
 $$
\end{enumerate}
where we have written LB\textunderscore $\epsilon$\textunderscore p for Lower\textunderscore Bound\textunderscore $\epsilon$\textunderscore param defined in Algorithm \ref{algo:lower bound eps avec param}, in order to avoid heavy notations. These three steps are detailed in the following three sections. 

\subsubsection{Execution of LB\textunderscore $\epsilon$\textunderscore p$(L,2,(),-\mu_{2})=$LB\textunderscore $\epsilon$\textunderscore p$(L,2,(),-\frac{1}{2})$}
\label{item:theta_2} 
We are in the situation starting in row 4 of Algorithm \ref{algo:lower bound eps avec param} since $-\frac{1}{2}=-\mu_{2}$. 
Thus, LB\textunderscore $\epsilon$\textunderscore p($L,2,(),-\frac{1}{2})$ returns
	$$
\min\left( \V_1 \setminus\left\{-\frac{1}{2}\right\}\right)+ \frac{1}{2} = -\frac{1}{4}+\frac{1}{2} = \frac{1}{4}\,.
	$$
This latter value $\frac{1}{4}$ is stored in $\theta_{2}$. 
\subsubsection{Execution of LB\textunderscore $\epsilon$\textunderscore p$(L,1,(\theta_{2}),-\mu_{1})=$LB\textunderscore $\epsilon$\textunderscore p$(L,1,(\frac{1}{4}),0)$} 
It takes the following steps. 
	\begin{enumerate}
		\item Since $0 = -\mu_1$, we are in the situation starting in row \ref{debutStep3} of Algorithm \ref{algo:lower bound eps avec param} with $k=2$.
		\item For any $w' \in \Delta(-\mu_{1})=\Delta(0)=\{-\frac{1}{2},-\frac{1}{4}\}$, it computes
		$$m_{w'}=\text{LB\textunderscore $\epsilon$\textunderscore p$(L,1,(\frac{1}{4}),w')$}.
		$$
		\item LB\textunderscore $\epsilon$\textunderscore p$(L,1,(\frac{1}{4}),-\frac{1}{2})$ returns $\frac{1}{4}$ after calculations similar  to those described in \ref{item:theta_2}.
		\item LB\textunderscore $\epsilon$\textunderscore p($L,1,(\frac{1}{4}),-\frac{1}{4}$) runs as follows.
\begin{enumerate}
 \item Since $-\frac{1}{4} \in ]-\frac{1}{2},0[=]-\mu_{2},-\mu_{1}[$, we are in the situation starting in row \ref{debutStep2} of Algorithm \ref{algo:lower bound eps avec param} with $k=2$. So, LB\textunderscore $\epsilon$\textunderscore p($L,1,(\frac{1}{4}),-\frac{1}{4}$) calls LB\textunderscore $\epsilon$\textunderscore i $(L,1,(\frac{1}{4}), -\frac{1}{4})$, where we have written LB\textunderscore $\epsilon$\textunderscore i for Lower\textunderscore Bound\textunderscore $\epsilon$\textunderscore interval defined in Algorithm \ref{algo:lower bound eps avec param interval}, in order to avoid heavy notations.
 \item LB\textunderscore $\epsilon$\textunderscore i $(L,1,(\frac{1}{4}), -\frac{1}{4})$ runs as follows.
 \begin{enumerate}
 \item Since $-\frac{1}{4} \geq -\mu_{2}+\theta_{2}=-\frac{1}{2}+\frac{1}{4}=-\frac{1}{4}$, we are in the situation starting in row \ref{finStep2d} of Algorithm \ref{algo:lower bound eps avec param interval}. Therefore, for each $v \in \Delta(-\frac{1}{4})=\{-\frac{1}{2},-\frac{3}{4},-\frac{3}{8}\}$, it computes $\boldsymbol b_{v}=\text{LB\textunderscore $\epsilon$\textunderscore i$(L,1,(\frac{1}{4}),v)$}$.
\begin{enumerate}
 \item LB\textunderscore $\epsilon$\textunderscore i$(L,1,(\frac{1}{4}),-\frac{3}{4})$ returns $(-\frac{3}{4},\frac{1}{4})$  because $-\frac{3}{4}<-\frac{1}{2}=-\mu_{2}$ so we are in the situation starting with row \ref{finStep2b} of Algorithm \ref{algo:lower bound eps avec param interval} so it calls LB\textunderscore $\epsilon$\textunderscore p$(L,2,(),-\frac{3}{4})$ which returns $-\mu_{2}-(-\frac{3}{4})=-\frac{1}{2}+\frac{3}{4}=\frac{1}{4}$ because we are in the situation starting in  row \ref{debutStep1} of Algorithm \ref{algo:lower bound eps avec param}.
  \item LB\textunderscore $\epsilon$\textunderscore i$(L,1,(\frac{1}{4}),-\frac{1}{2})$ returns $(-\frac{1}{2},\frac{1}{4})$ because $-\frac{1}{2} < -\mu_{2}+\theta_{2}=-\frac{1}{2}+\frac{1}{4}=-\frac{1}{4}$ and $-\frac{1}{2} \geq -\mu_{2}=-\frac{1}{2}$ so we are in the situation starting with row \ref{finStep2c} of Algorithm \ref{algo:lower bound eps avec param interval}  and  $-\mu_{2}+\theta_{2}-(-\frac{1}{2})=-\frac{1}{2}+\frac{1}{4}+\frac{1}{2}=\frac{1}{4}$.
 \item LB\textunderscore $\epsilon$\textunderscore i$(L,1,(\frac{1}{4}),-\frac{3}{8})$ returns $(-\frac{3}{8},\frac{1}{8})$ because $-\frac{3}{8} < -\mu_{2}+\theta_{2}=-\frac{1}{2}+\frac{1}{4}=-\frac{1}{4}$ and $-\frac{3}{8} \geq -\mu_{2}=-\frac{1}{2}$ so we are in the situation starting with row \ref{finStep2c} of Algorithm \ref{algo:lower bound eps avec param interval} and $-\mu_{2}+\theta_{2}-(-\frac{3}{8})=-\frac{1}{2}+\frac{1}{4}+\frac{3}{8}=\frac{1}{8}$.
\end{enumerate}
\item It computes the minimum $m$ of the set \eqref{eq:setpourminvertexnotleaf}, which is, as explained in details in Example~\ref{ex:rec step case 1}, the minimum of  
				\begin{itemize}[label=$\bullet$]
					\item  $\frac{1}{4}\times2^{0-1}=\frac{1}{8}$, 
					\item $\frac{1}{4}\times2^{1-1}=\frac{1}{4}$, 
					\item $\frac{1}{8}\times2^{2-1}=\frac{1}{4}$, 
					\item $0 - \left(- \frac{1}{4}\right)=\frac{1}{4}$,
					\item $\min\{\pi(\Psi(-\frac{1}{4}))\setminus \{-\frac{1}{4}\}\} + \frac{1}{4}=-\frac{1}{8}+\frac{1}{4} = \frac{1}{8}$.
				\end{itemize}
We thus have $m=\frac{1}{8}$.
\item LB\textunderscore $\epsilon$\textunderscore i($L,1,(\frac{1}{4}),-\frac{1}{4}$) returns $(-\frac{1}{4},\frac{1}{8})$.
\end{enumerate}
		\item LB\textunderscore $\epsilon$\textunderscore p$(L,1,(\frac{1}{4}),-\frac{1}{4})$ returns $\frac{1}{8}$.  
					\end{enumerate}
		\item Last, it runs row \ref{finalstepalgo} of Algorithm \ref{algo:lower bound eps avec param} and, after calculations already presented in details in Example \ref{ex:step 3 pour RS}, we find that LB\textunderscore $\epsilon$\textunderscore p$(L,1,(\frac{1}{4}),0)$ returns $\frac{1}{2}$.  	
\end{enumerate}
The latter value $\frac{1}{2}$ is stored in $\theta_1$.

\subsubsection{Last step of Lower\textunderscore Bound\textunderscore $\minepsilon$($L$) } Finally,  Lower\textunderscore Bound\textunderscore $\minepsilon$($L$) returns 
$$
\minepsilonlb=\min\left\{\theta_1,\theta_2,\frac{1}{8}\right\} =\frac{1}{8}\,.
$$

\subsection{Algorithm \ref{algo: rep main question}}\label{main algo on RS example}

 We already computed the positive lower bound $\minepsilonlb=\frac{1}{8}$ on $\minepsilon$. We check that $H=2$ and $N=8$ by definition of $\mathcal E$. Then we have
	$$
	M=(n+1)\left(\left\lfloor (n+1)\frac{\hgt+\mu_{\kappa}}{\minepsilonlb} \right\rfloor + H\right)=618\,.
	$$
	We could compute the set $\V_M$ but, in practice, we can save a few calculations by exploiting the fact that, according to Lemma \ref{lem:infT}, any element of $\Rr$ (and, hence, of any $\Rr_{i}$) is lower than or equal to $\max \E \cup -\S(L) = 8$. Indeed, this remark entails that the intersections with $\V_{M}$ involved in Algorithm \ref{algo: rep main question} for calulating the $\Rr_{i}$ can be replaced  by intersections with $\V_M\cap \Q_{\leq 8}$ without changing the result of the calculations, {\it i.e.}, we have 
	$$\Rr_0 = (\E\cup -\S(L)) \cap \V_{M}=(\E\cup -\S(L)) \cap \V_{M} \cap \Q_{\leq 8}$$
			and  $$\Rr_{i+1}=\bigcup_{(\ellmahl^{\alpha},\beta) \in \P(L)}  \ellmahl^{-\alpha} (\minP (\Rr_i)-\beta)  \cap \V_{M}= \bigcup_{(\ellmahl^{\alpha},\beta) \in \P(L)}  \ellmahl^{-\alpha} (\minP (\Rr_i)-\beta)  \cap \V_{M} \cap \Q_{\leq 8}.$$
Thus, we  only need to compute the set $\V_M\cap \Q_{\leq 8}$, not the whole $\V_{M}$. Let us now explain how one can compute the sets $\Vmaj{i}=\V_i\cap \Q_{\leq 8}$.  We recall that, by definition, the $(\V_{i})_{i \geq 0}$ can be computed recursively as follows: $$
\V_{0}=-\S(L)$$ 
and, for all $i \in \Z_{\geq 0}$,  
$$
\V_{i+1}=\bigcup_{v \in \V_i} \pi(\Psi(v)). 
$$ 
Intersecting with $\Q_{\leq 8}$, we obtain: 
\begin{equation}\label{eq:init Vmaj}
 \Vmaj{0}=-\S(L) \cap \Q_{\leq 8} =-\S(L)
\end{equation}
and, for all $i \in \Z_{\geq 0}$,  
\begin{equation}\label{eq rec inters with Q leq 8}
\Vmaj{i+1}=\bigcup_{v \in \V_i} \pi(\Psi(v)) \cap \Q_{\leq 8}.  
\end{equation}
But, according to \eqref{eq:min Psi(s)=s}, we have, for all $v \in \Q$,  $v=\min \pi(\Psi(v))$. So, for any $v \in \V_{i} \setminus \Vmaj{i}$, we have $\pi(\Psi(v)) \cap \Q_{\leq 8}=\emptyset$ and, hence, \eqref{eq rec inters with Q leq 8} can be rewritten as follows:  
\begin{equation}\label{eq rec Vmaj}
\Vmaj{i+1}=\bigcup_{v \in \Vmaj{i}} \pi(\Psi(v)) \cap \Q_{\leq 8}.  
\end{equation}
Now, \eqref{eq:init Vmaj} and \eqref{eq rec Vmaj} allow us to recursively calculate as many $\Vmaj{i}$ as we like. Note also that, for computing the right-hand side of \eqref{eq rec Vmaj}, we only need to compute $\pi(\Psi(v))$ for $v \in \Vmaj{i} \setminus \Vmaj{i-1}$ because 
$$
\Vmaj{i+1} = \Vmaj{i} \cup \bigcup_{v \in \Vmaj{i}\setminus \Vmaj{i-1}} \pi(\Psi(v)) \cap \Q_{\leq 8}.
	$$
Using this, we may compute $\Vmaj{M}$, which has $5512$ elements, in a fair time\footnote{With a basic desktop computer and the computer algebra software Giac/Xcas it took us less than a minute.}. 	We then compute
\begin{multline*}
\mathcal R_0= (\E\cup -\S(L)) \cap \V_{M}=\E \cap \V_{M} \cap \Q_{\leq 8}=\E \cap \Vmaj{M}\\ 
=	\left\{-\frac{1}{2},-\frac{1}{4},-\frac{1}{8},0,\frac{1}{2},\frac{3}{4},\frac{7}{8},1,\frac{3}{2},\frac{7}{4},2,\frac{5}{2},3,\frac{7}{2},4,5,6,7,8\right\}. 
\end{multline*}
Then, in order to compute $\Rr_{1}$, we first compute (the finite set) $$\bigcup_{(\ellmahl^{\alpha},\beta) \in \P(L)}  \ellmahl^{-\alpha} (\minP (\Rr_0)-\beta)$$
and, then, we compute its intersection with $\Vmaj{M}$. We obtain
$$
\Rr_1 = \Rr_0 \cup \left\{-\frac{1}{16},-\frac{1}{32}\right\}.
$$
 Iterating this process, we find $\Rr_2=\Rr_1$. Thus, $\Rr=\Rr_1$ is a set with $21$ elements.

The next step of Algorithm \ref{algo: rep main question} consists in computing a basis of $\mathcal C_{\Rr}$. 
As explained in the proof of Theorem \ref{theo algo}, this amounts to solve the linear system 
$$F_{\delta}((f_{\gamma})_{\gamma \in \Rr})=0,\quad \delta \in \minP(\Rr)
$$
with $\card \minP(\Rr)=\card \Rr=21$ equations given by \eqref{eq:Fdelta}.  
 We may gather these equations in a matrix whose columns are indexed by the elements of $\Rr$ and whose rows are indexed by the elements of $\minP(\Rr)$. The coefficient of the entry $(\lambda,\gamma)$ of this matrix is the coefficient of $z^\lambda$ in 
$$
L(z^\gamma)=z^{1+4\gamma}+(z-1)z^{2\gamma}-2z^\gamma.
$$
To save space, we shall not reproduce here the square matrix of this system.
After calculations, we find that the kernel of this matrix has dimension $1$, generated by some explicit $(f_{\gamma})_{\gamma \in \Rr}$. This means that $\mathcal C_{\Rr}$ has dimension $1$ and is generated 
by $\sum_{\gamma \in \Rr} f_{\gamma} z^{\gamma}$. 

Last, Algorithm \ref{algo: rep main question} returns $\sum_{\gamma \in \E} f_{\gamma} z^{\gamma}$. Replacing the $f_{\gamma}$ by their explicit values, we find \eqref{output algo ex RS}.

\bibliographystyle{alpha}
\bibliography{biblio}

\end{document}